\newcommand{\de}{\mathrm{d}}
\newcommand{\nn}{\mathbf{N}}
\newcommand{\M}{{\mathcal M}}
\newcommand{\N}{{\mathcal N}}
\newcommand{\x}{{\mathbf x}}
\newcommand{\y}{{\mathbf y}}
\newcommand{\z}{{\mathbf z}}
\newcommand{\vv}{{\mathbf v}}
\newcommand{\R}{\mathbb R}
\newcommand{\1}{{\mathbf 1}}
\newcommand{\Sm}{{\mathcal S}}
\newcommand{\T}{{\mathcal T}}
\newcommand{\Loss}{{\mathcal L}}
\newcommand{\I}{{\mathcal I}}
\def\P{\mathbb P}
\def\E{\mathbb E}
\def\P{\mathbb P}
\def\E{\mathbb E}
\DeclareMathOperator{\e}{e}
\DeclareMathOperator{\Var}{Var}
\DeclareMathOperator{\bias}{Bias}
\newcommand{\beann}{\begin{eqnarray*}}
\newcommand{\eeann}{\end{eqnarray*}}
\newtheorem{theorem}{Theorem}\newtheorem{cor}{Corollary}\newtheorem{definition}{Definition}\newtheorem{lemma}{Lemma}
\setlist{
listparindent=\parindent,
parsep=0pt,
}
\newcommand\CorrespondingAuthor[1]{\def\@makefnmark{}\footnotetext{Corresponding author: #1}}
\renewenvironment{abstract}{\small \providecommand\keywords{\par\medskip\noindent\textit{Keywords:}\xspace}\begin{center}\bfseries \abstractname\vspace{-.5em}\vspace{\z@}\end{center}\quote }{\endquote}
\numberwithin{equation}{section}
\begin{document}

\title{Comparison of Point Process Learning and its special case Takacs-Fiksel estimation}

\author[1]{Julia Jansson\footnote{juljans@chalmers.se}}
\affil[1]{
Department of Mathematical Sciences, Chalmers University of Technology \& University of Gothenburg, Sweden
}
\author[1]{Ottmar Cronie\footnote{ottmar@chalmers.se, corresponding author}}

\date{}

\maketitle

\begin{bibunit}

\begin{abstract}

Recently, \citet{cronie2023cross} introduced the notion of cross-validation for point processes and, with it, a new statistical methodology called Point Process Learning (PPL). 
In PPL one splits a point process/pattern into a training and a validation set, and then predicts the latter from the former through a parametrised Papangelou conditional intensity. 
The model parameters are estimated by minimizing a point process prediction error; this notion was introduced as the second building block of PPL.
It was
shown that PPL outperforms the state-of-the-art in both kernel intensity estimation and 
estimation
of the 
parameters of the Gibbs hard-core process. 
In the latter case, the state-of-the-art was represented by pseudolikelihood estimation. 
In this paper we study PPL in relation to Takacs-Fiksel estimation, 
of which pseudolikelihood is a special case.
We show that Takacs-Fiksel estimation is a special case of PPL in the sense that PPL with a 
specific
loss function asymptotically reduces to Takacs-Fiksel estimation if we let the cross-validation regime tend to leave-one-out cross-validation. 
Moreover, PPL involves a certain type of hyperparameter given by a weight function which ensures that the prediction errors have expectation zero if and only if we have the correct parametrisation. We show that the weight function takes an explicit but intractable form for general Gibbs models. Consequently, we propose different approaches to estimate the weight function in practice. 
In order to assess how the general PPL setup performs in relation to its special case Takacs-Fiksel estimation, we conduct a simulation study where we find that for common Gibbs models, including Poisson, hard-core, Strauss and Geyer saturation processes, we can find loss functions and hyperparameters so that PPL typically outperforms Takacs-Fiksel estimation significantly in terms of mean square error.
Here, the hyperparameters are the cross-validation parameters and the weight function estimate.

\keywords{ cross-validation, generalised random sample, Gibbs point process model, loss function, Papangelou conditional intensity, prediction error, thinning, 
weight function}

\end{abstract}

\section{Introduction}

Point process learning (PPL), recently proposed by \citet{cronie2023cross}, is a new prediction-based statistical theory for point processes inspired by previous work by \citet{moradi2019resample, cronie2018bandwidth} and Takacs-Fiksel estimation \citep{takacs1986interaction, Coeurjolly2019understanding}.
Motivated by the complexity of modern large datasets, PPL introduces statistical learning concepts for generalized random samples, i.e. point processes.
PPL is based on the combination of two concepts that are novel for general point processes: cross-validation and prediction errors. 
The cross-validation approach uses thinning to split a point process/pattern into pairs of training and validation sets, while the prediction errors measure discrepancy between two point processes via a parametrised Papangelou conditional intensity. 
Both Takacs-Fiksel estimation and the prediction errors of PPL 
are closely linked to the innovations of \citet{baddeley2005residual}, which in turn 
are based on the Georgii-Nguyen-Zessin formula \citep{georgii1976canonical,nguyen1979integral}.

In \citet{cronie2023cross} it is shown that PPL outperforms the state-of-the-art in kernel intensity estimation, i.e. the \citet{cronie2018bandwidth} approach.
Further, \citet{jansson2024gibbs} show that PPL outperforms pseudolikelihood estimation for the Gibbs hard-core process. 
Pseudolikelihood estimation is a special case of Takacs-Fiksel estimation, but it is not necessarily the optimal case \citep{coeurjolly2016towards}.
In this paper, we study PPL in relation to Takacs-Fiksel estimation and show that PPL produces Takacs-Fiksel estimation as a limiting case, when the cross-validation regime tends to leave-one-out cross-validation.
More specifically, in Theorem \ref{thm:TF} and Theorem \ref{thm:TF_Block} we show that Takacs-Fiksel estimation is a special limit case of PPL, 
when using 
Monte-Carlo cross-validation 
or 
block cross-validation, respectively. 
Notably, this shows the generality and flexibility of PPL.

\citet{cronie2023cross} show that the expectation of the prediction errors in PPL is zero if and only if the so-called PPL-weight is of a certain form. This result is reformulated and presented in Theorem \ref{thm:ExpectationPredErrs}. In Theorem \ref{thm:PPLweights} we provide general expressions for the PPL-weight for Gibbs models. Specifically, we investigate PPL-weight expressions for Poisson, Gibbs hard-core, Strauss and Geyer saturation processes. For a Poisson process the weight takes a simple form but for the other models the weight is intractable. Therefore we discuss different approaches to estimate PPL-weights from data.

We compare PPL with Takacs-Fiksel estimation through a simulation study for four common Gibbs models, namely Poisson, hard-core, Strauss and Geyer saturation processes. 
In contrast to its special case Takacs-Fiksel estimation, the general PPL formulation contains several hyperparameters to be specified, namely the cross-validation regime/parameters, the PPL-weight and a certain test function which determines how much a validation point contributes to the total prediction error. 
Motivated by e.g.\ \citet{kresin2023parametric}, in our simulations the test function is fixed to be the so-called Stoyan-Grabarnik test function \citep{stoyan1991second}. 
The results of the simulation study show that 
there are 
hyperparameter choices such that PPL outperforms Takacs-Fiksel estimation in terms of mean square error.

The paper is structured as follows. In Section \ref{sec:Background}, some preliminaries for point processes are recalled, and an overview of PPL and Takacs-Fiksel estimation is given. 
Section \ref{sec:asymptotic} contains our main results, which show that Takacs-Fiksel estimation is a special case of PPL. Next, Section \ref{s:Gibbs} introduces Gibbs models, and we derive explicit expressions for the PPL-weight for the Poisson, Gibbs hard-core, Strauss and Geyer saturation processes. Additionally, different approaches to estimate PPL-weights in practice are proposed. Also, results of our simulation study, comparing PPL to Takacs-Fiksel estimation for the four aforementioned processes, are presented. Finally, Section \ref{sec:discuss} consists of a discussion.

\section{Preliminaries}
\label{sec:Background}

Let $\Sm$ be a general (complete separable metric) space with metric $d(\cdot,\cdot)$, which is equipped with a suitable (non-atomic $\sigma$-finite and locally finite) reference measure $A\mapsto|A|=\int_A\de x$, $A\subseteq \Sm$. 
All sets considered in this paper are Borel sets and a closed ball around $u\in\Sm$, with radius $r>0$, will be denoted by $b(u,r)=\{v\in\Sm:d(u,v)\leq r\}$. 
Examples of general spaces include (compact subsets of) the Euclidean space $\mathbb{R}^d$, $d\geq1$, e.g.\ $\Sm=[0,1]^2\subseteq\R^2$ as in our simulation study, spheres and linear networks \citep{cronie2020inhomogeneous,BRT15}.

A (simple) point process $X=\{x_i\}_{i=1}^N$ in $\Sm$ is a random mechanism whose outcomes are collections of points, 
so-called point patterns. 
Hence, we may view $X$ as a generalisation of a classical random sample, where we allow the sample size $N$ to be random and the sample points $x_i$ to be dependent random variables. 
Formally, 
$X=\{x_i\}_{i=1}^N$, $0\leq N\leq \infty$, 
is defined as a measurable mapping 
from a probability space $(\Omega, \mathcal{F}, \mathbb{P})$ to the measurable space $(\nn, \N)$ \citep{VanLieshoutBook, MW04}. 
Here, $\nn$ is the collection of point patterns/configurations $\x=\{x\}_{i=1}^n\subseteq\Sm$, $0\leq n\leq\infty$, 
which are locally finite, i.e.\ those satisfying that the cardinality $\#(\x\cap A)$ is almost surely (a.s.) finite for any bounded $A\subseteq\Sm$. 
Further, the $\sigma$-algebra $\N$ is the Borel $\sigma$-algebra generated by a Prohorov-type metric on $\nn$ \citep[Section A2.5-A2.6]{DVJ1}. 
Note that, by construction, $X$ is simple, which means that a.s.\ no two points of $X$ have the same location. 
The point process $X$ induces a distribution $P_X$ on $(\nn,\N)$, which is governed by its finite dimensional distributions. 
Moreover, we refer to $X$ as a finite point process if $\#(X\cap\Sm) < \infty$ a.s.. 
Hence, if $\Sm$ is a bounded set then $X$ is automatically a finite point process, due to the local finiteness. 

Given a further general space $\M$, there are two ways of constructing a marked point process $\breve X=\{(x_i,m_i)\}_{i=1}^N = \{(x_i,m(x_i))\}_{i=1}^N$. It can either be constructed as a point process on $\Sm\times\M$ such that $X=\{x_i\}_{i=1}^N$ is a well-defined point process \citep{VanLieshoutBook}, or by attaching an additional component $m(x)\in \M$, a so-called mark, to each point of the point process $X=\{x_i\}_{i=1}^N$ on $\Sm$,  
according to some (random) marking function/field $m:\Sm \to \M$. 
When the mark assignment is done independently, conditionally on $X$, we say that $\breve X$ is an independent marking of $X$ 
in the sense of \citet{Cronie2016,DVJ2}.

\subsection{Distributional characteristics}
\label{sec:dist_char}
The distribution of a point process $X$ is completely characterized by its (Papangelou) conditional intensity $\lambda_X$, which can be interpreted as follows: $\lambda_X(u|\x)\de u$ is the probability of finding a point of the point process in an infinitesimal neighbourhood $du$ of $u\in\Sm$, given that the point process agrees with the configuration $\x$ outside $du$. 
It satisfies the Georgii-Nguyen-Zessin (GNZ) formula (see e.g.\ \citet[Section 1.8.2]{VanLieshoutBook}), which states that 
\begin{align}
\label{e:GNZ}
\E\left[\sum_{u\in X} h(u,X\setminus\{u\})\right] = \int_\Sm \E[h(u,X)\lambda_X(u|X)]\de u
\end{align}
for any non-negative (potentially infinite) measurable function $h$ on $\Sm\times\nn$, whereby it also holds for any integrable function $h$. 
A model, or the point process $X$ it generates, is called \textit{attractive} if $\lambda_X(u|\x)\leq \lambda_X(u|\y)$ and \textit{repulsive} if $\lambda_X(u|\x)\geq \lambda_X(u|\y)$ whenever $\x\subseteq \y$ \citep[Section 6.1.1]{MW04}. In addition, it is called locally stable if there exists a $|\cdot|$-integrable function $\phi$ such that $\lambda_X(u|\cdot)\leq \phi(u)<\infty$, $u\in\Sm$.

Conditional intensities have a central role in the study of point processes, e.g.\ $X$ has the intensity function $\rho_X(u)=\E[\lambda_X(u|X)]$, $u\in\Sm$. Moreover, for a finite point process $X$, the (Janossy) density $f_X$ on $\nn$ can be straightforwardly obtained from its conditional intensity \citep[Lemma 3.1]{betsch2023structural} and the conditional intensity satisfies \citep[Section 15.5]{DVJ2}
$$
\lambda_X(u|\x) 
= 
\frac{f_X((\x\setminus\{u\})\cup\{u\})}{f_X(\x\setminus\{u\})}, \qquad u\in\Sm, \quad \x\in\nn
.
$$

It is common to refer to any point process with existing conditional intensity as a Gibbs process \citep{betsch2023structural}. 
The reason is that the distribution $P_X$ on $(\nn,\N)$, which yields expectations with respect to $X$, most notably those in the GNZ formula, is a so-called Gibbs measure if and only if the GNZ formula is satisfied with a conditional expectation of exponential form \citep{dereudre2019introduction}.
Hereby, it is often convenient to express the conditional intensity of a parametrised Gibbs model $P_{\theta}$, $\theta\in\Theta$, as 
\begin{align}
\label{e:GibbsPI}
\lambda_{\theta}(u|\x) = \e^{\Phi_1(u;\theta)+\Phi_2(u,\x;\theta)}
=\widetilde\rho_{\theta}(u)\e^{\Phi_2(u,\x;\theta)}
, 
\end{align}
for specific functions $\Phi_1$ and $\Phi_2$, where $P_X=P_{\theta_0}$ and $\lambda_X=\lambda_{\theta_0}$ for some $\theta_0\in\Theta$. 
Here, $\widetilde\rho_{\theta_0}(u)$ controls the propensity of $X$ to place a point at location $u\in\Sm$, in absence of interaction with other points, while $\e^{\Phi_2(u,\x;\theta_0)}$ scales up/down $\widetilde\rho_{\theta_0}(u)$, depending on the strength and type of interaction between points that the distribution of $X$ exhibits. 
Note in particular that we here have $\rho_X(u) = \widetilde\rho_{\theta_0}(u)\E[\e^{\Phi_2(u,X;\theta_0)}]$, which is typically not known explicitly. One exception is a Poisson process (see Section \ref{s:Poisson}), where $\Phi_2(\cdot)=0$, which implies that $\rho_X(u) = \lambda_X(u|\x)
=\widetilde\rho_{\theta_0}(u)$.
Many Gibbs processes further belong to the sub-class of Markov point processes, where $\lambda_{\theta}(u|\x)=\lambda_{\theta}(u|\partial_{\x}(u))$, given the 'neighbourhood' $\partial_{\x}(u)=\{x\in\x:u\sim x\}$, defined in terms of a symmetric and reflexive relation $\sim$ on $\Sm$ \citep{baddeley1995area, VanLieshoutBook}.

\subsection{Point process thinning}\label{s:Marking}

Historically, point process thinning was defined as applying some mechanism to a point process/pattern so that each point is retained or deleted with a certain probability \citep{chiu2013stochastic}. In order to formalise this, \citet{cronie2023cross} introduced a definition in terms of binary marking. More specifically, the mark space is given by 
$\M=\{0,1\}$, so that each point obtains either 0 or 1 as mark, 
and the reference measure on $\M$ is the counting measure. 

\begin{definition}[Thinning]
\label{def:thinning}
Let $\breve{X}$ be a binary marking of $X$. Then the associated \textit{thinning} of $X$ is defined as the marginal point process 
$$
X^V = \{x: (x,m) \in \breve{X}, m = 1\} = \{x \in X: m(x) = 1\}\subseteq X. 
$$
When this marking is done independently, we obtain an independent thinning $X^V$ in the sense of \citet[Section 11.3]{DVJ2}. 
The definition of a thinning $\x^V$ of a point pattern $\x$ is analogous. 
\end{definition}

It may be worthwhile to emphasise the dual operation of superpositioning; the superposition of $X^V$ and the remainder, $X^T=X\setminus X^V$, gives us $X$. In other words, we may split a point process/pattern into two parts through thinning or, reversed, obtain a superposition as the union of two point processes/patterns. 

In an independent thinning, the associated retention probability function $p(u)=\P(m(u)=1)$, $u\in\Sm$, governs whether 
each point $x\in\x$ is deleted, which occurs with probability $1-p(x)$,  or retained, which occurs with probability $p(x)$, independently of all other deletions. 
Consequently, $X^T = X\setminus X^V$ is also an independent thinning, but with retention probability function $1-p(u)$, $u\in\Sm$. 
A particular instance of independent thinning is $p$-thinning \citep{chiu2013stochastic}, where we have $p(u)=p\in(0,1)$, $u\in\Sm$, whereby we consider an independent binary marking according to a Bernoulli distribution with parameter $p$. 
A particularly interesting example here is a Poisson process $X$, since also $X^V$ is a Poisson process, but with the intensity $\rho_{X^V}(\cdot) = p(\cdot)\rho_X(\cdot)$ \citep[p. 161]{chiu2013stochastic}. Such distributional closedness, i.e.\ that the independent thinning belongs to the same model family as the original point process, does not hold in general. See \citet[Counterexample 2]{baddeley1996markov} for an example of a Markov point process which is non-Markovian under independent thinning.

There are some interesting distributional properties for independently thinned point processes, 
which we next state for completeness. 
Given an independent thinning $X^V$ of a point process $X$, based on a retention probability function $p(u)\in(0,1)$, $u\in\Sm$, the main observation here is that \citep{cronie2023cross, decreusefond2018stein}
\begin{align}
\label{e:ThinnedPap}
\lambda_{X^T}(u|X^T)
&=
(1-p(u))
\E[\lambda_X(u|X)|X^T]
=(1-p(u))
\left.
\E\left[
\frac{\breve\lambda((u,1)|\breve{X})}{p(u)}
\right|X^T\right]
\end{align}
a.e.\ and a.s., where we recall the marked point process representation $\breve{X}$ in Definition \ref{def:thinning}, which has conditional intensity $\breve\lambda:(\Sm\times\{0,1\})\times\breve\nn\to[0,\infty)$. 
Note that this implies that the intensity function is given by $\rho_{X^T}(u)=\E[\lambda_{X^T}(u|X^T)]=p(u)\rho_X(u)$. 
Clearly, if we replace $X^V$ by $X^T$ we also need to replace $p(\cdot)$ by $1 - p(\cdot)$ above. 
Similar results for non-independent thinnings are, in general, not available, thus making independent thinnings particularly interesting.

\subsection{Cross-validation}
\label{s:CV}

The general idea of cross-validation is to repeatedly split a dataset into a training set and a validation set (see e.g.\ \citet{arlot2010survey}). 
In the case of point processes/patterns, following \citet{cronie2023cross}, such operations may be formalised through thinning.

\begin{definition}[Point process cross-validation]
\label{def:PPCV}
Given a point process $X\subseteq S$ and $k\geq1$ thinnings 
$Z_1,\ldots,Z_k$ of $X$, we refer to the collection of pairs $(X_i^T,X_i^V)=(Y_i,Z_i)$, $Y_i=X\setminus Z_i$, $i=1,\ldots,k$, as a cross-validation  splitting/partitioning.
Analogously, for a point pattern $\x$ we may generate a cross-validation splitting/partitioning $(\x_i^T,\x_i^V)$, $i=1,\ldots,k$ by thinning $\x$.
\end{definition}

Common cross-validation methods in classical statistics are leave-one-out cross-validation and $k$-fold cross-validation, where the former is an instance of the latter, with $k$ given by the sample size $\#\x$ \citep{arlot2010survey}. 
As these are instances of dependent thinning-generated splitting schemes \citep{cronie2023cross}, they become mathematically intractable for point processes. More specifically, for general thinning schemes, we do not have access to statements like the ones found in \eqref{e:ThinnedPap}. 
Instead, \citet{cronie2023cross} advocate for the 
use of independent thinning to split the data, in particular $p$-thinning, because of the associated mathematical tractability. 

Although one can specify an infinite number of methods (thinning schemes) to generate independent thinning-based training-validation pairs $(\x_i^T,\x_i^V)$, $i=1,\ldots,k$, some present themselves as more natural. 
The most straightforward method here is Monte-Carlo cross-validation, where each point $x\in\x$ is assigned to $\x_i^V$ with a fixed common probability $p\in(0,1)$, independently of the other assignments. The procedure is repeated $k\geq1$ times in order to obtain $\{(\x_i^T,\x_i^V)\}_{i=1}^k$. Note that we here may have that $\x_i^V\cap\x_j^V\neq\emptyset$, $i\neq j$, when $k\geq2$. 
As an alternative, where $\x_i^V\cap\x_j^V=\emptyset$, \citet{cronie2023cross} introduced the concept of multinomial $k$-fold cross-validation, which has a hierarchical structure. Here we independently attach marks $m(x)$ to all $x\in\x$, where the mark distribution is given by a multinomial distribution with $p_i(x)=\P(m(x)=i)=1/k$, $i=1,\ldots,k$. We then let $\x_i^V=\{x\in\x:m(x)=i\}$ and $\x_i^T=\x\setminus\x_i^V$, $i=1,\ldots,k$.  
A further alternative mentioned by \citet{cronie2023cross} is block cross-validation, where we instead let $p_i(x)=\P(m(x)=i)=\1\{x\in \Sm_i\}$, $i=1,\ldots,k$, for a fixed partition $\{\Sm_i\}_{i=1}^k$ of $\Sm$.

\subsection{Parametric estimation of Gibbs point processes}
Consider the typical setting where we observe a point pattern $\x\in\nn$, which is a realisation of a point process $X$ on $\Sm$, with unknown conditional intensity $\lambda_X$. 
As is commonly the case in parametric Gibbs process modelling, we do not deal with the setting where we observe $X$ restricted to some bounded sub-domain $W\subseteq\Sm$, forcing us to take edge effects into account. 
Given a model family $\Lambda_{\Theta}=\{\lambda_{\theta}: \theta\in\Theta\}$ with Euclidean parameter space $\Theta$, 
we here assume that 
the unknown conditional intensity $\lambda_X$ is given by $\lambda_{\theta_0}$, for some unknown $\theta_0\in\Theta$. 
Our objective is now to find the member $\lambda_{\theta}$ in $\Lambda_{\Theta}$ which is ``closest to'' $\lambda_{\theta_0}$ in some suitable sense. 
To do so, we need some criterion $\mathcal{L}(\theta;\x)$, $\theta\in\Theta$, to optimise in order to obtain an estimate $\widehat\theta=\widehat\theta(\x)\in\Theta$, which in turn yields our estimate $\lambda_{\widehat\theta}$ of 
$\lambda_{\theta_0}$. 
When we compare the performance of different estimation approaches, we choose the approach for which the estimator $\widehat\theta(X)$ yields the smallest mean square error, 
$$
\mathrm{MSE}(\widehat\theta(X))=\E[(\widehat\theta(X)-\theta_0)^2] = \Var(\widehat\theta(X)) + \bias(\widehat\theta(X))^2, 
$$
where $\bias(\widehat\theta(X))=\E[\widehat\theta(X)]-\theta_0$. This is thus our way of measuring the above-mentioned 'closeness' when modelling with a parametric family where the true conditional intensity is assumed to belong to this family.

For the statistical methodologies we present here, the starting point is, in the words of \citet{cronie2023cross}, a parameterised estimator family 
$$
\Xi_{\Theta}=\{\xi_{\theta}: \theta\in\Theta\}, 
\qquad
\Theta\subseteq\R^d,
$$ 
which consists of non-negative or integrable functions 
$\xi_{\theta}:\Sm\times\nn\to\R$, 
where a member $\xi_{\theta}$ typically is a function of $\lambda_{\theta}$. 
The different statistical methodologies are expressed through different loss functions $\Loss(\theta;\x)=\Loss(\xi_{\theta};\x)$.
It turns out that different such specifications yield estimators $\widehat\theta(X)$ with different performance in terms of MSE.

\subsubsection{Point process prediction errors}
\label{sec:PredErr}

The basis of the statistical methodologies presented here is the notion of point process prediction errors, originally introduced in \citet{cronie2023cross}.

\begin{definition}
    Given two parameterised estimator families, $\Xi_{\Theta}=\{\xi_{\theta}: \theta\in\Theta\}$ and  $\mathcal{H}_{\Theta}=\{h_{\theta}:\theta\in\Theta\}$, where the latter is called a (parametrised) test function, the collection of functions
    \begin{align}
\label{e:PredError}
\I_{\xi_{\theta}}^{h_{\theta}}(A;\z,\y)
=
\sum_{x\in \z \cap A}
h_{\theta}(x;\y\setminus\{x\})
-
\int_{A}
h_{\theta}(u;\y)
\xi_{\theta}(u;\y)
\de u,  
\end{align}
where $\y,\z\in\nn$, 
$A\subseteq \Sm$, $\theta\in\Theta$, 
constitute the $\mathcal{H}$-weighted prediction errors associated to $\Xi_{\Theta}$.
\end{definition}

The key here is the term $\sum_{x\in \z \cap A}
h_{\theta}(x;\y\setminus\{x\})$, which has the role of predicting the points of $\z$ from the points of $\y$, through the function $h_{\theta}(u;\y\setminus\{x\})$, $u\in A$. It should be clear that different choices for the test function and the parameter $\theta$ yield different predictions. 
Given any two point processes $Y$ and $Z$ with superposition $X=Y\cup Z$, 
the integral term 
is a form of compensator which, by \citet[Theorem 1]{cronie2023cross}, ensures that the expectation of $\I_{\xi_{\theta}}^{h_{\theta}}(A;Z,Y)$ is 0 if and only if $\theta$ is the parameter which corresponds to the distribution of $X$.
It should further be emphasised that we may also consider vector-valued test functions.
Note finally that for any two (dependent) point processes $Y$ and $Z$, the mapping $A\mapsto\I_{\xi_{\theta}}^{h_{\theta}}(A;Z,Y)$, $A\subseteq\Sm$, is a random signed measure on $\Sm$ for any $\theta\in\Theta$ \citep{cronie2023cross}.

\subsubsection{Auto-prediction, Innovations and Takacs-Fiksel estimation}
\label{s:auto_innov_TF}
By letting $\y=\z=\x\in\nn$ in \eqref{e:PredError}, we obtain the special case  
\begin{align}
\label{e:Innovation}
\I_{\xi_{\theta}}^{h_{\theta}}(A;\x,\x)
=
\sum_{x\in \x \cap A}
h_{\theta}(x;\x\setminus\{x\})
-
\int_{A}
h_{\theta}(u;\x)
\xi_{\theta}(u;\x)
\de u,
\end{align}
which is referred to as auto-prediction \citep{cronie2023cross}. When we let $\Xi_{\Theta}=\Lambda_{\Theta}$ here, i.e.\ $\xi_{\theta}(u;\x)=\lambda_{\theta}(u|\x)$, the prediction errors in \eqref{e:Innovation} reduce to the innovations of \citet{baddeley2005residual}, $\I_{\lambda_{\theta}}^{h_{\theta}}(A;\x,\x)$, which form the basis of both point process residuals and Takacs-Fiksel estimation (TF); see e.g.\ \citet{Coeurjolly2019understanding} for details.

In Takacs-Fiksel estimation, the estimate $\widehat\theta(\x)$ is obtained as the $\theta\in\Theta$ which either i) minimises 
$\I_{\lambda_{\theta}}^{h_{\theta}}(\Sm;\x,\x)$ or ii) yields that $\I_{\lambda_{\theta}}^{h_{\theta}}(\Sm;\x,\x)=0$; \citet{coeurjolly2016towards} 
consider the latter. The motivation here is that by the GNZ formula we have that $\E[\I_{\lambda_{\theta}}^{h_{\theta}}(\Sm;X,X)]=0$ when $\theta=\theta_0$. When the aim is to solve $\I_{\lambda_{\theta}}^{h_{\theta}}(\Sm;\x,\x)=0$, in order to have a solvable system of equations, one considers a vector-valued test function $h_{\theta}(\cdot) = (h_{\theta}^1(\cdot),\ldots,h_{\theta}^d(\cdot))^T\in\R^d$, where $d\geq1$ is equal to the dimension of $\Theta$. Depending on the model $\Lambda_{\Theta}$ and the test function $\mathcal{H}_{\Theta}$, this mathematically driven approach is, however, not always simple to carry out \citep{coeurjolly2016towards}. 
Hereby, numerically minimising $\theta\mapsto\|\I_{\lambda_{\theta}}^{h_{\theta}}(\Sm;\x,\x)\|$, with respect to $\theta\in\Theta$, may be a more appealing choice, in particular if the model has many parameters or a complex interaction structure.

A prominent special case of Takacs-Fiksel estimation is pseudolikelihood estimation. Here, the vectorial test function used in the innovation is given by the normalised gradient $h_{\theta}(\cdot) = \nabla_{\theta}\lambda_{\theta}(\cdot)/\lambda_{\theta}(\cdot)\in\R^d$, where $d$ is the dimension of $\Theta$. Setting the resulting innovation to $0$ is equivalent to maximising the log-pseudolikelihood function
\[
\theta
\mapsto
\sum_{x\in \x}
\log\lambda_{\theta}(x|\x\setminus\{x\})
-
\int_{\Sm}
\lambda_{\theta}(u|\x)
\de u,
\qquad
\theta\in\Theta.
\]
In particular, for a Poisson process model with intensity $\rho_{\theta}(u) = \lambda_{\theta}(u|\cdot)$, $u\in\Sm$, the above is in fact the actual log-likelihood function and the innovation is the associated score function to be minimised. 
It further holds that the popular maximum logistic regression likelihood method of \citet{baddeley2014logistic} is a numerically stable approximation of pseudolikelihood estimation \citep{coeurjolly2016towards}. 
As appealing as pseudolikelihood estimation may be, it has its issues, most notably poor performance when there are strong interactions present \citep{BRT15}. 
In addition, it suffers from identifiability issues, even in the context of rather basic models. It is shown in \citet{jansson2024gibbs} that a range of parameter choices for the hard-core distance in a Gibbs hard-core process give the same value of the log-pseudolikelihood function. This stems from the fact that the conditional intensity of a Gibbs hard-core process is not differentiable with respect to the hard-core distance. In practice, estimation is done by the function \verb|ppm| in the \textsf{R} package \textsc{spatstat} \citep{BRT15}, which uses pseudolikelihood for the intensity parameter and a plug-in approach for the hard-core distance.

That the test function is allowed to depend on the model parameters highlights the fact that it is natural to let $h_\theta(\cdot)=f(\lambda_{\theta}(\cdot))$ for some suitable function $f:\R\to\R$. 
One interesting such choice is 
\begin{align}
\label{e:SGtest}
h_\theta(\cdot) = 1/\xi_{\theta}(\cdot)^{\alpha}
, 
\qquad \alpha\geq0,
\end{align}
which becomes $h_\theta(\cdot)=1/\lambda_{\theta}(\cdot)^{\alpha}$ in the case of innovations and Takacs-Fiksel estimation \citep{cronie2023cross}. 
The test function in \eqref{e:SGtest} with $\alpha=0$ results in the raw test function while $\alpha=1/2$ renders the Pearson test function \citep{baddeley2005residual}.
The choice $\alpha=1$ corresponds to the \citet{stoyan1991second} test function, 
also called the inverse test function \citep{baddeley2005residual}, which has been a common component in various statistical approaches found in the literature \citep{cronie2018bandwidth, cronie2023cross, Cronie2016, kresin2023parametric, stoyan2000improving}. A particular appeal of the Stoyan-Grabarnik test function is that it sets the integral in the prediction error (recall \eqref{e:PredError})
to $|A|$, either if $\xi_{\theta}$ is strictly positive a.e.\ or under the convention that $0/0=1$. Moreover, given the omnipresent variance-bias trade-off in estimation, it mainly seems to target the variance of the estimator \citep{moradi2019resample}.

\subsubsection{Point Process Learning}
\label{s:PPL}
\citet{cronie2023cross} combined their thinning-based notion of cross-validation (Section \ref{s:CV}) and their prediction errors (Section \ref{sec:PredErr}) to propose a novel statistical theory for point processes, which they referred to as Point Process Learning (PPL).
More specifically, given a training-validation pair $(X^V,X^T)$ in accordance with Definition \ref{def:PPCV}, 
the heuristic idea of PPL is to find a parameter choice $\theta\in\Theta$ such that $\xi_{\theta}(x,X^T)$ provides the ``best'' prediction of all $x\in X^V$. 
What ``best'' means here is made precise by Theorem \ref{thm:ExpectationPredErrs} below, which is a slight reformulation of \citet[Theorem 2 and Corollary 1]{cronie2023cross}. The proof can be found in the supplementary material document of \citet{cronie2023cross}. In essence, Theorem \ref{thm:ExpectationPredErrs} provides an equivalence between $\I_{\xi_{\theta}}^{h_{\theta}}(A;X^T,X^V)$, $A\subseteq\Sm$, having expectation 0 and the parameter $\theta$ being set to $\theta_0$, provided that $\xi_{\theta_0}$ has a particular form.

\begin{theorem}\label{thm:ExpectationPredErrs}
Given a point process $X$ with conditional intensity $\lambda_{\theta_0}(u| X)$, $u\in\Sm$, let $X^V$ be a thinning of $X$ and let $X^T=X\setminus X^V$. In addition, consider the marked point process $\breve X = \{(x,0):x\in X^T\}\cup\{(x,1):x\in X^V\}$ with mark space $\M=\{0,1\}$ 
and conditional intensity $\breve\lambda_{\theta_0}\{(u,m)|\breve X\}$, $(u,m)\in\Sm\times\M$. 
For any parametrised estimator family $\Xi_{\Theta}$, test function $\mathcal{H}_{\Theta}$ and $A\subseteq\Sm$, it follows that 
\begin{align}
    \label{e:ExpectationInnovation}
    &\E\lbrace\I_{\xi_{\theta}}^{h_{\theta}}(A;X^T,X^V)\rbrace
      = 
\int_A
       \E\left[
       h_{\theta}(u;X^T)
       \left\lbrace 
    \breve\lambda_{\theta_0}\lbrace (u,1)|\breve{X}\rbrace
       -
       \xi_{\theta}(u;X^T)
       \right\rbrace
       \right]
       \de u
       .
\end{align}
If further $\E[\breve\lambda_{\theta_0}\lbrace (u,1)|\breve{X}\rbrace^2]$ and $\E\{h_{\theta}(u;X^T)^2\}$, $\theta\in\Theta$, are finite for $|\cdot|$-almost every $u\in\Sm$, then $\E\lbrace\I_{\xi_{\theta}}^{h_{\theta}}(A;X^T,X^V)\rbrace=0$ for any $A\subseteq\Sm$ if and only if we let $\theta=\theta_0$ and  
\begin{align}
\label{eq:theta0}
\xi_{\theta_0}(u;X^T)
=&
V_{\theta_0}(u) \lambda_{\theta_0}(u| X^T)
,
\\
V_{\theta_0}(u) =&
\left.
\E\left[
\frac{\breve\lambda_{\theta_0}\lbrace (u,1)|\breve{X}\rbrace}{\lambda_{\theta_0}(u| X^T)}
\right|
X^T
\right]\notag.
\end{align}
In particular, when $X^V$ is an independent thinning of $X$, based on the retention probability function $p(u)\in(0,1)$, $u\in\Sm$, then 
\[
V_{\theta_0}(u) =
p(u)
\left.
\E\left[
\frac{\lambda_{\theta_0}(u|X)}{\lambda_{\theta_0}(u| X^T)}
\right|
X^T
\right],
\]
where $V_{\theta_0}(u) = p(u)$ if  $X$ is a Poisson process, $V_{\theta_0}(u) \geq p(u)$ if $X$ is attractive and $V_{\theta_0}(u) \leq p(u)$ if $X$ is repulsive. Under locally stability we also have that $\xi_{\theta_0}(u;\cdot)\leq p(u)\phi(u)$. 

\end{theorem}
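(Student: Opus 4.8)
The plan is to prove the statement in four stages, in the order listed: (i) establish the expectation identity \eqref{e:ExpectationInnovation} by the GNZ formula applied to the marked process $\breve X$; (ii) deduce the equivalence \eqref{eq:theta0} by conditioning on $X^T$; (iii) specialise to independent thinning via \eqref{e:ThinnedPap}; and (iv) read off the Poisson/attractive/repulsive comparisons and the local-stability bound. For stage (i), I first note from \eqref{e:PredError} that, since $X^T=X\setminus X^V$, we have $X^T\setminus\{x\}=X^T$ for every $x\in X^V$, so the summation term of $\I_{\xi_{\theta}}^{h_{\theta}}(A;X^T,X^V)$ is $\sum_{x\in X^V\cap A}h_{\theta}(x;X^T)$. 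I would rewrite this as a marked sum over $\breve X$ through $g((u,m);\breve{\x})=\1\{m=1\}\,\1\{u\in A\}\,h_{\theta}(u;\{v:(v,0)\in\breve{\x}\})$, using that the mark-$0$ points of $\breve X\setminus\{(x,1)\}$ reconstruct $X^T$. Applying \eqref{e:GNZ} to $\breve X$ on $\Sm\times\M$ (with counting measure on $\M$), the integral over $m\in\{0,1\}$ collapses to the $m=1$ term, giving $\E[\sum_{x\in X^V\cap A}h_{\theta}(x;X^T)]=\int_A\E[h_{\theta}(u;X^T)\,\breve\lambda_{\theta_0}\{(u,1)|\breve X\}]\de u$; handling the compensator term by Fubini and subtracting yields \eqref{e:ExpectationInnovation}.

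For stage (ii), I would condition on $X^T$ inside \eqref{e:ExpectationInnovation}. Since $h_{\theta}(u;X^T)$ and $\xi_{\theta}(u;X^T)$ are $X^T$-measurable, the integrand equals $\E[h_{\theta}(u;X^T)\{\E[\breve\lambda_{\theta_0}\{(u,1)|\breve X\}\mid X^T]-\xi_{\theta}(u;X^T)\}]$, where the finiteness of the two second moments supplies, via Cauchy--Schwarz, the integrability needed for these products and conditional expectations. For the \emph{if} direction, substituting $\theta=\theta_0$ and $\xi_{\theta_0}(u;X^T)=V_{\theta_0}(u)\lambda_{\theta_0}(u|X^T)$ makes the inner brace vanish, because $X^T$-measurability of $\lambda_{\theta_0}(u|X^T)$ gives $\E[\breve\lambda_{\theta_0}\{(u,1)|\breve X\}\mid X^T]=V_{\theta_0}(u)\lambda_{\theta_0}(u|X^T)$ directly from the definition of $V_{\theta_0}$. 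For the \emph{only if} direction, arbitrariness of $A$ forces the integrand to vanish for $|\cdot|$-a.e.\ $u$, and conditioning on $X^T$ then forces the $X^T$-conditional brace to be zero a.s., i.e.\ $\xi_{\theta}(u;X^T)=\E[\breve\lambda_{\theta_0}\{(u,1)|\breve X\}\mid X^T]=V_{\theta_0}(u)\lambda_{\theta_0}(u|X^T)$, which pins $\theta$ to $\theta_0$.

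Stages (iii)--(iv) are then short. Under independent thinning I would invoke the last equality of \eqref{e:ThinnedPap}, which gives $\E[\breve\lambda_{\theta_0}\{(u,1)|\breve X\}\mid X^T]=p(u)\,\E[\lambda_{\theta_0}(u|X)\mid X^T]$; dividing by the $X^T$-measurable factor $\lambda_{\theta_0}(u|X^T)$ produces the stated form of $V_{\theta_0}$. The three comparisons follow from monotonicity in the conditioning configuration together with $X^T\subseteq X$: for a Poisson process $\lambda_{\theta_0}(u|\cdot)=\widetilde\rho_{\theta_0}(u)$ is configuration-free, the ratio is $1$ and $V_{\theta_0}(u)=p(u)$; in the attractive case $\lambda_{\theta_0}(u|X^T)\leq\lambda_{\theta_0}(u|X)$ makes the ratio $\geq1$, so $V_{\theta_0}(u)\geq p(u)$, and repulsiveness reverses this. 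Finally, local stability $\lambda_{\theta_0}(u|\cdot)\leq\phi(u)$ applied to $\xi_{\theta_0}(u;X^T)=p(u)\E[\lambda_{\theta_0}(u|X)\mid X^T]$ yields $\xi_{\theta_0}(u;\cdot)\leq p(u)\phi(u)$.

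The hard part will be the \emph{only if} direction of stage (ii). Converting ``$\E\{\I_{\xi_{\theta}}^{h_{\theta}}(A;X^T,X^V)\}=0$ for every $A$'' into a genuine a.e.\ pointwise statement on the integrand requires the map $A\mapsto\E\{\I_{\xi_{\theta}}^{h_{\theta}}(A;X^T,X^V)\}$ to be a $|\cdot|$-absolutely continuous signed measure, which must be justified under only the $L^2$ control assumed. More delicately, the identity $\E[h_{\theta}(u;X^T)\{\cdots\}]=0$ does not by itself force the brace to vanish for a single fixed test function; the clean conclusion $\xi_{\theta_0}(u;X^T)=V_{\theta_0}(u)\lambda_{\theta_0}(u|X^T)$ is the unique choice that works \emph{irrespective of} the test function, so I expect the rigorous argument to either quantify over a sufficiently rich class of $h_{\theta}$ (e.g.\ taking $h_{\theta}$ equal to the brace to obtain a second-moment-zero argument) or invoke a non-degeneracy of $h_{\theta}$; making this step watertight is where the care lies.
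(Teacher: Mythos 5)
The paper contains no proof of this theorem to compare against: it is presented as a slight reformulation of Theorem 2 and Corollary 1 of \citet{cronie2023cross}, and the proof is explicitly deferred to that paper's supplementary material. Your proposal is correct and reconstructs essentially that source argument: the GNZ formula applied to the marked process $\breve X$ (with counting measure on the mark space, so that the mark integral collapses to the $m=1$ term) gives \eqref{e:ExpectationInnovation}; conditioning on $X^T$ gives the equivalence; and the second equality in \eqref{e:ThinnedPap}, together with monotonicity of $\lambda_{\theta_0}(u|\cdot)$ in the configuration and $X^T\subseteq X$, gives the independent-thinning form of $V_{\theta_0}$, the Poisson/attractive/repulsive comparisons and the local-stability bound. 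You are also right to flag the ``only if'' direction as the one delicate point: for a single fixed test function it is false (take $h_{\theta}\equiv 0$), so the equivalence must be read, as in the source, with quantification over a sufficiently rich class of test functions, and your repair of choosing $h_{\theta}$ equal to the conditional brace, so that the hypothesis forces its second moment to vanish, is the standard one --- noting only that this choice is admissible exactly when $\xi_{\theta}$ is square-integrable, a condition the paper imposes in the discussion around \eqref{eq:xi_Vlambda} rather than in the theorem statement itself.
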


Recall that we want to model the true (unknown) conditional intensity $\lambda_{\theta_0}$ using a parametrised conditional intensity $\Lambda_{\Theta}=\{\lambda_{\theta}: \theta\in\Theta\}$, under the assumption that $\theta_0\in\Theta$. 
Given a training-validation pair $(X^T,X^V)$, let 
\begin{align}
\label{eq:xi_Vlambda}
\xi_{\theta}(u;X^T)
=&
V_{\theta}(u) \lambda_{\theta}(u| X^T),
\\
\label{e:PPLweights}
V_{\theta}(u) 
=& 
V_{\theta}(u,X^T,X^V) 
=
\left.
\E\left[
\frac{\breve\lambda_{\theta}\lbrace (u,1)|\breve{X}\rbrace}{\lambda_{\theta}(u| X^T)}
\right|
X^T
\right],
\quad u\in\Sm,
\end{align}
where $\E[\lambda_{\theta}(u|X)^2]<\infty$ for almost all $u\in\Sm$ and all $\theta\in\Theta$, with the convention that $0/0=0$. 
This results in the prediction errors
\begin{align*}
\I_{\xi_{\theta}}^{h_{\theta}}(A;X^T,X^V)
= 
\sum_{x\in X^V \cap A}
h_{\theta}(x;X^T)
-
\int_{A}
h_{\theta}(u;X^T)
V_{\theta}(u)
\lambda_{\theta}(u|X^T)
\de u.
\end{align*}
Assuming that the chosen test function satisfies $\E\{h_{\theta}(u;X^T)^2\}<\infty$, $\theta\in\Theta$, for almost all $u\in\Sm$, Theorem \ref{thm:ExpectationPredErrs} tells us that the expectation of 
$\I_{\xi_{\theta}}^{h_{\theta}}(A;X^T,X^V)$, $A\subseteq\Sm$, 
is 0 
if and only if we set $\theta = \theta_0$ in Equation \eqref{eq:xi_Vlambda}. 
Hence, we let the generalised estimator family be given by \eqref{eq:xi_Vlambda}.

The random field $V_{\theta}(u)$, $u\in\Sm$, in \eqref{e:PPLweights}, which will be referred to as the associated PPL-weight (function), varies both depending on the chosen model $\Lambda_{\theta}$, the true distribution of $X$, i.e.\ $\theta_0$, and the type of thinning employed for the cross-validation. 
Note, in particular, that it 
takes the form 
\begin{align}
\label{e:PPLweightsIndThin}
V_{\theta}(u) =
V_{\theta}(u,X^T,X^V) 
=
p(u)
\left.
\E\left[
\frac{\lambda_{\theta}(u|X^T\cup X^V)}{\lambda_{\theta}(u| X^T)}
\right|
X^T
\right]
\end{align}
when $X^V$ is an independent thinning of $X=X^T\cup X^V$, based on the retention probability function $p(\cdot)$.

\citet{cronie2023cross} argued that cross-validation offers a form of conditional i.i.d.\ sampling which allows us to estimate the expectation of $\I_{\xi_{\theta}}^{h_{\theta}}(A;X^T,X^V)$ using a cross-validation round $\{(\x_i^T,\x_i^V)\}_{i=1}^k$ of $\x$. Based on these heuristics, they introduced 
\begin{eqnarray}
\label{e:L1_2}
\Loss_j(\theta)
&=& 
\frac{1}{\#\T_k}
\sum_{i\in\T_k}
\left|\I_{\xi_{\theta}}^{h_{\theta}}(A;\x_i^T,\x_i^V)\right|^j,
\qquad j=1,2,
\\
\label{e:L3}
\Loss_3(\theta)
&=& 
\left(
\frac{1}{\#\T_k}
\sum_{i\in\T_k}
\I_{\xi_{\theta}}^{h_{\theta}}(A;\x_i^T,\x_i^V)\right)^2,
\end{eqnarray}
as potential loss functions to be minimised in order to find an estimate $\widehat\theta(\x)\in\Theta$ of $\theta_0$. Given $\mathbf{X}=\{(\x_i^T,\x_i^V)\}_{i=1}^k$, 
\begin{align}
\label{e:IndexSet}
\T_k = \T(\mathbf{X})
=
\{i \in \{1,\ldots,k\}: \x_i^T\neq \emptyset, \x_i^V\neq \emptyset\}
\end{align}
ensures that we exclude pairs where either the training 
or the validation set is empty. 
The reason for this exclusion is that it makes no sense to predict the empty set from all of $\x$, or all of $\x$ from the empty set. 
As \citet{cronie2023cross} remarked,
one is of course free to let $\T_k
=\{1,\ldots,k\}$ in \eqref{e:L1_2} and \eqref{e:L3} if this makes sense.

An advantage of PPL, with respect to state-of-the-art methods, is that it seems to suffer less from identifiability issues   
\citep{cronie2023cross,jansson2024gibbs}. 
Turning to its numerical performance, \citet{cronie2023cross} showed that it outperforms the state-of-the-art in kernel intensity estimation \citep{cronie2018bandwidth}, in terms of integrated MSE. Moreover, 
\citet{jansson2024gibbs} provided a simulation study to show that PPL with Monte-Carlo cross-validation and the Stoyan-Grabarnik test function outperforms pseudolikelihood estimation in the case of a Gibbs hard-core process (Section \ref{s:Hard-core}), in terms of MSE. 
Therefore, one might ask whether PPL outperforms Takacs-Fiksel estimation, in terms of MSE. 
Since Takacs-Fiksel estimation seems to represent a sort of leave-one-out cross-validation version of general PPL, comparison of the two methods would reduce to comparing general PPL
to PPL with the specific hyperparameter choice represented by Takacs-Fiksel estimation. We next set out to address these questions, as well as some other related issues.

\section{Takacs-Fiksel estimation as a limiting case of PPL}
\label{sec:asymptotic}
We now arrive at the main results of the paper, which essentially state that an innovation, an instance of autoprediction, constitutes a limit of averages of scaled prediction errors, under specific assumptions on the cross-validation regime employed. Specifically, the cross-validation regime should tend to leave-one-out cross-validation. In other words, they tell us that an innovation (Takacs-Fiksel estimation) is a limiting special case of an average of prediction errors (PPL), which is reflected by the $\Loss_3$ loss function. Hence, comparison of Takacs-Fiksel estimation and the general PPL setup is simply a comparison of PPL with a specific hyperparameter choice and PPL with a free choice of hyperparameters. 
The question that then still remains is how these two setups compare in terms of MSE, and this is investigated numerically in Section \ref{s:Simulations}. 

Theorem \ref{thm:TF}, whose proof can be found in Appendix \ref{s:proofTF},  addresses the convergence when we apply Monte-Carlo cross-validation. 
Heuristically, it  tells us that 
employing the $\Loss_3$ loss function we approximately have autoprediction, when $k$ is large, $p=1/\sqrt{k}$ and $\T(\mathbf{X})
=\{1,\ldots,k\}$. Hence, in practice, to preform Takacs-Fiksel estimation one could carry out PPL with such a setup for the cross-validation regime.

\begin{theorem}
\label{thm:TF}
Assume that $\lambda_{\theta}(u|\x)$ and $h_{\theta}(u;\x)$ are bounded for any $\theta\in\Theta$, $u\in\Sm$ and $\x\in\nn$. 
Moreover, for any $k\geq2$, let $\{(X_i^T(p_k),X_i^V(p_k))\}_{i=1}^k$ be a Monte-Carlo cross-validation of $X$, based on a retention probability 
$p_k\in(0,1)$. 
If $p_k=1/\sqrt{k}$, then 
    \[
    p_k
    \sum_{i=1}^k
\I_{\xi_{\theta}}^{h_{\theta}}(A;X_i^V(p_k),X_i^T(p_k))
-
\I_{\lambda_{\theta}}^{h_{\theta}}(A;X,X)
\stackrel{p}{\longrightarrow}
0
\]
as $k\to\infty$, for any bounded $A\subseteq\Sm$.  
    \end{theorem}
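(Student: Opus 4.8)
The plan is to condition on the underlying process $X$ throughout, since under Monte-Carlo cross-validation the $k$ thinnings $(X_i^T(p_k),X_i^V(p_k))$, $i=1,\dots,k$, are i.i.d.\ given $X$, and since the target $J:=\I_{\lambda_{\theta}}^{h_{\theta}}(A;X,X)$ is $\sigma(X)$-measurable. Writing $S_k:=p_k\sum_{i=1}^k\I_{\xi_{\theta}}^{h_{\theta}}(A;X_i^V(p_k),X_i^T(p_k))$, I would establish that $\E[(S_k-J)^2\mid X]\to0$ almost surely and then upgrade this conditional $L^2$-convergence to unconditional convergence in probability. Using $\E[(S_k-J)^2\mid X]=(\E[S_k\mid X]-J)^2+\Var(S_k\mid X)$, it suffices to kill the conditional bias and the conditional variance separately. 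Throughout I would exploit that, by \eqref{eq:xi_Vlambda}--\eqref{e:PPLweightsIndThin} and the fact that $\lambda_{\theta}(u\mid X_i^T)$ is $\sigma(X_i^T)$-measurable, $\xi_{\theta}(u;X_i^T)=V_{\theta}(u)\lambda_{\theta}(u\mid X_i^T)=p_k\,\E[\lambda_{\theta}(u\mid X)\mid X_i^T]$, so that $\xi_{\theta}$ carries exactly one factor $p_k$; and that for $x\in X_i^V$ one has $X_i^T\setminus\{x\}=X_i^T$, so the sum term in \eqref{e:PredError} is simply $\sum_{x\in X_i^V\cap A}h_{\theta}(x;X_i^T)$.

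The variance is the easy half. By conditional independence across $i$, $\Var(S_k\mid X)=p_k^2k\,\Var(\I_{\xi_{\theta}}^{h_{\theta}}(A;X_1^V,X_1^T)\mid X)\le p_k^2k\,\E[\I_{\xi_{\theta}}^{h_{\theta}}(A;X_1^V,X_1^T)^2\mid X]$. Boundedness of $h_{\theta}$ and $\lambda_{\theta}$, together with $0\le\xi_{\theta}(u;X_1^T)\le p_k\,\|\lambda_{\theta}\|_\infty$, make the integral term $O(p_k)$ deterministically, while the square of the sum term is dominated by $\|h_{\theta}\|_\infty^2\,\E[\#(X_1^V\cap A)^2\mid X]$, and since $\#(X_1^V\cap A)$ is a sum of $\#(X\cap A)$ independent $\mathrm{Bernoulli}(p_k)$ variables this second moment equals $\#(X\cap A)p_k+O(p_k^2)=O(p_k)$. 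Hence $\E[\I_{\xi_{\theta}}^{h_{\theta}}(A;X_1^V,X_1^T)^2\mid X]=O(p_k)$, and because $p_k=1/\sqrt k$ gives $p_k^2k=1$ we obtain $\Var(S_k\mid X)=O(p_k)\to0$.

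For the bias I would compute $\E[S_k\mid X]=p_kk\,\E[\I_{\xi_{\theta}}^{h_{\theta}}(A;X_1^V,X_1^T)\mid X]$ termwise. Splitting off each retained validation point, the sum term contributes $p_k\sum_{x\in X\cap A}\E[h_{\theta}(x;T^{(x)})\mid X]$, where $T^{(x)}$ is the $p_k$-thinning of $X\setminus\{x\}$, and the integral term contributes $-p_k\int_A\E[h_{\theta}(u;X_1^T)\,\E[\lambda_{\theta}(u\mid X)\mid X_1^T]\mid X]\,\de u$. Both contributions carry precisely one factor $p_k$, so multiplication by $p_kk$ and the identity $p_k^2k=1$ leave the two bracketed expressions; the bias vanishes provided these converge, as $p_k\to0$, to $\sum_{x\in X\cap A}h_{\theta}(x;X\setminus\{x\})$ and $\int_A h_{\theta}(u;X)\lambda_{\theta}(u\mid X)\,\de u$, i.e.\ to the two terms of $J$.

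This last convergence is the main obstacle. As $p_k\to0$ the thinning degenerates: coupling the thinnings across $k$ by deleting a point $y$ iff a fixed uniform $U_y<p_k$, one has for almost every finite realisation of $X$ that $T^{(x)}=X\setminus\{x\}$ and $X_1^T=X$ for all large $k$, whence $h_{\theta}(x;T^{(x)})\to h_{\theta}(x;X\setminus\{x\})$ and $h_{\theta}(u;X_1^T)\to h_{\theta}(u;X)$, and bounded convergence (with Fubini for the $u$-integral) handles the expectations. The delicate factor is the weight $\E[\lambda_{\theta}(u\mid X)\mid X_1^T]$: one must argue that, as $p_k\to0$, the conditional law of the deleted points given the survivors $X_1^T$ concentrates on the empty configuration, so that $\E[\lambda_{\theta}(u\mid X)\mid X_1^T]\to\lambda_{\theta}(u\mid X_1^T)\to\lambda_{\theta}(u\mid X)$; this is where the independent-thinning identity \eqref{e:ThinnedPap} and boundedness/local stability are used to show the expected number of points deleted near $u$ is $O(p_k)$, confining the discrepancy to an event of probability $O(p_k)$. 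Granting this, $\E[S_k\mid X]\to J$ and $\Var(S_k\mid X)\to0$, so $\E[(S_k-J)^2\mid X]\to0$ a.s.; finally, applying bounded convergence to $\E[\min(|S_k-J|,1)]=\E[\E[\min(|S_k-J|,1)\mid X]]\to0$ yields $S_k-J\stackrel{p}{\longrightarrow}0$, which is the claim.
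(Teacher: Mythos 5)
Your proposal is sound in its architecture, but it takes a genuinely different route from the paper's proof, so it is worth comparing the two. The paper works unconditionally: it splits the difference into the sum parts ($\Delta_1$) and the compensator/integral parts ($\Delta_2$); it handles $\Delta_1$ by sandwiching each point's contribution between $\min_j h_{\theta}(x;\x_j^T\setminus\{x\})$ and $\max_j h_{\theta}(x;\x_j^T\setminus\{x\})$ times $p_k\sum_i\1\{x\in\x_i^V\}$ and invoking Slutsky, and it handles $\Delta_2$ through unconditional first and second moments, where the law of total expectation, Jensen's (contraction) inequality and self-adjointness of conditional expectations are used precisely so that the intractable weight factor $\E[\lambda_{\theta}(u|X)\mid X_i^T]$ is always absorbed or removed by tower-type identities \emph{before} any limit is taken. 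Your strategy of conditioning on $X$ instead exploits that the folds are i.i.d.\ given $X$, which makes the variance half of the argument essentially a three-line computation (versus the paper's lengthy manipulation of $E_1(k)$, $E_2(k)$, $E_3$), and reduces the bias half to two per-realisation limits. The price is exactly the step you flag: since $\sigma(X)$ and $\sigma(X_1^T)$ do not nest, once you have conditioned on $X$ you can no longer dispose of $\E[\lambda_{\theta}(u|X)\mid X_1^T]$ by a tower argument, and you must prove its convergence to $\lambda_{\theta}(u|X)$ head-on; this is the one place where the paper's route is structurally easier.

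That flagged step does close, but not quite in the almost-sure form you assert. By conditional Jensen and boundedness, $\bigl|\E[\lambda_{\theta}(u|X)\mid X_1^T]-\lambda_{\theta}(u|X_1^T)\bigr|\le 2\,\sup_{v,\z}\lambda_{\theta}(v|\z)\,\P(X_1^V\neq\emptyset\mid X_1^T)$ a.s., and by the tower property $\E[\P(X_1^V\neq\emptyset\mid X_1^T)]=\P(X_1^V\neq\emptyset)=1-\E[(1-p_k)^{\# X}]\to0$ whenever $X$ is a.s.\ finite (an assumption both you and the paper make implicitly; the paper's own proof needs $(1-p_k)^{\#\x}\to1$). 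This controls the discrepancy in mean, i.e.\ unconditionally, so the natural conclusion is that the conditional bias $\E[S_k\mid X]-J$ tends to $0$ in probability, not a.s.; your claim that $\E[(S_k-J)^2\mid X]\to0$ a.s.\ should be weakened accordingly. The weakening is harmless for your endgame: since $\E[\min(|S_k-J|,1)\mid X]\le\min\bigl(\E[(S_k-J)^2\mid X]^{1/2},1\bigr)$, bounded convergence (which tolerates convergence in probability of the integrand) still gives $\E[\min(|S_k-J|,1)]\to0$ and hence $S_k-J\stackrel{p}{\to}0$. With that adjustment, plus the cosmetic fix that $T^{(x)}$ is the retention-$(1-p_k)$ thinning of $X\setminus\{x\}$, your proof is complete, and its variance part is appreciably simpler than the corresponding part of the paper's argument.
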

We next provide a similar result under block-cross-validation, where the proof can be found in Appendix \ref{s:proofTF_Block}. Here, we consider partitions which become finer and finer, with the size of any member of the partition rendering a fold-retention probability tending to 0.

\begin{theorem}
\label{thm:TF_Block}
Assume that $\lambda_{\theta}(u|\x)$ and $h_{\theta}(u;\x)$ are bounded for any $\theta\in\Theta$, $u\in\Sm$ and $\x\in\nn$. 
Given a bounded $A\subseteq\Sm$, let $\{(X_{ik}^T,X_{ik}^V)\}_{i=1}^k$, $k\geq2$, be block cross-validations of $X\cap A$, based on partitions $\{A_{ik}\}_{i=1}^k$ of $A$, with associated retention probabilities $p_{ik}(u)=\1\{u\in A_{ik}\}$, $i=1,\ldots,k$. 
Assume further that the partition sizes satisfy $\max_{i=1,\ldots,k}|A_{ik}|\to0$ as $k\to\infty$ and that for any $i=1,\ldots,k$ there exists only one $j=1,\ldots,k+1$ such that $A_{ik}\subseteq A_{j(k+1)}$, i.e.\ we have a refinement. 
Then 
\[
\sum_{i=1}^k
\I_{\xi_{\theta}}^{h_{\theta}}(A;X_{ik}^V,X_{ik}^T)
-
\I_{\lambda_{\theta}}^{h_{\theta}}(A;X\cap A, X\cap A)
\stackrel{p}{\longrightarrow}
0
\]
as $k\to\infty$. 
    \end{theorem}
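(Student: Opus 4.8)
The plan is to decompose each block prediction error into its validation-sum and its compensator-integral, sum these over the $k$ folds, and match the two aggregates against the two parts of the target innovation $\I_{\lambda_{\theta}}^{h_{\theta}}(A;X\cap A,X\cap A)$. The decisive structural feature here is that $\{A_{ik}\}_{i=1}^k$ \emph{partitions} $A$, so that summing over folds reassembles quantities over all of $A$ without any scaling prefactor (in contrast to the Monte-Carlo case of Theorem \ref{thm:TF}, where each point is validated a random number of times and a factor $p_k$ is needed). Before doing so I would record the explicit form of the PPL-weight: inserting the degenerate retention probability $p_{ik}(u)=\1\{u\in A_{ik}\}$ into \eqref{e:PPLweightsIndThin} gives $V_{\theta}(u)=0$ for $u\notin A_{ik}$, and $V_{\theta}(u)\lambda_{\theta}(u\mid X_{ik}^T)=\E[\lambda_{\theta}(u\mid X\cap A)\mid X_{ik}^T]$ for $u\in A_{ik}$. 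Hence the compensator of fold $i$ localises to $A_{ik}$, and the summed compensators integrate over all of $A$, as required.

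For the validation-sum part, since each $x\in X\cap A$ lies in exactly one block $A_{i(x)k}$, summing over folds yields $\sum_{x\in X\cap A}h_{\theta}(x;X\cap(A\setminus A_{i(x)k}))$, to be compared with the target sum $\sum_{x\in X\cap A}h_{\theta}(x;(X\cap A)\setminus\{x\})$. A summand differs from its target only when $A_{i(x)k}$ contains a second point of $X\cap A$, so I would control the number of such ``colliding'' points. The refinement hypothesis makes the family of colliding pairs \emph{decreasing} in $k$, while boundedness of $\lambda_{\theta}$ (say $\lambda_{\theta}\le M$) bounds the second-order product density $\rho^{(2)}$ by $M^2$, so the expected number of colliding pairs is at most $\sum_i\iint_{A_{ik}\times A_{ik}}\rho^{(2)}\le M^2(\max_i|A_{ik}|)\,|A|\to0$. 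A decreasing nonnegative integer-valued sequence with vanishing expectation tends to $0$ almost surely, so for large $k$ each retained point sits alone in its block, giving $X\cap(A\setminus A_{i(x)k})=(X\cap A)\setminus\{x\}$ and, by boundedness of $h_{\theta}$, convergence of the sum part.

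For the compensator part I would compare $\sum_i\int_{A_{ik}}h_{\theta}(u;X_{ik}^T)\,\E[\lambda_{\theta}(u\mid X\cap A)\mid X_{ik}^T]\,\de u$ with $\int_A h_{\theta}(u;X\cap A)\lambda_{\theta}(u\mid X\cap A)\,\de u$. On blocks with $X\cap A_{ik}=\emptyset$ one has $X_{ik}^T=X\cap A$, so the test functions agree and the only discrepancy is $|\E[\lambda_{\theta}(u\mid X\cap A)\mid X_{ik}^T]-\lambda_{\theta}(u\mid X\cap A)|\le 2M\,\P(X\cap A_{ik}\neq\emptyset\mid X_{ik}^T)$; integrating, summing and taking expectations bounds the aggregate by $2M^2 C(\max_i|A_{ik}|)\,|A|\to0$, where $|h_{\theta}|\le C$. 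The finitely many blocks that do contain points of $X\cap A$ contribute at most $2MC\,\#(X\cap A)\max_i|A_{ik}|\to0$ almost surely. Adding the two estimates gives convergence of the compensator part in $L^1$, hence in probability.

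The \textbf{main obstacle} is the compensator part, and specifically the intractable PPL-weight $V_{\theta}$: one must show that replacing the exact, mean-zero block compensator by the innovation compensator $\lambda_{\theta}(u\mid X\cap A)$ incurs an error that vanishes once aggregated over the partition. This is exactly where both hypotheses enter: boundedness of $\lambda_{\theta}$ and $h_{\theta}$ converts the effect of the few, vanishing extra points in each block into product-density bounds, while $\max_i|A_{ik}|\to0$ forces the expected number of points in any block to vanish in aggregate, and the refinement hypothesis then upgrades these $L^1$ bounds to the almost-sure separation used in the sum part.
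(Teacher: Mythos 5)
Your proposal is correct, and it reaches the conclusion by a genuinely different route than the paper on both halves of the decomposition. The decomposition itself is the same (validation sums versus compensator integrals, with the same localisation $V_{\theta}(u)\lambda_{\theta}(u\mid X_{ik}^T)=\1\{u\in A_{ik}\}\,\E[\lambda_{\theta}(u\mid X\cap A)\mid X_{ik}^T]$), but the arguments differ. For the validation sums, the paper argues per realisation: it invokes the metric/Hausdorff structure to claim that, since $\max_i|A_{ik}|\to0$, eventually each block contains at most one point of $\x\cap A$, and then passes to the limit in $h_{\theta}$; your collision-counting argument instead bounds the expected number of same-block pairs by $\sum_i\iint_{A_{ik}^2}\rho^{(2)}\le M^2(\max_i|A_{ik}|)|A|$ and uses the refinement to make the collision count monotone, hence eventually zero a.s. This is arguably the sounder route: smallness of $|A_{ik}|$ alone does not force geometric separation of a \emph{fixed} configuration (a set of tiny measure can contain two distant points), so the paper's deterministic separation step is delicate, whereas your probabilistic argument needs nothing beyond the Campbell formula, the product-density bound (which does require noting that the \emph{true} conditional intensity $\lambda_{\theta_0}$ is among the bounded $\lambda_{\theta}$'s), and monotone convergence. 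For the compensator, the paper runs a second-moment computation ($E_1(k)-2E_2(k)+E_3$) relying on contractive projections, a dominated-convergence-in-probability lemma, Slutsky, martingale convergence along the increasing $\sigma$-fields $\sigma(X\cap(A\setminus A_k(u)))$, and null-set arguments; you give a direct first-moment estimate, splitting blocks into empty ones (where $X_{ik}^T=X\cap A$ and the weight discrepancy is at most $2M\,\P(X\cap A_{ik}\neq\emptyset\mid X_{ik}^T)$, giving an aggregate $L^1$ bound of order $\max_i|A_{ik}|$) and the at most $\#(X\cap A)$ non-empty ones (crude bound $2MC\max_i|A_{ik}|$ each). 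Your route is shorter, avoids the martingale machinery entirely, and yields explicit $O(\max_i|A_{ik}|)$ rates in $L^1$; the paper's heavier second-moment apparatus is essentially the same toolkit it needs for the Monte-Carlo case (Theorem \ref{thm:TF}), where no block structure is available to localise the error. Two cosmetic points you share with the paper and could make explicit: the cancellation $V_{\theta}(u)\lambda_{\theta}(u\mid X_{ik}^T)=\E[\lambda_{\theta}(u\mid X\cap A)\mid X_{ik}^T]$ uses the $0/0$ convention when $\lambda_{\theta}(u\mid X_{ik}^T)=0$, and the refinement hypothesis must be read with the blocks shrinking in $k$ (as it is used in the paper's own proof), which is the direction your monotonicity step requires.
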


It should be emphasised that the boundedness conditions imposed in Theorem \ref{thm:TF} and Theorem \ref{thm:TF_Block} do not guarantee that the convergence holds for any model and test function combination.  
However, we note that the imposed conditional intensity boundedness implies that the result holds for at least all locally stable models. 
Turning to the test function, 
in Corollary \ref{cor:ConditionsThm} below, which we prove in Appendix \ref{s:proofCor}, we show that the boundedness conditions hold for a large class of Gibbs models in combination with the test function in \eqref{e:SGtest}.

\begin{cor}
\label{cor:ConditionsThm}
Let $\lambda_{\theta}(u|\x) = \e^{\Phi_1(u;\theta)+\Phi_2(u,\x;\theta)}$, $\theta\in\Theta$, be a Gibbs model where $\Theta$ is bounded and $|\Phi_1(u;\theta)|, |\Phi_2(u,\x;\theta)|$ are bounded for any $u\in A$ and any $\x\in\nn$ such that   $\x\subseteq A$.
Assume further that the test function is given by \eqref{e:SGtest} and that the conditional intensity of $X$ is given by $\lambda_{\theta_0}$,
$\theta_0\in\Theta$. 
It then follows that the conditions of Theorem \ref{thm:TF} and Theorem \ref{thm:TF_Block} are satisfied. 

\end{cor}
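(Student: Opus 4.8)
The plan is to verify directly the two hypotheses shared by Theorems \ref{thm:TF} and \ref{thm:TF_Block}, namely that $\lambda_{\theta}(u|\x)$ and $h_{\theta}(u;\x)$ are bounded over the relevant ranges of $\theta$, $u$ and $\x$. Everything will follow from a single two-sided bound on the conditional intensity, which the exponential form \eqref{e:GibbsPI} makes transparent. So first I would fix constants $c_1,c_2$ with $|\Phi_1(u;\theta)|\le c_1$ and $|\Phi_2(u,\x;\theta)|\le c_2$ for all $\theta\in\Theta$, $u\in A$ and $\x\subseteq A$; these exist by hypothesis, with the boundedness of $\Theta$ ensuring the constants can be taken uniformly in $\theta$.

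By the triangle inequality $|\Phi_1(u;\theta)+\Phi_2(u,\x;\theta)|\le c_1+c_2$, so
\[
\e^{-(c_1+c_2)}
\le
\lambda_{\theta}(u|\x)
=
\e^{\Phi_1(u;\theta)+\Phi_2(u,\x;\theta)}
\le
\e^{c_1+c_2}.
\]
The upper bound is exactly the first required condition. The lower bound---that $\lambda_{\theta}$ is bounded \emph{away from} zero---drives the rest: writing the test function as in \eqref{e:SGtest}, which in the innovation form relevant here reads $h_{\theta}(u;\x)=1/\lambda_{\theta}(u|\x)^{\alpha}$ with $\alpha\ge0$, the lower bound yields $h_{\theta}(u;\x)\le \e^{\alpha(c_1+c_2)}<\infty$, which is the second condition (for $\alpha=0$ the test function is constant and there is nothing to check). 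If one additionally wants boundedness of $\xi_{\theta}=V_{\theta}\lambda_{\theta}$, so that the prediction errors appearing in the theorems are themselves finite, it follows from the same estimates: each ratio $\lambda_{\theta}(u|X)/\lambda_{\theta}(u|X^T)$ lies in $[\e^{-2(c_1+c_2)},\e^{2(c_1+c_2)}]$, whence $V_{\theta}$, and therefore $\xi_{\theta}$, is bounded.

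The step needing care is matching the domain in the corollary's hypothesis ($u\in A$, $\x\subseteq A$) to the domain demanded by the theorems. In Theorem \ref{thm:TF_Block} this is immediate, since that statement works with $X\cap A$ and block cross-validations of $X\cap A$, so every configuration entering a prediction error is a subset of $A$ and the bounds above apply verbatim. For Theorem \ref{thm:TF} the thinnings are of $X$ on all of $\Sm$, so a priori an evaluation such as $h_{\theta}(x;X_i^T)$ could involve points outside $A$; the clean way to close this is to work on a bounded state space with $A=\Sm$ (as in the simulation study, where $\Sm=[0,1]^2$), in which case $\x\subseteq A$ holds automatically for every configuration. I expect this domain-matching to be the only genuine subtlety---the analytic content is merely the elementary fact that an exponential of a bounded exponent is itself bounded and bounded away from zero.
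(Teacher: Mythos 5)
Your proposal is correct, and at its core it proceeds exactly as the paper does: a direct verification that the exponential form turns the assumed bounds on $\Phi_1,\Phi_2$ into the two-sided bound $\e^{-(c_1+c_2)}\le\lambda_{\theta}(u|\x)\le\e^{c_1+c_2}$, with the case $\alpha=0$ dismissed as trivial. The genuine difference is what you take the test function \eqref{e:SGtest} to be. You read it in the innovation form $h_{\theta}=1/\lambda_{\theta}^{\alpha}$, so the lower bound on $\lambda_{\theta}$ finishes the argument. The paper instead reads it as $h_{\theta}(u;X^T)=1/\xi_{\theta}(u;X^T)^{\alpha}$ with $\xi_{\theta}=V_{\theta}\lambda_{\theta}$ (consistent with how the Stoyan--Grabarnik test function is implemented in its simulation study), writes $h_{\theta}=\e^{\alpha(-\log V_{\theta}(u)-\Phi_1(u;\theta)-\Phi_2(u,X^T;\theta))}$, and must therefore also control the PPL-weight term; it does this by bounding $\log V_{\theta}$ from above via $\log\E[\lambda_{\theta}(u|X)|X^T]\le\log C$. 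Note, however, that boundedness of this $h_{\theta}$ actually requires $-\log V_{\theta}$ bounded \emph{above}, i.e.\ $V_{\theta}$ bounded away from zero, and it is your ratio estimate $\lambda_{\theta}(u|X)/\lambda_{\theta}(u|X^T)\ge\e^{-2(c_1+c_2)}$ that delivers this, giving $V_{\theta}(u)\ge p\,\e^{-2(c_1+c_2)}>0$ for each fixed $p$; so your ingredients in fact cover both readings, whereas the paper's bound runs in the direction that controls $\xi_{\theta}$ rather than $h_{\theta}$. The caveat is that this lower bound degenerates as $p_k=1/\sqrt{k}\to0$ in Theorem \ref{thm:TF}, which is precisely why your fixed-test-function reading is the one under which a $k$-uniform bound — what the dominated-convergence steps in the theorems' proofs actually use — is available. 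Finally, your domain-matching observation (taking $\Sm=A$, or otherwise extending the hypothesis beyond configurations $\x\subseteq A$, so that every evaluation $h_{\theta}(x;\x_i^T(p_k))$ in Theorem \ref{thm:TF} falls under the assumed bounds) addresses a real point that the paper's proof passes over in silence; it is a refinement rather than a deviation.
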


There is, however, an immediate example where 
Corollary \ref{cor:ConditionsThm}
is not satisfied when using the test function in \eqref{e:SGtest}, with $\alpha>0$, and that is the Gibbs hard-core process, defined in Section \ref{s:Hard-core}. 
Here, $\lambda_{\theta}(u|\x) = \e^{\Phi_1(u;\theta)+\Phi_2(u,\x;\theta)}$, $\theta=(\beta,R)\in\Theta=(0,\infty)^2$,  $\Phi_1(u;\theta)=\log \beta$ and $\Phi_2(u,\x;\theta) = \log\1\{u\notin \bigcup_{x\in\x}b(x,R)\}$.
The support of $\lambda_{\theta}(\cdot|\x)$ 
is here given by $\Sm\setminus\bigcup_{x\in\x}b(x,R)$, 
whereby 
$h_{\theta}(u;\x)$ is infinite for $u\in \bigcup_{x\in\x}b(x,R)$. A way to solve this issue, in general, is to replace the test function by a new truncated version of the test function, given by $\min\{h_\theta(\cdot),C\}$, for some large constant $C>0$.

These results and observations are satisfactory for our purposes. However, we do conjecture that both Theorem \ref{thm:TF} and Theorem \ref{thm:TF_Block} can be proved under less restrictive boundedness conditions. In addition, we conjecture that a result of a similar spririt could also be obtained for the following generalisation of multinomial $k$-fold and block cross-validation, which we refer to as generalised multinomial cross-validation.

\begin{definition}
\label{def:Kfold}
Given $k\geq2$, consider a collection of $k\geq2$ measurable functions $p_i(u)\in[0,1]$, $u\in\Sm$, $i=1,\ldots,k$, which satisfy that $\sum_{i=1}^k p_i(u)\de u = 1$ for any $u\in\Sm$. Then, attach iid marks $m(x)$ to all $x\in\x$, according to $\P(m(x)=i)=p_i(x)$, $i=1,\ldots,k$. We define {\em generalised multinomial ($k$-fold) cross-validation} 
by letting $\x_i^V=\{x\in\x:m(x)=i\}$ and $\x_i^T=\x\setminus\x_i^V$, $i=1,\ldots,k$. The construction is analogous when we start from a point process $X$ and generate $(X_i^T,X_i^V)$, $i=1,\ldots,k$.
\end{definition}

Given the results in \citet{jansson2024gibbs}, one may hope to obtain an understanding of whether general PPL renders lower estimator variances than Takacs-Fiksel estimation, by looking closer at (limits of) the prediction error variance expressions in \citet{cronie2023cross}. 
Recalling the notation in Theorem \ref{thm:ExpectationPredErrs}, from 
\citet{cronie2023cross} we obtain that the variance is given by
    \begin{align*}
    &\Var\lbrace \I_{\xi_{\theta}}^{h_{\theta}}(A;X^T,X^V)\rbrace
    = 
    \\
    =
    \int_{A} \int_{A}
   \E\Big[
     &h_{\theta}(u;X^T)
     \breve\lambda_{\theta_0}\lbrace (u,1);\breve{X}\rbrace
     \Big(
     h_{\theta}(v;X^T)
    \breve\lambda_{\theta_0}((v,1);\breve{X}\cup (u,1))
    \\
    &
    -
     h_{\theta}(v;X^T\cup\{u\})
    V_{\theta}(v,X^T\cup\{u\},X^V) \lambda_{\theta}(v;X^T\cup\{u\})
    \Big)\Big]
     \\
     +\E\Big[
     &h_{\theta}(u;X^T)
     \Big(
    h_{\theta}(v;X^T)
    V_{\theta}(u,X^T,X^V) \lambda_{\theta}(u;X^T)
     V_{\theta}(v,X^T,X^V) \lambda_{\theta}(v;X^T)
    \\
    &
    -
     h_{\theta}(v;X^T\cup\{u\})
    V_{\theta}(v,X^T\cup\{u\},X^V) \lambda_{\theta}(v;X^T\cup\{u\})
    \breve\lambda_{\theta_0}\lbrace (u,1);\breve{X}\rbrace
    \Big)
     \Big]
     \de u
     \de v
     \\
     +\int_{A}
  \E\Big[
     &
     h_{\theta}(u;X^T)^2
    \breve\lambda_{\theta_0}\lbrace 
    (u,1);\breve{X}
    \rbrace 
    \Big]\de u
    \\
    - \Bigg(\int_A
       \E\Big[
       &
       h_{\theta}(u;X^T)
       \left( 
    \breve\lambda_{\theta_0}\lbrace (u,1);\breve{X}\rbrace
       -
       V_{\theta}(u,X^T,X^V) \lambda_{\theta}(u;X^T)
       \right)
       \Big]
       \de u
       \Bigg)^2.
\end{align*}
Using the limits in Theorem \ref{thm:TF} and Theorem \ref{thm:TF_Block}, as well as the expectation in Theorem \ref{thm:ExpectationPredErrs}, one may hope to shed some light on how the bias, the variance and the MSE of the general PPL setup relates to its limiting case Takacs-Fiksel estimation. 
Based on this approach, we have, however, not been 
able to theoretically show that either of the two methods results in a lower bias, variance or MSE for the estimators than the other method.
Hence, 
we have to resort to a simulation study.

\section{PPL-weights and simulation study for specific models}

\label{s:Gibbs}

In 
our simulation study, which can be found in Section \ref{s:Simulations}, 
we will 
consider 
a range of common Gibbs models, which all belong to the exponential family \citep{Coeurjolly2019understanding}. Specifically, these model families 
are Poisson, hard-core, Strauss and Geyer saturation processes. 
Moreover, in Section \ref{s:wGibbs} we derive explicit results for the PPL-weights of general Gibbs processes, characterised by \eqref{e:GibbsPI}, and in turn use these results to obtain expressions for the PPL-weights of the four model families we consider in the simulation study. 
Section \ref{sec:w_choice} discusses how to estimate or approximate PPL-weights in practice.

\subsection{PPL-weights}
\label{s:wGibbs}
Recall that for a given model $\Lambda_{\theta}$, in order to specify the generalised estimator family in \eqref{eq:xi_Vlambda}, we need to find the associated PPL-weights, i.e.\ \eqref{e:PPLweights}. 
The result below, which is proved in Section \ref{sec:proofPPLweights}, provides more explicit PPL-weight expressions for different settings, in particular for Gibbs processes, when  combined with independent thinning-based cross-validation. In particular, for Gibbs models we see that the PPL-weights are governed by the interaction function $\Phi_2$ in \eqref{e:GibbsPI}.

\begin{theorem}
\label{thm:PPLweights}
Consider a point process $X$ and a conditional intensity model $\Lambda_{\Theta}=\{\lambda_{\theta}: \theta\in\Theta\}$. 
Under auto-prediction, i.e.\ when we consider the prediction error $\I_{\xi_{\theta}}^{h_{\theta}}(A;X,X)$, it follows that $\xi_{\theta}(\cdot;X) = \lambda_{\theta}(\cdot| X)$, i.e.\ $V_{\theta}(\cdot,X,X) = 1$, a.s.. 
Moreover, 
for a Gibbs model with $\Lambda_{\Theta}$ specified by \eqref{e:GibbsPI}, i.e.\ $\lambda_{\theta}(u|\x) = \e^{\Phi_1(u;\theta)+\Phi_2(u,\x;\theta)}$, we have 
\begin{align}
V_{\theta}(u,X^T,X^V) 
&= 
p(u)
\E[
\e^{\Phi_2(u,X^T\cup X^V;\theta)-\Phi_2(u,X^T;\theta)}
|
X^T
]\notag
\\
&= \label{eq:PPLweight_nonlinear}
p(u)
\e^{-\Phi_2(u,X^T;\theta)}
\E[
\e^{\Phi_2(u,X^T\cup X^V;\theta)}
|
X^T
]
.
\end{align}
If further  $\Phi_2(\cdot;\theta)$ is linear in the sense that $\Phi_2(\cdot,\x\cup\y;\theta) = \Phi_2(\cdot,\x;\theta) + \Phi_2(\cdot,\y;\theta)$, $\x,\y\in\nn$, 
\begin{equation}
V_{\theta}(u,X^T,X^V) 
= \label{eq:PPLweight_linear}
p(u)
\E[
\e^{\Phi_2(u,X^V;\theta)}
|
X^T
]
.  
\end{equation}

\end{theorem}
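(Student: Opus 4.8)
The plan is to treat the three assertions separately, starting from the two weight expressions already available in the excerpt: the general definition \eqref{e:PPLweights} and its independent-thinning specialisation \eqref{e:PPLweightsIndThin}. The auto-prediction claim is essentially definitional, whereas the two Gibbs formulas follow by substituting the exponential form \eqref{e:GibbsPI} into \eqref{e:PPLweightsIndThin} and simplifying.

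For the auto-prediction statement I would argue by direct comparison. Setting $\y=\z=X$ in the prediction error \eqref{e:PredError} gives precisely the auto-prediction \eqref{e:Innovation}, and the latter is by construction the innovation $\I_{\lambda_{\theta}}^{h_{\theta}}(A;X,X)$, i.e.\ the special case $\xi_{\theta}(u;X)=\lambda_{\theta}(u|X)$. Since the generalised estimator family is parametrised as $\xi_{\theta}(u;X^T)=V_{\theta}(u)\lambda_{\theta}(u|X^T)$ in \eqref{eq:xi_Vlambda}, and here the training set is the full pattern $X^T=X$, the identity $\xi_{\theta}(\cdot;X)=\lambda_{\theta}(\cdot|X)$ forces $V_{\theta}(\cdot,X,X)=1$ almost surely. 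The consistency of this choice with Theorem \ref{thm:ExpectationPredErrs} is guaranteed by the GNZ formula \eqref{e:GNZ}, which already makes the innovation mean-zero at $\theta_0$ without any weight correction.

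For the Gibbs weight \eqref{eq:PPLweight_nonlinear} I would start from the independent-thinning form \eqref{e:PPLweightsIndThin} and insert $\lambda_{\theta}(u|\x)=\e^{\Phi_1(u;\theta)+\Phi_2(u,\x;\theta)}$. The factor $\e^{\Phi_1(u;\theta)}=\widetilde\rho_{\theta}(u)$ is free of the configuration, so it cancels in the ratio $\lambda_{\theta}(u|X^T\cup X^V)/\lambda_{\theta}(u|X^T)$, leaving $\e^{\Phi_2(u,X^T\cup X^V;\theta)-\Phi_2(u,X^T;\theta)}$ inside the conditional expectation; this yields the first line of \eqref{eq:PPLweight_nonlinear}. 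Because $\Phi_2(u,X^T;\theta)$ is $\sigma(X^T)$-measurable, it may be pulled out of the conditional expectation as the factor $\e^{-\Phi_2(u,X^T;\theta)}$, giving the second line. The linear case \eqref{eq:PPLweight_linear} then follows immediately: since $X^V$ is a thinning, $X^T$ and $X^V$ are disjoint with $X^T\cup X^V=X$, so the additivity hypothesis $\Phi_2(u,\x\cup\y;\theta)=\Phi_2(u,\x;\theta)+\Phi_2(u,\y;\theta)$ gives $\Phi_2(u,X^T\cup X^V;\theta)-\Phi_2(u,X^T;\theta)=\Phi_2(u,X^V;\theta)$, collapsing the exponent to a term depending only on the validation set.

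The computations here are routine algebra; the only points requiring care are (i) articulating precisely why auto-prediction corresponds to $V_{\theta}=1$, since $X^T=X^V=X$ does not literally fit the thinning construction of Definition \ref{def:thinning} and the argument is therefore a definitional comparison rather than a specialisation of \eqref{e:PPLweightsIndThin}, and (ii) justifying the interchange that pulls the $\sigma(X^T)$-measurable factor $\e^{-\Phi_2(u,X^T;\theta)}$ out of the conditional expectation, which is the standard ``taking out what is known'' property, valid here by nonnegativity of the integrand, with the finite-second-moment condition accompanying \eqref{e:PPLweights} ensuring the resulting weight is finite and the convention $0/0=0$ handling configurations with $\lambda_{\theta}(u|X^T)=0$. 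I expect (i) to be the main conceptual obstacle and (ii) to be the main technical one, though both are mild.
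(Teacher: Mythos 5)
Your proposal is correct and follows essentially the same route as the paper: the auto-prediction claim is handled by identifying auto-prediction with the innovation of Baddeley et al.\ (so that $\xi_{\theta}=\lambda_{\theta}$ forces $V_{\theta}\equiv1$), and the Gibbs formulas are obtained by substituting \eqref{e:GibbsPI} into the independent-thinning weight \eqref{e:PPLweightsIndThin}, cancelling the $\Phi_1$ factor, and invoking linearity of $\Phi_2$ together with the disjointness $X^T\cap X^V=\emptyset$. Your two added points of care — the definitional nature of the auto-prediction case and the ``taking out what is known'' step for the second line of \eqref{eq:PPLweight_nonlinear} — are mild refinements of steps the paper treats as immediate, not a different argument.
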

That $\Phi_2$ is linear is e.g.\ the case for a pairwise interaction model, where $\Phi_2(u,\x;\theta)=\sum_{x\in\x}g_{\theta}(u,x)$, for some pairwise interaction function $g_{\theta}$, and here $V_{\theta}(u,X^T,X^V)/p(u)$ becomes the Laplace functional \citep{DVJ2} of $X^V|X^T$ evaluated in $g_{\theta}$. 
Note further that $p(u)=V_{\theta}(u,X_i^T,X_i^V)/\E[\e^{\Phi_2(u,X_i^T\cup X_i^V;\theta)-\Phi_2(u,X_i^T;\theta}) | X^T]$ is given by $p$, $1/k$ and $\1\{u\in\Sm_i\}$ for Monte-Carlo, multinomial $k$-fold and block cross-validation, respectively.
Note also that, for $i = 1,\ldots,k$, the superposition $X_i^T\cup X_i^V$ is equal to $X$.

Given a particular model, we have knowledge of $\Phi_2(\cdot;\theta)$, whereby for a particular cross-validation regime we know what the associated PPL-weight looks like in theory. However, as we shall see, the associated conditional expectation is typically intractable or is not of much direct statistical use. 
Hence, we need to find a way to estimate PPL-weights. A challenge here, however, is that we usually only have access to one realisation $\x$ of the point process, which makes estimating the conditional expectation difficult. 
Next, after looking closer at the PPL-weights for the four aforementioned Gibbs models, to understand their theoretical properties, in Section \ref{sec:w_choice} we discuss how to estimate/approximate the weights.

\subsubsection{Poisson processes}
\label{s:Poisson}
We have already noted that a Poisson process $X$ is obtained by letting $\Phi_2(\cdot)=0$ in \eqref{e:GibbsPI} and 
that $\rho_X(u) = \lambda_X(u|\x)
=\widetilde\rho_{\theta_0}(u)=e^{\Phi_1(u,\theta_0)}$. 
In Section \ref{s:Marking} we also remarked that an independent thinning $X^V$ of $X$ is also a Poisson process, but with intensity $\rho_{X^V}(\cdot) = p(\cdot)\rho_X(\cdot)$. 
Further, from Theorem \ref{thm:ExpectationPredErrs} 
we have that under independent thinning-based cross-validation the PPL-weight for a Poisson process is given by 
$V_\theta(u,X^T,X^V) = p(u)$.

\subsubsection{Hard-core processes}
\label{s:Hard-core}
One of the simplest kinds of interaction between points is where they are forbidden to come too close together, which is the case for a hard-core process \citep{BRT15}.
The conditional intensity is here given by $\lambda_{\theta_0}(u|\x)  
= \beta \1\left\{u\notin \bigcup_{x \in \x} b(x,R)\right\}$
where $\beta>0$ and 
$R>0$ is referred to as the hard-core distance; note that $\lambda_{\theta_0}(\cdot)\leq\beta$. 
Lemma \ref{lemma:HardCore}, which is included for completeness and proved in Section \ref{sec:lemmas}, verifies that hard-core models are indeed repulsive and gives an expression for the PPL-weight.

\begin{lemma}\label{lemma:HardCore}
Hard-core models are repulsive and under independent thinning-based cross-validation 
their 
PPL-weights 
are given by  
$$
V_{\theta}(u,X^T,X^V)
=
p(u)
\P\bigg(\bigcap_{x \in X^V} \{u\notin b(x,R)\}\bigg| X^T\bigg)
=
p(u)
\P\bigg(u\notin\bigcup_{x \in X^V} b(x,R)\bigg| X^T\bigg)
.
$$
\end{lemma}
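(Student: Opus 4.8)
The plan is to derive the PPL-weight expression directly from the general Gibbs formula \eqref{eq:PPLweight_nonlinear} in Theorem \ref{thm:PPLweights}, since the hard-core conditional intensity is of the Gibbs form \eqref{e:GibbsPI} with $\Phi_1(u;\theta) = \log\beta$ and $\Phi_2(u,\x;\theta) = \log\1\{u\notin\bigcup_{x\in\x}b(x,R)\}$. For the repulsiveness, I would verify the defining inequality $\lambda_{\theta}(u|\x)\geq\lambda_{\theta}(u|\y)$ whenever $\x\subseteq\y$. This is immediate: since $\x\subseteq\y$ gives $\bigcup_{x\in\x}b(x,R)\subseteq\bigcup_{x\in\y}b(x,R)$, the event $\{u\notin\bigcup_{x\in\y}b(x,R)\}$ implies $\{u\notin\bigcup_{x\in\x}b(x,R)\}$, so the indicator for $\x$ dominates that for $\y$, and multiplying by $\beta>0$ preserves the inequality.

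For the weight, I would start from the first line of \eqref{eq:PPLweight_nonlinear}, namely $V_{\theta}(u,X^T,X^V) = p(u)\,\E[\e^{\Phi_2(u,X^T\cup X^V;\theta)-\Phi_2(u,X^T;\theta)}\mid X^T]$. The key observation is that the ratio $\e^{\Phi_2(u,X^T\cup X^V;\theta)-\Phi_2(u,X^T;\theta)}$ simplifies to a single indicator. Writing the exponentials as indicators, $\e^{\Phi_2(u,\x;\theta)} = \1\{u\notin\bigcup_{x\in\x}b(x,R)\}$, and since $X^T\subseteq X^T\cup X^V$, on the event where the numerator indicator is $1$ the denominator indicator is automatically $1$ as well. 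Hence the ratio equals $\1\{u\notin\bigcup_{x\in X^T\cup X^V}b(x,R)\}/\1\{u\notin\bigcup_{x\in X^T}b(x,R)\}$, which (adopting the convention $0/0=0$ noted after \eqref{e:PPLweights}, and using that $0/1=0$ while $1/1=1$) reduces to the event that $u$ avoids every ball around a point of $X^V$, \emph{given} that it already avoids those around $X^T$. Taking the conditional expectation given $X^T$ then turns this into the conditional probability, yielding $V_{\theta}(u,X^T,X^V) = p(u)\,\P(u\notin\bigcup_{x\in X^V}b(x,R)\mid X^T)$, and rewriting the complement of a union as an intersection gives the stated first form.

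The only subtlety — and the step I would treat most carefully — is the handling of the indicator ratio on the region where $u$ already lies inside a ball of $X^T$, so that both numerator and denominator vanish. There the conditional intensity $\lambda_{\theta}(u\mid X^T)$ is zero, so the defining equation \eqref{eq:xi_Vlambda} forces $\xi_{\theta}=0$ regardless of the nominal value of $V_{\theta}$, and the $0/0=0$ convention makes the formula consistent; I would note this explicitly rather than let it pass silently. Aside from this, the argument is a direct specialisation of \eqref{eq:PPLweight_nonlinear}, so I do not anticipate a genuine obstacle — the work is essentially bookkeeping with indicator functions, and the main thing to get right is the conditioning convention so that the final probabilistic statement is unambiguous.
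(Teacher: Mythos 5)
Your proof is correct, but it reaches the weight formula by a genuinely different route than the paper. The repulsiveness argument is identical (monotonicity of the indicator under set inclusion). For the weight, you specialise the general Gibbs expression \eqref{eq:PPLweight_nonlinear} of Theorem \ref{thm:PPLweights} directly to the hard-core model and work out the ratio of indicators; the paper instead takes the Strauss weight from Lemma \ref{lemma:Strauss}, $p(u)\E[\gamma^{D_R(u,X^V)}\,|\,X^T]$, and lets $\gamma\to0$, invoking bounded convergence, continuity of $\gamma\mapsto\gamma^x$ and the convention $0^0=1$. The paper's route is convenient because for $\gamma>0$ the Strauss conditional intensity is strictly positive, so the ratio $\lambda_{\theta}(u|X)/\lambda_{\theta}(u|X^T)$ never hits $0/0$ and the degeneracy is pushed entirely into the limit; your route confronts it head on, and your handling is the right one. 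Indeed, under the paper's convention $0/0=0$, the direct computation actually yields $p(u)\,\1\{u\notin\bigcup_{x\in X^T}b(x,R)\}\,\P(u\notin\bigcup_{x\in X^V}b(x,R)\,|\,X^T)$, i.e.\ the stated formula times an extra $X^T$-measurable indicator; the two expressions differ only on the set where $\lambda_{\theta}(u|X^T)=0$, where $\xi_{\theta}=V_{\theta}\lambda_{\theta}(\cdot|X^T)$ vanishes no matter what value the weight takes, exactly as you observe. The same caveat silently affects the paper's argument, since the $\gamma\to0$ limit of the Strauss weights need not agree with the hard-core ratio on that region either; both proofs determine the weight only up to this irrelevant set, and yours is the more explicit about it. In short: same repulsiveness proof, different derivation of the weight --- yours is self-contained and makes the convention issue visible, while the paper's reuses the Strauss lemma and avoids indicator ratios altogether.
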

Note that the PPL-weight $V_{\theta}(u,X^T,X^V)$ for the hard-core process is at most $p(u)$. This follows from 
Theorem \ref{thm:ExpectationPredErrs},
since the hard-core process is repulsive, but it can also be seen by the form of the PPL-weight in Lemma \ref{lemma:HardCore}, as probabilities range between 0 and 1. In particular, the special case $R=0$, which corresponds to a Poisson process with intensity $\beta>0$, yields the upper bound 
$p(u)$.

\subsubsection{Strauss processes}
\label{s:Strauss}
A Strauss process \citep{strauss1975model} is intermediate between a hard core process and 
a Poisson process, and its behaviour depends on the value of its interaction parameter, 
$\gamma \in [0, 1]$. 
Its conditional intensity 
is given by 
\begin{align}
    \label{eq:DR}
\lambda_{\theta_0}(u|\x) &= \beta\gamma^{D_R(u,\x)}
\leq \beta,
\nonumber
\\
    D_R(u; \x) &= \sum_{x \in \x \setminus \{u\}} \1\{d(u, x) \leq R\} = \sum_{x \in \x \setminus \{u\}} \1\{u \in b(x,R)\},
\end{align}
where $R>0$ is called the interaction radius. 
When $\gamma=0$, using the convention that $0/0=1$, we obtain the hard-core model, while $\gamma=1$ results in a Poisson process with intensity $\beta>0$. 
Just as in the hard-core case, for completeness, we here show that Strauss models are repulsive and obtain an expression for the PPL-weights; see Appendix \ref{sec:lemmas} for a proof of Lemma \ref{lemma:Strauss}.

\begin{lemma}\label{lemma:Strauss}
Strauss models are repulsive and the PPL-weight for a Strauss model is
$$V_{\theta}(u,X^T,X^V) 
=
p(u)
\E[\gamma^{D_R(u,X^V)}|X^T].$$
\end{lemma}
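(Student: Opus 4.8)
The plan is to treat the two assertions separately: first establish repulsiveness directly from the definition, then read off the PPL-weight from Theorem \ref{thm:PPLweights}. For repulsiveness I would verify the inequality $\lambda_{\theta_0}(u|\x)\geq\lambda_{\theta_0}(u|\y)$ whenever $\x\subseteq\y$. Since the count $D_R(u;\x)=\sum_{x\in\x\setminus\{u\}}\1\{d(u,x)\leq R\}$ in \eqref{eq:DR} is a sum of non-negative indicators, enlarging the configuration can only add terms, so $\x\subseteq\y$ forces $D_R(u;\x)\leq D_R(u;\y)$. Because $\gamma\in[0,1]$, the map $n\mapsto\gamma^n$ is non-increasing on the non-negative integers, whence $\gamma^{D_R(u;\x)}\geq\gamma^{D_R(u;\y)}$ and thus $\lambda_{\theta_0}(u|\x)=\beta\gamma^{D_R(u;\x)}\geq\beta\gamma^{D_R(u;\y)}=\lambda_{\theta_0}(u|\y)$, which is exactly repulsiveness.

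For the PPL-weight I would match the Strauss conditional intensity to the Gibbs form \eqref{e:GibbsPI}, reading off $\Phi_1(u;\theta)=\log\beta$ and $\Phi_2(u,\x;\theta)=D_R(u;\x)\log\gamma$. The crucial step is to verify the additivity property required for \eqref{eq:PPLweight_linear}. In general $\Phi_2$ is additive only over disjoint configurations: if $\x$ and $\y$ share points, the union counts them once while the sum double-counts them, so strictly speaking $\Phi_2$ is not linear in the stated sense for arbitrary $\x,\y\in\nn$. However, this is precisely the regime we are in, since $X^V$ is a thinning of $X$ and $X^T=X\setminus X^V$, giving $X^T\cap X^V=\emptyset$ and $X^T\cup X^V=X$. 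For disjoint $\x,\y$ the indicator sum over $(\x\cup\y)\setminus\{u\}$ splits cleanly, yielding $\Phi_2(u,X^T\cup X^V;\theta)=\Phi_2(u,X^T;\theta)+\Phi_2(u,X^V;\theta)$.

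With additivity in hand I would start from the nonlinear expression \eqref{eq:PPLweight_nonlinear}, substitute this split for $\Phi_2(u,X^T\cup X^V;\theta)$, and observe that $\Phi_2(u,X^T;\theta)$ is $X^T$-measurable, so it can be pulled out of the conditional expectation and cancelled against the leading factor $\e^{-\Phi_2(u,X^T;\theta)}$. This produces $V_{\theta}(u,X^T,X^V)=p(u)\E[\e^{\Phi_2(u,X^V;\theta)}|X^T]$, i.e.\ exactly \eqref{eq:PPLweight_linear}. Finally, rewriting $\e^{\Phi_2(u,X^V;\theta)}=\e^{D_R(u;X^V)\log\gamma}=\gamma^{D_R(u,X^V)}$ gives the claimed formula.

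The main obstacle here is a modest but genuine one: the additivity of $\Phi_2$ fails for overlapping configurations, so \eqref{eq:PPLweight_linear} cannot be invoked verbatim as a statement about all $\x,\y\in\nn$. The derivation instead rests on the disjointness of the training and validation sets, which is guaranteed by the thinning construction and is exactly what legitimises the clean split used above. Everything else reduces to the monotonicity of $D_R$ and the elementary cancellation of the $X^T$-measurable factor, both routine.
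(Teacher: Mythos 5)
Your proof is correct and follows essentially the same route as the paper's: repulsiveness via the monotonicity of $D_R(u;\cdot)$ under set inclusion together with $\gamma\in[0,1]$, and the weight formula via the pairwise-interaction (linear) case of Theorem \ref{thm:PPLweights}. Your additional observation that the linearity of $\Phi_2$ is legitimate here precisely because $X^T$ and $X^V$ are disjoint by the thinning construction is a correct and worthwhile clarification of why \eqref{eq:PPLweight_linear} applies, a point the paper leaves implicit.
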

Again, we note that PPL-weights 
for 
Strauss processes are smaller than $p(u)$, as a consequence of Theorem \ref{thm:ExpectationPredErrs}.

\subsubsection{Geyer-saturation processes}
\label{s:Geyer}
A Geyer saturation process \citep{geyer1999likelihood} is an extension of a Strauss process where $\gamma$ is allowed to be greater than $1$. Intuitively, this assumption should promote that points tend to aggregate around each other. 
However, in a Strauss process this assumption can lead to a non-integrable density function \citep[Section 6.2.2]{MW04},
so therefore in a Geyer saturation process, the overall contribution from each point is trimmed to never exceed a maximum value. 
The conditional intensity of a Geyer saturation process is given by
$$   
\lambda_{\theta_0}(u|\x) = \beta\gamma^{\min(s,D_R(u,\x))+\sum_{y\in\x}(\min(s,D_R(y,\x\cup \{u\})-\min(s,D_R(y,\x)))},$$ 
where $\beta,\gamma,R,s$ are parameters, and $D_R(y,\x)$, which is given in \eqref{eq:DR}, is the number of other data points $z\in\x$ lying within a distance $R$ of the point $y$. 
The parameter $s\geq 0$ is a saturation threshold, which ensures that each term in the product is never larger than $\gamma^s$, so that the product is never larger than $\gamma^{s\#\x}$. 
For $s = 0$ the a Geyer stauration process reduces to a Poisson process while $s = \infty$ results in a Strauss process with interaction parameter $\gamma^2$. Note that, in contrast to a Strauss process, a Geyer saturation process is not a pairwise interaction model \citep{BRT15}. Moreover, by \citet[Proposition 3.5]{geyer1999likelihood} it is locally stable.

The general claim in the literature is that when $\gamma > 1$, a Geyer model is attractive, and when $\gamma < 1$, it is repulsive \citep{BRT15,geyer1999likelihood}. 
However, 
personal communication with Marie-Colette van Lieshout (CWI, the Netherlands) 
verified our suspicion, namely 
that this is not true. 
The counterexample provided by Marie-Colette van Lieshout is given in the proof of Lemma \ref{lem:counter}, which is found in Appendix \ref{sec:lemmas}.
\begin{lemma}
\label{lem:counter}
It does not hold that in general that a Geyer model is attractive  when $\gamma > 1$, nor that it is repulsive  when $\gamma < 1$. 

\end{lemma}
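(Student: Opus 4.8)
The statement is a pair of refutations of folklore, so my plan is to produce an explicit counterexample rather than a structural argument. The crucial observation is that the exponent of $\gamma$ in the Geyer conditional intensity, namely
\[
E(u,\x) = \min(s,D_R(u,\x)) + \sum_{y\in\x}\bigl(\min(s,D_R(y,\x\cup\{u\})) - \min(s,D_R(y,\x))\bigr),
\]
is \emph{not} monotone in $\x$: enlarging $\x$ can strictly \emph{decrease} $E(u,\x)$. Since $\lambda_{\theta_0}(u|\x) = \beta\gamma^{E(u,\x)}$, a single inclusion $\x\subseteq\y$ with $E(u,\x) > E(u,\y)$ simultaneously breaks attractiveness (take $\gamma>1$, whence $\lambda_{\theta_0}(u|\x) > \lambda_{\theta_0}(u|\y)$) and repulsiveness (take $\gamma<1$, whence $\lambda_{\theta_0}(u|\x) < \lambda_{\theta_0}(u|\y)$). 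Thus one configuration disproves both claims at once.

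The mechanism behind the non-monotonicity is saturation. A neighbour $y$ of $u$ contributes a marginal $\min(s,D_R(y,\x\cup\{u\})) - \min(s,D_R(y,\x)) = 1$ to the sum only while $y$ is unsaturated, i.e.\ while $D_R(y,\x) < s$. Adding a further point $c$ that pushes such a neighbour $y$ up to the saturation level $s$ \emph{before} $u$ is inserted removes $y$'s marginal contribution (it drops from $1$ to $0$), and if no other term compensates, the total exponent falls. This is exactly the effect I will engineer.

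Concretely, I take $s=1$ and $R=1$ in $\R^2$ and place $u$ at the origin, two points $a,b$ each within distance $R$ of $u$ but more than $R$ apart from each other, and a third point $c$ within distance $R$ of $a$ but farther than $R$ from both $u$ and $b$; e.g.\ $u=(0,0)$, $a=(0.6,0)$, $b=(-0.6,0)$ and $c=(1.5,0)$. Writing $\x=\{a,b\}$ and $\y=\{a,b,c\}=\x\cup\{c\}$, a direct count (with $s=1$) gives $E(u,\x)=3$, since $u$ contributes $\min(1,2)=1$ and each of $a,b$ gains one unsaturated neighbour upon inserting $u$; whereas $E(u,\y)=2$, because $c$ already saturates $a$, so $a$'s marginal contribution now vanishes while $b$ and $u$ still contribute $1$ each. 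Hence $\x\subseteq\y$ but $E(u,\x)=3>2=E(u,\y)$, as required.

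With this configuration the lemma follows immediately from the dichotomy above. The only delicate part is the geometry: one must place the four points so that every pairwise distance lands on the intended side of $R$, so that the neighbour counts $D_R(\cdot,\cdot)$ take exactly the stated values; once the coordinates are fixed, verifying $E(u,\x)=3$ and $E(u,\y)=2$ is routine arithmetic. I expect no further obstacle, since the saturation-driven drop in the exponent is robust and a small perturbation of the coordinates leaves all strict inequalities intact.
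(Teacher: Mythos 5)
Your proposal is correct and follows essentially the same route as the paper's own proof: both fix the saturation threshold $s=1$ and exhibit an explicit configuration $\x\subseteq\y$ in which the added point saturates a neighbour of the test location, strictly decreasing the exponent of $\gamma$, so that a single inclusion refutes attractiveness for $\gamma>1$ and repulsiveness for $\gamma<1$ simultaneously. The only difference is cosmetic — the paper uses the patterns $\{\eta\}\subseteq\{\eta,\zeta\}$ with the added point also a neighbour of the test location, while you use $\{a,b\}\subseteq\{a,b,c\}$ with $c$ not neighbouring $u$ — and your arithmetic ($E(u,\x)=3>2=E(u,\y)$) checks out.
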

 We next provide an expression for  
 PPL-weights 
 for Geyer model saturation processes; see Appendix \ref{sec:lemmas} for the proof of Lemma \ref{lem:geyer}.
\begin{lemma}
\label{lem:geyer}
The PPL-weight for a Geyer saturation model is given by 
\begin{align*}
&V_{\theta}(u,X^T,X^V) 
=
p(u)
\E\left[
\e^{\Phi_2(u,X;\theta)
-
\Phi_2(u,X^T;\theta)
}
|X^T
\right].
\end{align*}
where
\begin{align*}
\Phi_2(u,X;\theta)
=&
    \log(\gamma)(\min(s,D_R(u,X))
    \\
&+
\sum_{y\in X}
\1\{y \in b(u,R)\}
\1\{1\leq D_R(y,X\cup \{u\})\leq s\}),
\\
\Phi_2(u,X^T;\theta)
=&
\log(\gamma)(\min(s,D_R(u,X^T))
\\
&+
\sum_{y\in X^T}
\1\{y \in b(u,R)\}
\1\{1\leq D_R(y,X^T\cup \{u\})\leq s\}).
\end{align*}
\end{lemma}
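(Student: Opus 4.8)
The plan is to recognise Lemma~\ref{lem:geyer} as a direct application of the nonlinear Gibbs PPL-weight formula \eqref{eq:PPLweight_nonlinear} of Theorem~\ref{thm:PPLweights}, so that the lemma reduces to a purely combinatorial simplification of the Geyer interaction term $\Phi_2$. First I would write the Geyer saturation conditional intensity in the canonical Gibbs form \eqref{e:GibbsPI}, reading off $\Phi_1(u;\theta)=\log\beta$ and
\[
\Phi_2(u,\x;\theta) = \log(\gamma)\Big[\min(s,D_R(u,\x)) + \sum_{y\in\x}\big(\min(s,D_R(y,\x\cup\{u\})) - \min(s,D_R(y,\x))\big)\Big].
\]
Since $X^T\cup X^V = X$, equation \eqref{eq:PPLweight_nonlinear} applied to this $\Phi_2$ gives immediately
\[
V_{\theta}(u,X^T,X^V) = p(u)\,\E\big[\e^{\Phi_2(u,X;\theta)-\Phi_2(u,X^T;\theta)}\mid X^T\big],
\]
which is exactly the PPL-weight expression asserted in the lemma. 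It therefore only remains to verify that the displayed $\Phi_2$ coincides, for a general configuration $\x$, with the simplified form stated in the lemma.

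The heart of the argument is the summand-by-summand identity
\[
\sum_{y\in\x}\big(\min(s,D_R(y,\x\cup\{u\})) - \min(s,D_R(y,\x))\big) = \sum_{y\in\x}\1\{y\in b(u,R)\}\,\1\{1\leq D_R(y,\x\cup\{u\})\leq s\}.
\]
I would fix $y\in\x$ and use that adding the candidate point $u$ increases $D_R(y,\cdot)$ by exactly $\1\{y\in b(u,R)\}$, that is $D_R(y,\x\cup\{u\}) = D_R(y,\x) + \1\{y\in b(u,R)\}$. When $y\notin b(u,R)$ the two counts agree and the $y$-summand on the left vanishes, matching the leading indicator on the right. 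When $y\in b(u,R)$, writing $\ell=D_R(y,\x)$ the summand equals $\min(s,\ell+1)-\min(s,\ell)$; since the counts $D_R$ are integer-valued and, as is standard for the Geyer saturation model, $s$ is a non-negative integer, this telescoping increment equals $\1\{\ell+1\leq s\}$. Re-expressing the threshold via $D_R(y,\x\cup\{u\})=\ell+1$ turns $\ell+1\leq s$ into $D_R(y,\x\cup\{u\})\leq s$, while the lower bound $1\leq D_R(y,\x\cup\{u\})$ is automatic because $u$ is then a neighbour of $y$; this is precisely the second indicator on the right.

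Applying this identity once with $\x=X$ and once with $\x=X^T$ produces the two formulas for $\Phi_2(u,X;\theta)$ and $\Phi_2(u,X^T;\theta)$ displayed in the lemma, completing the argument. I do not expect a genuine obstacle: the only delicate point is the boundary case $\ell=s$ in the telescoping increment $\min(s,\ell+1)-\min(s,\ell)$, where integrality of $s$ and of the counts $D_R$ must be invoked to conclude that the increment is $0$ rather than a fractional value. Everything else is a mechanical substitution into Theorem~\ref{thm:PPLweights}.
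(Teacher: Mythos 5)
Your proposal is correct and follows essentially the same route as the paper: both apply the nonlinear Gibbs weight formula \eqref{eq:PPLweight_nonlinear} of Theorem \ref{thm:PPLweights} to the Geyer conditional intensity written in the form \eqref{e:GibbsPI}, and then rewrite $\Phi_2$ using the identity
$\sum_{y\in\x}\bigl(\min(s,D_R(y,\x\cup\{u\}))-\min(s,D_R(y,\x))\bigr)
=\sum_{y\in\x}\1\{y\in b(u,R)\}\,\1\{1\leq D_R(y,\x\cup\{u\})\leq s\}$.
The only difference is that the paper cites Opitz (2009) for this rewriting, whereas you prove it directly via $D_R(y,\x\cup\{u\})=D_R(y,\x)+\1\{y\in b(u,R)\}$ and the telescoping increment $\min(s,\ell+1)-\min(s,\ell)=\1\{\ell+1\leq s\}$, correctly flagging that integrality of $s$ (and of the counts) is what makes the boundary case $\ell=s$ work, so your argument is simply a more self-contained version of the same proof.
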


As for Strauss processes, the form of the PPL-weight here is not of much use in statistical settings. Moreover, since a Geyer saturation process is 
not guaranteed to be attractive when $\gamma > 1$ and repulsive when $\gamma < 1$,
we can not be certain about the aforementioned bounds for the PPL-weights. However, one could likely obtain some form of upper bound by exploiting the local stability of the model; recall Theorem \ref{thm:ExpectationPredErrs}.

\subsection{Estimation of PPL-weights}
\label{sec:w_choice}
Assuming independent thinning based cross-validation, 
from Section \ref{s:Poisson} we know that the PPL-weight is $p(u)$ for a Poisson process. 
Unfortunately, as we have seen, for other models 
it is hard to obtain practically useful expressions for the PPL-weights. 
More specifically, in Lemma \ref{lemma:HardCore} and \ref{lemma:Strauss} we obtained closed form expressions for the PPL-weights for hard-core and Strauss processes. However, it does not seem to be an easy task to compute them explicitly. 
Moreover, in Lemma \ref{lem:geyer} we find the PPL-weight for a Geyer saturation process, which clearly 
is a very complicated expression. 
Hence, in practice, we have to either approximate them in some sensible way \citep[cf.][]{cronie2023cross} or estimate them in some suitable way. 
What we do know from Section \ref{s:Hard-core} and Section \ref{s:Strauss} is that PPL-weights for hard-core and Strauss processes are smaller than $p(u)$. 
Further, since it is not known whether a certain Geyer saturation model is attractive or repulsive, it is also not known how the PPL-weight relates to the bound $p(u)$. 
Throughout, recall that in the particular case of Monte-Carlo cross-validation we have $p(\cdot)=p\in(0,1)$.

As a first suggestion, we approximate/estimate $V_{\theta}(u)$ by the retention probability $p(u)$ itself. For repulsive models, such as hard-core and Strauss processes, this 
corresponds to the upper bound of the PPL-weight, i.e.\ we know that $V_{\theta}(u)\in[0,p(u)]$, and for Poisson processes this coincides with the actual PPL-weight. This was also the choice made in the simulation study of \citet{jansson2024gibbs}, who considered Monte-Carlo cross-validation-based PPL in the context of hard-core processes; note that $\widehat V_{\theta}(u)=p$ here. 
However, for repulsive models 
a better choice would in fact be something slightly smaller than $p(u)$. Yet, when $p(u)$ is small the difference between the true weight and $p(u)$ is also small for repulsive models. 
On the other hand, for attractive models, in particular those which promote strong interactions, this PPL-weight estimate is likely far from optimal since Theorem \ref{thm:ExpectationPredErrs} only tells us that $V_{\theta}(u)\in[p(u),\infty)$. 
A slight refinement on the upper bound may be obtained for a locally stable model, where $\lambda_{\theta}(u|\cdot)\leq\phi_{\theta}(u)$, $\theta\in\Theta$. Recall that $\phi_\theta$ is assumed to be a $|\cdot|$-integrable function such that $\phi_\theta(u)<\infty$, $u\in\Sm$.
For 
a locally stable
model, regardless of whether it is attractive or repulsive, by the law of total expectation we have
\begin{align*}
\frac{p(u)\E[\lambda_{\theta}(u|X)|X^T]}{\phi_{\theta}(u)}
\leq
V_{\theta}(u) 
\leq 
\frac{p(u)\phi_{\theta}(u)}{\lambda_{\theta}(u| X^T)}
.
\end{align*}
When the model is attractive we further have that the left-hand side has the lower bound $p(u)\lambda_{\theta}(u|X^T)/\phi_{\theta}(u)$, by the fact that $X^T\subseteq X$, the attractiveness itself and the law of total expectation.
If we were to let the weight estimate be given by the average of the upper and lower bounds in the attractive setting, we would obtain
\begin{align*}
\widehat V_{\theta}(u)
&= p(u)\left(\lambda_{\theta}(u|X^T)/\phi_{\theta}(u) + \phi_{\theta}(u)/\lambda_{\theta}(u| X^T)\right)/2
,
\end{align*}
where we e.g.\ have $\lambda_{\theta}(u|X^T)/\phi_{\theta}(u)= \gamma^{D_R(u,X^T)}$, $0\leq\gamma\leq1$, for a Strauss process and $\lambda_{\theta}(u|X^T)/\phi_{\theta}(u)= 1$ for a Poisson process; for repulsive Gibbs processes where $\Phi_2(\cdot;\theta)\leq1$ we have that $\phi_{\theta}(u)=\Phi_1(u;\theta)$ and $\lambda_{\theta}(u|X^T)/\phi_{\theta}(u)=\Phi_2(u,X^T;\theta)$. 
This estimator is approximately equal to $p(u)$ if $\lambda_{\theta}(u| X^T)/\phi_{\theta}(u) \approx 1$, which goes back to not having too strong interactions. 
Hence, $\widehat V_{\theta}(u) = p(u)$ seems to be a sensible choice also in the attractive case, provided that the model is not too clustered.
In the case of Monte-Carlo cross-validation we thus let $\widehat V_{\theta}(\cdot) = p$.

Another option is to approximate the PPL-weight by $p(u)/(1-p(u))$; cf.\ the kernel intensity estimation in \citet{cronie2023cross}, where both Monte-Carlo cross-validation (where $p(u)=p$) and multinomial $k$-fold cross-validation (where $p(u)=1/k$) were employed. 
Given the bound $p(u)$ in Theorem \ref{thm:ExpectationPredErrs}, the choice $p(u)/(1-p(u))$ would make sense for attractive processes since here the weight is always larger than $p(u)$ as $0<p(u)<1$; for $p(u) = 1/2$ we have $p(u)/(1-p(u)) = 1$ 
and 
for $p(u) > 1/2$ the weight is strictly larger than 1. On the other hand, given Theorem \ref{thm:ExpectationPredErrs}, the choice $p(u)/(1-p(u))$ is clearly not a good approximation for the PPL-weight for repulsive models, such as hard-core and Strauss processes. In the case of Monte-Carlo cross-validation we here let $\widehat V_{\theta}(\cdot) = p/(1-p)$.

Another approach is to estimate the weights from the point pattern at hand.
We here consider weight estimation in the setting where we only have one realisation $\x$ of $X$. 
More specifically, when we employ independent thinning-based cross-validation, based on some retention probability function $p(\cdot)$, we
start from 
\begin{align*}
\widetilde V_{\theta}(u;\x) =
\frac{p(u)}{k'}
\sum_{i=1}^{k'}
\frac{\lambda_{\theta}(u|\x)}{\lambda_{\theta}(u| \z_i^T)}
\approx
p(u)
\left.
\E\left[
\frac{\lambda_{\theta}(u|X)}{\lambda_{\theta}(u| X^T)}
\right|
X^T
\right]
=V_{\theta}(u)
,
\end{align*}
where $\z_i^T$, $i=1,\ldots,k'$, is a set of different
independent thinnings of $\x$, based on the retention probability function 
$1-p(u)$. To obtain numerical stability and to avoid having a spatially varying estimate, in particular if the retention probability is not spatially varying as in the case of Monte-Carlo cross-validation, we then consider a spatial average of this function, i.e.
\begin{align}
\label{e:WeightEst}
\widehat V_{\theta}(\cdot) 
= 
\frac{1}{|\Sm|}
\int_{\Sm} \widetilde V_{\theta}(v;\x) \de v
.
\end{align}
With the general Gibbs formulation we have adopted, we have that
\[
\widetilde V_{\theta}(u;\x) =
\frac{p(u)}{k'}
\sum_{i=1}^{k'}
\e^{\Phi_2(u,\x;\theta)-\Phi_2(u, \z_i^T;\theta)}
,
\]
and if $\Phi_2(\cdot,\x\cup\y;\theta) = \Phi_2(\cdot,\x;\theta) + \Phi_2(\cdot,\y;\theta)$, $\x,\y\in\nn$, 
\[
\widetilde V_{\theta}(u;\x) =
\frac{p(u)}{k'}
\sum_{i=1}^{k'}
\e^{\Phi_2(u,\x\setminus\z_i^T;\theta)}
.
\]
It should be noted that this idea is also applicable outside independent thinning-based cross-validation, as long as we have an explicit form for the marked conditional intensity $\breve\lambda_{\theta}$.

In conclusion, we here propose, and will numerically evaluate in the context of Monte-Carlo cross-validation, the three PPL-weight estimators $\widehat V_{\theta}(\cdot)=p$, $\widehat V_{\theta}(\cdot)=p/(1-p)$ and $\widehat V_{\theta}(\cdot)$ given by 
\eqref{e:WeightEst}.

\subsection{Simulation study}
\label{s:Simulations}

\begin{figure}
    \centering
    \begin{subfigure}{0.4\textwidth}
        \includegraphics[width = \textwidth]{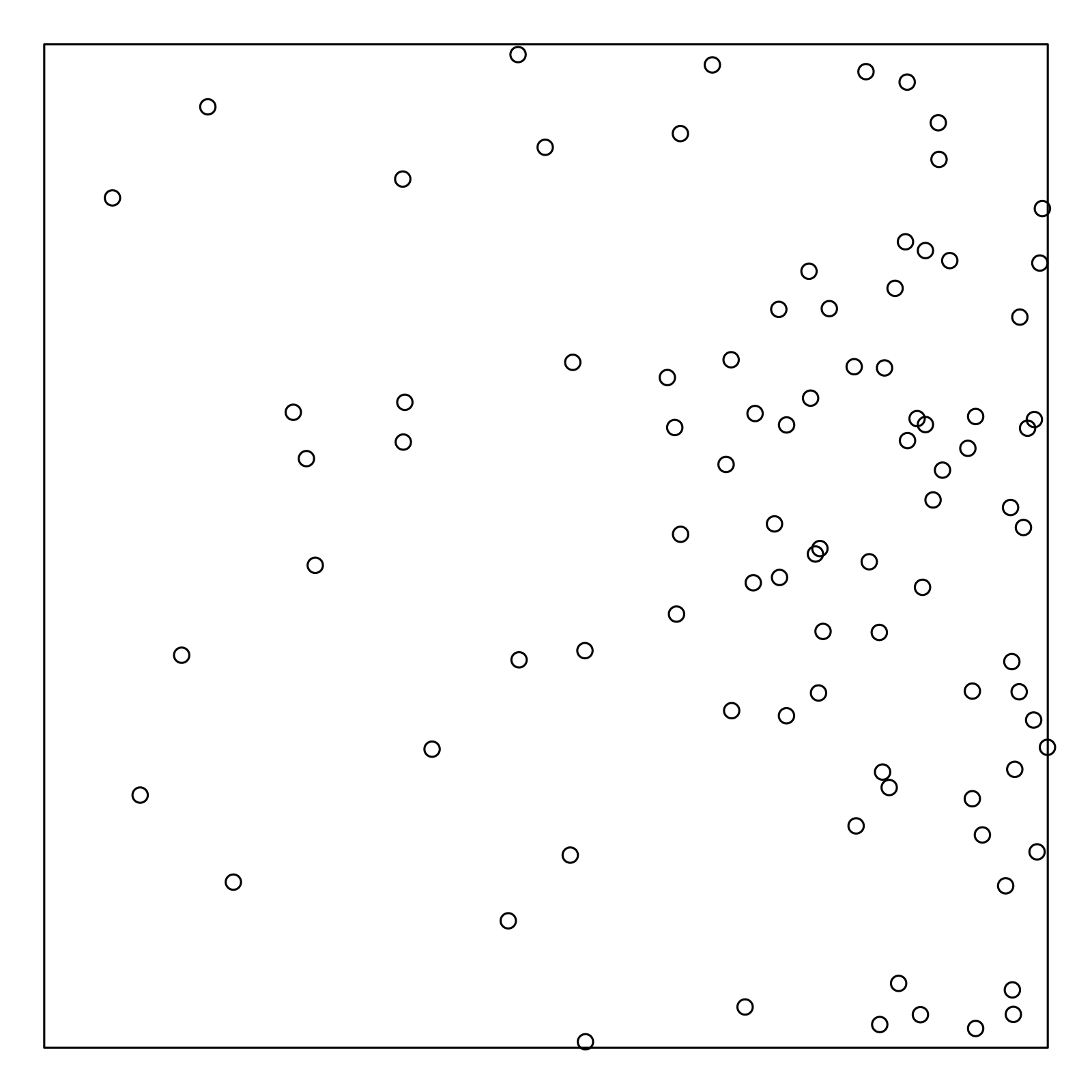}
        \caption{Poisson process with log-linear intensity function $\rho(u) = \e^{2+4 u_x}$, where $u = (u_x,u_y)$ is a point in the window $[0,1]^2$.}
        \label{fig:PoissonPP}
    \end{subfigure}
    \hfill
    \begin{subfigure}{0.4\textwidth}
        \includegraphics[width = \textwidth]{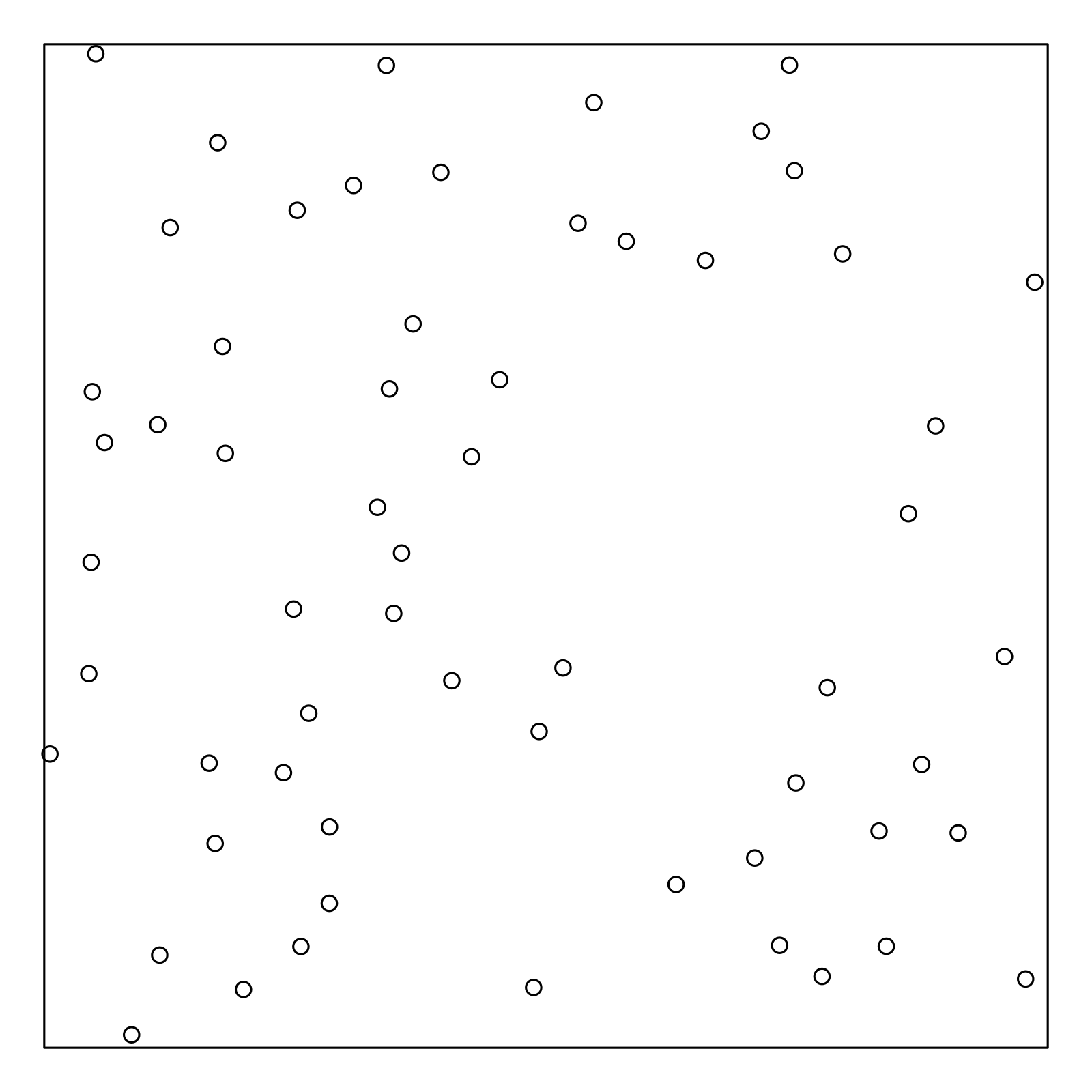}
        \caption{Hard-core process with hard-core distance $R = 0.05$ and intensity parameter $\beta = 100$.}
        \label{fig:Hard-corePP}
    \end{subfigure}
    \begin{subfigure}{0.4\textwidth}
        \includegraphics[width = \textwidth]{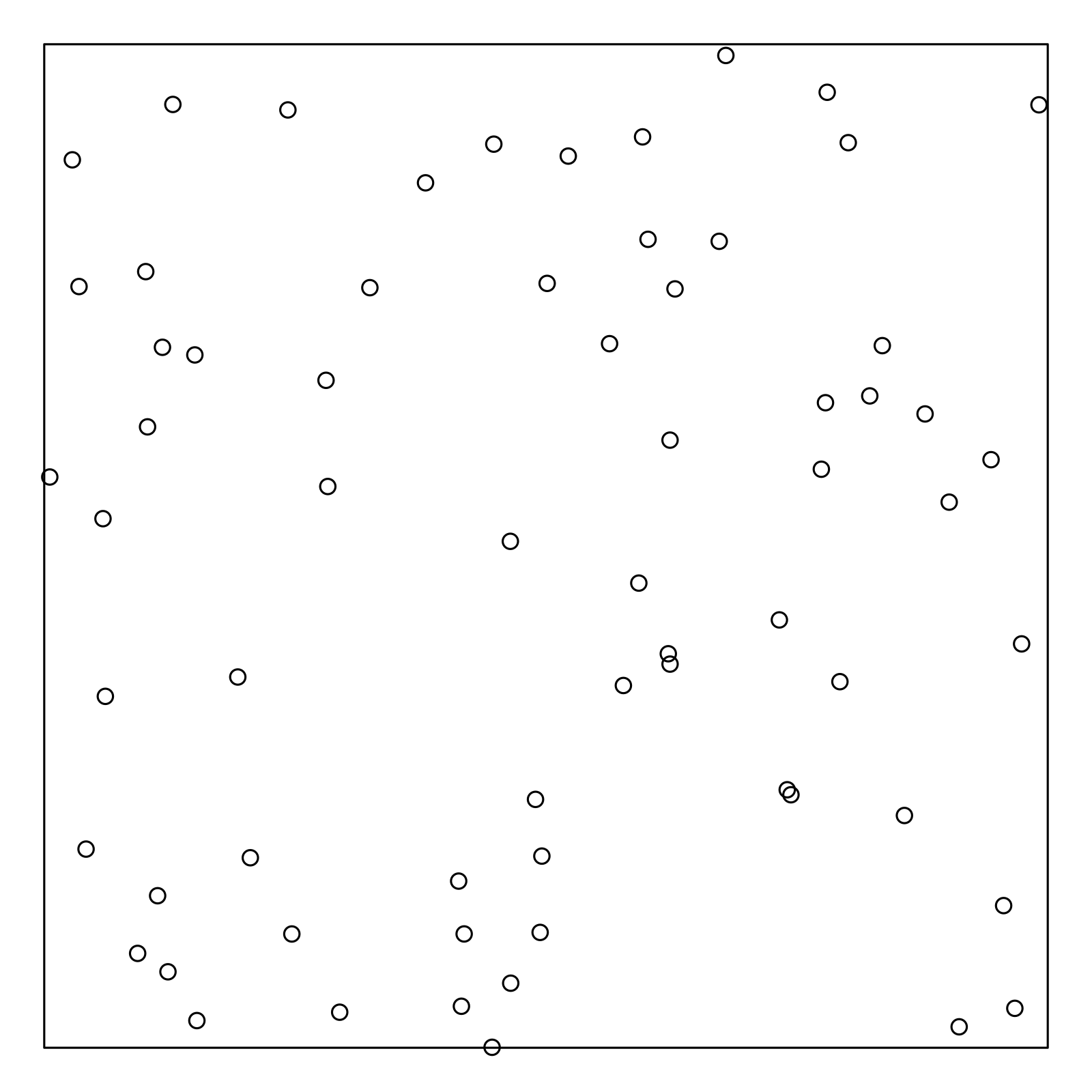}
        \caption{Strauss process with parameters $R = 0.05, \beta = 100$ and $\gamma = 0.5$.}
        \label{fig:StraussPP}
    \end{subfigure}
    \hfill
    \begin{subfigure}{0.4\textwidth}
        \includegraphics[width = \textwidth]{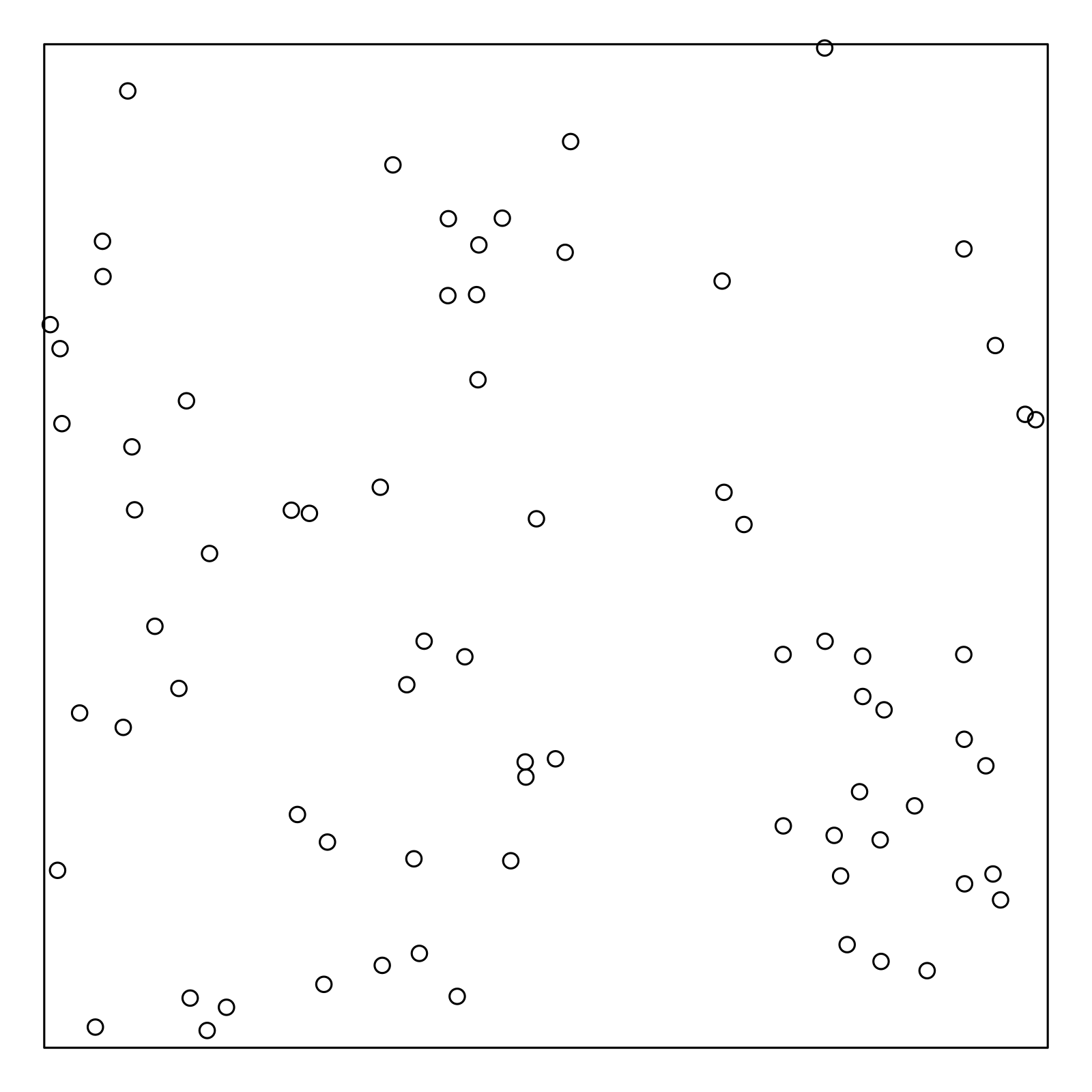}
        \caption{Geyer saturation process with parameters $R = 0.05, \beta = 60, \gamma = \sqrt{1.5}$ and $s = 2$.}
        \label{fig:GeyerPP}
    \end{subfigure}
    \caption{Examples of point patterns from the simulation study for the four considered models.}
    \label{fig:4PPs}
\end{figure}

Recall that we were unable to theoretically deduce whether the general PPL setup performs better than its limiting case Takacs-Fiksel estimation. 
Consequently, we next perform a simulation study which compares the performance of PPL and Takacs-Fiksel estimation for Poisson, hard-core, Strauss and Geyer processes. 
More specifically, in each scenario we consider a conditional intensity model $\lambda_{\theta}(u|\x)$, $u=(u_x,u_y)\in\Sm=[0,1]^2$, $x\in\nn$, and in the case of PPL we consider
Monte-Carlo cross-validation, where we let $k=100$ and $p=0.1,0.2,\ldots,0.9$, following recommendations in \citet{cronie2023cross}. This we compare with Takacs-Fiksel estimation, which was effectively implemented as PPL using leave-one-out cross-validation. 
Here, for both methods, we use grid search to minimise the prediction errors.
Our simulation study further looks into the performance of the aforementioned PPL-weight estimation/approximation schemes, i.e.\ $\widehat V_{\theta}(u;\x)=p$, $\widehat V_{\theta}(u;\x)=p/(1-p)$ and $\widehat V_{\theta}(u;\x)$ given by \eqref{e:WeightEst}, except for in the Poisson process case, where we only use the correct PPL-weight, $p$. 
To limit the scope of the simulation study, 
for both PPL and Takacs-Fiksel estimation we employed the Stoyan-Grabarnik test function.
This 
test function was chosen 
because 
it is well-established in the literature and 
because 
it simplifies the 
integral in 
the prediction error, which decreases the computational burden; recall Section \ref{s:auto_innov_TF}.
For each model we considered $N = 100$ simulated realisations, which we in turn used to 
estimate the bias, variance and MSE for each of the parameters of the models. In the case of PPL, where we have the three different loss functions $\Loss_1$, $\Loss_2$ and $\Loss_3$, we thus obtain three sets of bias, variance and MSE. 
The key question here is how the MSE is influenced by different choices of hyperparameters. 

The computational procedure for PPL is discussed in Section \ref{sec:implementation} of the Appendix.
All simulations were run on the Vera cluster, provided by Chalmers e-Commons at Chalmers University of Technology\footnote{https://www.c3se.chalmers.se/about/Vera/}. In Table \ref{tab:timing} we provide approximate runtimes for Takacs-Fiksel estimation and PPL 
for each of the four models. We stress that our code is not optimised for performance.
\begin{table}[!htb]
    \centering
    \begin{tabular}{c|c c c c}
         & Poisson & hard-core & Strauss & Geyer \\
         \hline
        Takacs-Fiksel estimation & 1 & 1.5 & 6 & 120\\
        PPL & 21 & 24 & 120 & 280
    \end{tabular}
    \caption{Average 
run-time  in seconds per realisation  
for 
    the Vera cluster, for each of the models in the simulation study. These results illustrate the performance for PPL when 
$k = 100$, $p$ is given by a fixed value and the PPL-weight is estimated by $p$. 
}
    \label{tab:timing}
\end{table}

\subsubsection{Poisson process}\label{s:SimPoisson}
We start with a Poisson process with log-linear intensity function $\lambda_{\theta}(u|\cdot)=\rho(u) = \e^{\alpha+\beta u_x}$, where $u = (u_x,u_y)$ is a point in the window $[0,1]^2$. Specifically, we choose $\alpha = 2$ and $\beta = 4$, and the grids where we searched for the parameters were $-1.0,-0.9,\ldots,3.0$ for $\alpha$ and $-2.0,-1.7,\ldots,10.0$ for $\beta$. Recall that we simulate $N = 100$ realisations of the process. A realisation of this model, where the expected number of points in $\Sm$ is approximately 100, can be found in Figure \ref{fig:PoissonPP}.

In Figure \ref{fig:poisson3-L1} we see that, for any $p$, the MSE for PPL is lower than the MSE for Takacs-Fiksel estimation, for both of the parameters, when using the $\Loss_1$ loss function. 
The results for the $\Loss_2$ and $\Loss_3$ loss functions can be found in Appendix \ref{sec:app_poi}, Figure \ref{fig:poisson3-L2L3}. The message is the same for $\Loss_2$ while $\Loss_3$ only performs better than Takacs-Fiksel estimation for large values of $p$. 

Note that, when using the \texttt{ppm} function of the \textsc{R} package \textsc{spatstat} \citep{BRT15} for this model we obtain an MSE of $0.16$ for $\alpha$ and $0.24$ for $\beta$.
This is because, here,  
\texttt{ppm} is based on the Poisson process likelihood function, particularly optimised to work well for Poisson processes, which unsurprisingly outperforms  both Takacs-Fiksel and PPL estimation with the Stoyan-Grabarnik test function.

\begin{figure}[!htb]
    \centering
    \includegraphics[width = 0.48\textwidth]{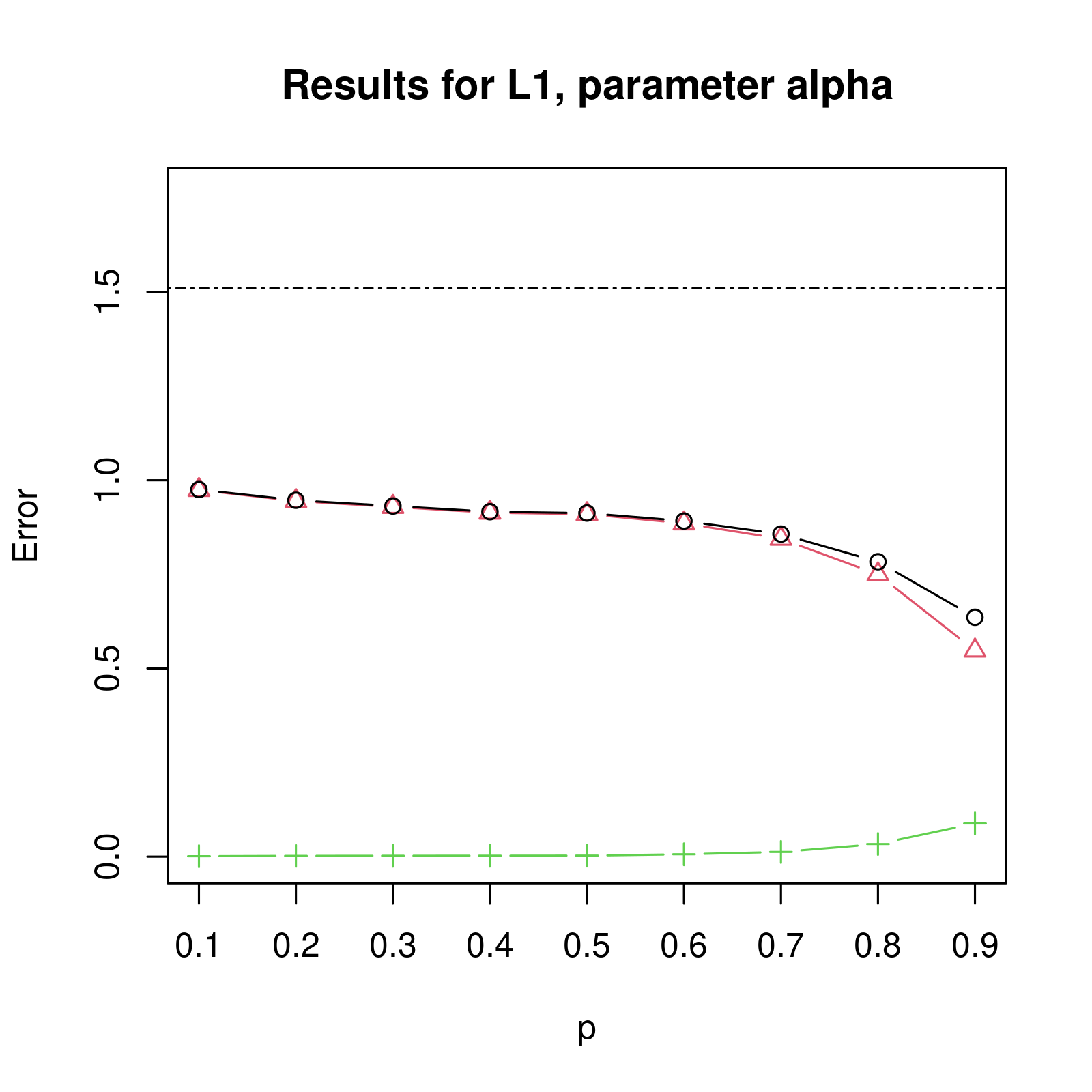}
    \includegraphics[width = 0.48\textwidth]{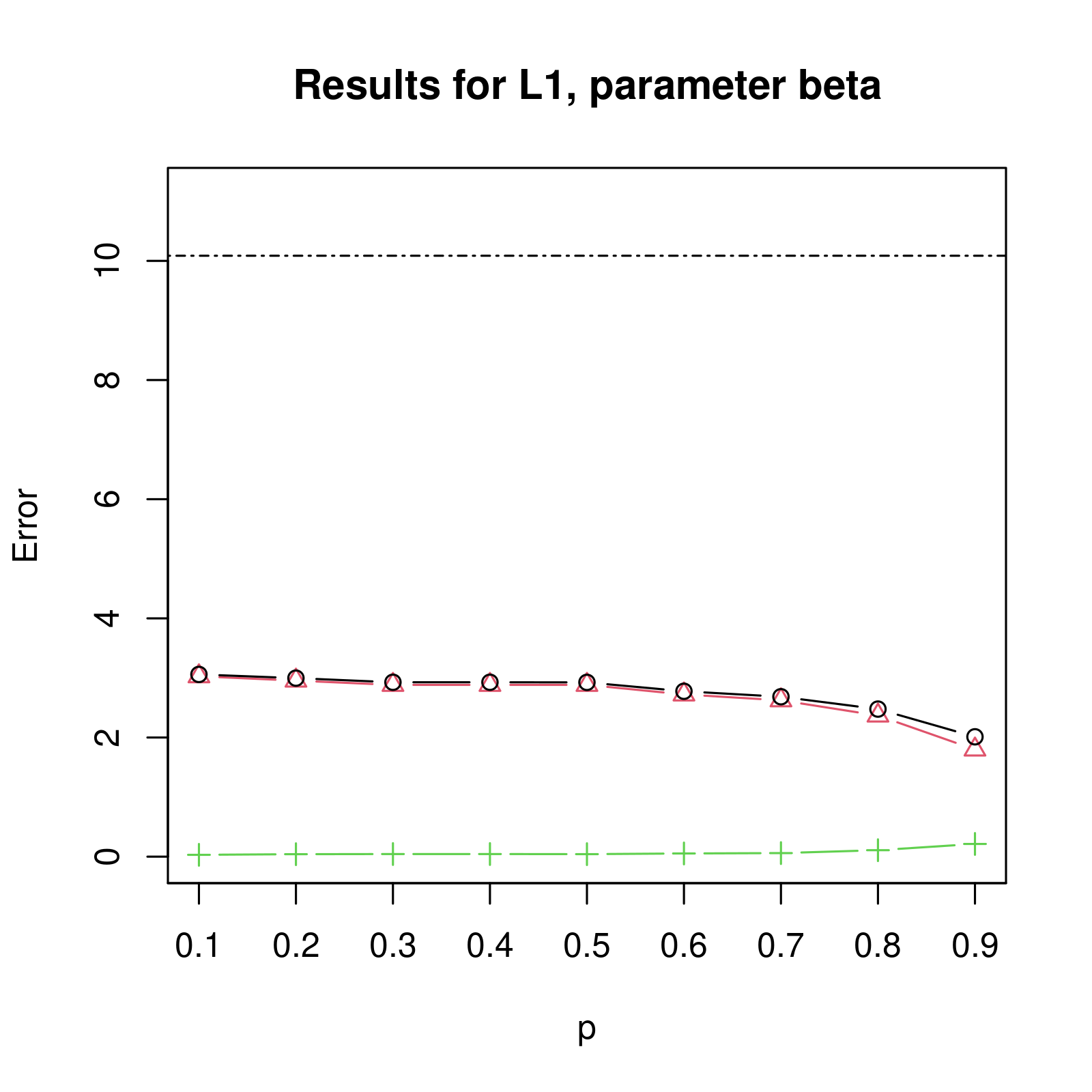}
    \caption{MSE, squared bias and variance for the Poisson process, using PPL with the loss function $\Loss_1$, for the parameters $\alpha$ and $\beta$. Here $k = 100$, $N = 100$, the PPL-weight is set to $p$ and $p = 0.1,0.2,\ldots,0.9$. The black lines with circles correspond to MSE, the red lines with triangles correspond to the squared bias and the green lines with the plus signs correspond to the variance. The horizontal black dotted lines correspond to MSE for Takacs-Fiksel estimation.}
    \label{fig:poisson3-L1}
\end{figure}

\subsubsection{Hard-core process}
\label{s:hc_sims}
For the hard-core process we use the parameters $R = 0.05$ and $\beta = 100$, 
which results in an average of approximately 60 points per realisation; see Figure \ref{fig:Hard-corePP} for an example. 
Here the grid in which we searched for estimates for $\beta$ and $R$ varied from realisation to realisation. 
The grid for $\beta$ consisted of 41 evenly spaced values, starting from $\#\x/|\Sm|$ and ending in $\#\x/|\Sm\setminus\bigcup_{x\in\x}b(x,R_0)|$, where $R_0$ is the smallest distance between any two points in $\x$. To search for an estimate for $R$, we considered an evenly spaced grid of 41 values between $R_0/2$ and $R_0$. These limits were motivated by findings in \citet{jansson2024gibbs}. 
We here present the results for the $\Loss_1$ loss function, while the results for the $\Loss_2$ and $\Loss_3$ loss functions can be found in Appendix \ref{sec:app_hc}, Figure \ref{fig:hard-core-p-L2L3}
--
\ref{fig:hard-core-est-L2L3}. 
The results for the $\Loss_1$ and $\Loss_2$ loss functions are very similar, and the results for the $\Loss_3$ loss function are generally worse than for both $\Loss_1$ and $\Loss_2$. 

First, we look at the case when the PPL-weight is $p$, which is illustrated in Figure \ref{fig:hard-core-p-L1}. The performance in terms of MSE for $\beta$ is better than that of Takacs-Fiksel estimation for $p\leq0.4$. Note in particular that the MSE for $\beta$ using PPL is roughly half of that of Takacs-Fiksel estimation when $p\in[0.1,0.2]$.
This seems to be largely due to the fact that the squared bias is very low for these values of $p$. 
Note further that as we know that the weight for this repulsive model is at most $p$, smaller values of $p$ here result in PPL-weight approximations which are closer to the true PPL-weight (recall Lemma \ref{lemma:HardCore}), while the weight approximation $p$ might be far from the actual weight when $p$ is large. 
Hence, simply setting the weight to $p$ for a hard-core process is likely not an optimal solution, in particular for large $p$, whereby we conclude that the weight estimation should be explored more. 
The MSE for the parameter $R$ is virtually 0 
using PPL, which, relatively speaking, is lower than the MSE for Takacs-Fiksel estimation. However, it should be noted that the MSE for Takacs-Fiksel estimation is still very low compared to the magnitude of the parameter ($0.3\%$).
Therefore, it is perhaps not extremely meaningful to compare the methods too much 
in terms of estimation of $R$, 
and to conclude, we consider them both to perform well.

\begin{figure}[!htb]
    \centering
    \includegraphics[width = 0.45\textwidth]{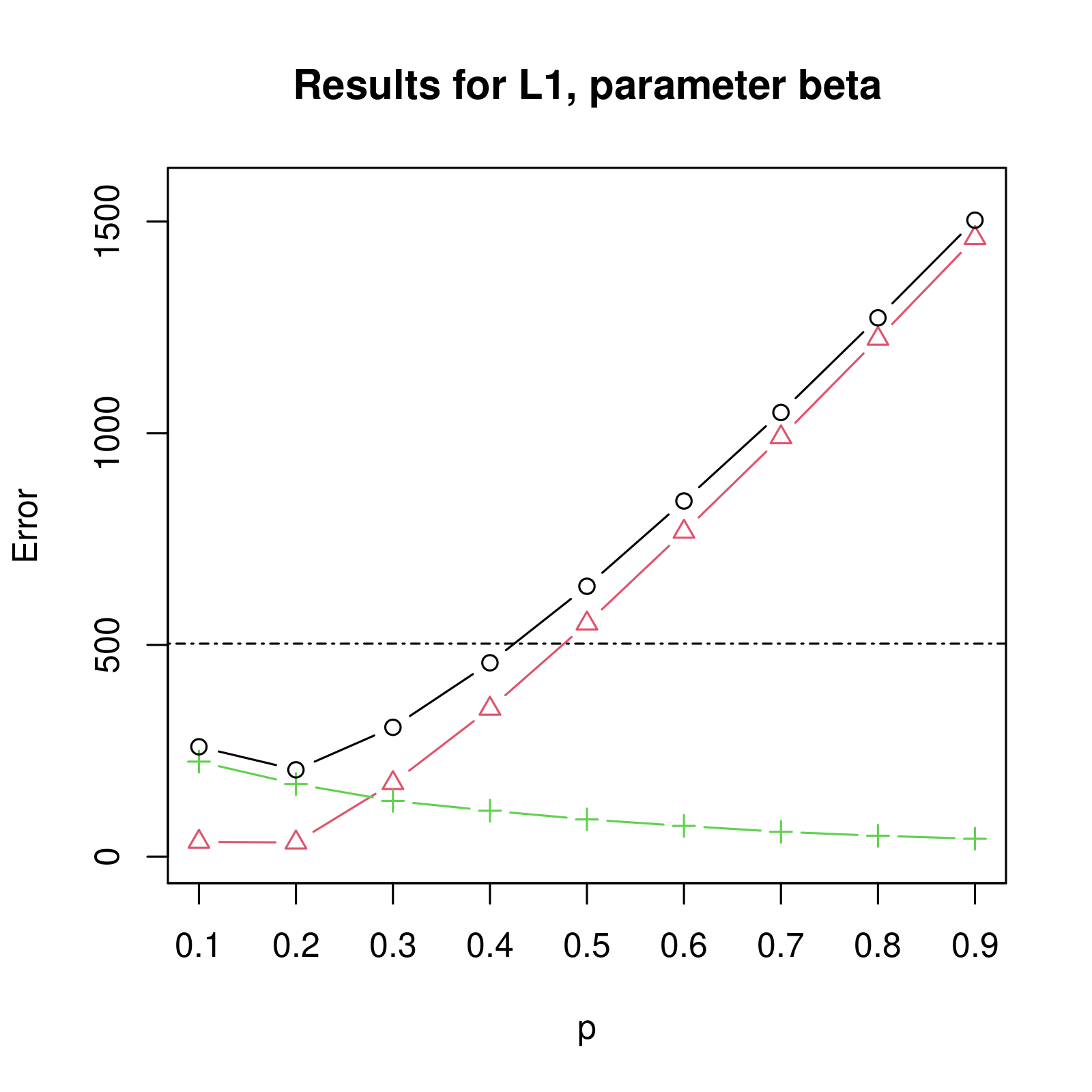}
    \includegraphics[width = 0.45\textwidth]{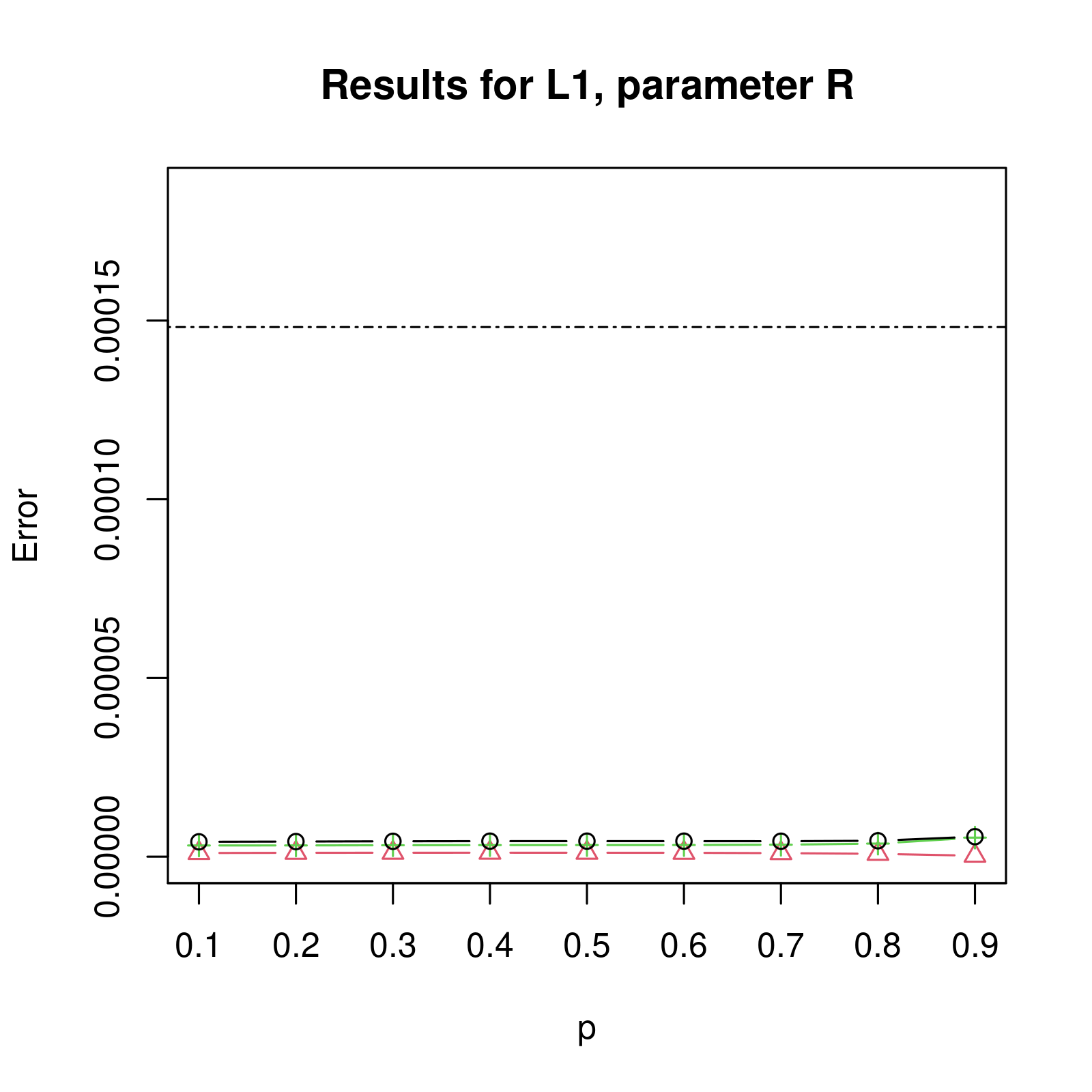}
    \caption{MSE, squared bias and variance for the hard-core model using PPL with the $\Loss_1$ loss function, when estimating the parameters $\beta$ and $R$. Here $k = 100$, $N = 500$,  $p = 0.1,0.2,\ldots,0.9$ and the PPL-weight is set to $p$. The black lines with circles correspond to MSE, the red lines with triangles correspond to squared bias and the green lines with plus signs correspond to variance. The black dotted lines correspond to the Takacs-Fiksel estimates.}
    \label{fig:hard-core-p-L1}
\end{figure}

In Figure \ref{fig:hard-core-(1-p)-L1} we find the results when using the PPL-weight approximation $p/(1-p)$. As expected, the results here are worse than when using the weight $p$. This is reasonable, since 
$p/(1-p)>p$ when $p\in(0,1)$, 
i.e.\ it is larger than its theoretical bound. That PPL performs best when $p=0.1$ here is not surprising, since this is the choice that ensures that $p/(1-p)$ is the closest to $p$.
Note that PPL in the context of a hard-core process has already been investigated by \citet{cronie2021statistical}, who used the weight $p/(1-p)$, and by \citet{jansson2024gibbs}, who used the weight $p$. 
However, in both cases the results were compared to pseudolikelihood estimation, when it would have been more fair to instead compare the results to Takacs-Fiksel estimation with the Stoyan-Grabarnik test function, as we do here. Yet, both of these studies showed that PPL outperformed pseudolikelihood estimation. 

We further emphasise that the implementation of pseudolikelihood estimation for hard-core processes in the \texttt{ppm} function of \textsc{spatstat} uses a plug-in estimate for $R$, in order to handle the associated identifiability issues and thereby have a decent estimate of $\beta$. Interestingly, as we also see here, \citet{cronie2021statistical,jansson2024gibbs} found that PPL does not seem to suffer from such issues. More specifically, we here do not need to plug an external estimate of, say, $R$ into the loss function in order to obtain a good estimate of $\beta$, but we may here jointly minimise the loss function with respect to $R$ and $\beta$,
in order to obtain the parameter estimates. 
It should be emphasised that we are using grid search for the optimisation here, which may have an influence on the performance.

\begin{figure}[!htb]
    \centering
    \includegraphics[width = 0.45\textwidth]{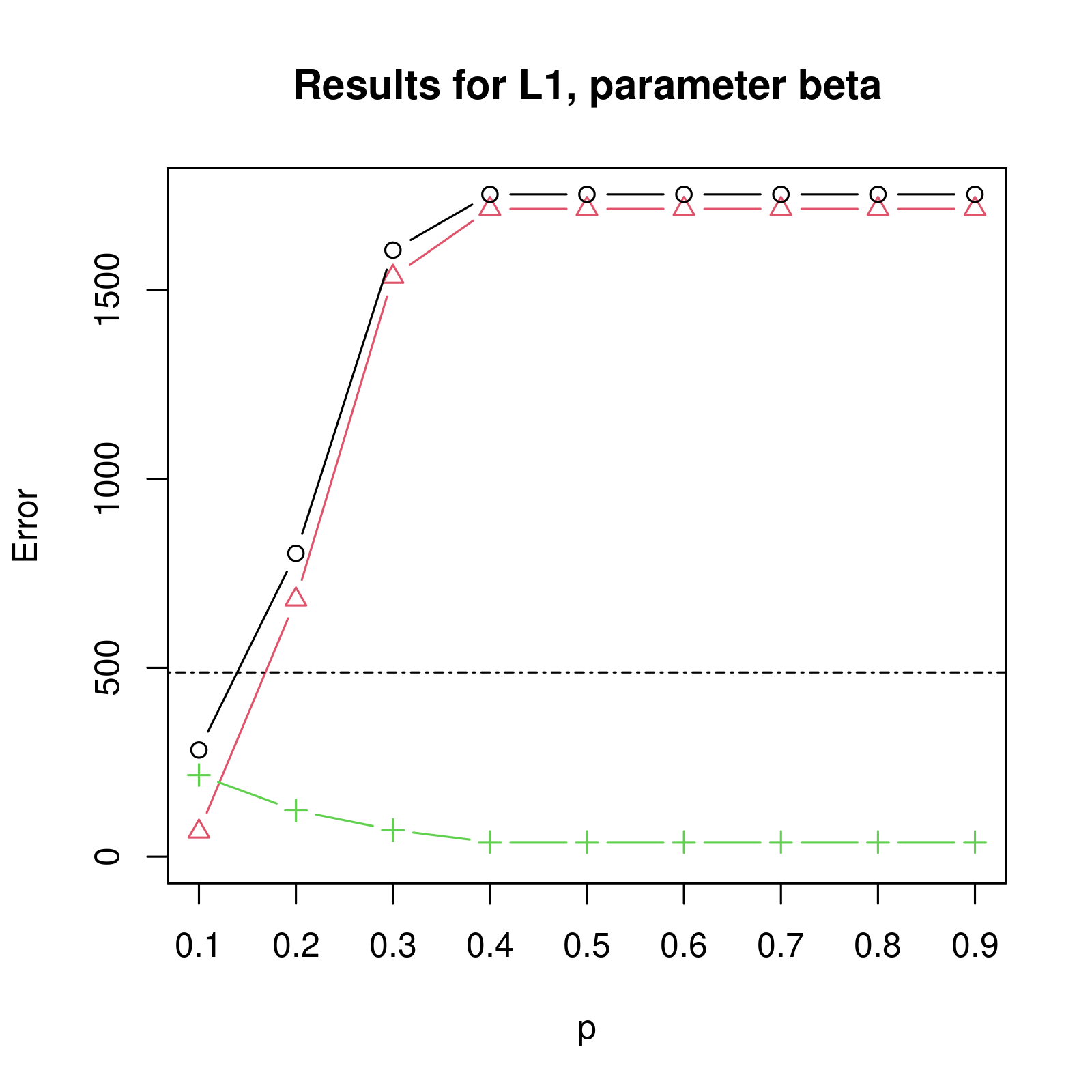}
    \includegraphics[width = 0.45\textwidth]{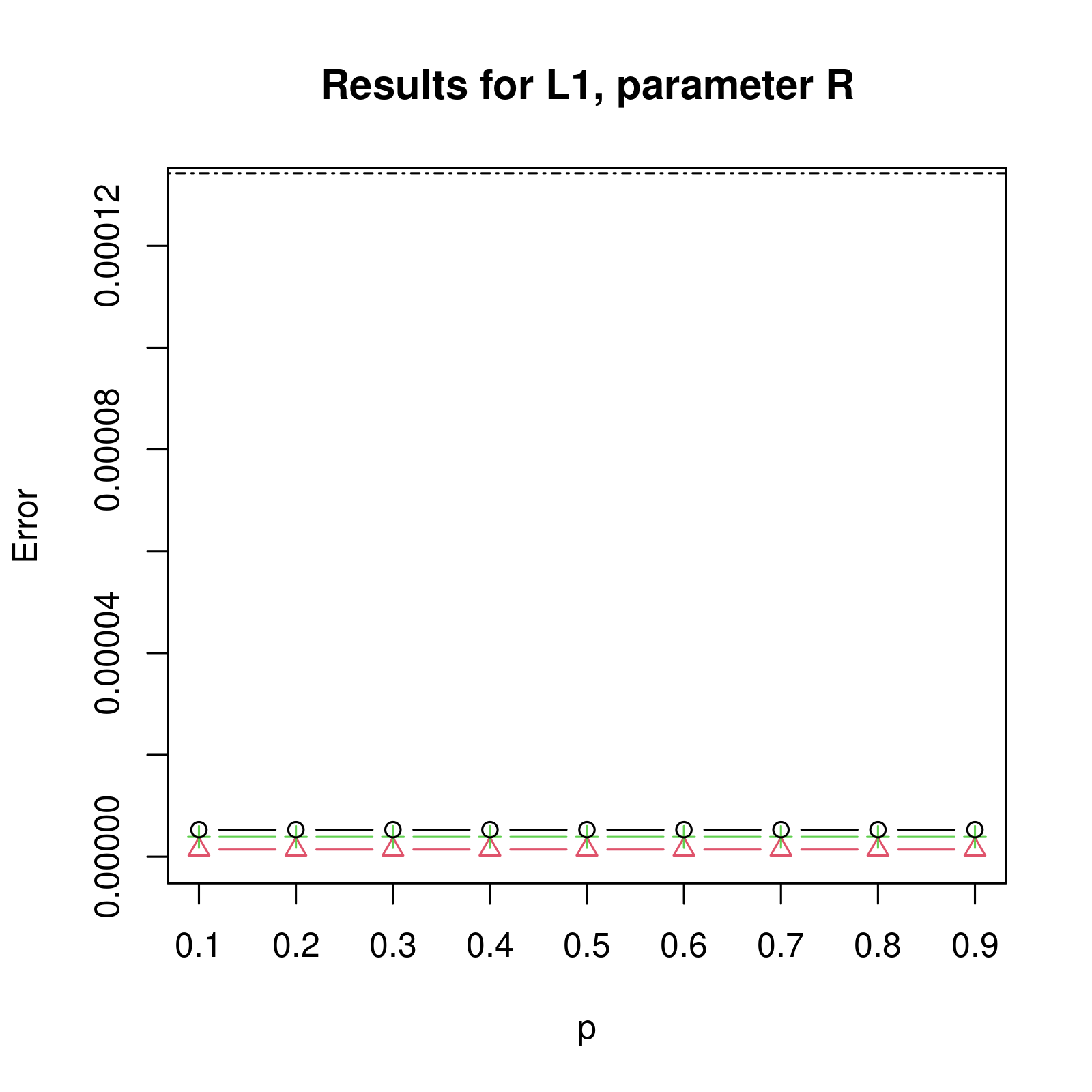}
    \caption{
    MSE, squared bias and variance for the hard-core model using PPL with the $\Loss_1$ loss function, when estimating the parameters $\beta$ and $R$. Here $k = 100$, $N = 100$,  $p = 0.1,0.2,\ldots,0.9$ and the PPL-weight is set to $p/(1-p)$. The black lines with circles correspond to MSE, the red lines with triangles correspond to squared bias and the green lines with plus signs correspond to variance. The black dotted lines correspond to the Takacs-Fiksel estimates.
}
    \label{fig:hard-core-(1-p)-L1}
\end{figure}

Lastly, in Figure \ref{fig:hard-core-est-L1} we find the results when we estimate the PPL-weight in accordance with \eqref{e:WeightEst}. These results are the best so far, as here the MSE for PPL is below the MSE of Takacs-Fiksel estimation for all $p$. The MSE values for $\beta$ for PPL are 
approximately half of the MSE value for 
Takacs-Fiksel estimation.
This seems to be due to a much smaller bias, compared to the bias for the weight $p$ 
in Figure \ref{fig:hard-core-p-L1}, and the bias for the weight $p/(1-p)$ 
in Figure \ref{fig:hard-core-(1-p)-L1}. 
It should be stated though that when $p=0.2$, both the current weight estimation scheme and using the fixed weight $p$ have about the same performance, but the former performs roughly equally well for all $p\geq0.2$.

To summarise the 
observations made 
for the hard-core process, by using the $\Loss_1$ loss function, the weight estimation scheme  \eqref{e:WeightEst}, and $p\geq0.2$ in the Monte-Carlo cross-validation, we have that PPL significantly outperforms Takacs-Fiksel estimation. Moreover, even by using the fixed weight $p\in[0.1,0.4]$ we obtain better results than Takacs-Fiksel estimation, noting that $p=0.2$ seems to be the optimal choice.

\begin{figure}[!htb]
    \centering
    \includegraphics[width = 0.45\textwidth]{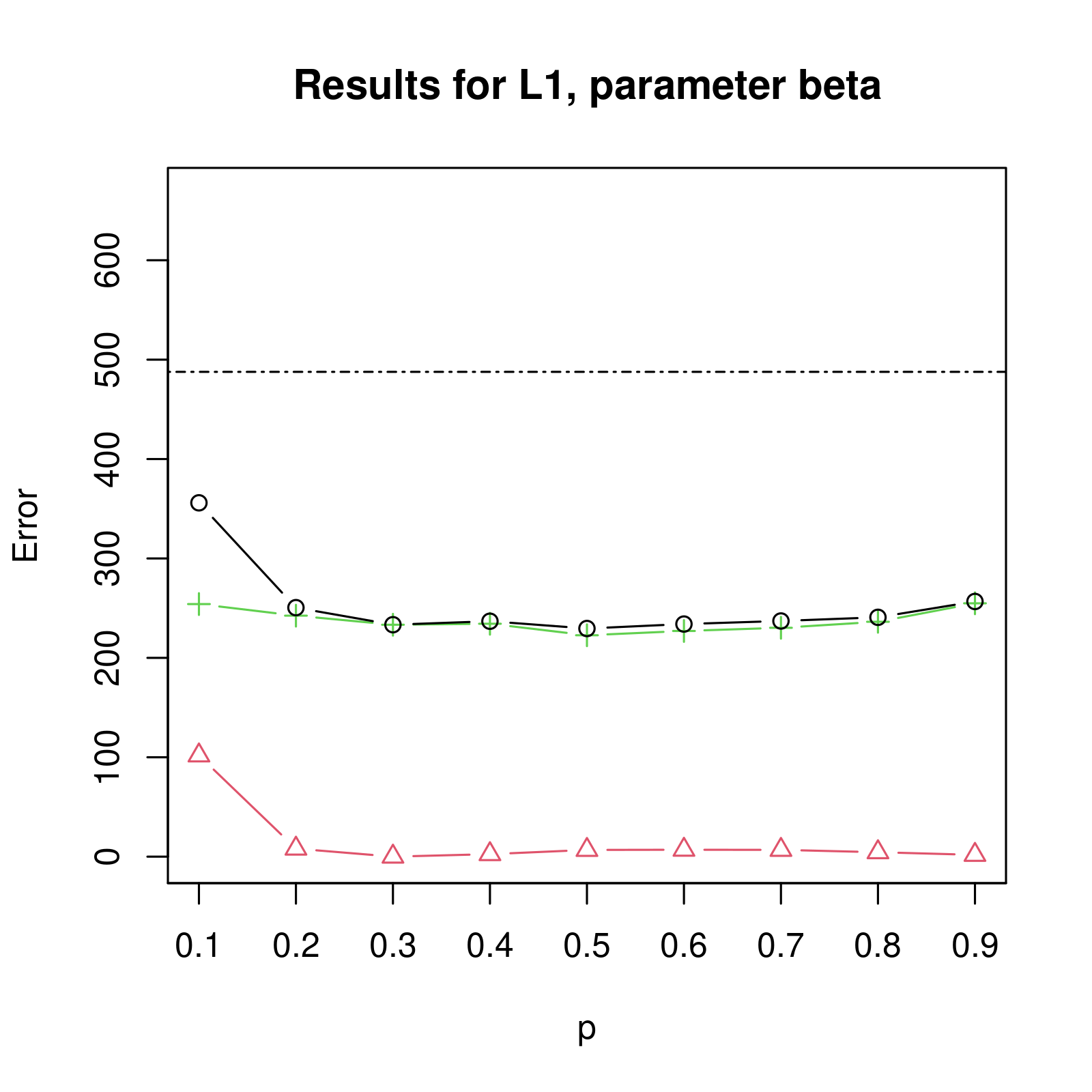}
    \includegraphics[width = 0.45\textwidth]{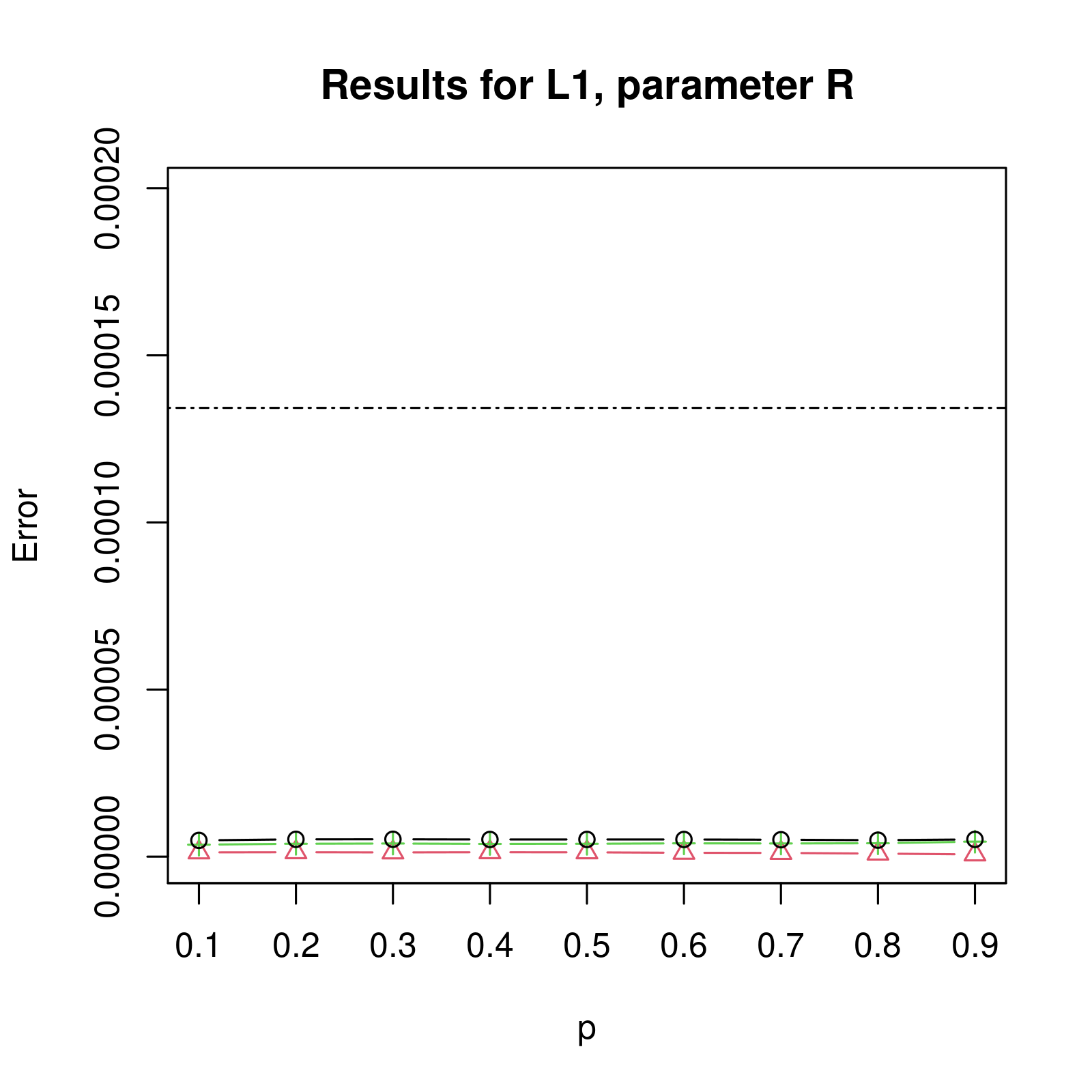}
    \caption{
    MSE, squared bias and variance for the hard-core model using PPL with the $\Loss_1$ loss function, when estimating the parameters $\beta$ and $R$. Here $k = 100$, $N = 100$,  $p = 0.1,0.2,\ldots,0.9$ and the PPL-weight is estimated in accordance with \eqref{e:WeightEst}. The black lines with circles correspond to MSE, the red lines with triangles correspond to squared bias and the green lines with plus signs correspond to variance. The black dotted lines correspond to the Takacs-Fiksel estimates.
}
    \label{fig:hard-core-est-L1}
\end{figure}

\subsubsection{Strauss process}
\label{sec:strauss_sims}
For the Strauss process we use the parameters $R = 0.05$, $\beta = 100$ and $\gamma = 0.5$,
which on average yields around 75 points per point pattern; see Figure \ref{fig:StraussPP} for an example of a realisation with this parametrisation. The grid where we searched for the parameters was $50,55,\ldots,150$ for $\beta$, $0.0350,0.0365,\ldots,0.0650$ for $R$ and $0.10,0.14,\ldots,0.90$ for $\gamma$. We here present the results for the $\Loss_1$ loss function while the results for the $\Loss_2$ and $\Loss_3$ loss functions can be found in Figure \ref{fig:strauss-p-L2L3}
--
\ref{fig:strauss-est-L2L3} 
in Section \ref{sec:app_strauss} of the Appendix. The results for the $\Loss_1$ and $\Loss_2$ loss functions are very similar, and the results for the $\Loss_3$ loss function are generally worsethan the results for both $\Loss_1$ and $\Loss_2$.

First, in Figure \ref{fig:strauss-p-L1} we consider the case where the PPL-weight is set to $p$. We observe that 
PPL's MSE performance for
$\beta$ is better than the performance for Takacs-Fiksel estimation, for all $p$ except for $p = 0.8,0.9$. More specifically, for intermediate $p\in[0.3,0.7]$ the MSE is approximately half of that of Takacs-Fiksel estimation. 
Hence, the general PPL performance significantly outperforms the performance of Takacs-Fiksel estimation when $p\in[0.3,0.7]$.
For the $R$ parameter 
the MSE is on par with Takacs-Fiksel estimation for $p=0.1$ but is then increasing almost linearly from around 0.00013 ($p=0.1$) to around 0.00017 ($p=0.9$). It is further interesting that we here see a very clear bias-variance trade-off as $p$ varies.
Note again here that the error for both methods is much lower than the magnitude of the parameter $R$ (0.3\%) so performance difference for $R$ between the two methods is considered negligible.
Moving on to $\gamma$, the MSE is high 
for small $p$ but quickly decreases as $p$ increases, with MSE values being smaller than the MSE value for Takacs-Fiksel estimation when $p>0.3$. In fact, the MSE for Takacs-Fiksel estimation is more than twice that of PPL when $p\in[0.6,0.8]$. 
That there is not a universally optimal value for $p$ to be found here is very likely an effect of non-identifiability. 
Hence, we are forced to make a choice such that we prioritise a good performance for either $R$ (small $p$) or $\gamma$ (large $p$). If we were to recommend a fixed value for $p$ here, our suggestion would be to let $p\in[0.4,0.7]$, as we here obtain good estimates for $\beta$ and $\gamma$ but pay an acceptable MSE price in terms of $R$.

\begin{figure}[!htb]
    \centering
    \includegraphics[width = 0.3\textwidth]{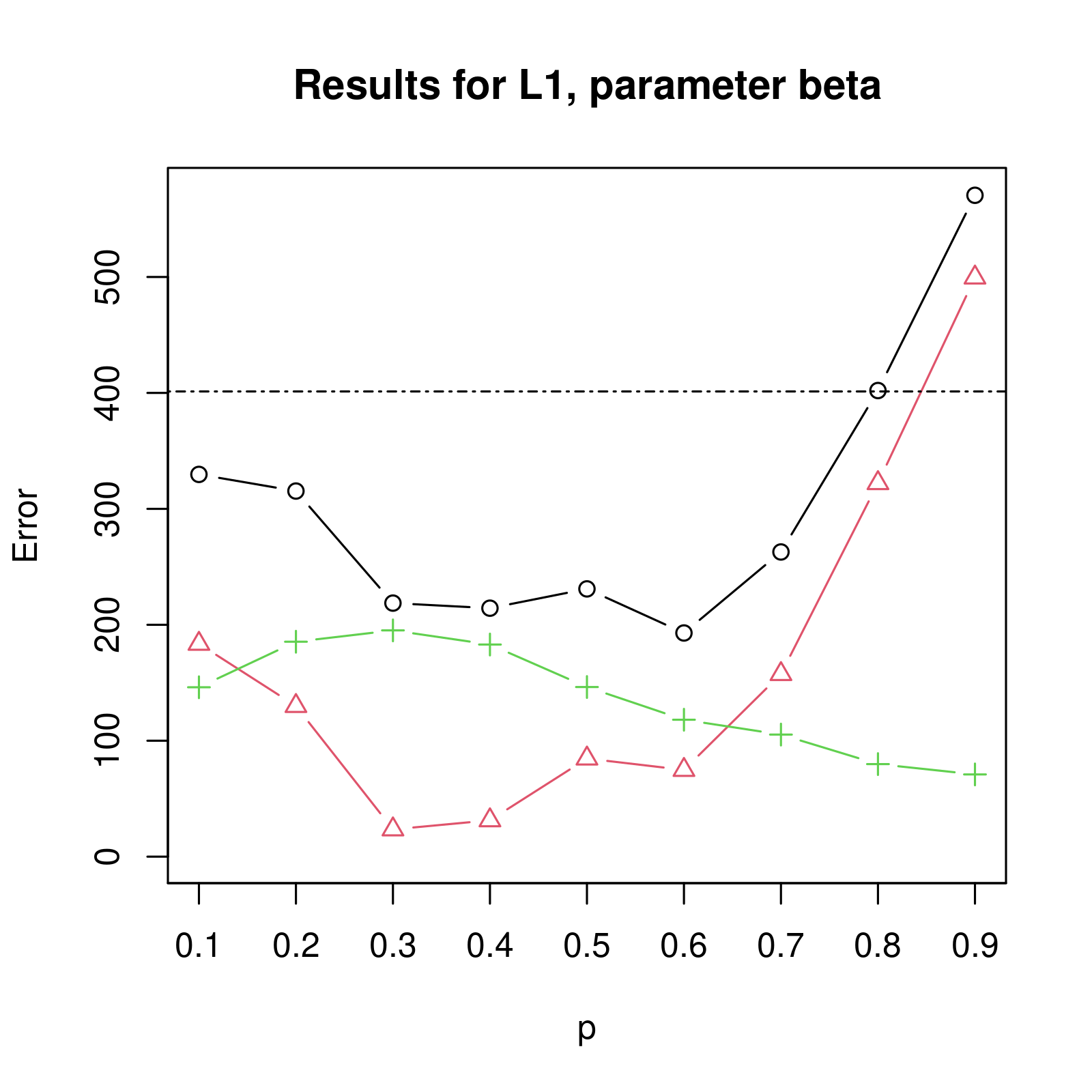}
    \includegraphics[width = 0.3\textwidth]{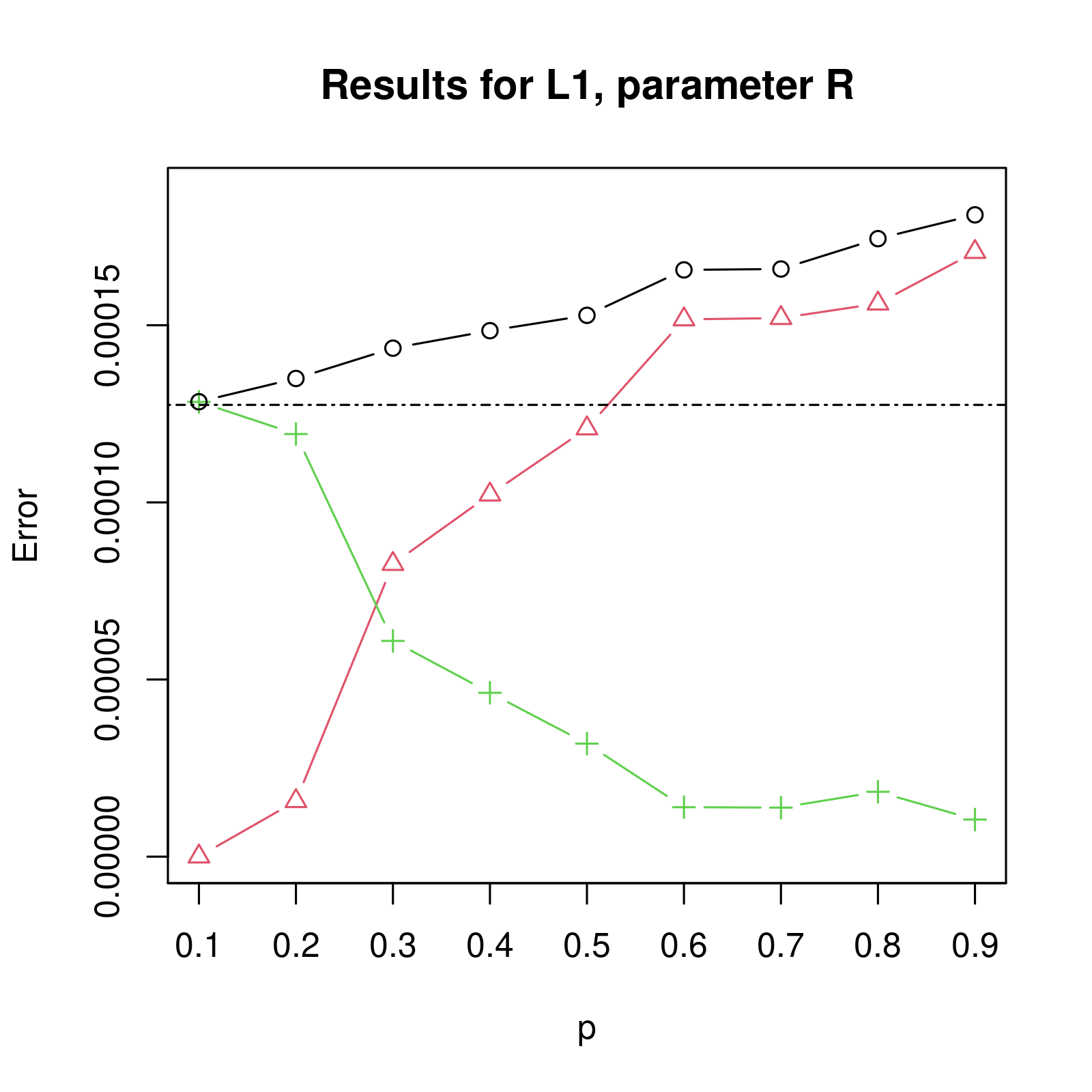}
    \includegraphics[width = 0.3\textwidth]{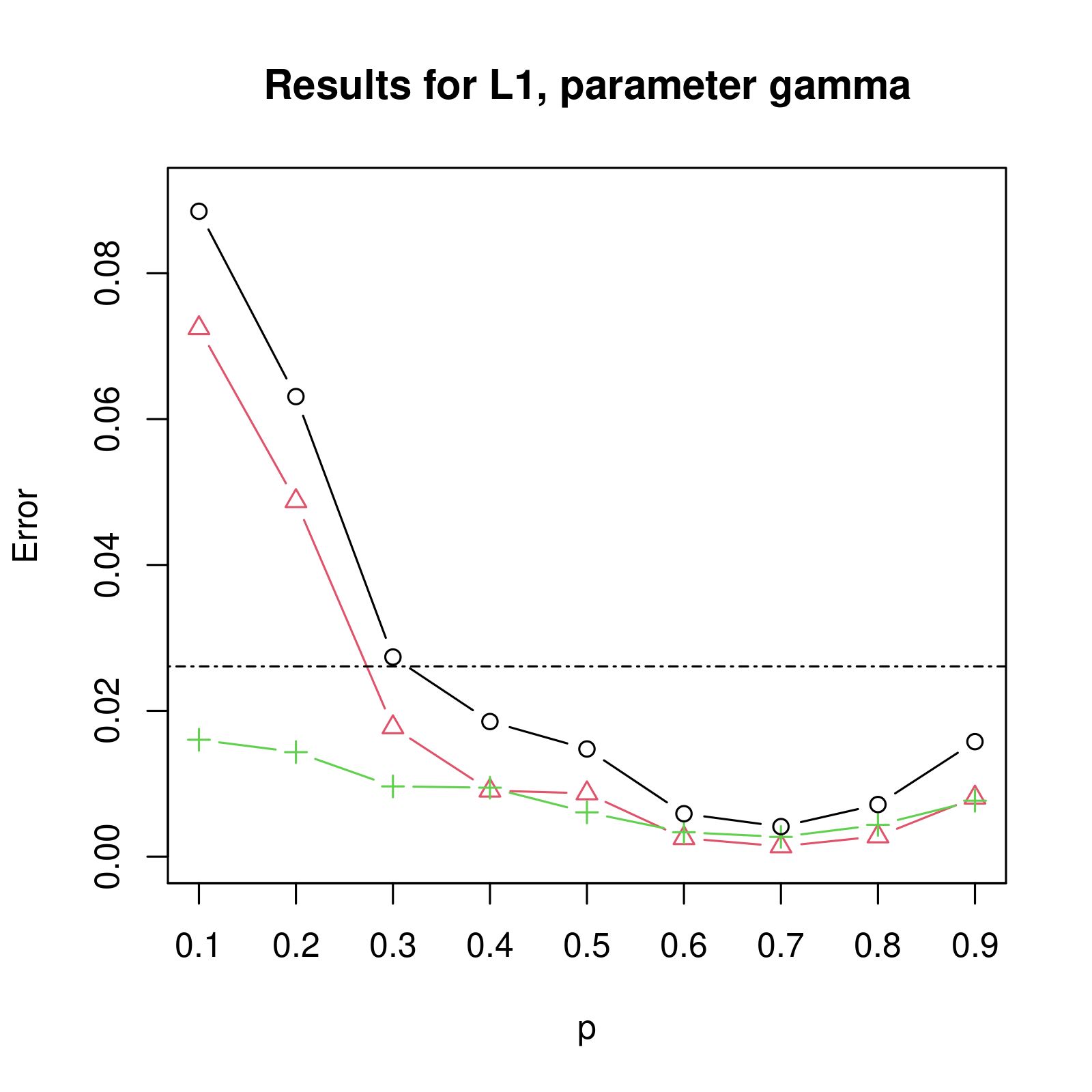}
    \caption{
    MSE, squared bias and variance for the hard-core model using PPL with the $\Loss_1$ loss function, when estimating the parameters 
$\beta$, $R$ and $\gamma$. 
Here $k = 100$, $N = 100$, $p = 0.1,0.2,\ldots,0.9$ and the PPL-weight is set to $p$. 
    The black lines with circles correspond to MSE, the red lines with triangles correspond to squared bias and the green lines with plus signs correspond to variance. The black dotted lines correspond to the Takacs-Fiksel estimates.
}
    \label{fig:strauss-p-L1}
\end{figure}

In Figure \ref{fig:strauss-(1-p)-L1} we find the results obtained when employing the PPL-weight estimate $p/(1-p)$. The results are in general worse than when using the weight estimate $p$, which is to be expected since a Strauss process is repulsive, and thereby has a PPL-weight which is at most $p<p/(1-p)$.

\begin{figure}[!htb]
    \centering
    \includegraphics[width = 0.3\textwidth]{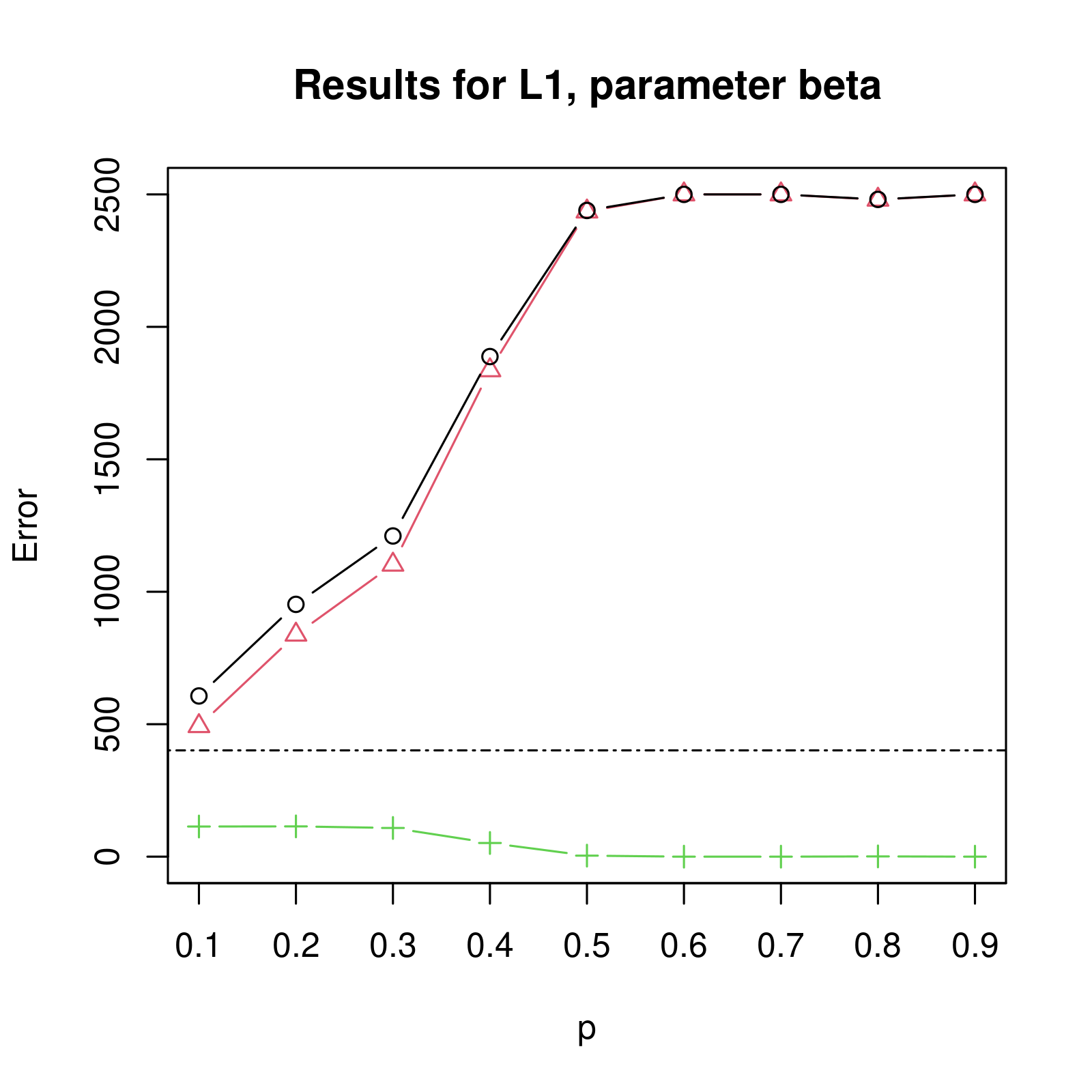}
    \includegraphics[width = 0.3\textwidth]{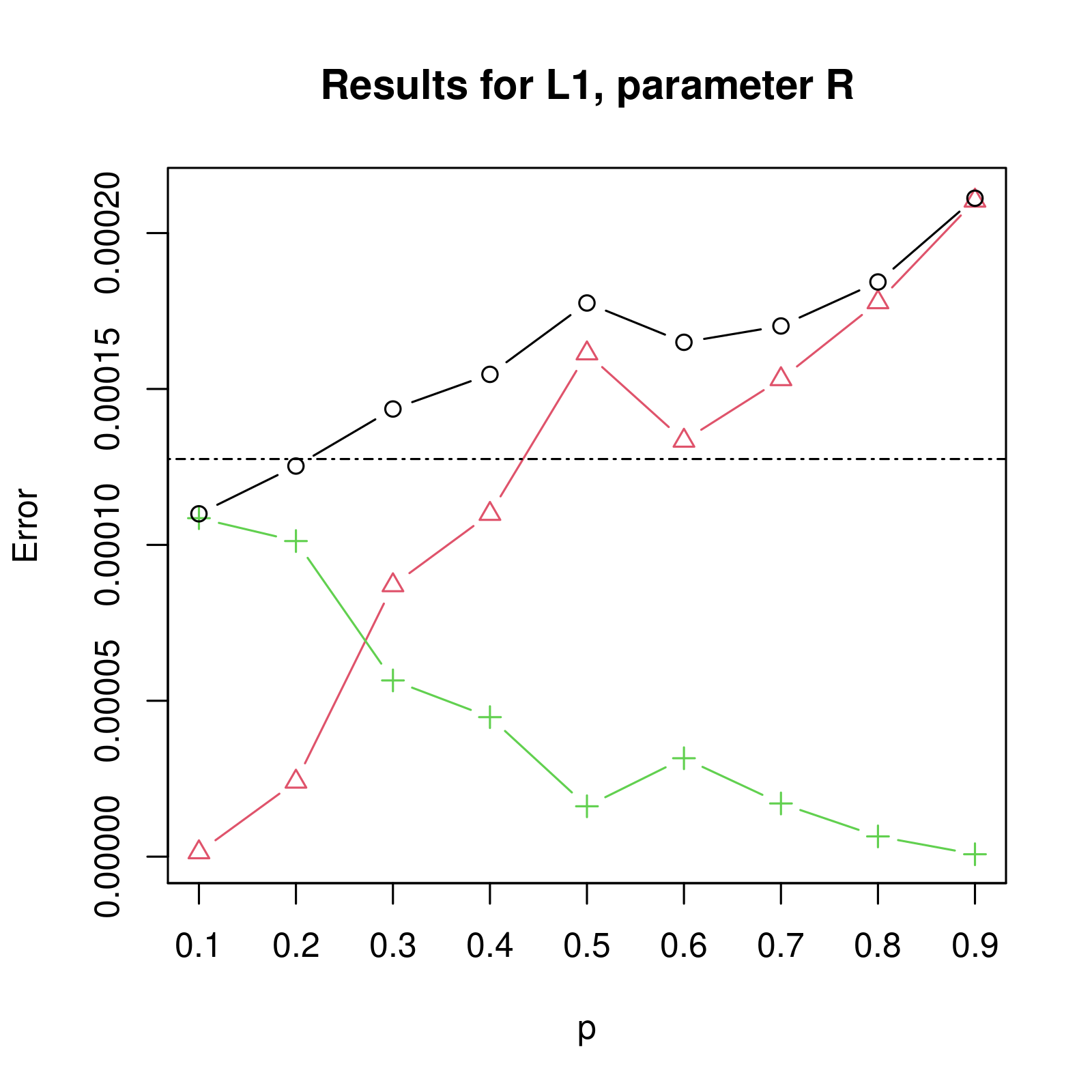}
    \includegraphics[width = 0.3\textwidth]{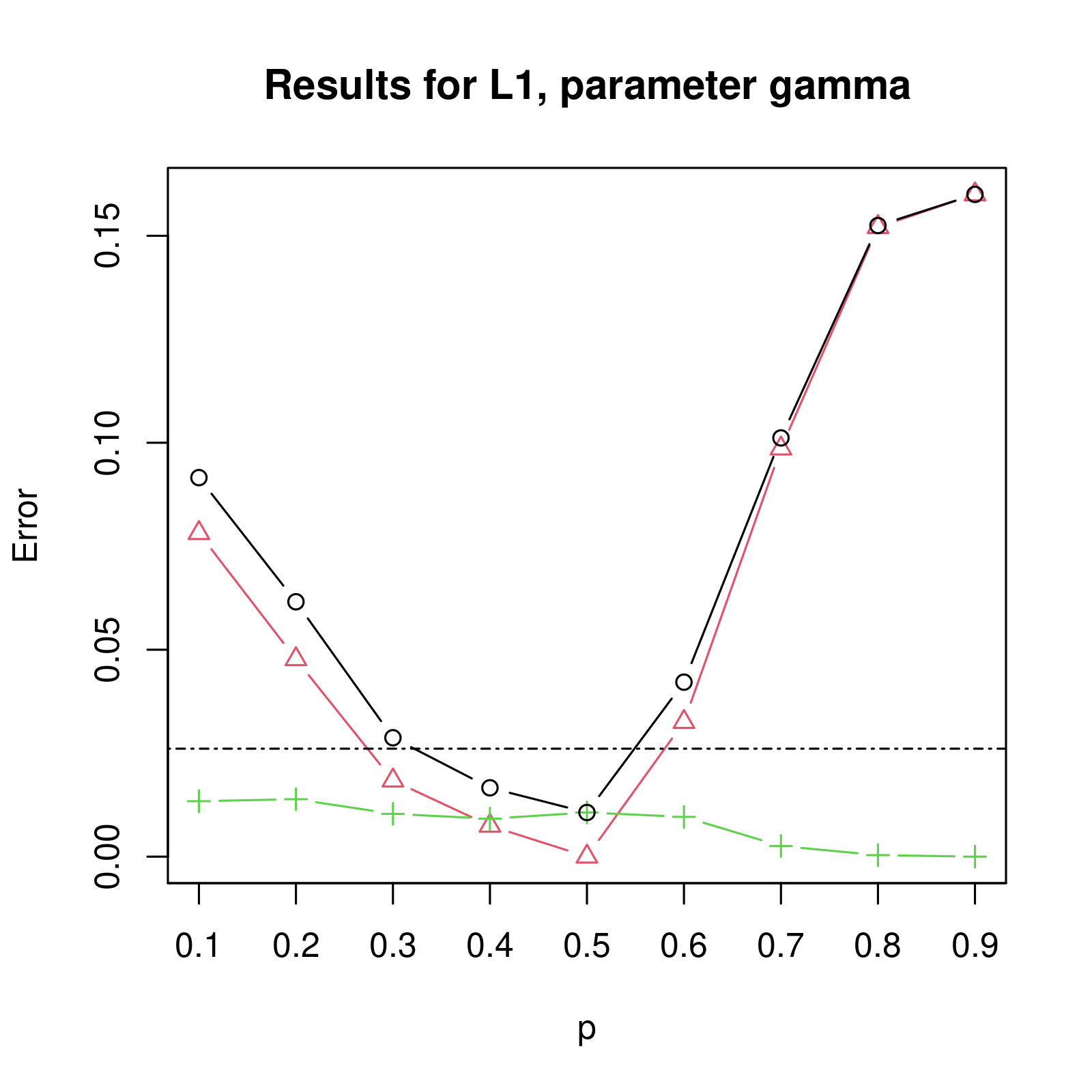}
    \caption{
    MSE, squared bias and variance for the hard-core model using PPL with the $\Loss_1$ loss function, when estimating the parameters 
    $\beta$, $R$ and $\gamma$. 
    Here $k = 100$, $N = 100$, $p = 0.1,0.2,\ldots,0.9$ and the PPL-weight is set to $p/(1-p)$. 
    The black lines with circles correspond to MSE, the red lines with triangles correspond to squared bias and the green lines with plus signs correspond to variance. The black dotted lines correspond to the Takacs-Fiksel estimates.
}
    \label{fig:strauss-(1-p)-L1}
\end{figure}

Lastly, in Figure \ref{fig:strauss-est-L1} we see the results when the PPL-weight is estimated in accordance with \eqref{e:WeightEst}. 
Also here we can find values for $p$ such that the estimators of either $(\beta,R)$ or $(\beta,\gamma)$ perform better than the corresponding Takacs-Fiksel estimators. Hence, again PPL is the preferred choice. Yet, in terms of MSE, the results are not generally better than when using $p$ for the PPL-weight. 
One main difference is that the values for $p$ for which the MSE for $\beta$ with PPL is higher than with Takacs-Fiksel estimation are different compared to when we use the weight $p$ (recall Figure \ref{fig:strauss-p-L1}). 
Further, when using weight estimation the minimum MSE is given by $324$ (attained when $p = 0.1$) while the minimum MSE when fixing the weight to $p$ is given by $214$ (attained when $p = 0.4$). 
The difference in MSE seems to be due to a higher variance when using weight estimation than when using the fixed weight $p$. 
We thus conclude that using $p$ as weight estimate is a sufficiently good approximation of the actual weight in this case.

\begin{figure}[!htb]
    \centering
    \includegraphics[width = 0.3\textwidth]{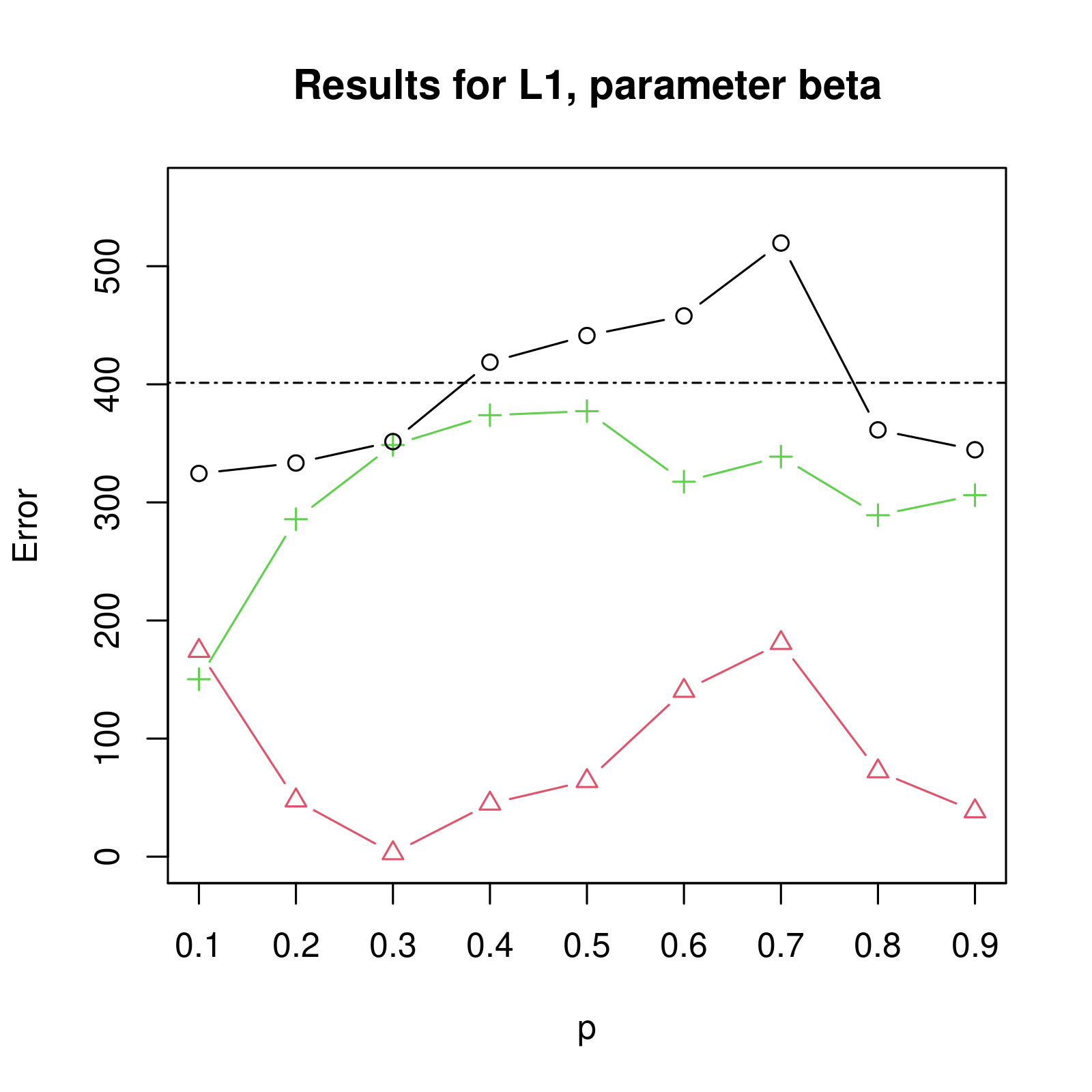}
    \includegraphics[width = 0.3\textwidth]{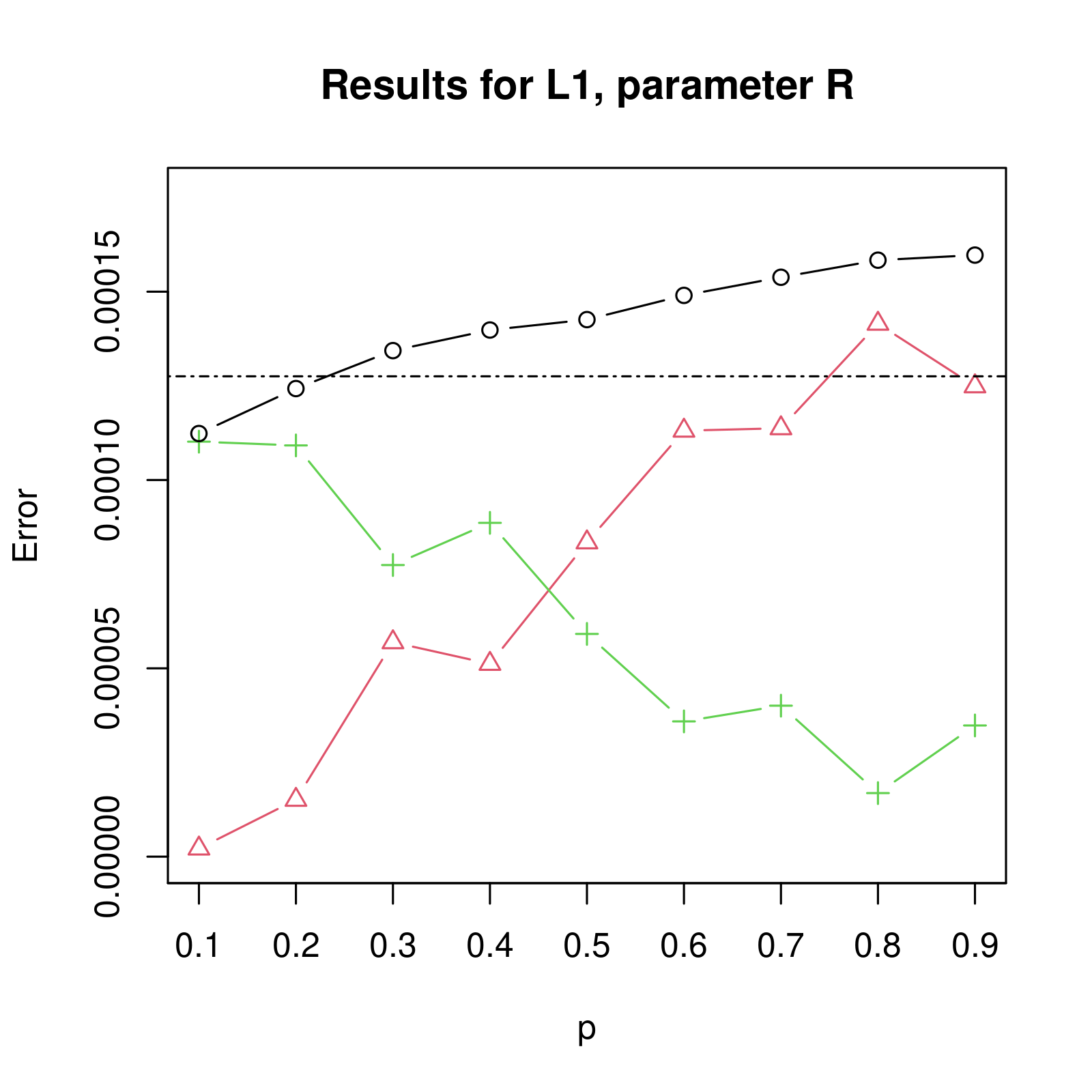}
    \includegraphics[width = 0.3\textwidth]{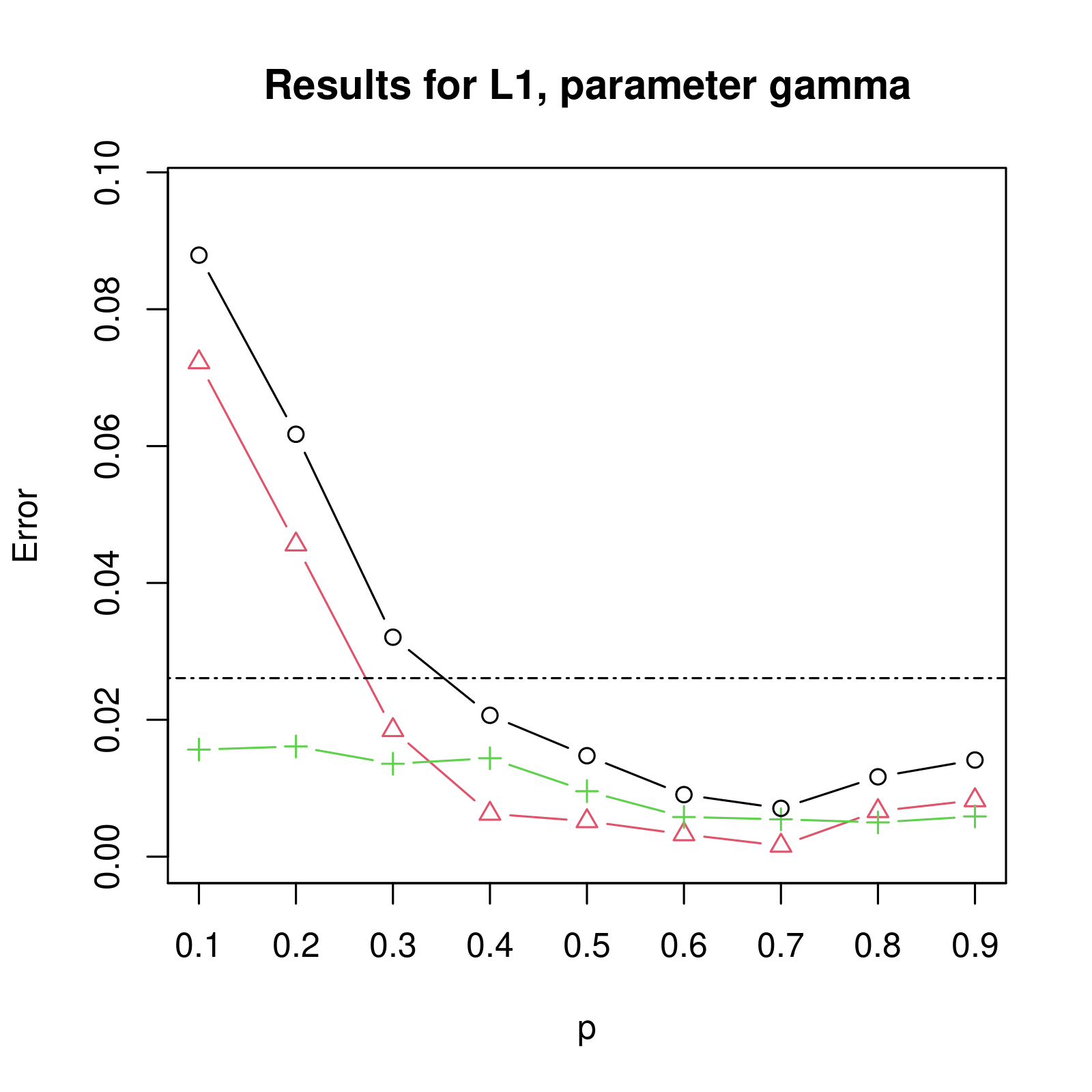}
    \caption{
    MSE, squared bias and variance for the hard-core model using PPL with the $\Loss_1$ loss function, when estimating the parameters 
    $\beta$, $R$ and $\gamma$. 
    Here $k = 100$, $N = 50$, $p = 0.1,0.2,\ldots,0.9$ and the PPL-weight is estimated in accordance with \eqref{e:WeightEst}. 
    The black lines with circles correspond to MSE, the red lines with triangles correspond to squared bias and the green lines with plus signs correspond to variance. The black dotted lines correspond to the Takacs-Fiksel estimates.
}
    \label{fig:strauss-est-L1}
\end{figure}

\subsubsection{Geyer saturation process}
\label{sec:geyer_sims}
For the Geyer saturation process we use the parameters $R = 0.05$, $\beta = 60$, $\gamma = \sqrt{1.5}$ and $s = 2$, where, on average, we have approximately 75 points per point pattern; see Figure \ref{fig:GeyerPP} for an example of a realisation. The grid where we searched for the parameters was $40,45,\ldots,80$ for $\beta$, $0.03500, 0.03875, \ldots, 0.06500$ for $R$, $0.5000, 0.6875, \ldots, 2.0000$ for $\gamma$ and $1.00, 1.25, \ldots, 3.00$ for $s$. In contrast to the previous models, this example represents what is considered an attractive model (although it is formally not; recall the discussion in Section \ref{s:Geyer}). 
We here present the results for the $\Loss_3$ loss function, while the results for the $\Loss_1$ and $\Loss_2$ loss functions can be found in Figure \ref{fig:geyer_p-L1L2}
--
\ref{fig:geyer_w-L1L2} in Appendix \ref{sec:app_geyer}. 
We find that the $\Loss_3$ loss function performs better than both the $\Loss_1$ and the $\Loss_2$ loss functions, and that the results for  $\Loss_1$ and $\Loss_2$ are very similar. 

First we look at the case when the PPL-weight is set to $p$, which is illustrated in Figure \ref{fig:geyer_p-L3}. For the parameter $\beta$ the MSE for PPL is slightly lower than the MSE for Takacs-Fiksel estimation for most values of $p$, while for the parameter $R$ the MSE for PPL is instead slightly 
higher for most values of $p$; the values are of similar scale though. As seen before for the hard-core and Strauss processes, the errors for both methods are much lower than the magnitude of the $R$ parameter (in this case 0.2\%) so differences between the methods are considered negligible.
For the parameter $\gamma$, for all values of $p$, the MSE for PPL is lower than for Takacs-Fiksel estimation. Lastly, for the parameter $s$ the MSE for PPL is next to indistinguishable from that of Takacs-Fiksel estimation for any $p$.
We observe that for $\beta$, $R$ and $s$, the bias is close to zero, whereby the MSE values consist of mostly variance contributions. This differs from the results for the Poisson, hard-core and Strauss processes, using
the weight $p$ and the loss function $\Loss_1$ (recall Figure \ref{fig:poisson3-L1}, \ref{fig:hard-core-p-L1} and \ref{fig:strauss-p-L1}), where the bias usually played a larger role than the variance. 
It is not completely clear to us why this is the case.

\begin{figure}[!htb]
    \centering
    \includegraphics[width = 0.4\textwidth]{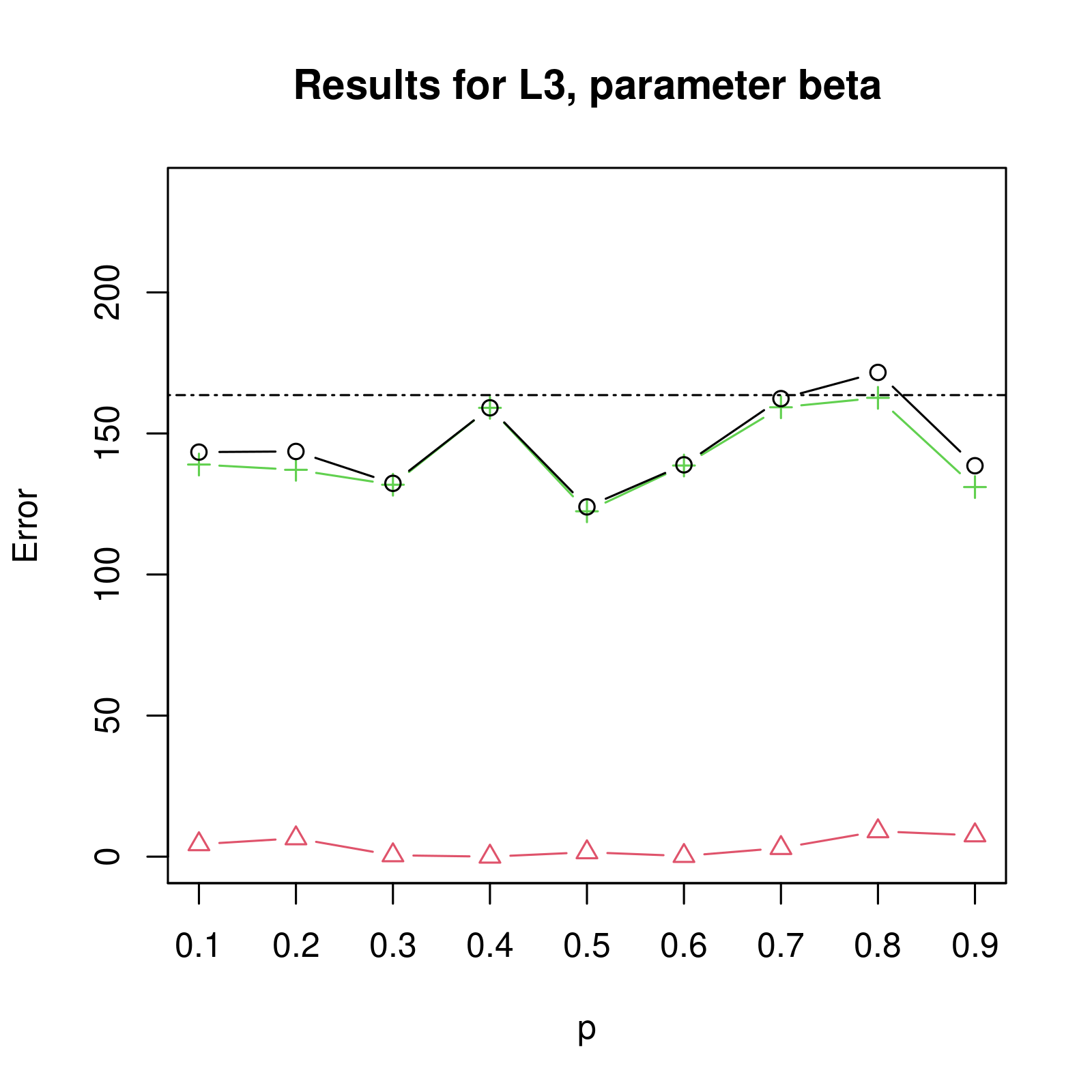}
    \includegraphics[width = 0.4\textwidth]{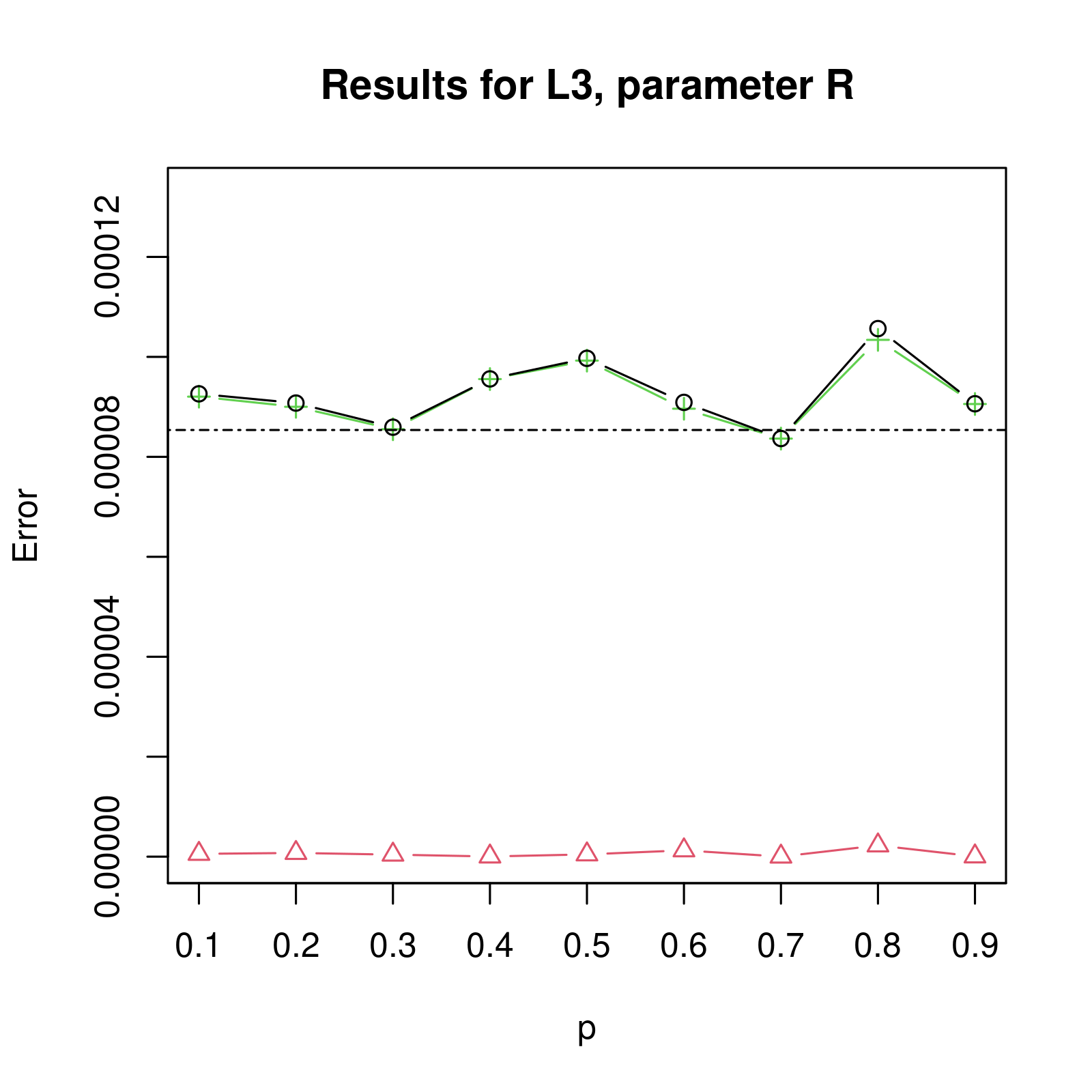}
    
    \includegraphics[width = 0.4\textwidth]{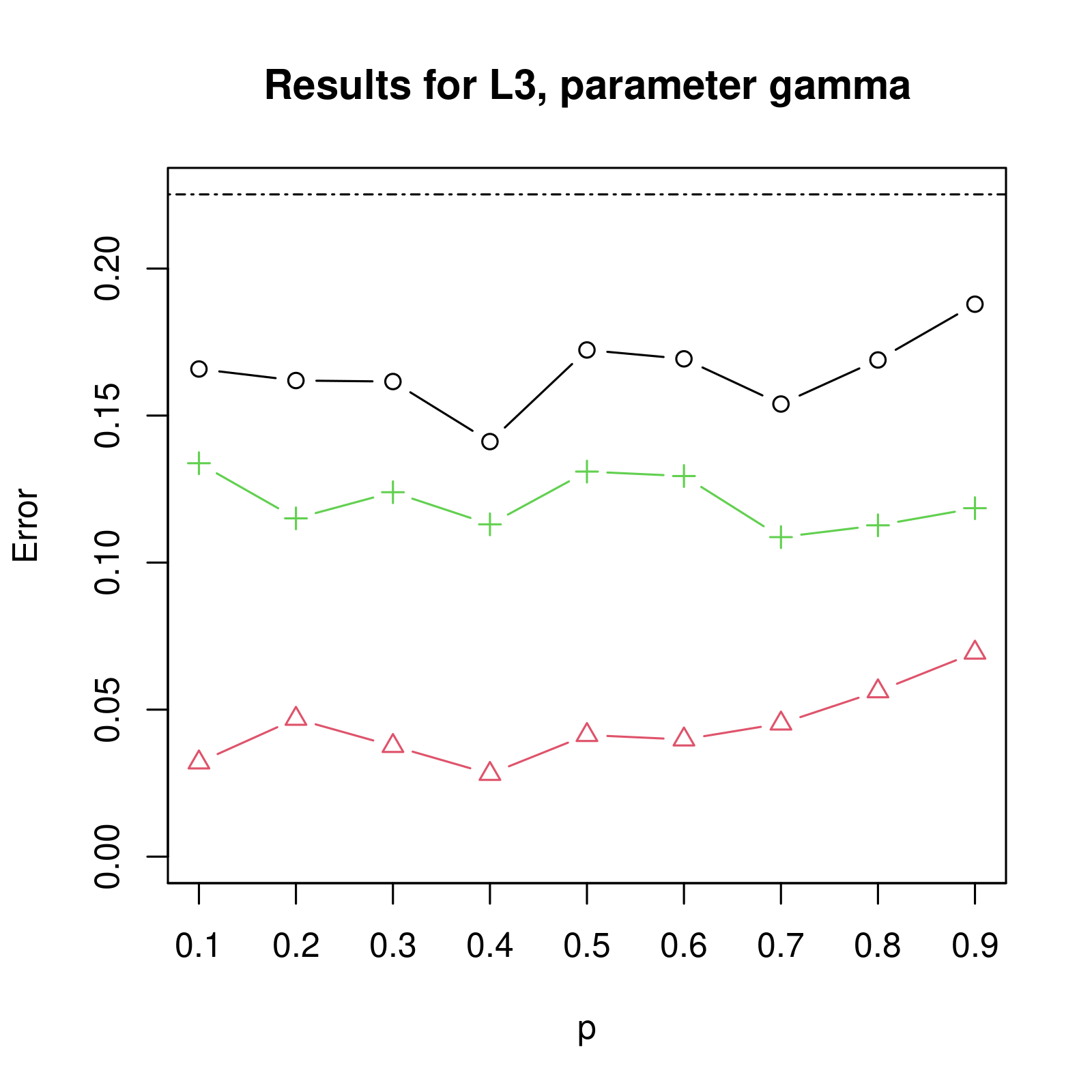}
    \includegraphics[width = 0.4\textwidth]{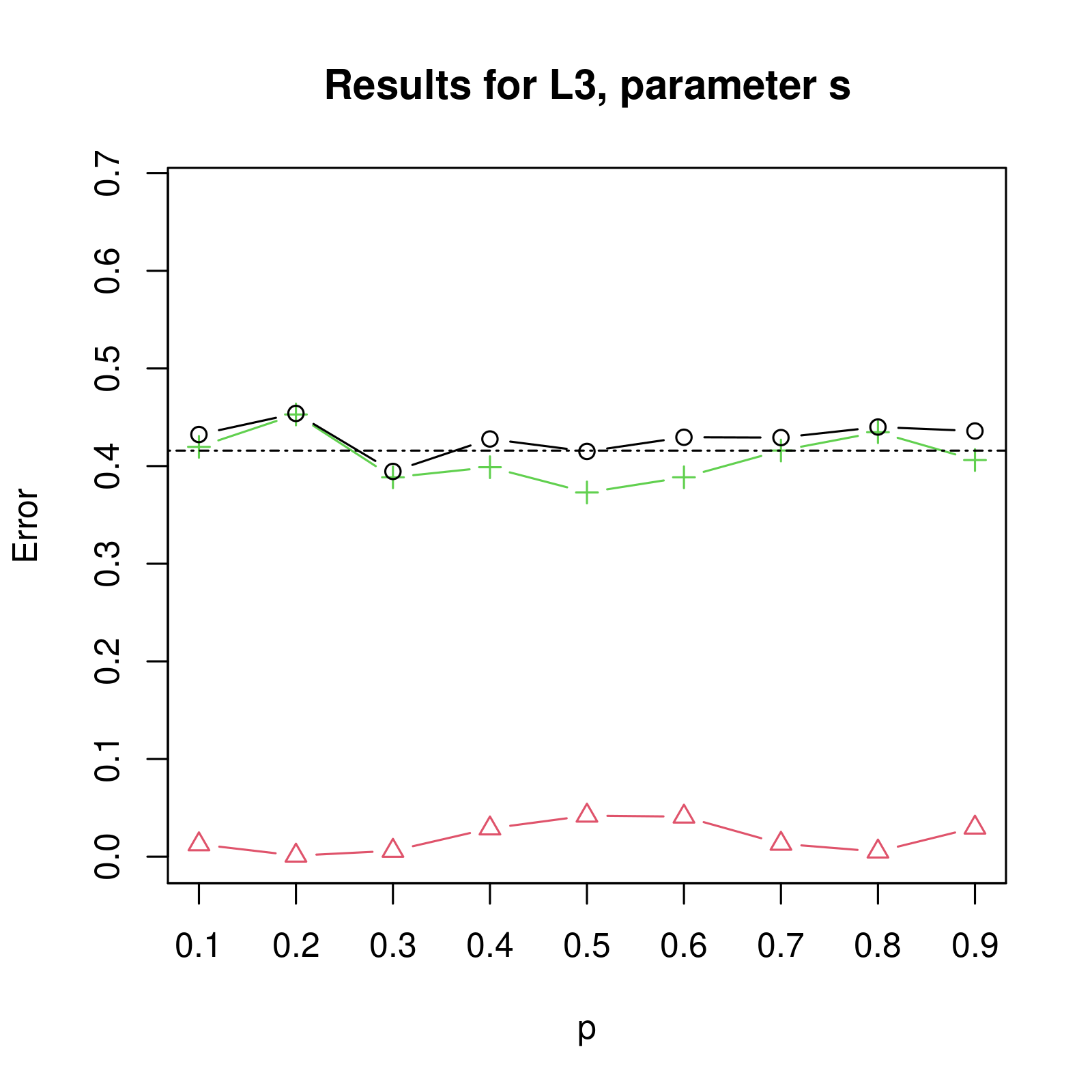}
    \caption{
    MSE, squared bias and variance for the Geyer saturation model using PPL with the $\Loss_3$ loss function, when estimating the parameters 
    $\beta$, $R$, $\gamma$ and $s$. 
    Here $k = 100$, $N = 100$, $p = 0.1,0.2,\ldots,0.9$ and the PPL-weight is set to $p$. 
    The black lines with circles correspond to MSE, the red lines with triangles correspond to squared bias and the green lines with plus signs correspond to variance. The black dotted lines correspond to the Takacs-Fiksel estimates.
}
    \label{fig:geyer_p-L3}
\end{figure}

In Figure \ref{fig:geyer_p/(1-p)-L3} we see the results when employing the PPL-weight estimate $p/(1-p)$. The MSE values for PPL for the estimators for $\beta$, $R$ and $s$ are similar to those obtained through Takacs-Fiksel estimation. For $\gamma$, however, the MSE for PPL is higher than for Takacs-Fiksel estimation when $p\geq0.5$, likely since the PPL-weight is larger than 1 in this case. Note also that comparing with the bias for the $\gamma$ estimator using the weight $p$ (see Figure \ref{fig:geyer_p-L3}), the bias rendered by the weight choice $p/(1-p)$ is much larger for $p\geq 0.5$, which is causing the increase in MSE.

\begin{figure}[!htb]
    \centering
    \includegraphics[width = 0.4\textwidth]{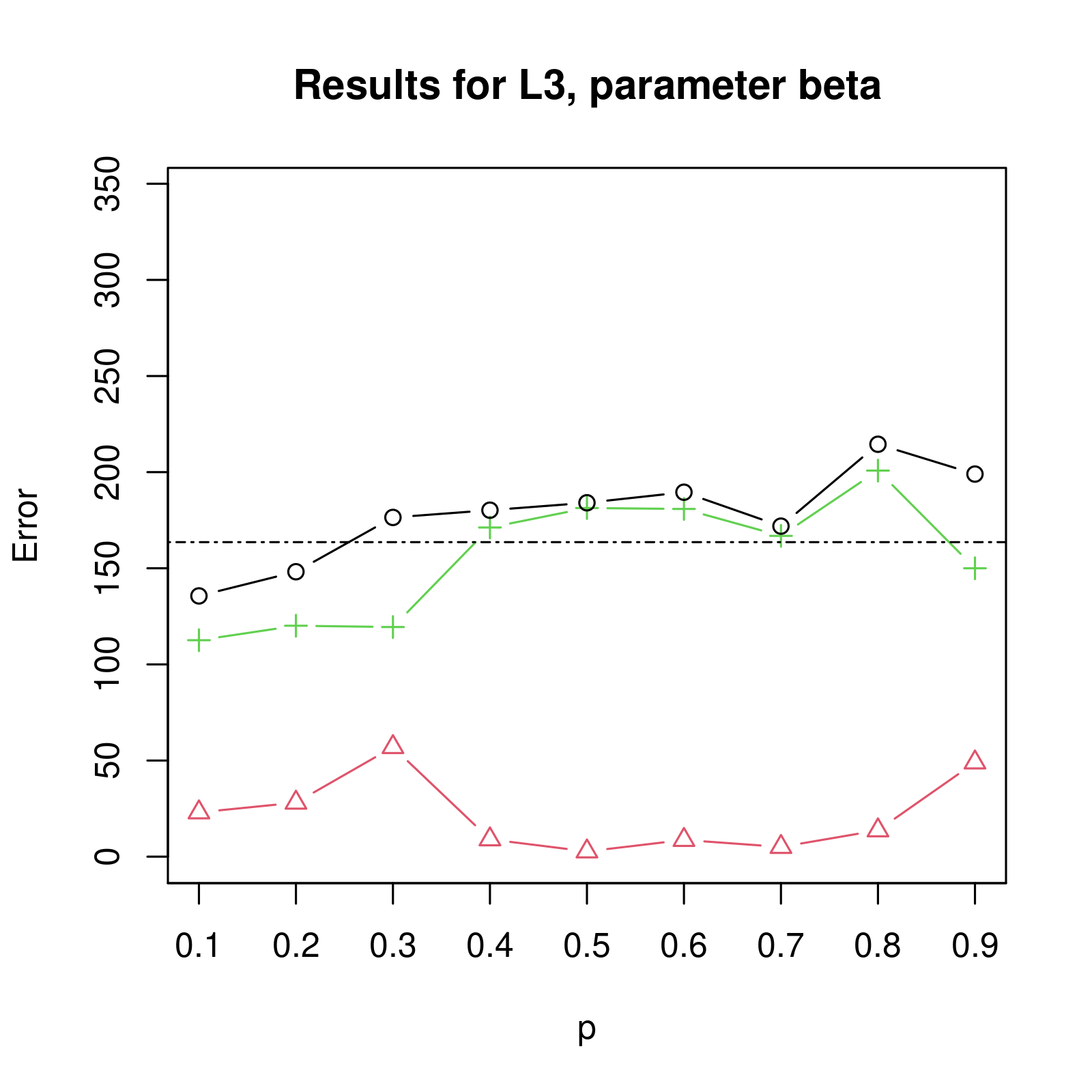}
    \includegraphics[width = 0.4\textwidth]{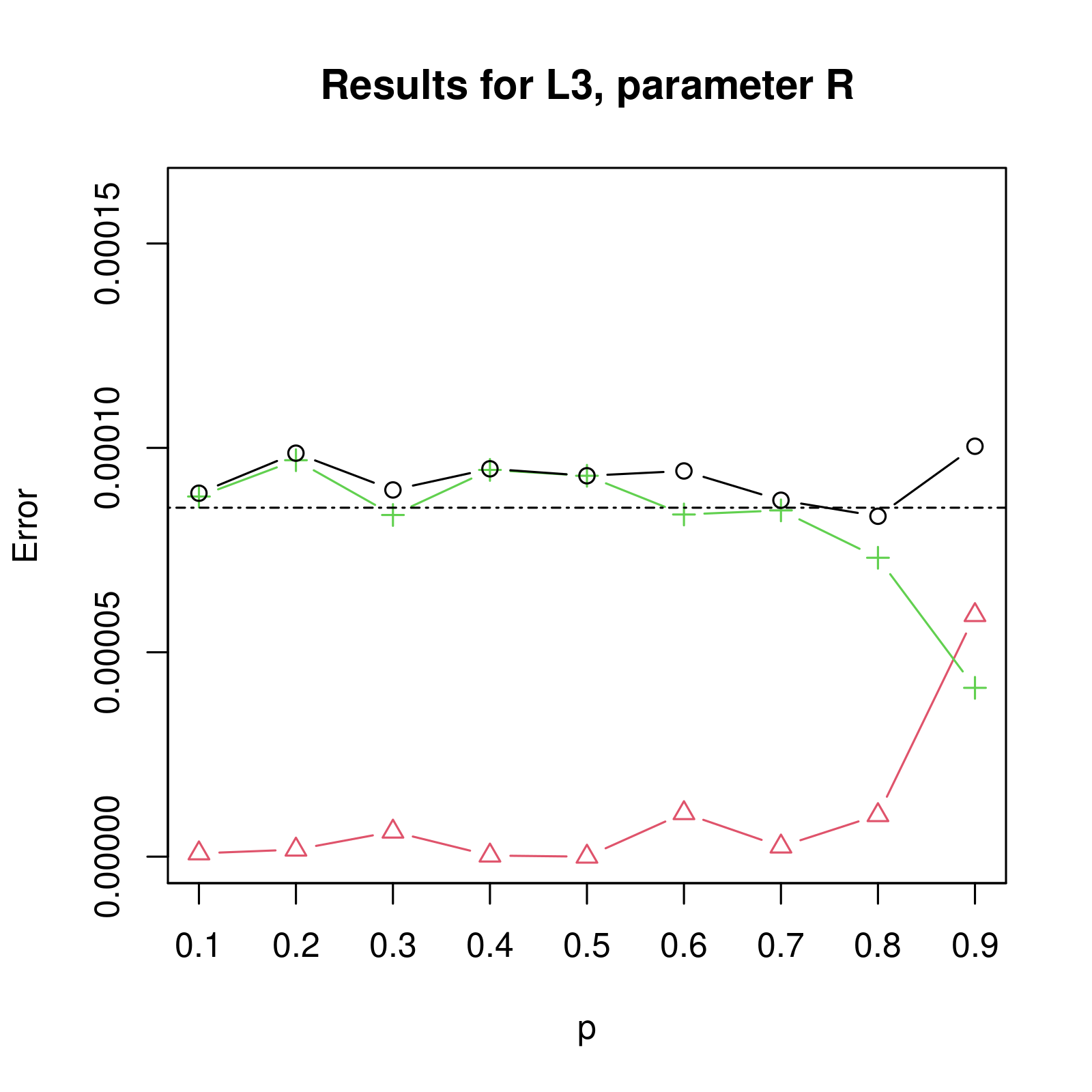}
    
    \includegraphics[width = 0.4\textwidth]{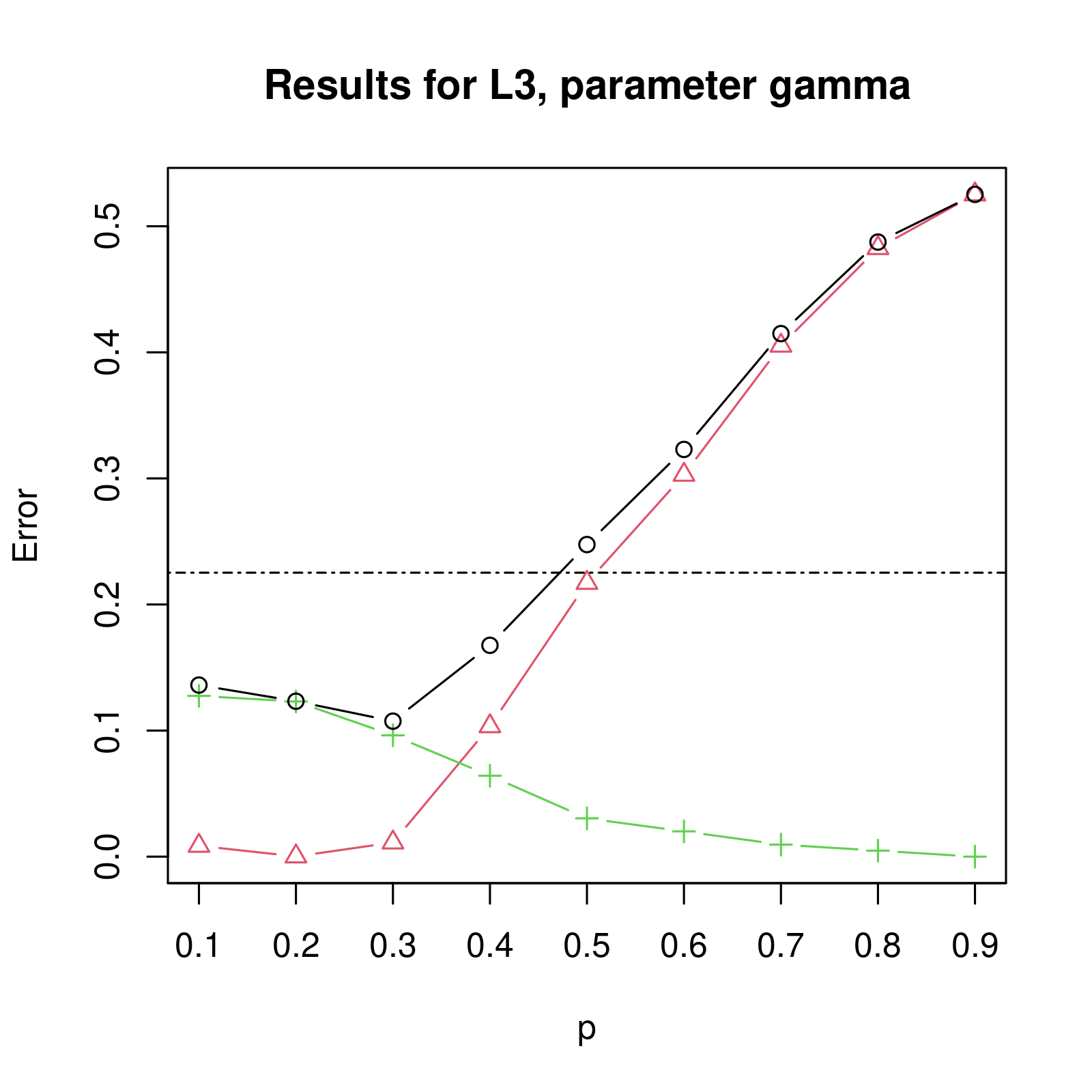}
    \includegraphics[width = 0.4\textwidth]{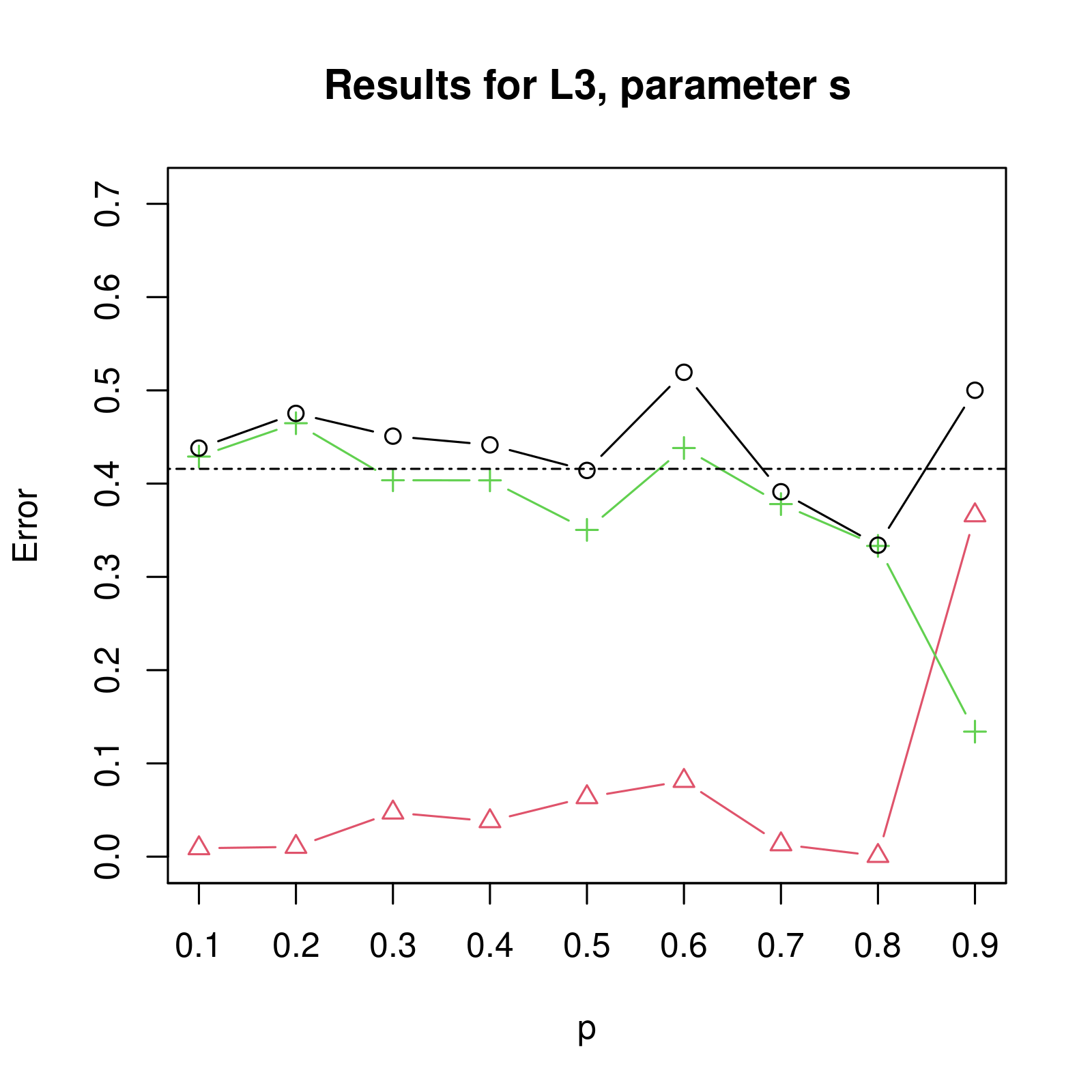}
    \caption{
    MSE, squared bias and variance for the Geyer saturation model using PPL with the $\Loss_3$ loss function, when estimating the parameters 
    $\beta$, $R$, $\gamma$ and $s$. 
    Here $k = 100$, $N = 100$, $p = 0.1,0.2,\ldots,0.9$ and the PPL-weight is set to $p/(1-p)$. 
    The black lines with circles correspond to MSE, the red lines with triangles correspond to squared bias and the green lines with plus signs correspond to variance. The black dotted lines correspond to the Takacs-Fiksel estimates.
}
    \label{fig:geyer_p/(1-p)-L3}
\end{figure}

Lastly, in Figure \ref{fig:geyer_w-L3} we see the results for when the PPL-weight is estimated based on \eqref{e:WeightEst}. Here the 
performance is 
similar to when the weight is set to $p$. Compared to the hard-core and Strauss processes, the Geyer process seems to be the least affected by the choice of weight. The main difference is that the MSE for $\gamma$ is even lower in this case. Here, the bias is small for all four parameters, so the MSE values consist of mostly variance contributions. 

To conclude, the MSE for PPL is similar to or better than Takacs-Fiksel estimation for the Geyer saturation process when using the $\Loss_3$ loss function; for both $\beta, R$ and $s$, the two methods perform identically but with general PPL we seem to generally obtain better estimates of $\gamma$. This holds for when we fix the weight to $p$ or estimate the weight, where we further note that the MSE values seem quite stable when varying $p$. This is not the case for the hard-core and Strauss processes, where the MSE is more affected by the choice of $p$.

\begin{figure}[!htb]
    \centering
    \includegraphics[width = 0.4\textwidth]{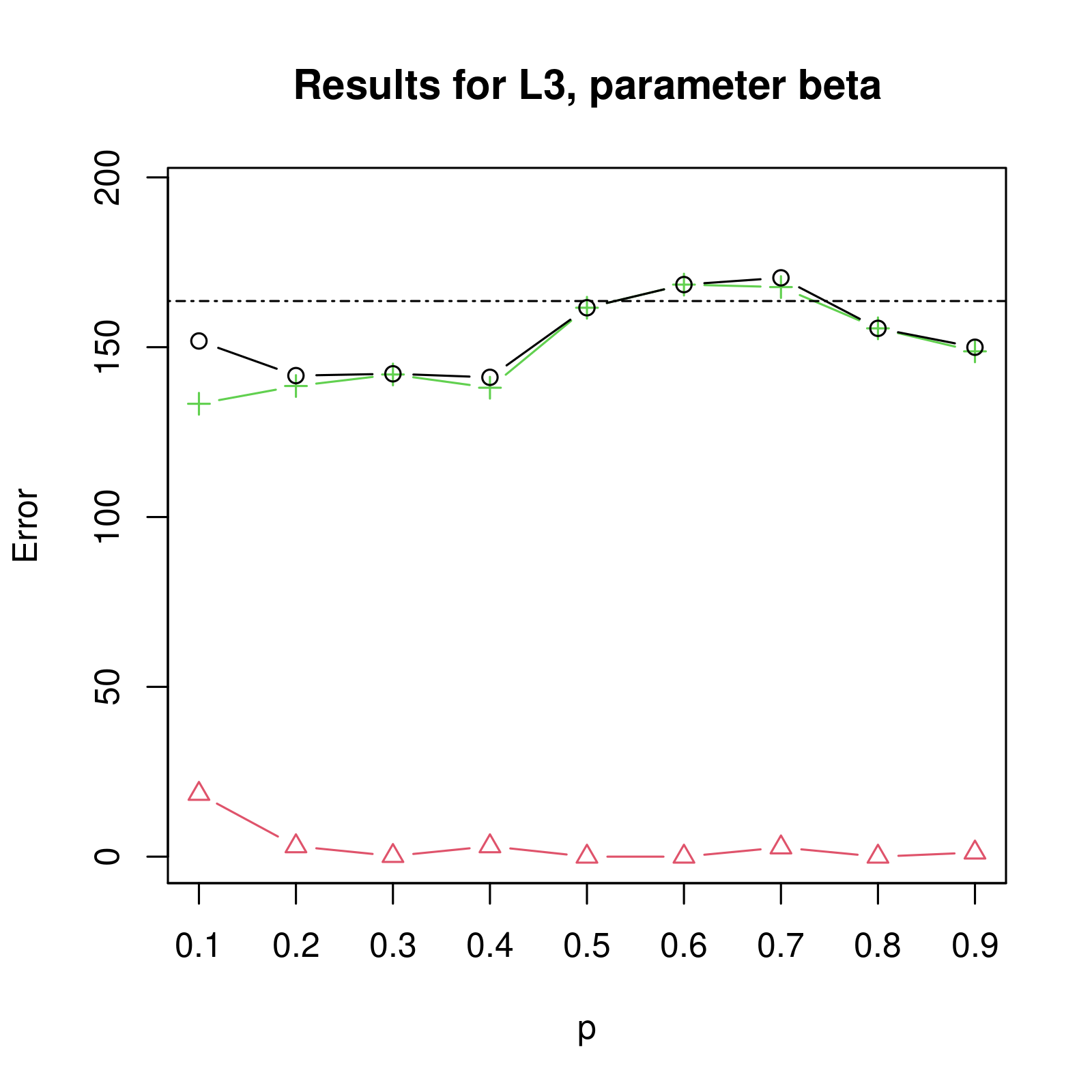}
    \includegraphics[width = 0.4\textwidth]{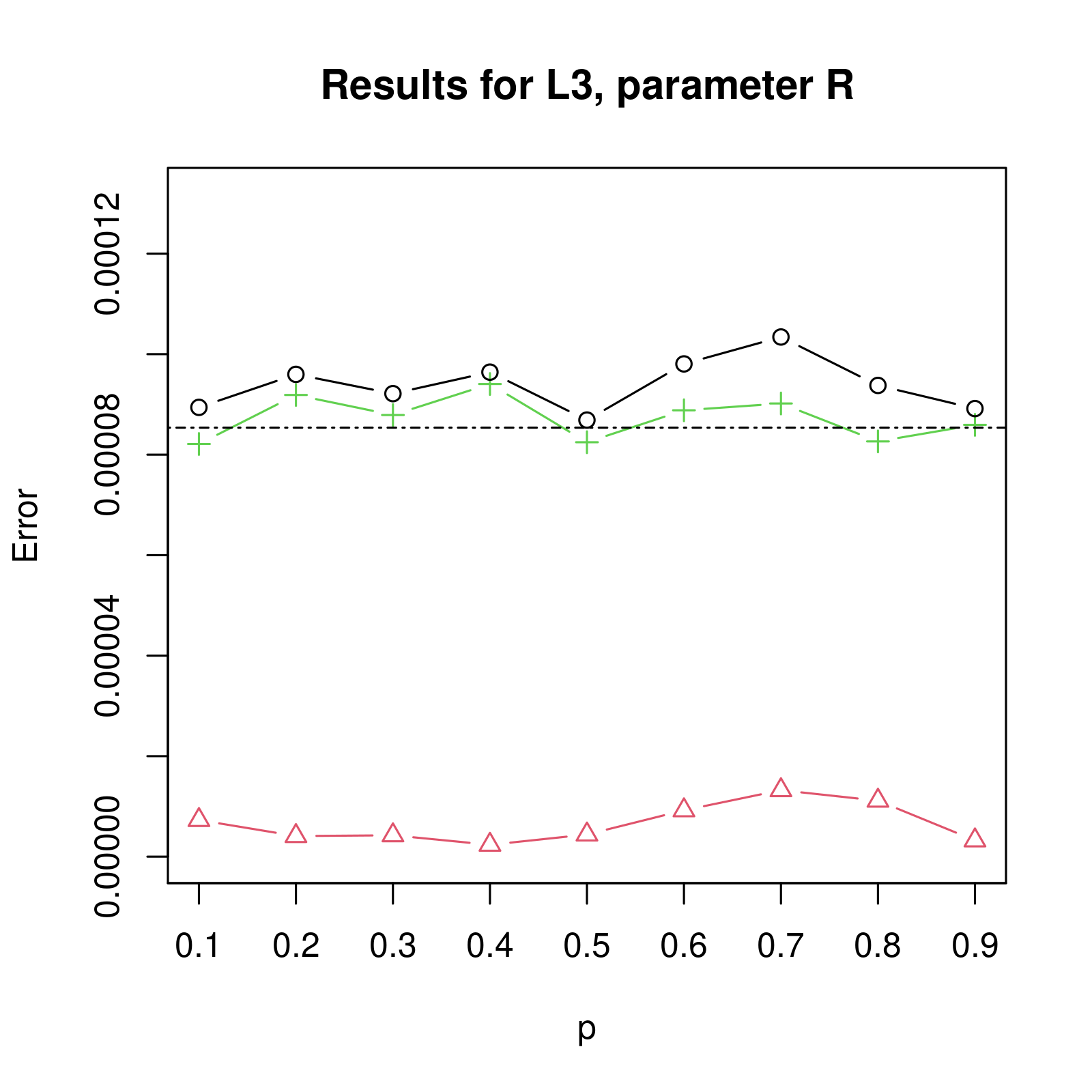}
    
    \includegraphics[width = 0.4\textwidth]{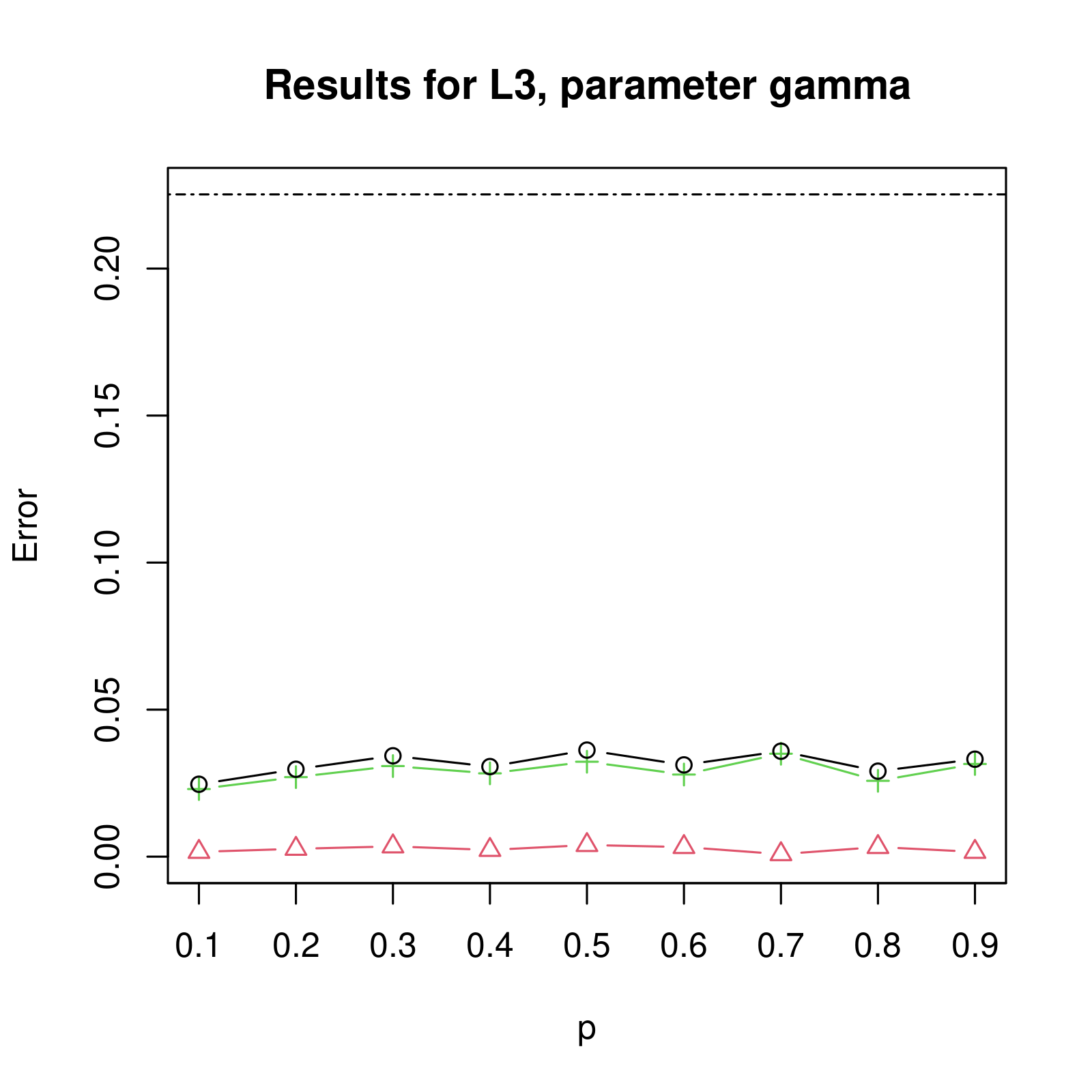}
    \includegraphics[width = 0.4\textwidth]{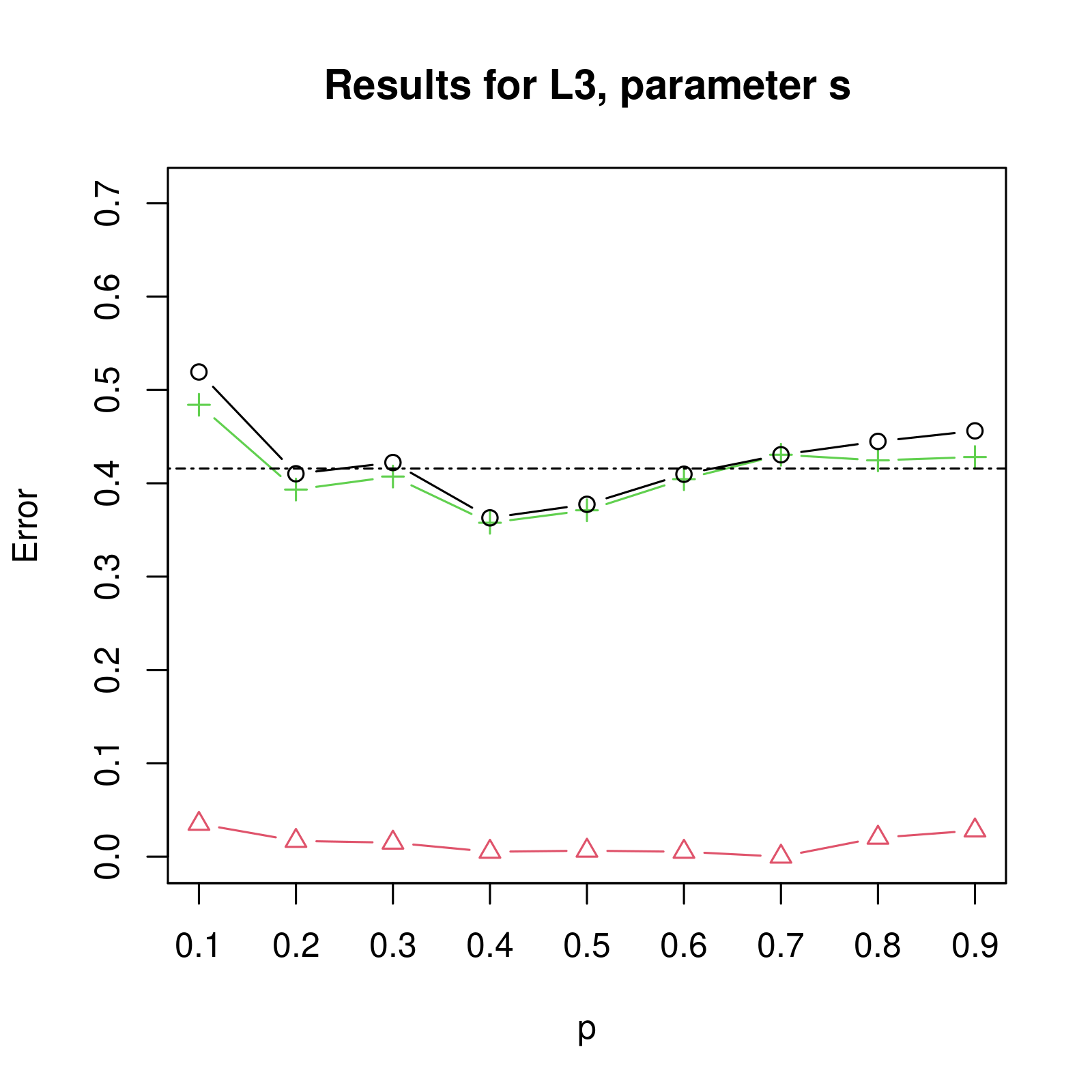}
    \caption{
    MSE, squared bias and variance for the Geyer saturation model using PPL with the $\Loss_3$ loss function, when estimating the parameters 
    $\beta$, $R$, $\gamma$ and $s$. 
    Here $k = 100$, $N = 100$, $p = 0.1,0.2,\ldots,0.9$ and the PPL-weight is estimated in according to \eqref{e:WeightEst}. 
    The black lines with circles correspond to MSE, the red lines with triangles correspond to squared bias and the green lines with plus signs correspond to variance. The black dotted lines correspond to the Takacs-Fiksel estimates.
}
    \label{fig:geyer_w-L3}
\end{figure}

\subsubsection{Overall conclusions}

To summarise the observations made in our simulation study, we start with the Poisson process. Here, we theoretically know that the weight is $p$ and it seems that the $\Loss_1$ loss function is the optimal choice. 
For the hard-core process, the $\Loss_1$ loss function 
in combination with weight estimation based on \eqref{e:WeightEst}
seems to be the optimal setup. 
In both of these cases, PPL outperforms Takacs-Fiksel estimation for both of the model parameters and all values of $p$. 
For the Strauss process the $\Loss_1$ loss function with the fixed weight $p$ seems to be the optimal combination. 
Here PPL outperforms Takacs-Fiksel estimation for both of the parameters $\beta$ and $\gamma$, when choosing $p$ to be between $0.4$ and $0.7$.
For the parameter $R$, the results of PPL and Takacs-Fiksel estimation are similar.
Lastly, for the Geyer saturation process, the best 
combination is 
the $\Loss_3$ loss function and weight estimation using \eqref{e:WeightEst}. 
Then, for all values of $p$, PPL is similar to Takacs-Fiksel estimation for $\beta$, $R$ and $s$, and PPL outperforms Takacs-Fiksel estimation for $\gamma$.

As anticipated, the weight choice $p/(1-p)$ did not work really well for any of the processes considered in the simulation study.
However, the choice between the fixed weight $p$ and weight estimation using \eqref{e:WeightEst} is not straightforward, especially since the latter is more 
computationally expensive and 
time consuming. 
Usually, the weight estimation yields more stable results, in terms of how the MSE is varying with $p$, 
while,
generally, the weight $p$ seems to work better for smaller values of $p$ than for larger values of $p$. 

As we have seen, the choice of loss function also affects the results. 
For the Poisson, hard-core and Strauss processes, the $\Loss_1$ and $\Loss_2$ loss functions performed better than the $\Loss_3$ loss function. 
However, for the Geyer saturation process, the $\Loss_3$ loss function yielded better results than both the $\Loss_1$ and $\Loss_2$ loss functions. 
Specifically, comparing i) the Geyer saturation process with the $\Loss_3$ loss function to ii) the Poisson, hard-core and Strauss processes with the $\Loss_1$ loss function, using the weight estimates $p$ in both i) and ii), we observe that the squared bias was lower for i) than ii), and the variance was higher for i) than ii).
It is not completely clear to us why we here have this difference in performance.

Regarding giving a general recommendation based on the observations made in the simulation study, we suggest using the fixed weight $p$, retention probability $p \in [0.3,0.4]$
and 
letting $k=100$ 
in the Monte-Carlo cross-validation. It should be stated that we have observed that PPL performs well also for smaller values for $k$. 
Moreover, if a fixed choice has to be made for the loss function, our suggestion would be to stick to $\Loss_1$ or $\Loss_2$, as they performed the best for all models expect the ``attractive'' Geyer saturation model, where $\Loss_3$ had a slightly better performance.

\section{Discussion}
\label{sec:discuss}

Point Process Learning (PPL) is a cross-validation-based statistical methodology for point processes, which was recent introduced by \citet{cronie2023cross} for the purpose of estimating parameters in (Gibbs) point processes models, via their Papangelou conditional intensity functions. 
Intuitively, since both Takacs-Fiksel estimation and the prediction errors of PPL are based on the Georgii-Nguyen-Zessin formula, it seems like Takacs-Fiksel estimation, which has pseudolikelihood estimation as a special case \citep{Coeurjolly2019understanding}, should be a special (limit) case of PPL.
In the main results of this paper we show that this is indeed the case. 
More specifically, we show that by letting the cross-validation scheme on which PPL is based tend to leave-one-out cross-validation, it follows that PPL based on a certain loss function, converges in probability to the loss function used in Takacs-Fiksel estimation.

As we were unable to theoretically show that the general PPL setup produces estimators of better mean square error (MSE) quality than Takacs-Fiksel estimation, we conducted a simulation study,
employing four common model types. Our simulation study shows that we can make hyperparameter choices in PPL such that it outperforms Takacs-Fiksel estimation in terms of MSE for all model types. In certain cases, the MSE was even halved with respect to Takacs-Fiksel estimation. 
The simulation study covered a variety of cases: an inhomogeneous Poisson process (complete spatial randomness), a hard-core process (repulsive), a Strauss process (repulsive and intermediate between a hard-core process and a Poisson process), and lastly, a Geyer saturation process with $\gamma>1$ (point patterns show tendencies of clustering). 
As with most (if not all) simulation studies, there are limitations in ours. In particular, one could explore more variations of the parameters of the underlying models (we used one set of parameters per model), as well more actual models. 
For example, there are infinitely many ways to specify a Poisson process. 
Moreover, it would be interesting to vary the interaction parameter $\gamma$ in both the Strauss and Geyer saturation processes, since here $\gamma$ was quite close to one, which means that the processes are rather close to homogeneous Poisson processes. Specifically, for the Geyer saturation model it would be interesting to explore the case when $\gamma < 1$, while varying the saturation threshold $s$. 
However, given the already extensive scope of our simulation study, with the range of different hyperparameter choices evaluated, we believe that the main message of the performance of PPL get through here. 

Regarding the computational aspects, PPL is by definition more expensive than Takacs-Fiksel estimation. In particular, our current implementation of PPL is more time-consuming to run than the state-of-the-art methods in \textsc{spatstat}. 
However, our current implementation can be optimised to run faster. 

Although we managed to illustrate that we can make hyperparameter choices such that PPL outperforms Takacs-Fiksel estimation, optimal selection of the hyperparameters is still an open question. 
In Monte-Carlo cross-validation, which we used for PPL in our simulation study, the hyperparameter $p$ governs the proportion of points that end up in the validation set, and thus the proportion in that training set. Hence, $p$ essentially governs which part of the data we use to try to predict the rest of the data. Making a good choice for $p$ is hard and needs further study. The same holds for exploring other cross-validation regimes to be used use in PPL, such as the ones suggested in Definition \ref{def:Kfold}. 
Cross-validation parameters are, in turn, strongly connected to another hyperparameter, which is the PPL-weight function; correct specification of this function ensures that the perdition errors in PPL have expectation zero under the right model. We here proposed a few simple ways to estimate the PPL-weights, either based on theoretical bounds or based on basic averaging ideas. As indicated in the paper, there are potentially more sophisticated ways to estimate PPL-weights, but this requires further study. 
Another crucial hyperparameter choice to be made is the
test function, which weights the contribution of each validation set point to the prediction error; 
in the current study we fixed it to the so-called Stoyan-Grabarnik 
test function but there definitely more optimal ways to select it (in a data-driven fashion). 
Although we here made educated but somewhat subjective choices for the hyperparameters, if a good approach to selecting them can be devised, we have high hope that the MSE performance of PPL can be improved even further.

\subsection*{Acknowledgements}
The authors would like to thank Marie-Colette van Lieshout for providing them with a counterexample on attractiveness/repulsiveness of Geyer saturation processes. 
The authors are also grateful to Aila Särkkä, Mike Pereira and Rasmus Waagepetersen for fruitful discussions and helpful insights.
They are further grateful to the developers of the \textsc{spatstat} package, especially Adrian
Baddeley and Ege Rubak for kindly answering implementation-related questions. 
The simulation study was enabled by resources provided by Chalmers e-Commons at Chalmers.
Ottmar Cronie has been supported by the Swedish Research Council (2023-03320).

\pagebreak

 \end{bibunit}

\pagebreak
\clearpage

\appendix
\pagenumbering{arabic}
\onecolumn
\begin{bibunit}
\begin{center}
\textbf{
Electronic Supplementary Material to the paper \\
``Comparison of Point Process Learning and its special case Takacs-Fiksel estimation''\\
Julia Jansson\footnote{juljans@chalmers.se} 
and Ottmar Cronie\footnote{ottmar@chalmers.se}
}
\end{center}
\section{Electronic Supplementary Material}

\thispagestyle{empty}

\subsection{Proofs}\label{s:Proofs}

\subsubsection{Proof of Theorem \ref{thm:TF} 
}
\label{s:proofTF}
To prove Theorem \ref{thm:TF} we need a dominated convergence theorem for sequences of random variables converging in probability. This result can be found stated in different places\footnote{\url{https://terrytao.wordpress.com/2015/10/23/275a-notes-3-the-weak-and-strong-law-of-large-numbers/}, Exercise 2(ix)} and its proof exploits that such sequences have subsequences which converge a.s.\ to the same limit. 

\begin{lemma}[Dominated convergence in probability]\label{DOM}
     Consider a probability space $(\Omega,\mathcal{F},\P)$ carrying a non-negative random variable $Z$, with $\E[Z]<\infty$, and a sequence of random variables which satisfy $Y_n\stackrel{p}{\to}Y$, as $n\to\infty$, and 
$|Y_n|\leq Z$ a.s.\ 
     for all $n$.
We then have $\E[Y_n]\to\E[Y]$ and $\E[|Y_n-Y|]\to0$ as $n\to\infty$. 
\end{lemma}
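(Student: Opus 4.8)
The plan is to reduce the claim to the classical (almost-sure) dominated convergence theorem by exploiting the standard fact that convergence in probability is captured through almost-sure convergence along subsequences. Concretely, I would rely on two elementary facts: (i) if $Y_n \stackrel{p}{\to} Y$, then every subsequence of $(Y_n)$ admits a further subsequence converging to $Y$ almost surely; and (ii) a sequence of real numbers $(a_n)$ converges to $a$ if and only if every subsequence of $(a_n)$ has a further subsequence converging to $a$. Fact (ii) will allow me to upgrade subsequential convergence of expectations to full convergence.

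First I would pin down integrability of the limit. Since $Y_n \stackrel{p}{\to} Y$, fact (i) applied to the whole sequence yields a subsequence $(Y_{n_k})$ with $Y_{n_k}\to Y$ almost surely. Along this subsequence the hypothesis $|Y_{n_k}| \le Z$ a.s.\ passes to the limit (intersect the countably many almost-sure events), giving $|Y| \le Z$ a.s.; in particular $\E[|Y|] \le \E[Z] < \infty$, so that $Y$ and every $Y_n$ are integrable and $\E[Y]$, $\E[Y_n]$ are well defined.

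Next, to prove $\E[|Y_n - Y|] \to 0$, I would apply fact (ii) to the real sequence $a_n = \E[|Y_n - Y|]$. Fix an arbitrary subsequence; it still converges to $Y$ in probability, so by fact (i) it has a further subsequence $(Y_{n_{k_j}})$ with $Y_{n_{k_j}} \to Y$ almost surely. Along it, $|Y_{n_{k_j}} - Y| \to 0$ a.s., while the triangle inequality together with $|Y| \le Z$ a.s.\ provides the domination $|Y_{n_{k_j}} - Y| \le |Y_{n_{k_j}}| + |Y| \le 2Z$ a.s., where $\E[2Z] < \infty$. The classical dominated convergence theorem then yields $\E[|Y_{n_{k_j}} - Y|] \to 0$. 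Since every subsequence of $(a_n)$ thus has a further subsequence tending to $0$, fact (ii) gives $a_n \to 0$, i.e.\ $\E[|Y_n - Y|] \to 0$. Finally $|\E[Y_n] - \E[Y]| = |\E[Y_n - Y]| \le \E[|Y_n - Y|] \to 0$, which is the remaining assertion.

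The only genuinely delicate point is the correct use of the subsequence principle: one must argue along an \emph{arbitrary} subsequence rather than a single fixed one, since convergence in probability by itself does not hand us one almost-surely convergent sequence controlling the whole tail. Everything else is routine bookkeeping with the triangle inequality and the ordinary dominated convergence theorem, with $2Z$ serving as the integrable envelope.
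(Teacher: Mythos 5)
Your proof is correct and is precisely the argument the paper alludes to: the paper does not spell out a proof but notes that it "exploits that such sequences have subsequences which converge a.s.\ to the same limit," which is exactly your subsequence-principle reduction to the classical dominated convergence theorem (with the useful added care of first establishing $|Y|\leq Z$ a.s.\ and arguing along an arbitrary subsequence). Nothing further is needed.
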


We are now ready to prove Theorem \ref{thm:TF}. 

\begin{proof}[Proof of Theorem \ref{thm:TF}]

Let $A\subseteq\Sm$ be arbitrary and bounded. 
Consider a 
sequence $a_k$, $k\geq2$,  
and a sequence $p_k\in(0,1)$, $k\geq1$, which decreases to $0$ as $k\to\infty$. 
We want to make choices for these sequences such that 
\[
a_k
\sum_{i=1}^k
\I_{\xi_{\theta}}^{h_{\theta}}(A;X_i^V(p_k),X_i^T(p_k)) 
\stackrel{p}{\to}
\I_{\lambda_{\theta}}^{h_{\theta}}(A;X,X)
\]
as $k\to\infty$. As in the statement of the theorem, for all $k\geq2$, we let $\{(X_i^T(p_k),X_i^V(p_k))\}_{i=1}^k$ and $\{(\x_i^T(p_k),\x_i^V(p_k))\}_{i=1}^k$ be Monte-Carlo cross-validation rounds for $X$ and $\x$, respectively, which have been generated by the retention probability $p_k$. 
Now, consider 
\begin{align*}
\Delta(k;\x)=&
a_k
\sum_{i=1}^k
\I_{\xi_{\theta}}^{h_{\theta}}(A;\x_i^V(p_k),\x_i^T(p_k)) 
- 
\I_{\lambda_{\theta}}^{h_{\theta}}(A;\x,\x)
\\
=&
a_k
\sum_{i=1}^k
\left(
\sum_{x\in \x_i^V(p_k) \cap A}
h_{\theta}(x;\x_i^T(p_k)\setminus\{x\})
-
\int_{A}
h_{\theta}(u;\x_i^T(p_k))
\xi_{\theta}(u;\x_i^T(p_k))
\de u
\right)
\\
&- 
\Bigg(
\sum_{x\in \x \cap A}
h_{\theta}(x;\x\setminus\{x\})
-
\int_{A}
h_{\theta}(u;\x)
\lambda_{\theta}(u|\x)
\de u
\Bigg)
\\
=&
\sum_{x\in \x\cap A}
a_k
\sum_{i=1}^{k}
\1\{x\in \x_i^V(p_k)\}
h_{\theta}(x;\x_i^T(p_k)\setminus\{x\})
-
h_{\theta}(x;\x\setminus\{x\})
\\
&-
\left(
a_k
\sum_{i=1}^{k}
\int_{A}
h_{\theta}(u;\x_i^T(p_k))
\xi_{\theta}(u;\x_i^T(p_k))
\de u
-
\int_{A}
h_{\theta}(u;\x)
\lambda_{\theta}(u|\x)
\de u
\right)
\\
=&
\Delta_1(k;\x) - \Delta_2(k;\x)
.
\end{align*}

\subsubsection*{Convergence of $\Delta_1(k;X)$}

Starting with $\Delta_1(k;\x)$, given any $x\in\x$, we note that
\begin{align*}
& 
\min_{j=1,\ldots,k}
h_{\theta}(x;\x_j^T(p_k)\setminus\{x\})
\sum_{i=1}^{k}
a_k
\1\{x\in \x_i^V(p_k)\}
\\
\leq&
\sum_{i=1}^{k}
a_k
\1\{x\in \x_i^V(p_k)\}
h_{\theta}(x;\x_i^T(p_k)\setminus\{x\})
\\
\leq&
\max_{j=1,\ldots,k}
h_{\theta}(x;\x_j^T(p_k)\setminus\{x\})
\sum_{i=1}^{k}
a_k
\1\{x\in \x_i^V(p_k)\}
.
\end{align*} 
Next, we will show that $\min_{j=1,\ldots,k}
h_{\theta}(x;\x_j^T(p_k)\setminus\{x\})$ and $\max_{j=1,\ldots,k}
h_{\theta}(x;\x_j^T(p_k)\setminus\{x\})$ tend to $h_{\theta}(x;\x\setminus\{x\})$ in probability and that $\sum_{i=1}^{k}
a_k
\1\{x\in \x_i^V(p_k)\}\to1$ in probability, as $k\to\infty$, provided that we make the choice $a_k=p_k=1/\sqrt{k}$. Having shown this, by appealing to Slutsky's lemma \citep[Theorem 6']{ferguson}, as $k\to\infty$, we obtain that both the upper and the lower bound tend to $h_{\theta}(x;\x\setminus\{x\})$ in probability, which in turn implies that $\sum_{i=1}^{k}
a_k
\1\{x\in \x_i^V(p_k)\}
h_{\theta}(x;\x_i^T(p_k)\setminus\{x\})\to 0$ in probability, whereby 
$\Delta_1(k;\x)\to0$ in probability when $k\to \infty$. Hence, as this holds for any realisation $\x$ of $X$, we have that $\Delta_1(k;X)\to0$ in probability as $k\to\infty$.

For a fixed $x\in\x$, let
$S_k=a_k
\sum_{i=1}^{k}
\1\{x\in \x_i^V(p_k)\}$ and note that $\E[S_k] = a_k k \E[\1\{x\in \x_i^V(p_k)\}] = a_k k p_k$. We want to have that $\lim_{k\to\infty}a_k k p_k = 1$ and 
we want to show that 
\begin{align*}
S_k - \E[S_k] = a_k
\sum_{i=1}^{k}
\1\{x\in \x_i^V(p_k)\} - a_k k p_k
=
a_k k p_k
\left(
\frac{1}{k}
\sum_{i=1}^{k}
\frac{\1\{x\in \x_i^V(p_k)\}
}{p_k} - 1\right)
\end{align*}
tends to 0 in probability, as $k\to\infty$. 
Note that all $\1\{x\in \x_i^V(p_k)\}$ are independent Bernoulli random variables with mean $\E[\1\{x\in \x_i^V(p_k)\}] = p_k$ and variance $\Var(\1\{x\in \x_i^V(p_k)\}) = p_k(1-p_k)$. 
By Markov's inequality, 
\begin{align*}
\varepsilon\P(|S_k - \E[S_k]|>\varepsilon)
&\leq \E[(S_k - \E[S_k])^2]
= \Var(S_k)
\\
&= \Var\left(a_k
\sum_{i=1}^{k}
\1\{x\in \x_i^V(p_k)\}\right)
\\
&= a_k^2 k \Var\left(
\1\{x\in \x_i^V(p_k)\}\right)
=
a_k^2 k p_k
(1-p_k)
\end{align*}
for any $\varepsilon>0$, 
which we want to show tends to 0. 
We thus want to have both $\lim_{k\to\infty}a_k k p_k = 1$ and $\lim_{k\to\infty}a_k^2 k p_k
(1-p_k) = (\lim_{k\to\infty}a_k k p_k) (\lim_{k\to\infty} a_k
(1-p_k)) = 0$, which may be achieved by letting 
$$
a_k=p_k=1/\sqrt{k}
,
$$
yielding that the upper bound above is given by $k^{-1/2}(1-k^{-1/2})$. Hence, since $\varepsilon>0$ was arbitrary, with $a_k=p_k=1/\sqrt{k}$ we obtain that $\lim_{k\to\infty}S_k = \lim_{k\to\infty}\E[S_k]=1$ in probability. 

We next turn to the convergence of $\min_{j=1,\ldots,k}
h_{\theta}(x;\x_j^T(p_k)\setminus\{x\})$ and $\max_{j=1,\ldots,k}
h_{\theta}(x;\x_j^T(p_k)\setminus\{x\})$. 
For any $k\geq2$ and any $i \in \{1,\ldots,k\}$, let $m_{i,k}(y)\in\{0,1\}$, $y\in\x$, be the corresponding marking, which yields a sequence of iid random variables with $\P(m_{i,k}(y)=0)=1-p_k$ and $\P(m_{i,k}(y)=1)=p_k$. We have that $\x_i^T(p_k)=\{y\in\x:m_{i,k}(y)=0\}$. Now, for any $\varepsilon>0$, by the law of total probability, 
\begin{align*}
&
\P(|h_{\theta}(x;\x_i^T(p_k)\setminus\{x\}) - h_{\theta}(x;\x\setminus\{x\})|>\varepsilon) 
\\
=&
\P(|h_{\theta}(x;\x_i^T(p_k)\setminus\{x\}) - h_{\theta}(x;\x\setminus\{x\})|>\varepsilon|\x_i^T(p_k)=\x)\P(\x_i^T(p_k)=\x)
\\
&+
\P(|h_{\theta}(x;\x_i^T(p_k)\setminus\{x\}) - h_{\theta}(x;\x\setminus\{x\})|>\varepsilon|\x_i^T(p_k)\neq\x)
(1 - \P(\x_i^T(p_k)=\x))
\\
=&
\P(|h_{\theta}(x;\x_i^T(p_k)\setminus\{x\}) - h_{\theta}(x;\x\setminus\{x\})|>\varepsilon|\x_i^T(p_k)\neq\x)(1 - \P(\x_i^T(p_k) = \x))
\\
=&
\P(|h_{\theta}(x;\x_i^T(p_k)\setminus\{x\}) - h_{\theta}(x;\x\setminus\{x\})|>\varepsilon|\x_i^T(p_k)\neq\x)
\\
&\times
\left(1 - \binom{\#\x}{\#\x}p_k^0(1-p_k)^{\#\x}\right)
\\
=&
\P(|h_{\theta}(x;\x_i^T(p_k)\setminus\{x\}) - h_{\theta}(x;\x\setminus\{x\})|>\varepsilon|\x_i^T(p_k)\neq\x)(1 - (1-p_k)^{\#\x}).
\end{align*} 
Since $p_k\downarrow0$ as $k\to\infty$, we obtain $1 - (1-p_k)^{\#\x}\to0$, whereby the expression above tends to $0$. Since $\varepsilon>0$ was arbitrary, $h_{\theta}(x;\x_i^T(p_k)\setminus\{x\})$ tends to $h_{\theta}(x;\x\setminus\{x\})$ in probability, as $k\to\infty$. Since we have shown this for any $i\in\{1,\ldots,k\}$, it also holds for $\min_{j=1,\ldots,k}
h_{\theta}(x;\x_j^T(p_k)\setminus\{x\})$ and $\max_{j=1,\ldots,k}
h_{\theta}(x;\x_j^T(p_k)\setminus\{x\})$.

One may hope to strengthen the convergence above to hold a.s., which could be achieved by applying a combination of the Borel-Cantelli lemma and \citet[Lemma 1]{ferguson}, provided that the right-hand side of $\sum_{k\geq2}\P(|h_{\theta}(x;\x_i^T(p_k)\setminus\{x\}) - h_{\theta}(x;\x\setminus\{x\})|>\varepsilon) \leq \sum_{k\geq2}(1 - (1-p_k)^{\#\x})$ is finite. However, $1+\sum_{k\geq2}(1 - (1-k^{-b})) = \zeta(b)$, the Riemann zeta function evaluated in $b$, is finite only if $b>1$. In other words, we would have had to have $p_k=k^{-b}$ for some $b>1$, as opposed to $p_k=k^{-1/2}$.

\subsubsection*{Convergence of $\Delta_2(k;X)$}

We next want to show that $\lim_{k\to\infty}\Delta_2(k;X)\stackrel{p}{=}0$. 
Having fixed $a_k=p_k=1/\sqrt{k}$, 
\begin{align*}
&\Delta_2(k;X)
\\
=&
a_k
\sum_{i=1}^{k}
\int_{A}
h_{\theta}(u;X_i^T(p_k))
\xi_{\theta}(u;X_i^T(p_k))
\de u
-
\int_{A}
h_{\theta}(u;X)
\lambda_{\theta}(u|X)
\de u
\\
=&
a_k
\sum_{i=1}^{k}
\int_{A}
h_{\theta}(u;X_i^T(p_k))
p_k
\left.
\E\left[
\frac{\lambda_{\theta}(u|X)}{\lambda_{\theta}(u| X_i^T(p_k))}
\right|
X_i^T(p_k)
\right]
\lambda_{\theta}(u| X_i^T(p_k))
\de u
\\
&-
\int_{A}
h_{\theta}(u;X)
\lambda_{\theta}(u|X)
\de u
\\
=&
\frac{1}{k}
\sum_{i=1}^{k}
\int_{A}
\E\left[
\left.
h_{\theta}(u;X_i^T(k^{-1/2}))
\lambda_{\theta}(u|X)
\right|
X_i^T(k^{-1/2})
\right]
\de u
-
\int_{A}
h_{\theta}(u;X)
\lambda_{\theta}(u|X)
\de u
\\
=&
\E\left[
\left.
\int_{A}
\frac{1}{k}
\sum_{i=1}^{k}
h_{\theta}(u;X_i^T(k^{-1/2}))
\lambda_{\theta}(u|X)
\de u
\right|
X_i^T(k^{-1/2})
\right]
-
\int_{A}
h_{\theta}(u;X)
\lambda_{\theta}(u|X)
\de u
,
\end{align*}
where the last equality follows from the Fubini-Tonelli theorem for conditional expectations, which requires that the Fubini-Tonelli theorem holds for the unconditional version of the statement.
If the variance of $\Delta_2(k;X)$ tends to 0 then we obtain the required result as a consequence of Markov's inequality.

We first show that $\lim_{k\to\infty}\E[\Delta_2(k;X)]=0$. Note that 
\begin{align*}
\E[\Delta_2(k;X)]
=&
\int_{A}
\frac{1}{k}
\sum_{i=1}^{k}
\E[
\E[
h_{\theta}(u;X_i^T(k^{-1/2}))
\lambda_{\theta}(u|X)
|
X_i^T(k^{-1/2})
]
]
\de u
\\
&-
\int_{A}
\E[h_{\theta}(u;X)
\lambda_{\theta}(u|X)]
\de u
\\
=&
\int_{A}
\E[
h_{\theta}(u;X_i^T(k^{-1/2}))
\lambda_{\theta}(u|X)
]
-
\E[h_{\theta}(u;X)
\lambda_{\theta}(u|X)]
\de u
\\
=&
\E\left[
\sum_{x\in X\cap A}
h_{\theta}(x;X_i^T(k^{-1/2})\setminus\{x\})
-
h_{\theta}(x;X\setminus\{x\})
\right]
\end{align*}
by the Fubini-Tonelli theorem, the law of total expectation and the GNZ formula.
We already know that for any $\x$ and $x\in\x$ the deviation $h_{\theta}(x;\x_i^T(p_k)\setminus\{x\})-h_{\theta}(x;\x\setminus\{x\})$ tends to 0 in probability as $k\to\infty$. Therefore, we obtain $\lim_{k\to\infty} \sum_{x\in X\cap A}
h_{\theta}(x;X_i^T(k^{-1/2})\setminus\{x\}) - h_{\theta}(x;X\setminus\{x\}) \stackrel{p}{=} 0$; note that, by definition, $\#(X\cap A)<\infty$ a.s.\ for all bounded $A\subseteq\Sm$. 
Next, we want to apply Lemma \ref{DOM} here to obtain that $\lim_{k\to\infty} \E[\Delta_2(k;X)] = \E[\lim_{k\to\infty} \Delta_2(k;X)] = 0$. To do so, we need to ensure that 
$|\sum_{x\in X\cap A}
h_{\theta}(x;X_i^T(k^{-1/2})\setminus\{x\}) - h_{\theta}(x;X\setminus\{x\})|$ is bounded by an integrable random variable for each $k\geq2$. 
Since $h_{\theta}$ is bounded, $|h_{\theta}(x;\x) - h_{\theta}(x;\y)| \leq 2\max\{|h_{\theta}(x;\x)|, |h_{\theta}(x;\y)|\}<\infty$ for all $\x,\y$. We thus obtain the bounding random variable $2\sum_{x\in X\cap A} \max\{|h_{\theta}(x;X_i^T(k^{-1/2})\setminus\{x\})|, |h_{\theta}(x;X\setminus\{x\})|\}$, 
which has finite expectation since $\#(X\cap A)<\infty$ a.s.\ by the local finiteness of $X$. 
Hence, $\lim_{k\to\infty}\Var(\Delta_2(k;X)) = \lim_{k\to\infty}\E[\Delta_2(k;X)^2]$.

Turning to the second moment, we have
\begin{align*}
&\E[\Delta_2(k;X)^2]
\\
=&
\E\left[\E\left[
\left.
\int_{A}
\frac{1}{k}
\sum_{i=1}^{k}
h_{\theta}(u;X_i^T(k^{-1/2}))
\lambda_{\theta}(u|X)
\de u
\right|
X_i^T(k^{-1/2})
\right]^2
\right]
\\
&-2
\E\left[
\int_{A^2}
\frac{1}{k}
\sum_{i=1}^{k}
\E\left[
\left.
h_{\theta}(u;X_i^T(k^{-1/2}))
\lambda_{\theta}(u|X)
\right|
X_i^T(k^{-1/2})
\right]
h_{\theta}(v;X)
\lambda_{\theta}(v|X)
\de u
\de v
\right]
\\
&+
\E\left[
\int_{A^2}
h_{\theta}(u;X)
\lambda_{\theta}(u|X)
h_{\theta}(v;X)
\lambda_{\theta}(v|X)
\de u\de v
\right]
\\
\leq&
\E\left[
\left(
\int_{A}
\frac{1}{k}
\sum_{i=1}^{k}
h_{\theta}(u;X_i^T(k^{-1/2}))
\lambda_{\theta}(u|X)
\de u
\right)^2
\right]
\\
&-2
\int_{A^2}
\frac{1}{k}
\sum_{i=1}^{k}
\E\left[
\E\left[
\left.
h_{\theta}(u;X_i^T(k^{-1/2}))
\lambda_{\theta}(u|X)
\right|
X_i^T(k^{-1/2})
\right]
h_{\theta}(v;X)
\lambda_{\theta}(v|X)
\right]
\de u
\de v
\\
&+
\int_{A^2}
\E\left[
h_{\theta}(u;X)
h_{\theta}(v;X)
\lambda_{\theta}(u|X)
\lambda_{\theta}(v|X)
\right]
\de u\de v
\\
=& E_1(k) - 2E_2(k) + E_3
,
\end{align*}
where the inequality is a consequence of conditioning being a contractive projection of $L^p$ spaces.

Now, we first want to show that $E_1(k) \to E_3$ when $k \to \infty$. We have
\begin{align*}
    E_1(k) =&
    \E\left[
\left(
\int_{A}
\frac{1}{k}
\sum_{i=1}^{k}
h_{\theta}(u;X_i^T(k^{-1/2}))
\lambda_{\theta}(u|X)
\de u
\right)^2
\right]\\
=&
\E\left[
\int_{A^2}
\frac{1}{k}
\sum_{i=1}^{k}
h_{\theta}(u;X_i^T(k^{-1/2}))
\lambda_{\theta}(u|X)
\frac{1}{k}
\sum_{j=1}^{k}
h_{\theta}(v;X_j^T(k^{-1/2}))
\lambda_{\theta}(v|X)
\de u \de v
\right]\\
=&
\int_{A^2}
\E\left[
\frac{1}{k}
\sum_{i=1}^{k}
h_{\theta}(u;X_i^T(k^{-1/2}))
\lambda_{\theta}(u|X)
\frac{1}{k}
\sum_{j=1}^{k}
h_{\theta}(v;X_j^T(k^{-1/2}))
\lambda_{\theta}(v|X)
\right]
\de u \de v\\
=&
\int_{A^2}
\frac{1}{k^2}
\sum_{i=1}^{k}
\sum_{j=1}^{k}
\E\left[
h_{\theta}(u;X_i^T(k^{-1/2}))
\lambda_{\theta}(u|X)
h_{\theta}(v;X_j^T(k^{-1/2}))
\lambda_{\theta}(v|X)
\right]
\de u \de v\\
=&
\int_{A^2}
\E\left[
h_{\theta}(u;X_i^T(k^{-1/2}))
\lambda_{\theta}(u|X)
h_{\theta}(v;X_j^T(k^{-1/2}))
\lambda_{\theta}(v|X)
\right]
\de u \de v,
\end{align*}
for any $i, j \in \{1,\ldots,k\}$.
Now, we already know that $h_{\theta}(u;\x_i^T(k^{-1/2}))-h_{\theta}(u;\x)$ and $h_{\theta}(v;\x_j^T(k^{-1/2}))-h_{\theta}(v;\x)$ both tend to 0 in probability as $k\to\infty$. 
Since this holds for all $\x$ it also holds for $X$ in probability. Further, by Lemma \ref{DOM} and Slutsky's lemma \citep[Theorem 6']{ferguson} we get 
\begin{align*}
&\lim_{k\to\infty}E_1(k)
\\
    =&
\int_{A^2}
\E\left[\lim_{k\to\infty}
(h_{\theta}(u;X_i^T(k^{-1/2}))
\lambda_{\theta}(u|X)
h_{\theta}(v;X_j^T(k^{-1/2}))
\lambda_{\theta}(v|X))
\right]
\de u \de v\\
=&
\int_{A^2}
\E\left[\big(\lim_{k\to\infty}
h_{\theta}(u;X_i^T(k^{-1/2}))
\lambda_{\theta}(u|X)\big)
\big(\lim_{k\to\infty} h_{\theta}(v;X_j^T(k^{-1/2}))
\lambda_{\theta}(v|X)\big)
\right]
\de u \de v\\
=&
\int_{A^2}
\E\left[
h_{\theta}(u;X)
\lambda_{\theta}(u|X)
h_{\theta}(v;X)
\lambda_{\theta}(v|X)
\right]
\de u \de v = E_3
.
\end{align*}
Note that Lemma \ref{DOM} is applicable here since $\lambda_{\theta}$ and $h_{\theta}$ are bounded by assumption. 

Focusing on $E_2(k)$, by the self-adjointness of conditional expectations we have 
\begin{align*}
&E_2(k)
\\
=&
\int_{A^2}
\E\left[
\frac{1}{k}
\sum_{i=1}^{k}
\E\left[
\left.
h_{\theta}(u;X_i^T(k^{-1/2}))
\right|
X_i^T(k^{-1/2})
\right]
h_{\theta}(v;X)
\lambda_{\theta}(u|X)
\lambda_{\theta}(v|X)
\right]
\de u
\de v
,
\end{align*}
which we want to show tends to $E_3$ in probability.

We start by showing that $\E[
h_{\theta}(u;X_i^T(k^{-1/2}))
|
X_i^T(k^{-1/2})
]$ tends to $h_{\theta}(u;X)$ in probability. We do so using Markov's inequality:
\begin{align*}
&\varepsilon^2\P\left(\left|\E\left[
\left.
h_{\theta}(u;X_i^T(k^{-1/2}))
\right|
X_i^T(k^{-1/2})
\right] - h_{\theta}(u;X)
\right|>\varepsilon\right)
\\
\leq&
\E\left[\left(
\E\left[
\left.
h_{\theta}(u;X_i^T(k^{-1/2}))
\right|
X_i^T(k^{-1/2})
\right] 
- 
h_{\theta}(u;X)\right)^2\right]
\\
=&
\E\left[\left(
\E\left[
\left.
h_{\theta}(u;X_i^T(k^{-1/2}))
\right|
X_i^T(k^{-1/2})
\right] 
-
h_{\theta}(u;X_i^T(k^{-1/2}))
+
h_{\theta}(u;X_i^T(k^{-1/2}))
- 
h_{\theta}(u;X)\right)^2\right]
\\
\leq &
\E\left[\left(
\E\left[
\left.
h_{\theta}(u;X_i^T(k^{-1/2}))
\right|
X_i^T(k^{-1/2})
\right] 
-
h_{\theta}(u;X_i^T(k^{-1/2}))
\right)^2\right]
\\
&+
\E\left[\left(
h_{\theta}(u;X_i^T(k^{-1/2}))
- 
h_{\theta}(u;X)\right)^2\right]
.
\end{align*}
We already know from before that $\lim_{k\to\infty} 
h_{\theta}(x;X_i^T(k^{-1/2})) - h_{\theta}(x;X) \stackrel{p}{=} 0$.
Thus, by applying Lemma \ref{DOM} the second term goes to 0.
We continue with the first term:
\begin{align*}
    &\E\left[\left(
\E\left[
\left.
h_{\theta}(u;X_i^T(k^{-1/2}))
\right|
X_i^T(k^{-1/2})
\right] 
-
h_{\theta}(u;X_i^T(k^{-1/2}))
\right)^2\right]\\
=&
 \E\Big[
\E\left[
h_{\theta}(u;X_i^T(k^{-1/2}))
|
X_i^T(k^{-1/2})
\right]^2
+
h_{\theta}(u;X_i^T(k^{-1/2}))^2\\
&-2\E\left[
h_{\theta}(u;X_i^T(k^{-1/2}))
|
X_i^T(k^{-1/2})
\right]
h_{\theta}(u;X_i^T(k^{-1/2}))
\Big]\\
=&
 \E\Big[
\E\left[
h_{\theta}(u;X_i^T(k^{-1/2}))
|
X_i^T(k^{-1/2})
\right]^2\Big]
+
 \E\Big[h_{\theta}(u;X_i^T(k^{-1/2}))^2\Big]\\
&-2 \E\Big[\E\left[
h_{\theta}(u;X_i^T(k^{-1/2}))
|
X_i^T(k^{-1/2})
\right]
h_{\theta}(u;X_i^T(k^{-1/2}))
\Big]
.
\end{align*}
Now we know that $\E\Big[
\E\left[
h_{\theta}(u;X_i^T(k^{-1/2}))
|
X_i^T(k^{-1/2})
\right]^2\Big] \leq \E\Big[
h_{\theta}(u;X_i^T(k^{-1/2}))
^2\Big]$
due to conditioning being a contractive projection of $L^p$ spaces. 
Further, by the 'taking out what is known' property and the law of total expectation,
\begin{align*}
    &\E\Big[\E\left[
h_{\theta}(u;X_i^T(k^{-1/2}))
|
X_i^T(k^{-1/2})
\right]
h_{\theta}(u;X_i^T(k^{-1/2}))
\Big]\\
=& \E\Big[\E\left[
h_{\theta}(u;X_i^T(k^{-1/2}))^2
|
X_i^T(k^{-1/2})
\right]
\Big]
\\
=& \E\Big[
h_{\theta}(u;X_i^T(k^{-1/2}))^2
\Big]
.
\end{align*}
By putting all of this together we obtain
\begin{align*}
     &\E\Big[
\E\left[
h_{\theta}(u;X_i^T(k^{-1/2}))
|
X_i^T(k^{-1/2})
\right]^2\Big]
+
 \E\Big[h_{\theta}(u;X_i^T(k^{-1/2}))^2\Big]\\
&-2 \E\Big[\E\left[
h_{\theta}(u;X_i^T(k^{-1/2}))
|
X_i^T(k^{-1/2})
\right]
h_{\theta}(u;X_i^T(k^{-1/2}))
\Big]\\
&\leq
\E\Big[
h_{\theta}(u;X_i^T(k^{-1/2}))
^2\Big]+ \E\Big[
h_{\theta}(u;X_i^T(k^{-1/2}))
^2\Big]-2\E\Big[
h_{\theta}(u;X_i^T(k^{-1/2}))
^2\Big] = 0
.
\end{align*}
To summarize, we then have that 
\begin{align*}
&\varepsilon^2\P\left(\left|\E\left[
\left.
h_{\theta}(u;X_i^T(k^{-1/2}))
\right|
X_i^T(k^{-1/2})
\right] - h_{\theta}(u;X)
\right|>\varepsilon\right)
\\
\leq&
\E\left[\left(
\E\left[
\left.
h_{\theta}(u;X_i^T(k^{-1/2}))
\right|
X_i^T(k^{-1/2})
\right] 
-
h_{\theta}(u;X_i^T(k^{-1/2}))
\right)^2\right]
\\
&+
\E\left[\left(
h_{\theta}(u;X_i^T(k^{-1/2}))
- 
h_{\theta}(u;X)\right)^2\right]
\\
&\leq 0 + \E\left[\left(
h_{\theta}(u;X_i^T(k^{-1/2}))
- 
h_{\theta}(u;X)\right)^2\right] \longrightarrow 0
\end{align*}
when $k\to \infty$.
Since $\varepsilon$ is arbitrary, this means that $\E[
h_{\theta}(u;X_i^T(k^{-1/2}))
|
X_i^T(k^{-1/2})
]$ tends to $h_{\theta}(u;X)$ in probability.
Then, we can see that $\lim_{k\to\infty}E_2(k) \stackrel{p}{=} E_3$ by using Lemma \ref{DOM}, which gives us that
\begin{align*}
&\lim_{k\to\infty}\E[\Delta_2(k;X)^2]\leq \lim_{k\to\infty}E_1(k) - 2\lim_{k\to\infty}E_2(k) + E_3 \stackrel{p}{=} E_3-2E_3+E_3 = 0
\end{align*}
which in turn gives us that
$\lim_{k\to\infty}\Var(\Delta_2(k;X)) = \lim_{k\to\infty}\E[\Delta_2(k;X)^2] \stackrel{p}{=} 0$.
Then we obtain $\lim_{k\to\infty}\Delta_2(k;X)\stackrel{p}{=}0$ due to Markov's inequality.
\end{proof}

\subsubsection{Proof of Theorem \ref{thm:TF_Block}}
\label{s:proofTF_Block}

\begin{proof}

We want to show that 
\begin{align*}
\sum_{i=1}^k\I_{\xi_{\theta}}^{h_{\theta}}(A;X_{ik}^V,X_{ik}^T)
=& 
\sum_{i=1}^k\sum_{x\in X_{ik}^V \cap A}
h_{\theta}(x;X_{ik}^T)
-
\sum_{i=1}^k
\int_{A}
h_{\theta}(u;X_{ik}^T)
V_{\theta}(u)
\lambda_{\theta}(u|X_{ik}^T)
\de u
\\
=&
\sum_{i=1}^k\sum_{x\in X \cap A}
\1\{x\in A_{ik}\}
h_{\theta}(x;X\cap (A\setminus A_{ik}))
\\
&-
\sum_{i=1}^k
\int_{A}
h_{\theta}(u;X\cap (A\setminus A_{ik}))
V_{\theta}(u)
\lambda_{\theta}(u|X\cap (A\setminus A_{ik}))
\de u
\end{align*}
converges in probability to
\[
\I_{\lambda_{\theta}}^{h_{\theta}}(A;X\cap A, X\cap A)
=
\sum_{x\in X \cap A}
h_{\theta}(x;X \cap A\setminus\{x\})
-
\int_{A}
h_{\theta}(u;X \cap A)
\lambda_{\theta}(u|X \cap A)
\de u
.
\]
In order to do so, we show that both
$$
\Delta_1(k;X) = 
\sum_{i=1}^k
\sum_{x\in X \cap A}
\1\{x\in A_{ik}\}
h_{\theta}(x;X\cap (A\setminus A_{ik}))
-
\sum_{x\in X \cap A}
h_{\theta}(x;X \cap A\setminus\{x\})
$$
and 
$$
\Delta_2(k;X) = \int_{A}
h_{\theta}(u;X\cap (A\setminus A_{ik}))
V_{\theta}(u)
\lambda_{\theta}(u|X\cap (A\setminus A_{ik}))
\de u-\int_{A}
h_{\theta}(u;X\cap A)
\lambda_{\theta}(u|X\cap A)
\de u
$$
tend to 0 in probability, as $k\to\infty$.

Since $\Sm$ is a metric space (and thereby a Hausdorff space), for any distinct points $x,y\in\Sm$ we can find radii $r_x,r_y>0$ such that $b(x,r_x)\cap b(y,r_y)=\emptyset$. This holds in particular for distinct members $x,y\in\x$ of a point pattern $\x\in\nn$. By the local finiteness of $\nn$, for each $\x\in\nn$ there is a universal $r_{\x,A}>0$ such that $b(x,r_{\x,A})\cap b(y,r_{\x,A})=\emptyset$ for any $x,y\in\x\cap A$. 
Hence, as $\max_{i=1,\ldots,k}|A_{ik}|$ decreases, we can find some $k_{\x,A}\geq2$ such that when $k\geq k_{\x,A}$ we have that $\#(A_{ik}\cap\x)\in\{0,1\}$, i.e.\ each $A_{ik}$ contains at most one element of $\x\cap A$.

Now, for any $u\in A$, let $A_{k}(u)$ be the unique $A_{ik}$ that contains $u$. 
Then
\begin{align}
\label{e:Hfun}
\sum_{i=1}^k\1\{u\in A_{ik}\}h_{\theta}(u;\x\cap (A\setminus A_{ik}))
=
h_{\theta}(u;\x\cap (A\setminus A_{k}(u)))
.
\end{align}
For any $\x\in\nn$, if $u\in\x$ then $\x\cap (A\setminus A_{ik}(u)) \to \x\setminus\{u\}$ as $k\to\infty$ and if $u\notin\x$ then $\x\cap (A\setminus A_{ik}(u)) \to \x$ as $k\to\infty$. 
We thus have 
\begin{align} 
\label{e:LimitIdentity}
&\lim_{k\to\infty}
h_{\theta}(u;\x\cap (A\setminus A_{k}(u)))
\\
=& \1\{u\in\x\cap A\}h_{\theta}(u;\x\cap A\setminus\{u\}) + 
\1\{u\notin\x\cap A\}h_{\theta}(u;\x\cap A)
,
\nonumber
\end{align}
and since this holds for any $\x\in\nn$, we also have 
\begin{align}
\label{e:LimitX}
&\lim_{k\to\infty}
h_{\theta}(u;\x\cap (A\setminus A_{k}(u)))\\
\stackrel{a.s.}{=}&
\1\{u\in X\cap A\}h_{\theta}(u;X\cap A\setminus\{u\}) + 
\1\{u\notin X\cap A\}h_{\theta}(u;X\cap A).\notag
\end{align}

\subsubsection*{Convergence of $\Delta_1(k;X)$}

By \eqref{e:LimitX}, we obtain that
\begin{align*}
&\lim_{k\to\infty}\sum_{i=1}^k
\sum_{x\in X \cap A}
\1\{x\in A_{ik}\}
h_{\theta}(x;X\cap (A\setminus A_{ik}))\\
=&
\sum_{x\in X \cap A}
\lim_{k\to\infty}
h_{\theta}(x;\x\cap (A\setminus A_{k}(u)))
\\
\stackrel{a.s.}{=}&\sum_{x\in X \cap A}
(\1\{x\in X\cap A\}h_{\theta}(x;X\cap A\setminus\{x\}) + 
\1\{x\notin X\cap A\}h_{\theta}(x;X\cap A))\\
=&
\sum_{x\in X \cap A}
h_{\theta}(x;X\cap A\setminus\{x\}),
\end{align*}
whereby 
$\Delta_1(k;X)$
tends to 0 a.s., and thereby in probability, as $k\to\infty$.

\subsubsection*{Convergence of $\Delta_2(k;X)$}

We next want to show that $\lim_{k\to\infty}\Delta_2(k;X)\stackrel{p}{=}0$. This means that for all $\varepsilon>0$, 
\begin{align*}
    \lim_{k\to\infty}P(|\Delta_2(k;X)-0|> \varepsilon)=0.
\end{align*}
By using Markov's inequality, we get
\begin{align*}
    P(|\Delta_2(k;X)|\geq \varepsilon) =  P(\Delta_2(k;X)^2\geq \varepsilon^2) \leq \frac{\E[\Delta_2(k;X)^2]}{\varepsilon^2}.
\end{align*}
If we can show that $\lim_{k\to\infty}\E[\Delta_2(k;X)^2] = 0$ then we have that 
$$\lim_{k\to\infty}\varepsilon^2P(|\Delta_2(k;X)|\geq \varepsilon) \leq \lim_{k\to\infty}\E[\Delta_2(k;X)^2]= 0.$$
Since this then holds for any $\varepsilon>0$, we obtain that $\lim_{k\to\infty}\Delta_2(k;X)\stackrel{p}{=}0$.

We first rewrite the first integral term in $\Delta_2(k;X)$:
\begin{align*}
    &\sum_{i=1}^k\int_{A}
h_{\theta}(u;X\cap (A\setminus A_{ik}))
V_{\theta}(u)
\lambda_{\theta}(u|X\cap (A\setminus A_{ik}))
\de u\\
=& \sum_{i=1}^k\int_{A}
h_{\theta}(u;X\cap (A\setminus A_{ik}))
p_{ik}(u)
\left.
\E\left[
\frac{\lambda_{\theta}(u|X\cap A)}{\lambda_{\theta}(u| X\cap (A\setminus A_{ik}))}
\right|
X\cap (A\setminus A_{ik})
\right]
\lambda_{\theta}(u|X\cap (A\setminus A_{ik}))
\de u\\
=& \int_{A}
\sum_{i=1}^k
h_{\theta}(u;X\cap (A\setminus A_{ik}))
\1\{u\in A_{ik}\}
\left.
\E\left[
\lambda_{\theta}(u|X\cap A)
\right|
X\cap (A\setminus A_{ik})
\right] 
\de u
\\
=& 
\sum_{i=1}^k
\int_{A_{ik}}
\left.
\E\left[
h_{\theta}(u;X\cap (A\setminus A_{ik}))
\lambda_{\theta}(u|X\cap A)
\right|
X\cap (A\setminus A_{ik})
\right]
\de u
.
\end{align*}
Using this, we obtain that 
the second moment satisfies 
\begin{align*}
    &\E[\Delta_2(k;X)^2]\\
    =& \E\Bigg[\Bigg(
    \E\left[\sum_{i=1}^k\int_{A_{ik}}
\left.
h_{\theta}(u;X\cap (A\setminus A_{ik}))
\lambda_{\theta}(u|X\cap A)\de u
\right|
X\cap (A\setminus A_{ik})
\right]\\
&\quad -
\int_{A}h_{\theta}(u;X\cap A)
\lambda_{\theta}(u|X\cap A)
\de u
    \Bigg)^2\Bigg]\\
=&\E\Bigg[
    \E\left[\sum_{i=1}^k\int_{A_{ik}}
\left.
h_{\theta}(u;X\cap (A\setminus A_{ik}))
\lambda_{\theta}(u|X\cap A)\de u
\right|
X\cap (A\setminus A_{ik})
\right]^2\\
&\quad +
\Bigg(\int_{A}h_{\theta}(u;X\cap A)
\lambda_{\theta}(u|X\cap A)
\de u\Bigg)^2
\\
&-
2\E\left[\sum_{i=1}^k\int_{A_{ik}}
\left.
h_{\theta}(u;X\cap (A\setminus A_{ik}))
\lambda_{\theta}(u|X\cap A)\de u
\right|
X\cap (A\setminus A_{ik})
\right]\\
&\qquad \times\int_{A}h_{\theta}(u;X\cap A)
\lambda_{\theta}(u|X\cap A)
\de u 
\Bigg]
\\ 
\leq &
\E\left[\left(\sum_{i=1}^k\int_{A_{ik}}
h_{\theta}(u;X\cap (A\setminus A_{ik}))
\lambda_{\theta}(u|X\cap A)\de u
\right)^2\right]
\\
&+
\int_{A^2}\E[h_{\theta}(u;X\cap A)
\lambda_{\theta}(u|X\cap A)
h_{\theta}(v;X\cap A)
\lambda_{\theta}(v|X\cap A)]
\de v
\de u
\\
&-
2\sum_{i=1}^k\int_{A^2}
\left.
\E\big[\E\left[\1\{u\in A_{ik}\}h_{\theta}(u;X\cap (A\setminus A_{ik}))
\lambda_{\theta}(u|X\cap A)
\right|
X\cap (A\setminus A_{ik})
\right]\\
&\qquad \qquad \quad \times h_{\theta}(v;X\cap A)
\lambda_{\theta}(v|X\cap A)\big]
\de u \de v
\\
=& E_1(k)  + E_3- 2E_2(k),
\end{align*}
where the inequality is a consequence of conditioning being a contractive projection of $L^p$ spaces. 
By showing that $\lim_{k\to\infty}E_1(k)=E_3$ and $\lim_{k\to\infty}E_2(k)=E_3$ we are done, since then 
$$
\lim_{k\to\infty} (E_1(k) - 2E_2(k) + E_3)
= \lim_{k\to\infty}E_1(k) - 2\lim_{k\to\infty}E_2(k) + E_3
= E_3 - 2E_3 + E_3 = 0.
$$

We start with $E_1(k)$. Recalling \eqref{e:Hfun}, we have that  
\begin{align*}
    &E_1(k) = \E\left[\left(\sum_{i=1}^k\int_{A_{ik}}
h_{\theta}(u;X\cap (A\setminus A_{ik}))
\lambda_{\theta}(u|X\cap A)\de u
\right)^2\right]
\\=&
\E\left[\left(\int_{A}\sum_{i=1}^k\1\{u\in A_{ik}\}
h_{\theta}(u;X\cap (A\setminus A_{ik}))
\lambda_{\theta}(u|X\cap A)\de u
\right)^2\right]
\\=&
\E[\int_{A^2}\sum_{i=1}^k\1\{u\in A_{ik}\}
h_{\theta}(u;X\cap (A\setminus A_{ik}))
\lambda_{\theta}(u|X\cap A)
\sum_{j=1}^k\1\{v\in A_{jk}\}\\
&\qquad  \times h_{\theta}(v;X\cap (A\setminus A_{jk}))
\lambda_{\theta}(v|X\cap A)
\de v
\de u
]
\\=&
\E\left[\int_{A^2}h_{\theta}(u;\x\cap (A\setminus A_{k}(u)))
\lambda_{\theta}(u|X\cap A)
h_{\theta}(v;\x\cap (A\setminus A_{k}(v)))
\lambda_{\theta}(v|X\cap A)
\de v
\de u
\right]
\\=&
\int_{A^2}\E\left[h_{\theta}(u;\x\cap (A\setminus A_{k}(u)))
\lambda_{\theta}(u|X\cap A)
h_{\theta}(v;\x\cap (A\setminus A_{k}(v)))
\lambda_{\theta}(v|X\cap A)
\right]
\de v
\de u
.
\end{align*}
By Lemma \ref{DOM}, Slutsky's lemma \citep[Theorem 6']{ferguson}, the observation following \eqref{e:LimitIdentity} and the Fubini-Tonelli theorem, we now get that 
\begin{align*}
&\lim_{k\to\infty} 
E_1(k)
\\
=&
\lim_{k\to\infty}
\int_{A^2}
\E\left[h_{\theta}(u;\x\cap (A\setminus A_{k}(u))) 
\lambda_{\theta}(u|X\cap A) 
h_{\theta}(v;\x\cap (A\setminus A_{k}(v))) 
\lambda_{\theta}(v|X\cap A) 
\right]
\de v
\de u
\\
=&
\int_{A^2}
\lim_{k\to\infty}
\E\left[h_{\theta}(u;\x\cap (A\setminus A_{k}(u)))
\lambda_{\theta}(u|X\cap A)
h_{\theta}(v;\x\cap (A\setminus A_{k}(v)))
\lambda_{\theta}(v|X\cap A)
\right]
\de v
\de u
\\
=&
\int_{A^2}
\E\left[\lim_{k\to\infty}(h_{\theta}(u;\x\cap (A\setminus A_{k}(u)))
\lambda_{\theta}(u|X\cap A)
h_{\theta}(v;\x\cap (A\setminus A_{k}(v)))
\lambda_{\theta}(v|X\cap A))
\right]
\de v
\de u
\\
=&
\int_{A^2}
\E\left[(\lim_{k\to\infty}h_{\theta}(u;\x\cap (A\setminus A_{k}(u))))
\lambda_{\theta}(u|X\cap A)
(\lim_{k\to\infty}h_{\theta}(v;\x\cap (A\setminus A_{k}(v))))
\lambda_{\theta}(v|X\cap A)
\right]
\de v
\de u
\\
=&
\int_{A^2}
\E[(\1\{u\in X\cap A\}  h_{\theta}(u;X\cap A\setminus\{u\}) + \1\{u\notin X\cap A\} h_{\theta}(u;X\cap A))
\lambda_{\theta}(u|X\cap A)
\\
&\times
(\1\{v\in X\cap A\} h_{\theta}(v;X\cap A\setminus\{v\}) + \1\{v\notin X\cap A\}h_{\theta}(v;X\cap A))
\lambda_{\theta}(v|X\cap A)
]
\de v
\de u
\\
=&
\E\Bigg[
\int_{A^2}
(\1\{u\in X\cap A\} h_{\theta}(u;X\cap A\setminus\{u\})+\1\{u\notin X\}h_{\theta}(u;X\cap A))
\lambda_{\theta}(u|X\cap A)
\\
&\times
(\1\{v\in X\cap A\}h_{\theta}(v;X\cap A\setminus\{v\})+\1\{v\notin X\cap A\}h_{\theta}(v;X\cap A))
\lambda_{\theta}(v|X\cap A)
\de v
\de u
\Bigg]
.
\end{align*}
Note that Lemma \ref{DOM} is applicable here since $\lambda_{\theta}$ and $h_{\theta}$ are bounded by assumption. 
Now, for any element $\omega\in\Omega$ of the underlying probability space, by the local finiteness of $\nn$ and the boundedness of $A$, the realisation $X(\omega)\cap A = \x$ is a discrete finite collection of points. 
Thus, recalling that the reference measure $|\cdot|$ is non-atomic, each $X(\omega)\cap A = \x$ is a $|\cdot|$-null set. This implies that the integral over the terms containing $\1\{u\in X\cap A\}$ and $\1\{v\in X\cap A\}$ are 0.
Hence, 
\begin{align*}
\lim_{k\to\infty} E_1(k)
=&
\E\left[
\int_{A^2}
\1\{u,v\notin X\cap A\}h_{\theta}(u;X\cap A)
\lambda_{\theta}(u|X\cap A)
h_{\theta}(v;X\cap A)
\lambda_{\theta}(v|X\cap A)
\de v
\de u
\right].
\end{align*}
The same $|\cdot|$-null set arguments further yield that this integral is indistinguishable from the integral which we obtain by excluding $\1\{u,v\notin X\cap A\}$ above. 
Consequently, 
\begin{align*}
\lim_{k\to\infty} E_1(k)
=\int_{A^2}\E[h_\theta(u,X\cap A)
\lambda_{\theta}(u|X\cap A)h_\theta(v,X\cap A)
\lambda_{\theta}(v|X\cap A)
]
\de v
\de u
= E_3
.
\end{align*}

Turning to $E_2(k)$, we have that 
\begin{align*}
    E_2(k)=&
\int_{A^2}
\sum_{i=1}^k
\E\big[
\E\big[\1\{u\in A_{ik}\}h_{\theta}(u;X\cap (A\setminus A_{ik}))
\lambda_{\theta}(u|X\cap A)
\big|
X\cap (A\setminus A_{ik})
\big]
\\
&\times
h_{\theta}(v;X\cap A)
\lambda_{\theta}(v|X\cap A)\big]
\de u \de v\\
=&
\int_{A^2}
\E\Bigg[
\E\Bigg[
\lambda_{\theta}(u|X\cap A)
\sum_{i=1}^k\1\{u\in A_{ik}\}h_{\theta}(u;X\cap (A\setminus A_{ik}))
\Bigg|
X\cap (A\setminus A_{ik})
\Bigg]
\\
&\times
h_{\theta}(v;X\cap A)
\lambda_{\theta}(v|X\cap A)\Bigg]
\de u \de v
.
\end{align*}

By the self-adjointness of conditional expectations we now obtain 
\begin{align*}
E_2(k)
=&
\int_{A^2}
\E\Bigg[
\lambda_{\theta}(u|X\cap A)
\sum_{i=1}^k
\1\{u\in A_{ik}\}h_{\theta}(u;X\cap (A\setminus A_{ik}))
\\
&\times
\E\big[
h_{\theta}(v;X\cap A)
\lambda_{\theta}(v|X\cap A)
\big|
X\cap (A\setminus A_{ik})
\big]
\Bigg]
\de u \de v
\\
=&
\int_{A^2}
\E\Big[
\lambda_{\theta}(u|X\cap A)
h_{\theta}(u;X\cap (A\setminus A_{k}(u)))
\\
&\times
\E\Big[
h_{\theta}(v;X\cap A)
\lambda_{\theta}(v|X\cap A)
\Big|
X\cap (A\setminus A_{k}(u))
\Big]
\Big]
\de u \de v
.
\end{align*}
As $k\to\infty$, $A_{k}(u)$ is a decreasing sequence of sets, and $A\setminus A_{k}(u)$ is an increasing sequence of sets tending to $A$. This means that the $\sigma$-algebras $\sigma(X\cap (A\setminus A_{k}(u)))$, $k\geq2$, are increasing (in terms of set inclusion), tending to $\sigma(X\cap A)$. Thus, we may apply Martingale convergence \citep[Theorem 5.7]{durrett2019probability} to obtain that $\E[h_{\theta}(v;X\cap A) \lambda_{\theta}(v|X\cap A) | X\cap (A\setminus A_{k}(u))]$ a.s.\ tends to $\E[h_{\theta}(v;X\cap A) \lambda_{\theta}(v|X\cap A) | X\cap A]$ as $k\to\infty$. To see that this is indeed an increasing sequence, note that the finite dimensional distributions of $X\cap (A\setminus A_{k}(u))$ (which characterise its distribution) are all contained in the collection of finite dimensional distributions of $X\cap (A\setminus A_{k+1}(u))$, since $A\setminus A_{k}\subseteq A\setminus A_{k+1}$ by the imposed refinement property. Moreover, by the observation following \eqref{e:LimitIdentity} we have that $h_{\theta}(u;X\cap (A\setminus A_{k}(u)))$ a.s.\ tends to $\1\{u\in X\cap A\}h_{\theta}(u;X\cap A\setminus\{u\}) + 
\1\{u\notin X\cap A\}h_{\theta}(u;X\cap A)$ as $k\to\infty$. Hence, by Lemma \ref{DOM}, Slutsky's lemma \citep[Theorem 6']{ferguson}, the $|\cdot|$-null set arguments above and the law of total expectation, we obtain
\begin{align*}
\lim_{k\to\infty} E_2(k)
=&
\int_{A^2}
\E\Bigg[
\lambda_{\theta}(u|X\cap A)
\lim_{k\to\infty}
h_{\theta}(u;X\cap (A\setminus A_{k}(u)))
\\
&\times
\lim_{k\to\infty}
\E\left[
h_{\theta}(v;X\cap A)
\lambda_{\theta}(v|X\cap A)
\Bigg|
X\cap (A\setminus A_{k}(u))
\right]
\Bigg]
\de u \de v
\\
=&
\int_{A^2}
\left.
\E\left[
h_{\theta}(u;X\cap A)
\lambda_{\theta}(u|X\cap A)
\E\left[
h_{\theta}(v;X\cap A)
\lambda_{\theta}(v|X\cap A)
\right|
X\cap A
\right]
\right]
\de u \de v
\\
=&
\int_{A^2}
\left.
\E\left[
\E\left[
h_{\theta}(u;X\cap A)
\lambda_{\theta}(u|X\cap A)
h_{\theta}(v;X\cap A)
\lambda_{\theta}(v|X\cap A)
\right|
X\cap A
\right]
\right]
\de u \de v
\\
=&
\int_{A^2}
\E\left[
h_{\theta}(u;X\cap A)
\lambda_{\theta}(u|X\cap A)
h_{\theta}(v;X\cap A)
\lambda_{\theta}(v|X\cap A)
\right]
\de u \de v
= E_3
.
\end{align*}
This completes the proof.

\end{proof}

\subsubsection{Proof of Corollary \ref{cor:ConditionsThm}}
\label{s:proofCor}

Here follows the proof of Corollary \ref{cor:ConditionsThm}, which provides sufficient conditions for Theorem \ref{thm:TF} and Theorem \ref{thm:TF_Block} to be satisfied.

\begin{proof}[Proof of Corollary \ref{cor:ConditionsThm}]
The conditional intensity itself is bounded by assumption so we proceed with the test function. 
The case $\alpha=0$ is trivial so we focus on $\alpha>0$. Since $h_{\theta}(u;X^T)=\e^{\alpha(-\log V_{\theta}(u) - \Phi_1(u;\theta) - \Phi_2(u,X^T;\theta))}$, the test function is bounded if $-\log V_{\theta}(u) - \Phi_1(u;\theta) - \Phi_2(u,X^T;\theta)$ is smaller than some finite positive constant. The last two terms are bounded by assumption. In addition, we have that 
$\log V_{\theta}(u) = \log p + 
\log\E[\lambda_{\theta}(u|X) | X^T]$ is smaller than some finite positive constant because $\sup_{p\in(0,1)}\log p=0$ and because $\log\E[\lambda_{\theta}(u|X) | X^T]\leq \log C < \infty$, where $\lambda_{\theta}(u|X)\leq C < \infty$, by the monotonicity of conditional expectations. 
\end{proof}

\subsubsection{Proof of Theorem \ref{thm:PPLweights}}
\label{sec:proofPPLweights}
\begin{proof}[Proof of Theorem \ref{thm:PPLweights}]

Starting with auto-prediction, the result follows immediately by comparing to the innovations of \citet{baddeley2005residual}.

Turning to the Gibbs case, we have
\begin{align*}
V_{\theta}(u,X^T,X^V) 
&=
p(u)
\E
\left.
\left[
\frac{
\lambda_{\theta}(u|X^T\cup X^V)}{\lambda_{\theta}(u|X^T)} 
\right|X^T
\right]
\\
&=
p(u)
\E
\left.
\left[
\frac{
\e^{\Phi_1(u;\theta)+\Phi_2(u,X^T\cup X^V;\theta)}
}{\e^{\Phi_1(u;\theta)+\Phi_2(u,X^T;\theta)}} 
\right|X^T
\right]
\\
&=p(u)
\E[\e^{\Phi_2(u,X^T\cup X^V;\theta)-\Phi_2(u,X^T;\theta)}|X^T],
\end{align*}
and the linearity immediately gives us
\[
V_{\theta}(u,X^T,X^V) 
= 
p(u)
\E[
\e^{\Phi_2(u,X^T;\theta) + \Phi_2(u,X^V;\theta) -\Phi_2(u,X^T;\theta)}
|
X^T
]
.
\]

\end{proof}

\subsubsection{Proof of Lemmas}
\label{sec:lemmas}

\begin{proof}[Proof of Lemma \ref{lemma:HardCore}]
Let $\x\subseteq \y$. We then have 
$
\lambda(u|\x) = \beta \1\{u\notin \bigcup_{x \in \x} b(x,R)\}\geq \beta \1\{u\notin \bigcup_{y \in \y} b(y,R)\} = \lambda(u|\y).
$ 
The inequality holds since if $\x\subseteq\y$, then $\bigcup_{x \in \x} b(x,R) \subseteq \bigcup_{y \in \y} b(y,R)$. This means that $\1\{u\notin \bigcup_{x \in \x} b(x,R)\}\geq \1\{u\notin \bigcup_{y \in \y} b(y,R)\}$, 
since indicators are non-decreasing functions with respect to set inclusion.

For the weight expression, we take the expression for the Strauss process in Lemma \ref{lemma:Strauss}.
When we let $\gamma\to0$, by appealing to the bounded convergence theorem, the continuity of $\gamma\mapsto\gamma^x$, $\gamma\in[0,1]$, and the convention that $0^0=1$, we obtain that PPL-weights for hard-core models satisfy
\begin{align*}
V_{\theta}(u,X^T,X^V) 
&=
p(u)
\E\left[0^{D_R(u,X^V)}\Big|X^T\right]
=
p(u)
\E\bigg[\1\bigg\{u\notin \bigcup_{x \in X^V\setminus\{u\}} b(x,R)\bigg\}
\bigg| X^T\bigg]
\\
&=
p(u)
\P\bigg(u\notin \bigcup_{x \in X^V} b(x,R)\bigg| X^T\bigg)
=
p(u)
\P\bigg(\bigcap_{x \in X^V} \{u\notin b(x,R)\}\bigg| X^T\bigg)
.
\end{align*} 
\end{proof}

\begin{proof}[Proof of Lemma \ref{lemma:Strauss}]

Given $\x\subseteq \y$, 
since 
\begin{align*}
    D_R(u; \x) =& 
    \sum_{x \in \x \setminus \{u\}} \1\{u \in b(x,R)\} 
    \leq 
    \sum_{x \in \x \setminus \{u\}} \1\{u \in b(x,R)\} 
    \\
    &+ \sum_{y \in \y \setminus (\x\cup\{u\})} \1\{u \in b(y,R)\}
    =\sum_{y \in \y \setminus \{u\}} \1\{u \in b(y,R)\} = D_R(u; \y),
\end{align*}
and $\gamma\in[0,1]$, we have  $\gamma^{D_R(u,\x)}\geq \gamma^{D_R(u,\y)}$, whereby 
$$
\lambda(u|\x) = \beta\gamma^{D_R(u,\x)}\geq \beta\gamma^{D_R(u,\y)} = \lambda(u|\y).
$$ 

For the weight expression we use the expression in Theorem \ref{thm:PPLweights} and that the Strauss process is a pairwise interaction model to obtain
\begin{align*}
V_{\theta}(u,X^T,X^V) 
&=
p(u)
\E[\e^{\Phi_2(u,X^V;\theta)}|X^T]
=
p(u)
\E[\e^{D_R(u,X^V)\log \gamma}|X^T]
\\
&=
p(u)
\E[\gamma^{D_R(u,X^V)}|X^T]
,
\end{align*}
where $\theta=(\beta,R,\gamma)\in(0,\infty)\times(0,\infty)\times[0, 1]$.
\end{proof}

\begin{proof}[Proof of Lemma \ref{lem:counter}]
Recall that when $\x\subseteq \y$, $\lambda(u|\x) \leq \lambda(u|\y)$ means that the process is attractive, while $\lambda(u|\x) \geq \lambda(u|\y)$ means that it is repulsive.  We want to show that it does not hold in general that a Geyer model is attractive when $\gamma > 1$, nor that it is repulsive when $\gamma < 1$. We do this by showing an example of a specification of the Geyer saturation process, point patterns $\x$ and $\y$ with $\x\subseteq \y$, and a location $u\in\Sm$, where we have that $\lambda(u|\x) \geq \lambda(u|\y)$ for $\gamma>1$ and $\lambda(u|\x) \leq \lambda(u|\y)$ for $\gamma<1$. This means that it does not hold, for all specifications of the Geyer saturation process, all point patterns $\x$ and $\y$ where $\x\subseteq\y$, and all locations $u\in \Sm$, that $\lambda(u|\x) \leq \lambda(u|\y)$ for $\gamma>1$ and $\lambda(u|\x) \geq \lambda(u|\y)$ for $\gamma<1$. 

Consider the Geyer saturation process with saturation threshold $s = 1$. The other parameters, $\beta$, $R$ and $\gamma$, are arbitrary. Consider two point patterns $\x = \{\eta\}$ and $\y = \{\eta,\zeta\}$ where the points $\eta$ and $\zeta$ satisfy $d(\eta,\zeta)\leq R$. Then, naturally $\x\subseteq \y$. 
Now, let $\xi$ be a point such that $d(\xi,\eta)\leq R$ and $d(\xi,\zeta)\leq R$. This means that $\xi$ is an $R$-close neighbour of both $\eta$ and $\zeta$, which in turn are $R$-close neighbours of each other.
    The conditional intensity is defined as
    $$   
\lambda(u|\z) = \beta\gamma^{\min(s,D_R(u,\z))+\sum_{y\in\z}(\min(s,D_R(y,\z\cup \{u\})-\min(s,D_R(y,\z)))},
$$ 
for any point pattern $\z$ and location $u\in\Sm$.
Now, consider only the term
$$
    \min(s,D_R(u,\z))+\sum_{y\in\z}(\min(s,D_R(y,\z\cup \{u\})-\min(s,D_R(y,\z))),
$$
for $u = \xi$, $\z = \y = \{\eta,\zeta\}$ and $s = 1$, i.e. 
$$
    \min(1,D_R(\xi,\{\eta,\zeta\}))+\sum_{y\in\{\eta,\zeta\}}(\min(1,D_R(y,\{\eta,\zeta\}\cup \{\xi\})-\min(1,D_R(y,\{\eta,\zeta\}))).
$$
    We obtain
    \begin{align*}
    D_R(\xi, \{\eta,\zeta\}) &= 
    \sum_{x \in \{\eta,\zeta\} \setminus \{\xi\}} \1\{d(\xi,x)\leq R\}\\
    &=\1\{d(\xi,\eta)\leq R\}+\1\{d(\xi,\zeta)\leq R\} = 2,\\
    D_R(\eta, \{\eta,\zeta\}) &= 
    \sum_{x \in \{\eta,\zeta\} \setminus \{\eta\}} \1\{d(\eta,x)\leq R\}\\
    &=\1\{d(\eta,\zeta)\leq R\} = 1,\\
    D_R(\zeta, \{\eta,\zeta\}) &= 
    \sum_{x \in \{\eta,\zeta\} \setminus \{\zeta\}} \1\{d(\zeta,x)\leq R\}\\
    &=\1\{d(\zeta,\eta)\leq R\} = 1,\\
    D_R(\eta, \{\eta,\zeta\}\cup \{\xi\}) &= 
    \sum_{x \in \{\eta,\zeta,\xi\} \setminus \{\eta\}} \1\{d(\eta,x)\leq R\}\\
    &=\1\{d(\eta,\zeta)\leq R\} +\1\{d(\eta,\xi)\leq R\} = 2,\\
    D_R(\zeta, \{\eta,\zeta\}\cup \{\xi\}) &= 
    \sum_{x \in \{\eta,\zeta,\xi\} \setminus \{\zeta\}} \1\{d(\zeta,x)\leq R\}\\
    &=\1\{d(\zeta,\eta)\leq R\} +\1\{d(\zeta,\xi)\leq R\}= 2.
    \end{align*}
    We thus have 
    \begin{align*}
        &\min(1,D_R(\xi,\{\eta,\zeta\}))+\sum_{y\in\{\eta,\zeta\}}(\min(1,D_R(y,\{\eta,\zeta\}\cup \{\xi\})-\min(1,D_R(y,\{\eta,\zeta\})))\\
=&\min(1,2)+(\min(1,2)-\min(1,1)) + (\min(1,2)-\min(1,1)) \\
        =&1+(1-1) + (1-1) = 1.\\
    \end{align*}
    We also compute the corresponding term but for $\x$, i.e.
    \begin{align*}
        &\min(1,D_R(\xi,\{\eta\}))+\sum_{y\in\{\eta\}}(\min(1,D_R(y,\{\eta\}\cup \{\xi\})-\min(1,D_R(y,\{\eta\})))\\
        =&\min(1,1)+\sum_{y\in\{\eta\}}(\min(1,D_R(y,\{\eta\}\cup \{\xi\})-\min(1,D_R(y,\{\eta\})))\\
        =&\min(1,1)+(\min(1,1)-\min(1,0)) =1+(1-0) = 2.\\
    \end{align*}
    This means that for $s=1$,
    \begin{align*}
    &\min(s,D_R(\xi,\x))+\sum_{y\in\x}(\min(s,D_R(y,\x\cup \{\xi\})-\min(s,D_R(y,\x)))\\
    \geq&
        \min(s,D_R(\xi,\y))+\sum_{y\in\y}(\min(s,D_R(y,\y\cup \{\xi\})-\min(s,D_R(y,\y))).
    \end{align*}
In this case, if $\gamma < 1$,
    \begin{align*}
    \lambda(\xi|\x) &= \beta\gamma^{\min(s,D_R(\xi,\x))+\sum_{y\in\x}(\min(s,D_R(y,\x\cup \{\xi\})-\min(s,D_R(y,\x)))}\\
    &\leq \beta\gamma^{\min(s,D_R(\xi,\y))+\sum_{y\in\y}(\min(s,D_R(y,\y\cup \{\xi\})-\min(s,D_R(y,\y)))} = \lambda(\xi|\y),
    \end{align*}
    and conversely, if $\gamma > 1$, $\lambda(\xi|\x) \geq \lambda(\xi|\y)$. 
So it does not hold in general that  a Geyer model is attractive when $\gamma > 1$ or that it is repulsive when $\gamma < 1$. 
    This concludes the proof.
\end{proof}

\begin{proof}[Proof of Lemma \ref{lem:geyer}]
In \citep{opitz2009simulation} the conditional intensity is rewritten as
\begin{align*}
    \lambda(u|\x) = \beta\gamma^{\min(s,D_R(u,\x))
    +
    \sum_{y\in\x}\1\{y \in b(u,R)\}
    \1\{1\leq D_R(y,\x\cup \{u\})\leq s\}
    },
\end{align*}
since 
\begin{align*}
    &\sum_{y\in\x}(\min(s,D_R(y,\x\cup \{u\})-\min(s,D_R(y,\x)))\\
    =&
    \sum_{y\in\x}
    \1\{y \in b(u,R)\}
    \1\{1\leq D_R(y,\x\cup \{u\})\leq s\}.
\end{align*}
Using Theorem \ref{thm:PPLweights}, we obtain
\begin{align*}
&V_{\theta}(u,X^T,X^V) 
=
p(u)
\E\left[
\e^{\Phi_2(u,X;\theta)
-
\Phi_2(u,X^T;\theta)
}
|X^T
\right]
,
\end{align*}
where
\begin{align*}
\Phi_2(u,X;\theta)
&=
    \log(\gamma)(\min(s,D_R(u,X))
    \\
&+
\sum_{y\in X}
\1\{y \in b(u,R)\}
\1\{1\leq D_R(y,X\cup \{u\})\leq s\}),\\
\Phi_2(u,X^T;\theta)
&=
\log(\gamma)(\min(s,D_R(u,X^T))
\\
&+
\sum_{y\in X^T}
\1\{y \in b(u,R)\}
\1\{1\leq D_R(y,X^T\cup \{u\})\leq s\}).
\end{align*}
\end{proof}

\section{Implementation}
\label{sec:implementation}
For a given cross-validation round $(\x_i^T,\x_i^V)$, $i=1,\ldots,k$, the prediction errors are computed in the following way:
$$
\I_{\xi_{\theta}}^{h_{\theta}}(\Sm,\x^T_i,\x^V_i)
= 
\sum_{x\in \x^V_i \cap \Sm}
h_{\theta}(x;\x^T_i)
-
\int_{\Sm}
h_{\theta}(u;\x^T_i)
V_{\theta}(u)
\lambda_{\theta}(u|\x^T_i)
\de u.$$
We set $h_{\theta}(x;\y\setminus\{x\}) = f(V_{\theta}(u)
\lambda_{\theta}(u|\x^T_i))$, where $f$ is an arbitrary function, and in the Stoyan-Grabarnik case, $f(x) = 1/x$. 
We now want to numerically evaluate
$$\I_{\xi_{\theta}}^{h_{\theta}}(\Sm,\x^T_i,\x^V_i)
=
\sum_{x\in \x^V_i \cap \Sm}
f(V_{\theta}(x)
\lambda_{\theta}(x|\x^T_i))
-
\int_{\Sm}
f(V_{\theta}(u)
\lambda_{\theta}(u|\x^T_i))
V_{\theta}(u)
\lambda_{\theta}(u|\x^T_i)
\de u.$$
We first find a weight estimate $\widehat V_{\theta}(u)$, as described in Section \ref{sec:w_choice}. 
Given an implementation of the conditional intensity, it is then possible for us to evaluate $\widehat V_{\theta}(u)
\lambda_{\theta}(u|\x^T_i)$ for different 
locations $u$. 
To calculate the integral, following \citet{cronie2023cross}, we approximate it using the Berman–Turner device \citep{berman1992approximating}, resulting in
$$\int_{\Sm}
f(\widehat V_{\theta}(u)
\lambda_{\theta}(u|\x^T_i))
\widehat V_{\theta}(u)
\lambda_{\theta}(u|\x^T_i)
\de u 
\approx 
\sum_{v \in \vv} q_v f(\widehat V_{\theta}(v)
\lambda_{\theta}(v|\x^T_i))
\widehat V_{\theta}(v)
\lambda_{\theta}(v|\x^T_i), $$
where $\vv\in \Sm$ are quadrature points consisting of the validation set points $x \in \x^V_i$ and some dummy points in the window $\Sm$. Further, $\{q_v:v\in\vv\}$ are quadrature weights 
satisfying 
$\sum_{v \in \vv} q_v = |\Sm|$. 
This gives us the approximation  
$$\I_{\xi_{\theta}}^{h_{\theta}}(\Sm,\x^T_i,\x^V_i)
\approx
\sum_{x\in X^V \cap \Sm}
f(\widehat V_{\theta}(x)
\lambda_{\theta}(x|X^T))
-
\sum_{v \in \vv} q_v f(\widehat V_{\theta}(v)
\lambda_{\theta}(v|\x^T_i))
\widehat V_{\theta}(v)
\lambda_{\theta}(v|\x^T_i).
$$
We can rewrite this by introducing an indicator function in the first sum:
\begin{align*}
\I_{\xi_{\theta}}^{h_{\theta}}(\Sm,\x^T_i,\x^V_i)
&\approx
\sum_{v\in \vv}
\1\{v\in \x^V_i\}
f(\widehat V_{\theta}(v)
\lambda_{\theta}(v|X^T))
-
\sum_{v \in \vv} q_v f(\widehat V_{\theta}(v)
\lambda_{\theta}(v|\x^T_i))
\widehat V_{\theta}(v)
\lambda_{\theta}(v|\x^T_i).
\\
&=
\sum_{v\in \vv}
f(\widehat V_{\theta}(v)
\lambda_{\theta}(v|X^T))
(\1\{v\in \x^V_i\}- q_v
\widehat V_{\theta}(v)
\lambda_{\theta}(v|\x^T_i)).
\end{align*}
Then, the loss functions can be computed according to \eqref{e:L1_2} and \eqref{e:L3} in Section \ref{s:PPL}. We minimize the loss functions over the parameter space $\Theta$ using grid search.

\section{Simulation study}
\label{sec:app_sim}
Section \ref{s:Simulations} presents results from the simulation study which compares the performance of PPL and Takacs-Fiksel estimation for Poisson, hard-core, Strauss and Geyer processes. 
Here, we present some additional results.
See Section \ref{s:Simulations} for 
details on 
the simulation procedure.

\subsection{Poisson process}
\label{sec:app_poi}
See Section \ref{s:SimPoisson} for a description of the parameters and grid used for the Poisson process.
In Section \ref{s:SimPoisson}, in Figure \ref{fig:poisson3-L1}, the results for the Poisson process for the $\Loss_1$ loss function for the PPL-weight $p$ are shown. Here, the results for the $\Loss_2$ and $\Loss_3$ loss functions are shown in Figure \ref{fig:poisson3-L2L3}. 
\label{s:plots}
\begin{figure}[!htb]
    \centering
    \includegraphics[width = 0.45\textwidth]{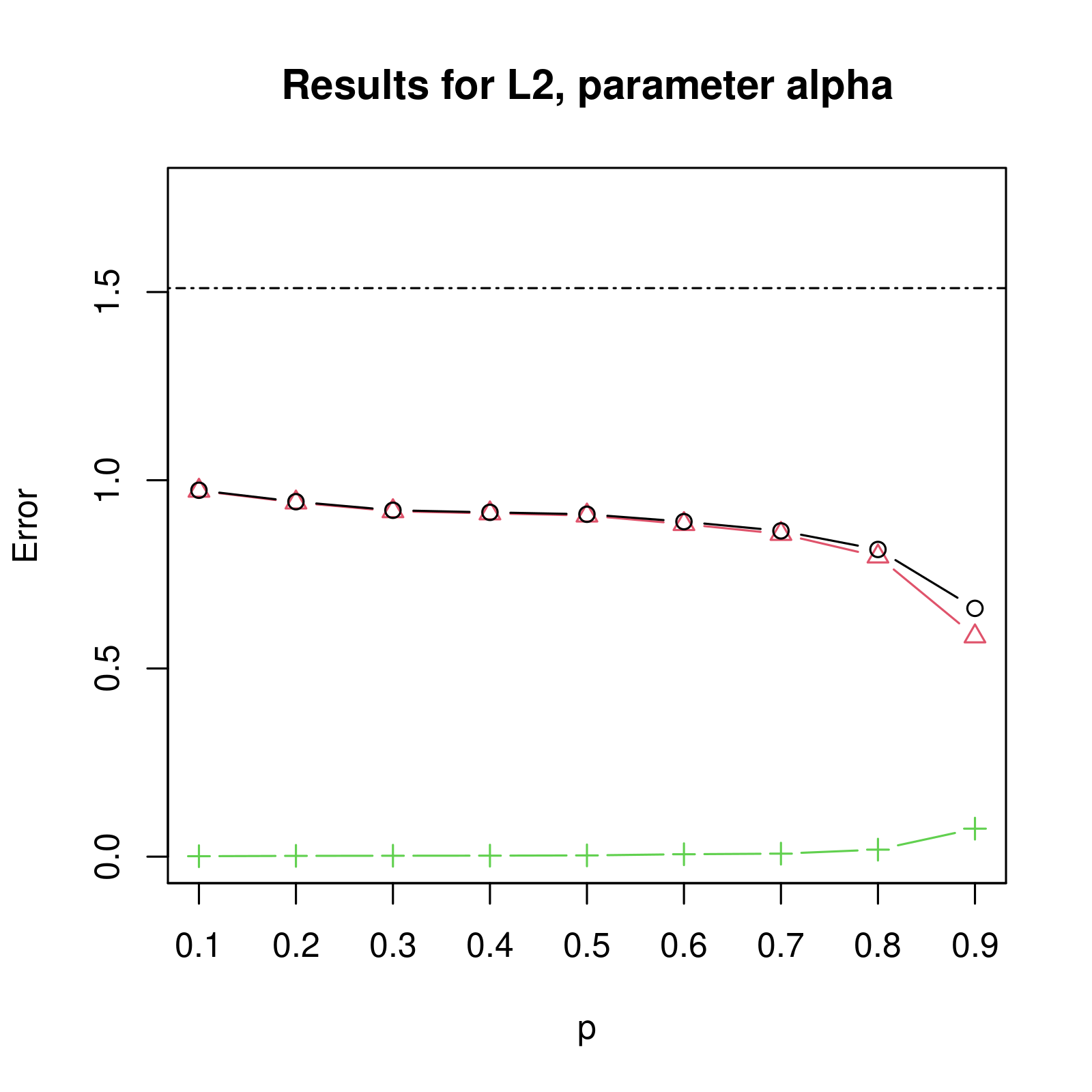}
    \includegraphics[width = 0.45\textwidth]{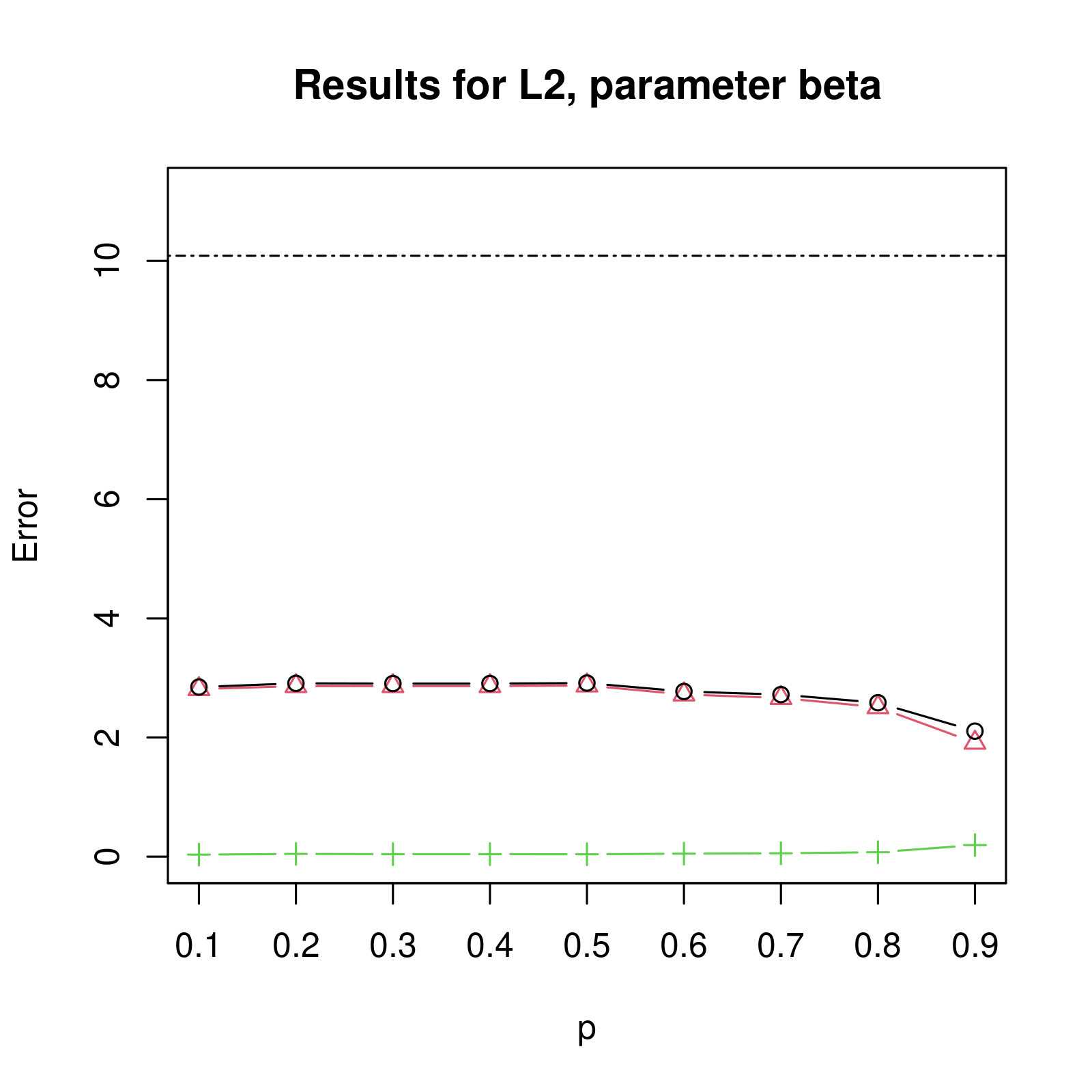}
    \includegraphics[width = 0.45\textwidth]{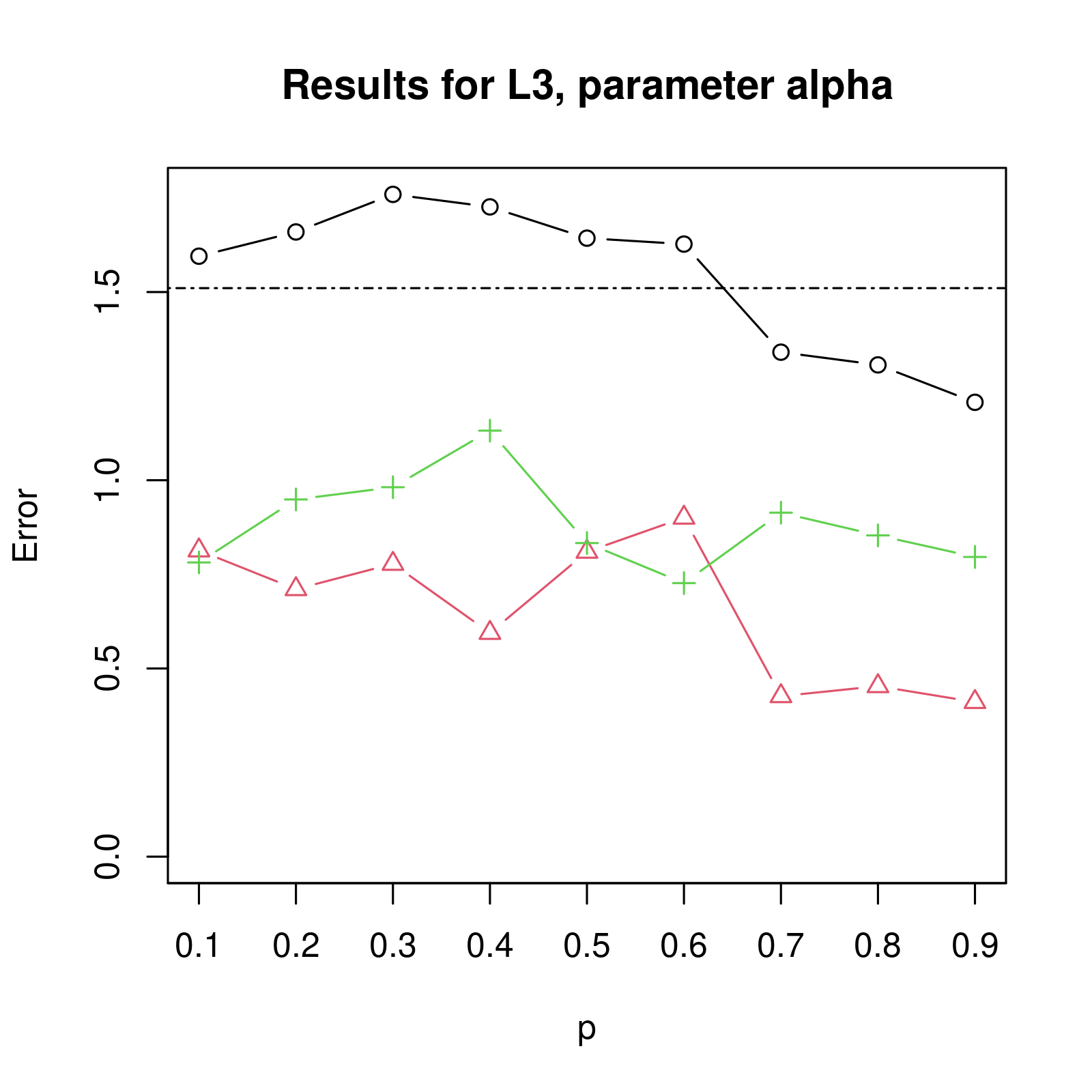}
    \includegraphics[width = 0.45\textwidth]{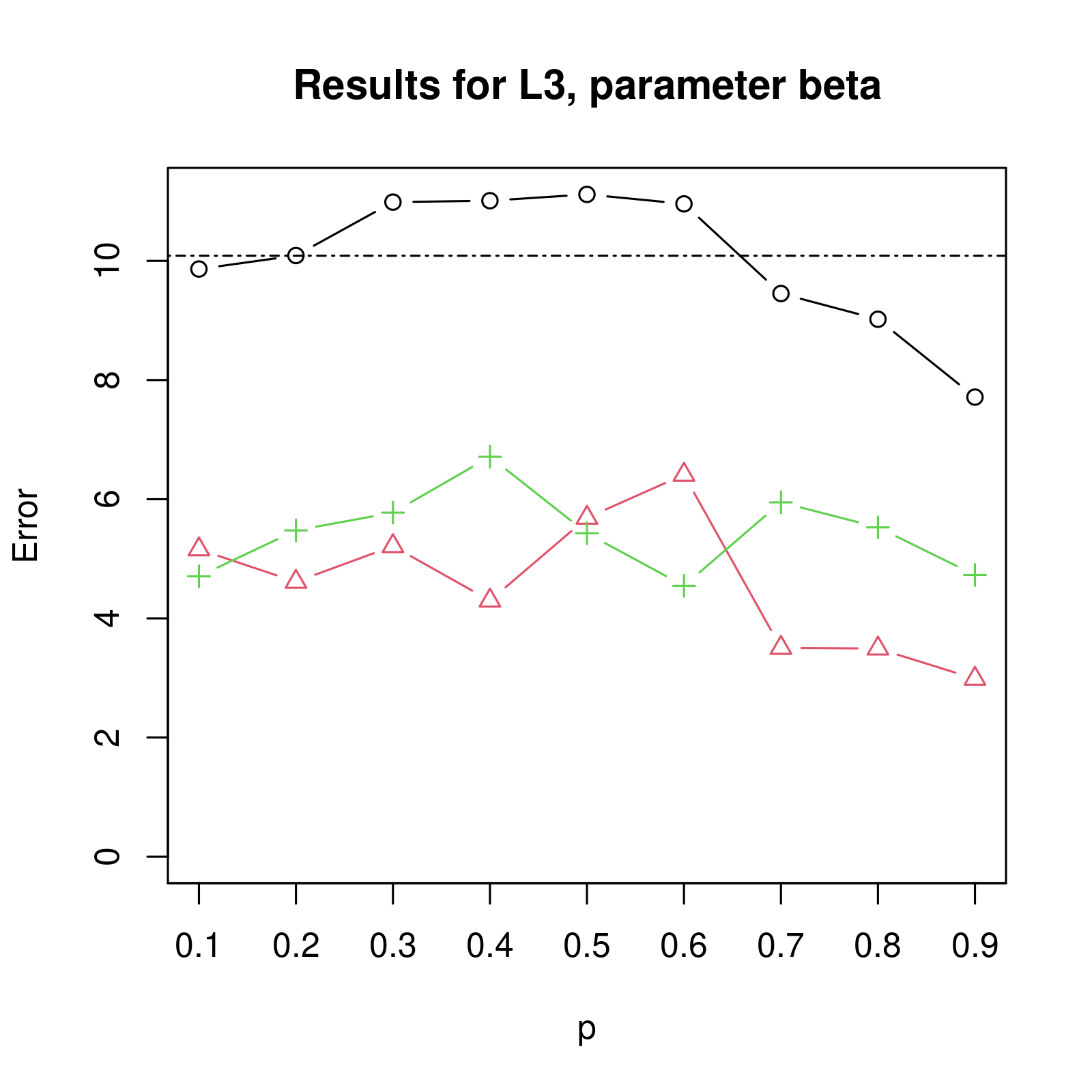}
    \caption{MSE, squared bias and variance for the Poisson process, using PPL with the loss functions $\Loss_2$ and $\Loss_3$, for the parameters $\alpha$ and $\beta$. Here $k = 100$, $N = 100$, the PPL-weight is set to $p$ and $p = 0.1,0.2,\ldots,0.9$. The black lines with circles correspond to MSE, the red lines with triangles correspond to the squared bias and the green lines with the plus signs correspond to the variance. The horizontal black dotted lines correspond to MSE for Takacs-Fiksel estimation.
    }
    \label{fig:poisson3-L2L3}
\end{figure}

\subsection{Hard-core process}
\label{sec:app_hc}
See Section \ref{s:hc_sims} for a description of the parameters and grid used for the hard-core process.
In Section \ref{s:hc_sims}, in Figure \ref{fig:hard-core-p-L1}, the results for the hard-core process for the $\Loss_1$ loss function for the PPL-weight estimate $p$ are shown. 
Here, the corresponding results for the $\Loss_2$ and $\Loss_3$ loss functions can be found in Figure \ref{fig:hard-core-p-L2L3}. 
Further, Figure \ref{fig:hard-core-(1-p)-L1} in Section \ref{s:hc_sims} shows the results for the PPL-weight estimate $p/(1-p)$ for the $\Loss_1$ loss function. 
Again, the results for the $\Loss_2$ and $\Loss_3$ loss functions can be found in Figure \ref{fig:hard-core-(1-p)-L2L3}. 
Lastly, the results using weight estimation according to \eqref{e:WeightEst}, and the $\Loss_1$ loss function can be seen in Figure \ref{fig:hard-core-est-L1} in Section \ref{s:hc_sims}.
The results for the $\Loss_2$ and $\Loss_3$ loss functions can be found in Figure \ref{fig:hard-core-est-L2L3}.

\begin{figure}[!htb]
    \centering
    \includegraphics[width = 0.45\textwidth]{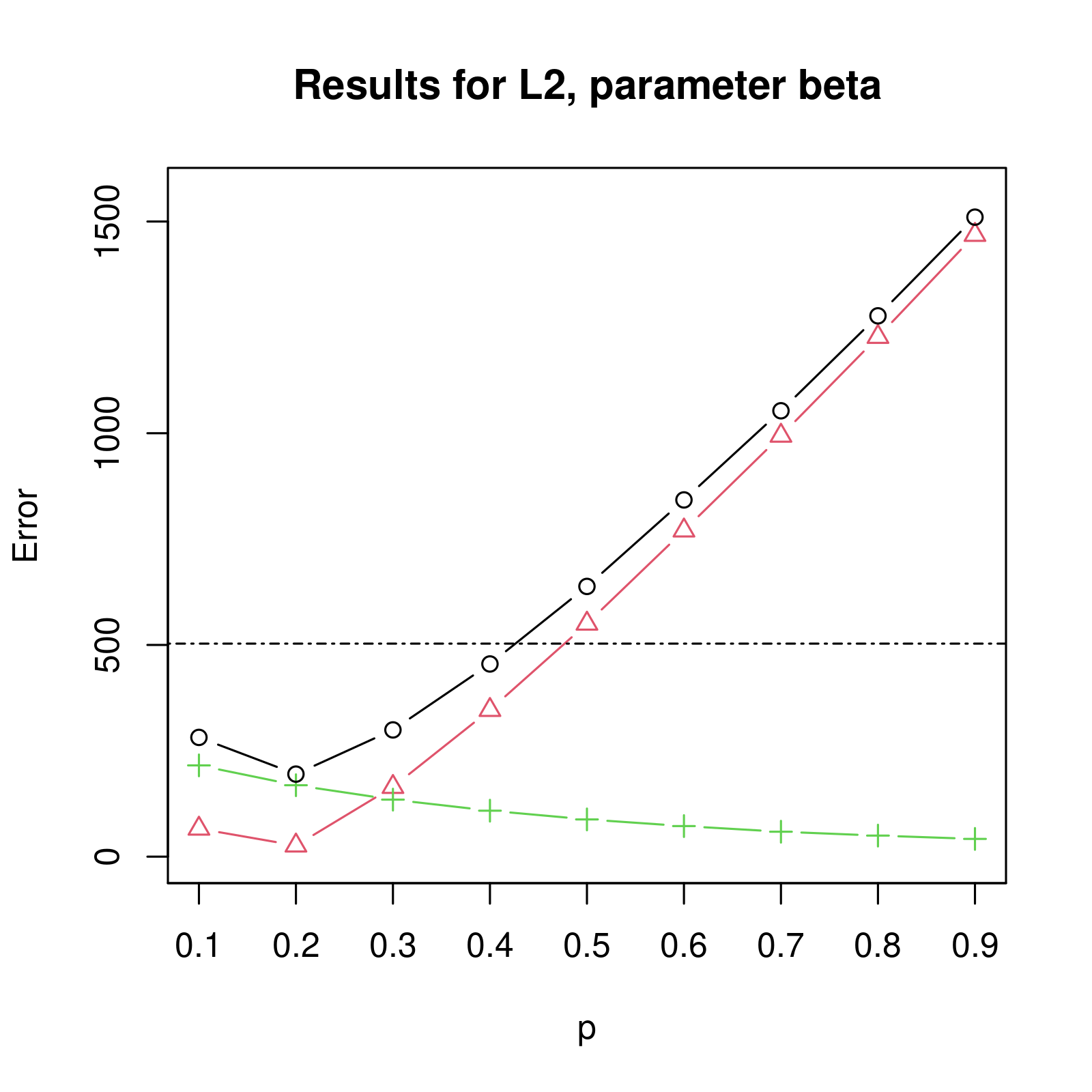}
    \includegraphics[width = 0.45\textwidth]{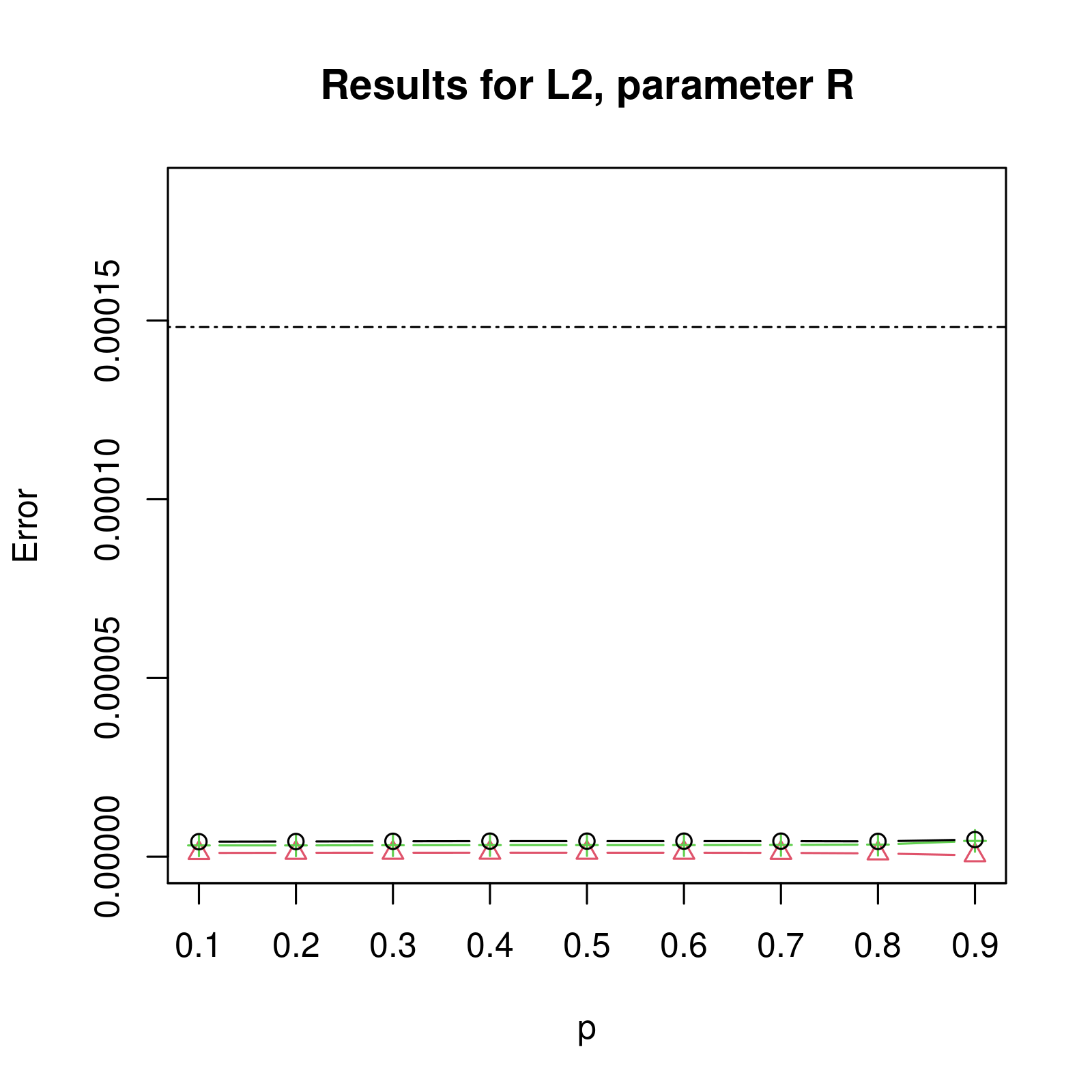}
    \includegraphics[width = 0.45\textwidth]{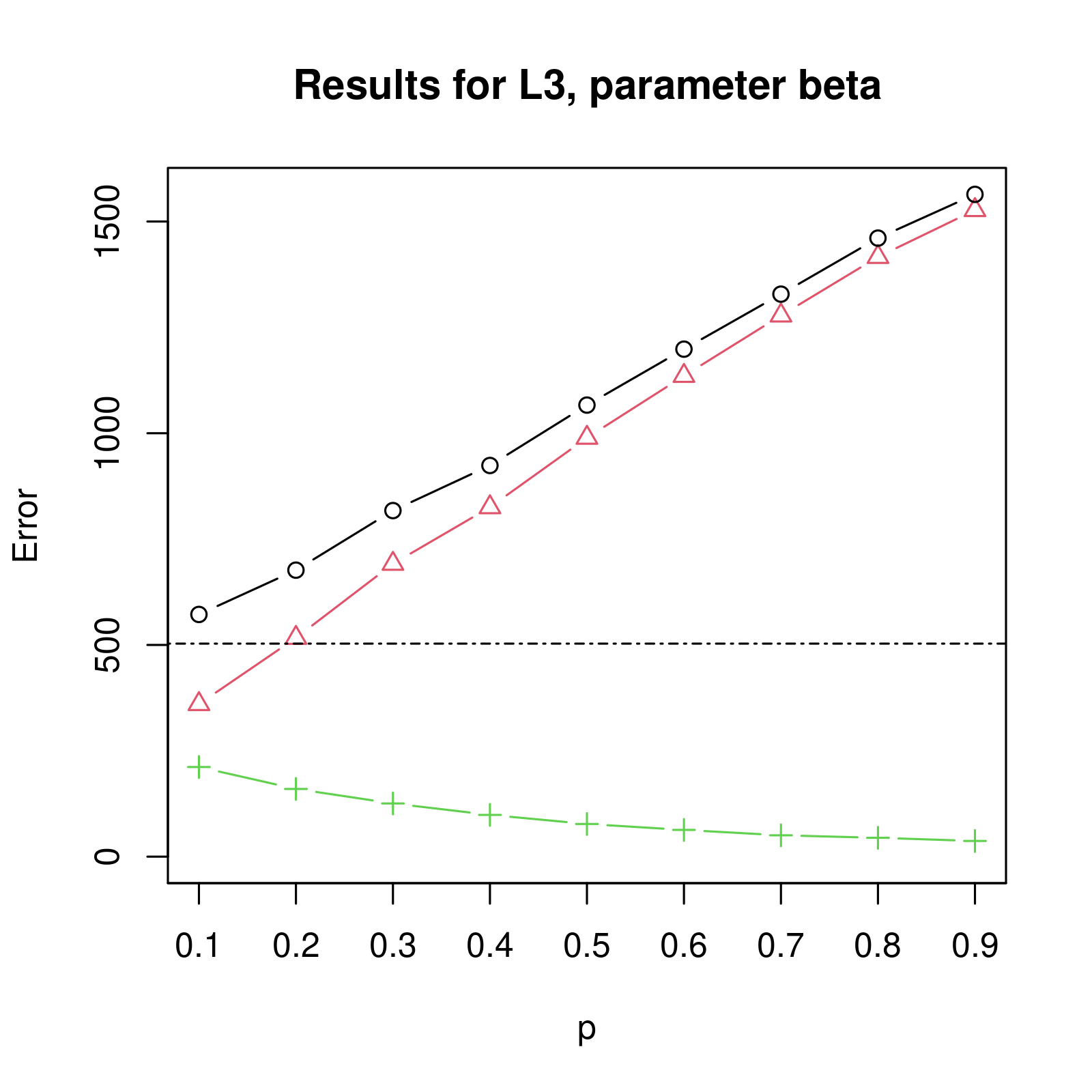}
    \includegraphics[width = 0.45\textwidth]{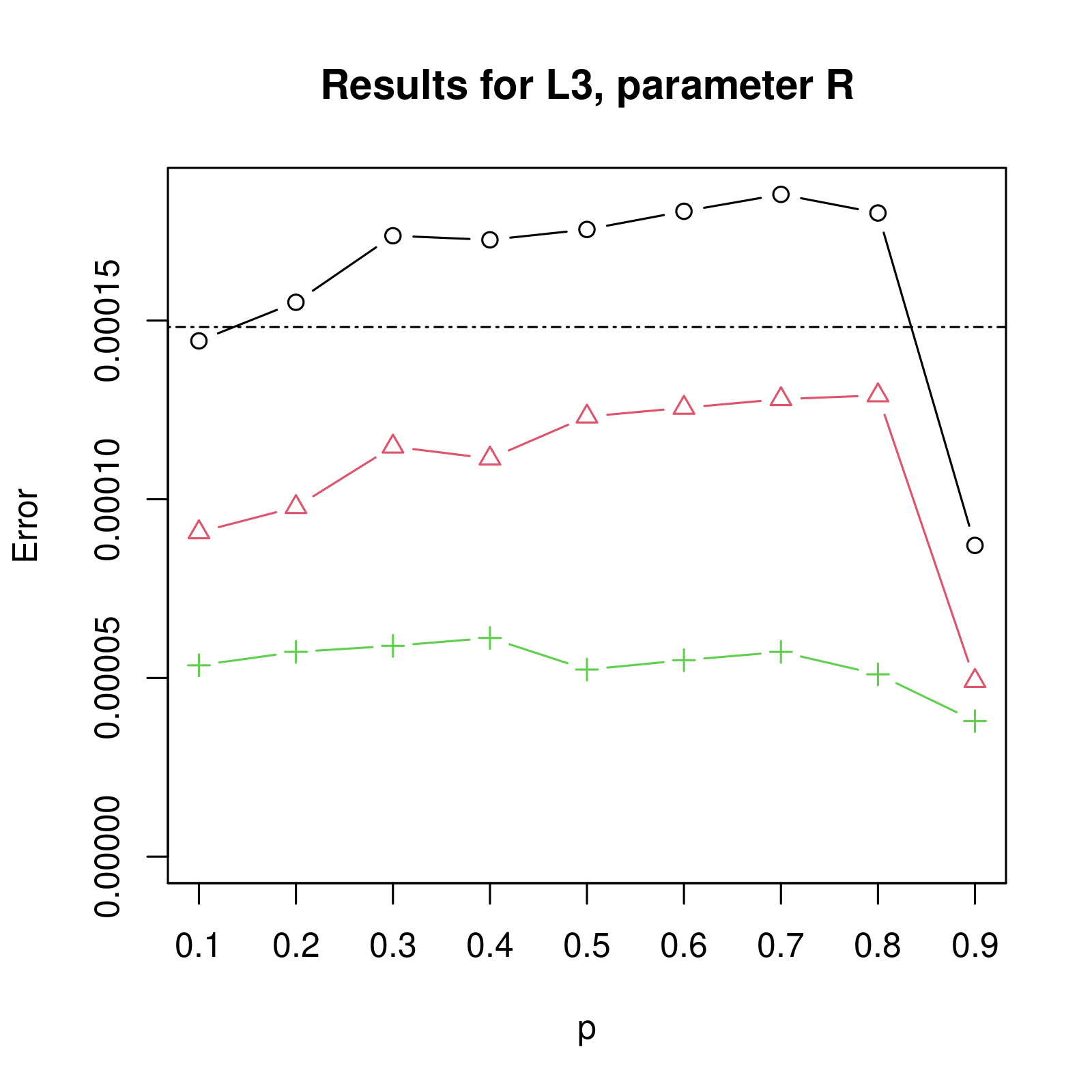}
    \caption{MSE, squared bias and variance for the hard-core model using PPL with the $\Loss_2$ and $\Loss_3$ loss functions, when estimating the  parameters $\beta$ and $R$. Here $k = 100$, $N = 500$,  $p = 0.1,0.2,\ldots,0.9$ and the PPL-weight is set to $p$. The black lines with circles correspond to MSE, the red lines with triangles correspond to squared bias and the green lines with plus signs correspond to variance. The black dotted lines correspond to the Takacs-Fiksel estimates.}
    \label{fig:hard-core-p-L2L3}
\end{figure}

\begin{figure}[!htb]
    \centering
    \includegraphics[width = 0.45\textwidth]{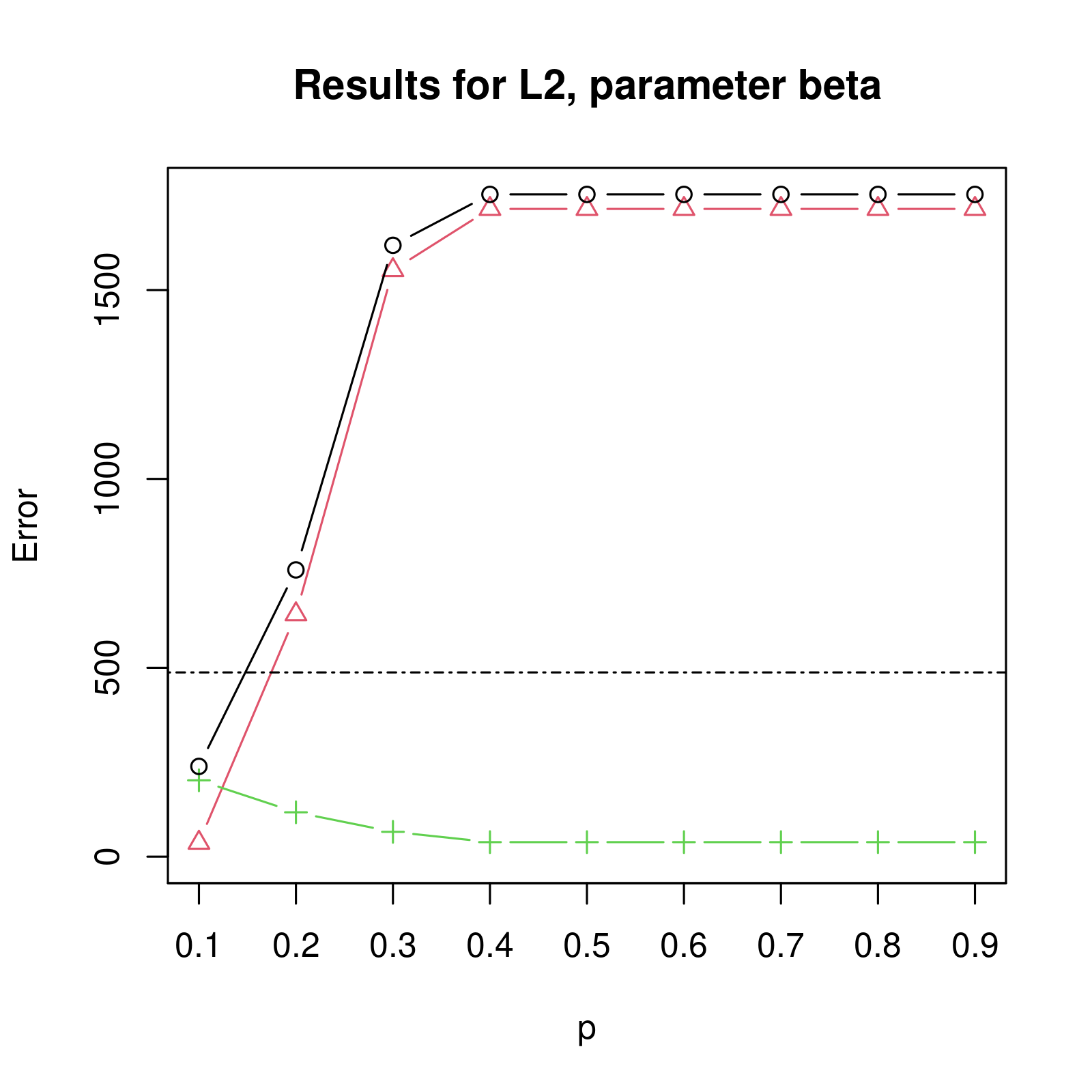}
    \includegraphics[width = 0.45\textwidth]{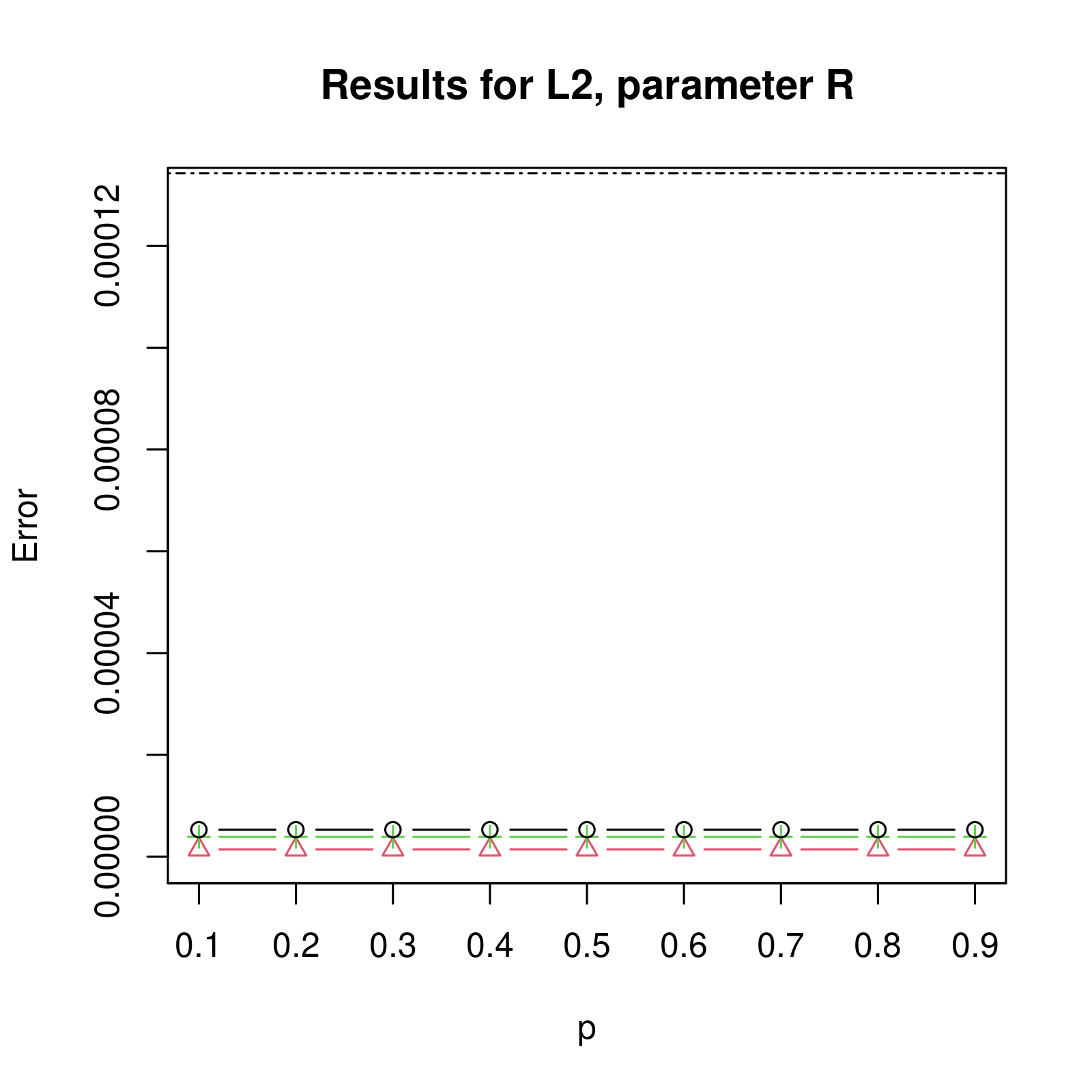}
    \includegraphics[width = 0.45\textwidth]{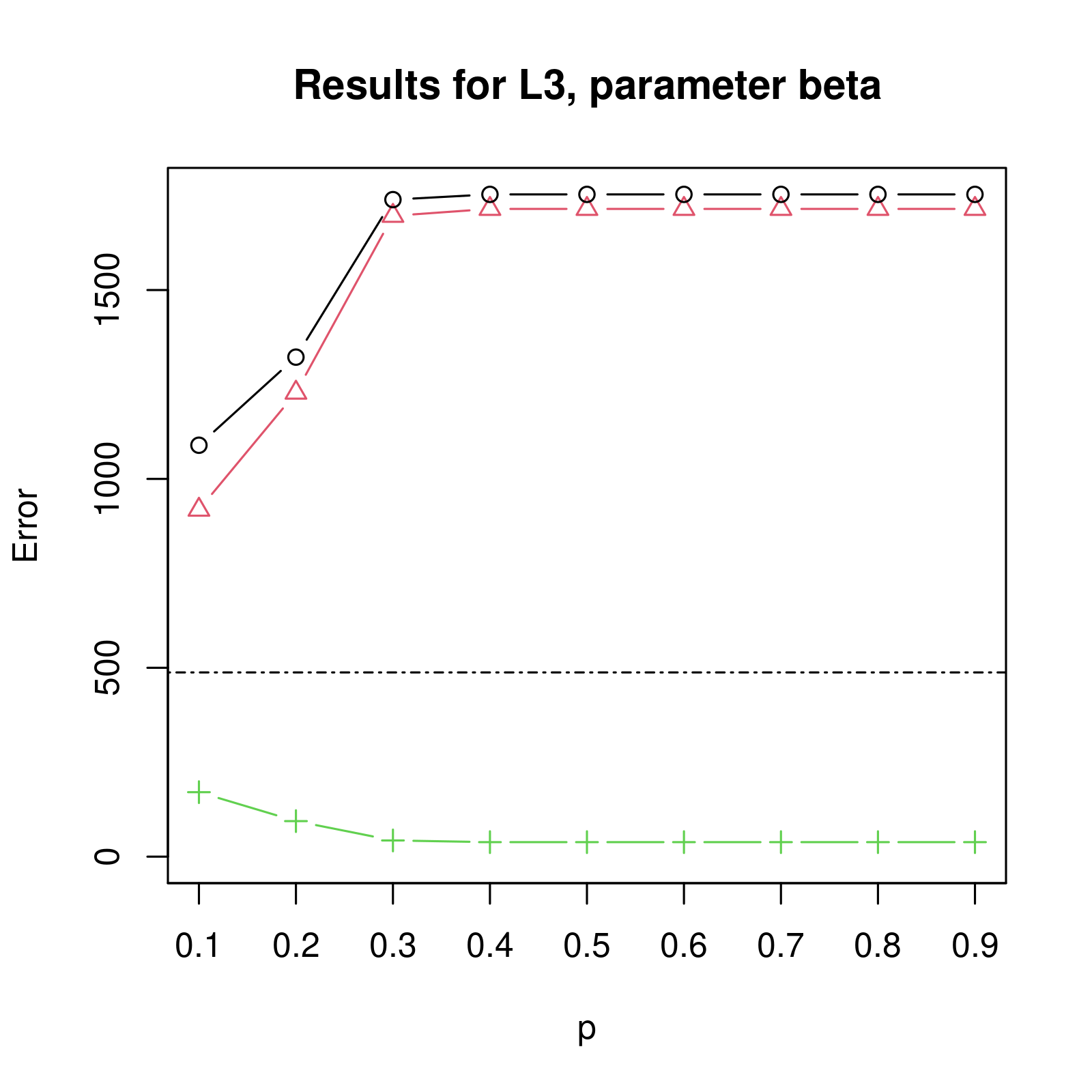}
    \includegraphics[width = 0.45\textwidth]{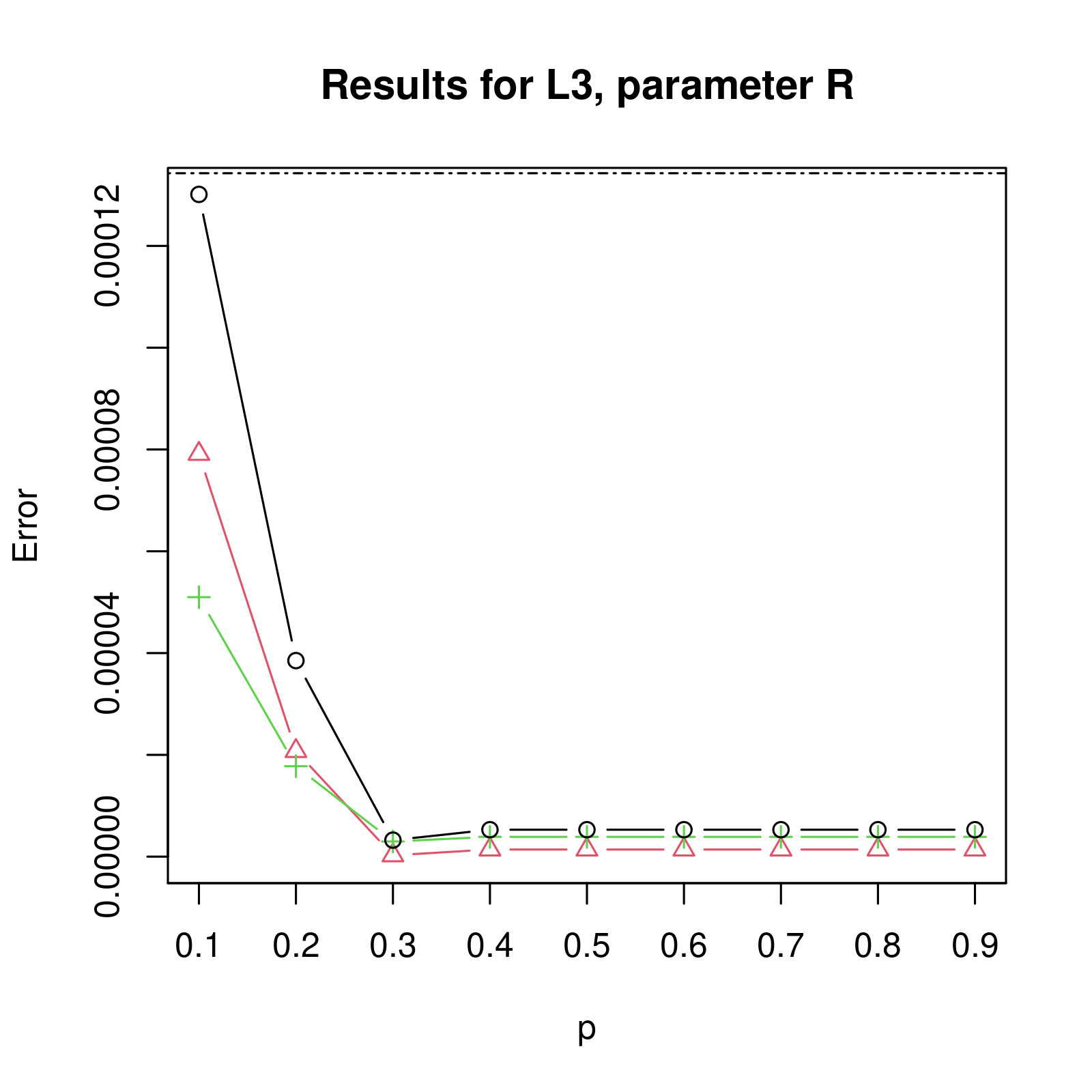}
    \caption{MSE, squared bias and variance for the hard-core model using PPL with the $\Loss_2$ and $\Loss_3$ loss functions, when estimating the  parameters $\beta$ and $R$. Here $k = 100$, $N = 100$,  $p = 0.1,0.2,\ldots,0.9$ and the PPL-weight is set to $p/(1-p)$. The black lines with circles correspond to MSE, the red lines with triangles correspond to squared bias and the green lines with plus signs correspond to variance. The black dotted lines correspond to the Takacs-Fiksel estimates.}
    \label{fig:hard-core-(1-p)-L2L3}
\end{figure}

\begin{figure}[!htb]
    \centering
    \includegraphics[width = 0.45\textwidth]{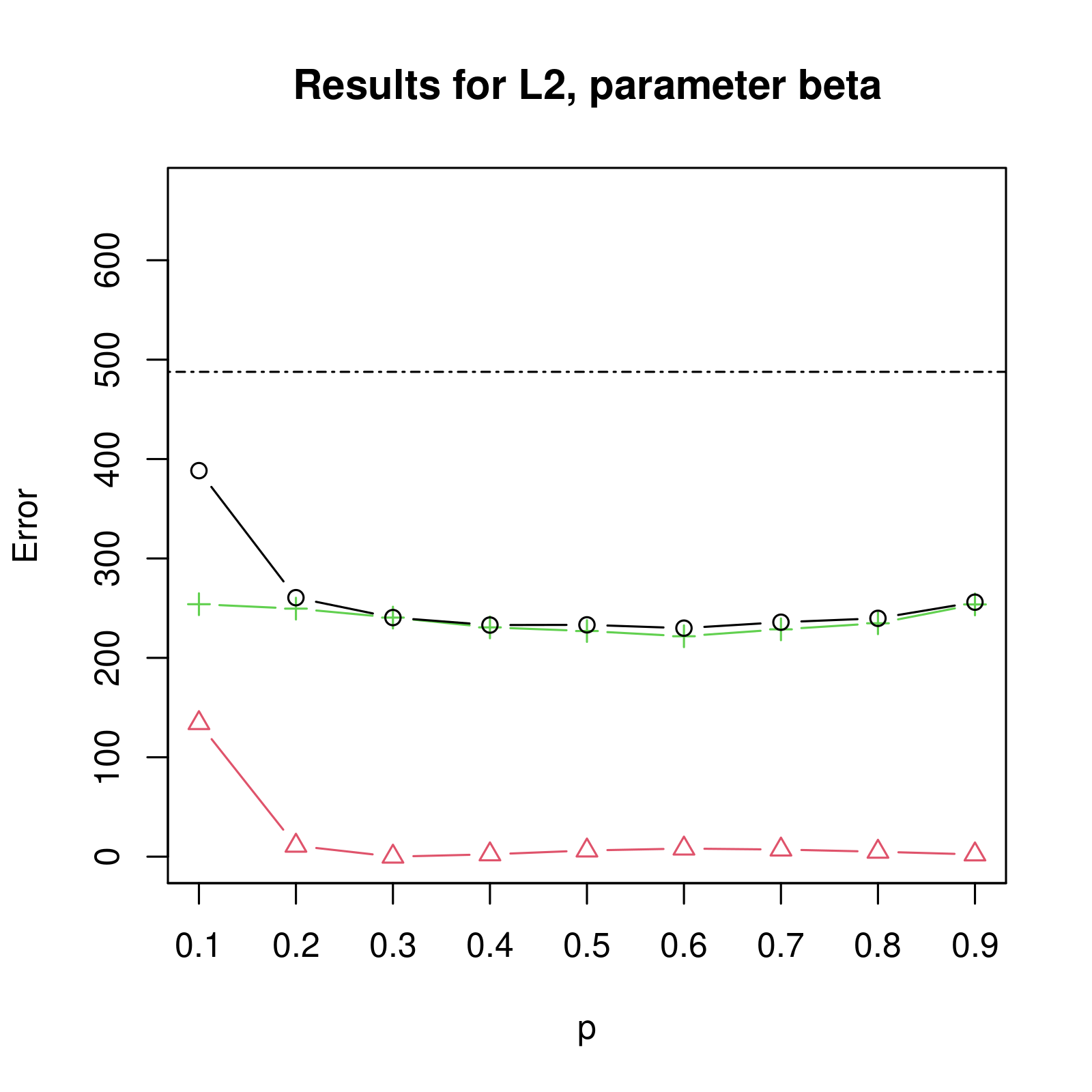}
    \includegraphics[width = 0.45\textwidth]{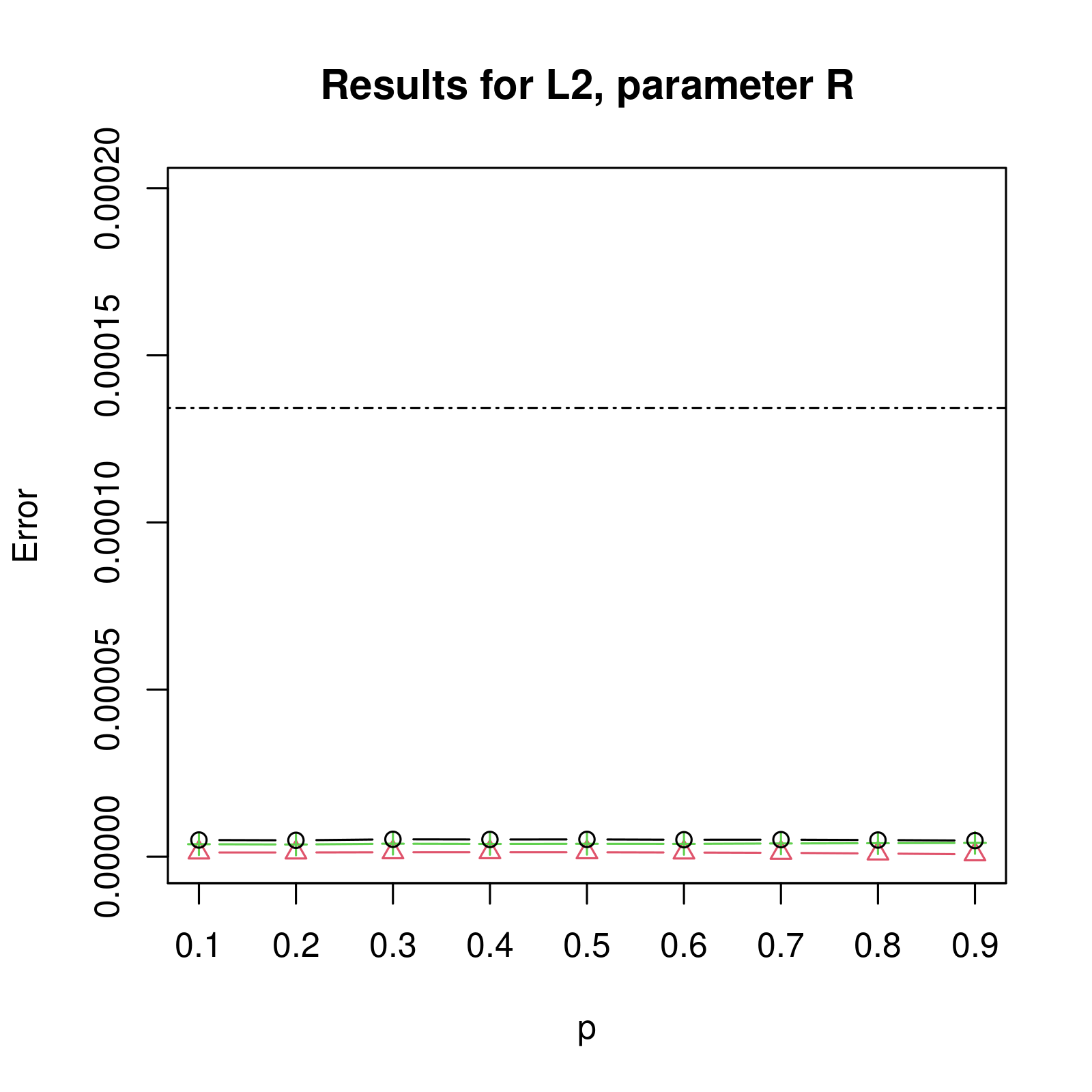}
    \includegraphics[width = 0.45\textwidth]{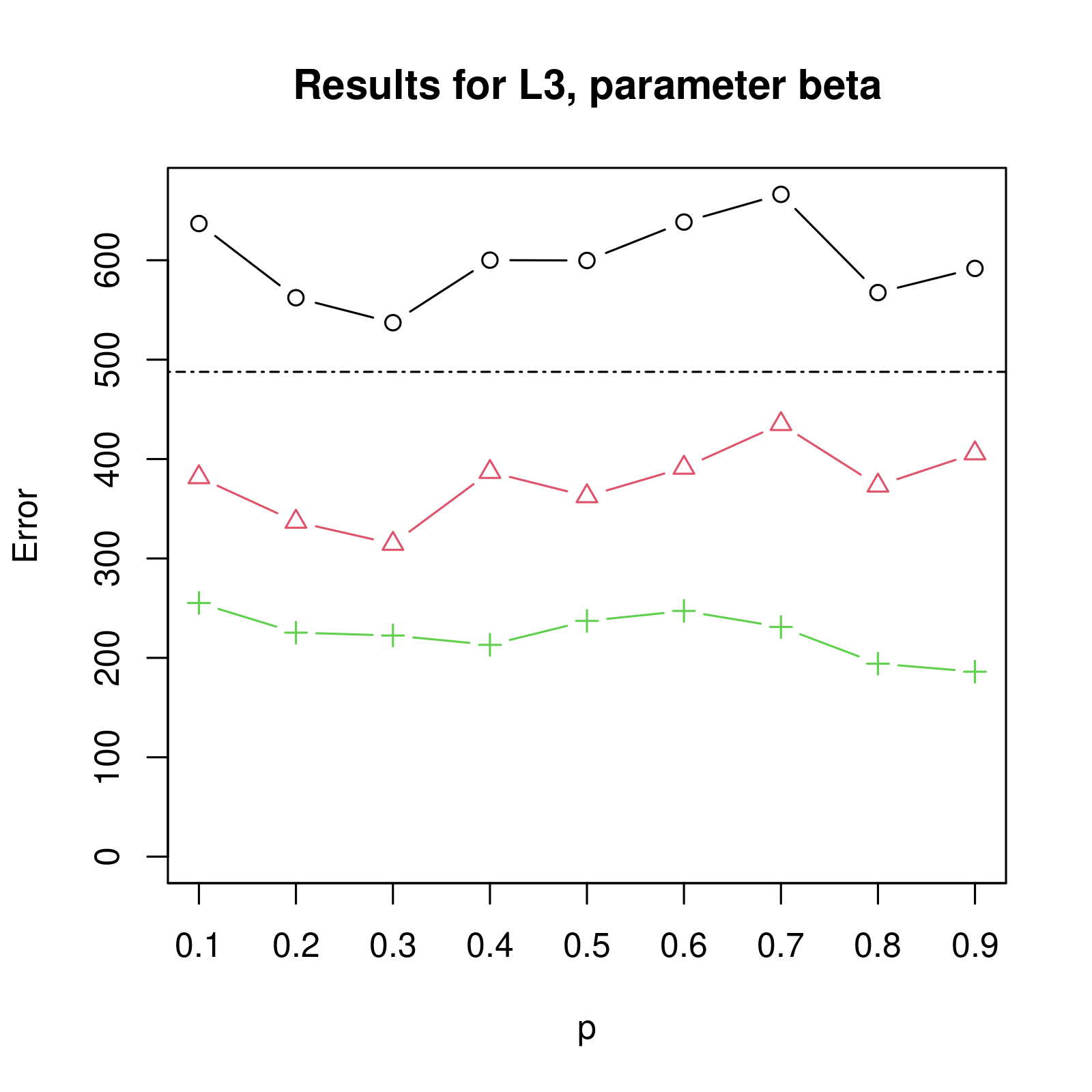}
    \includegraphics[width = 0.45\textwidth]{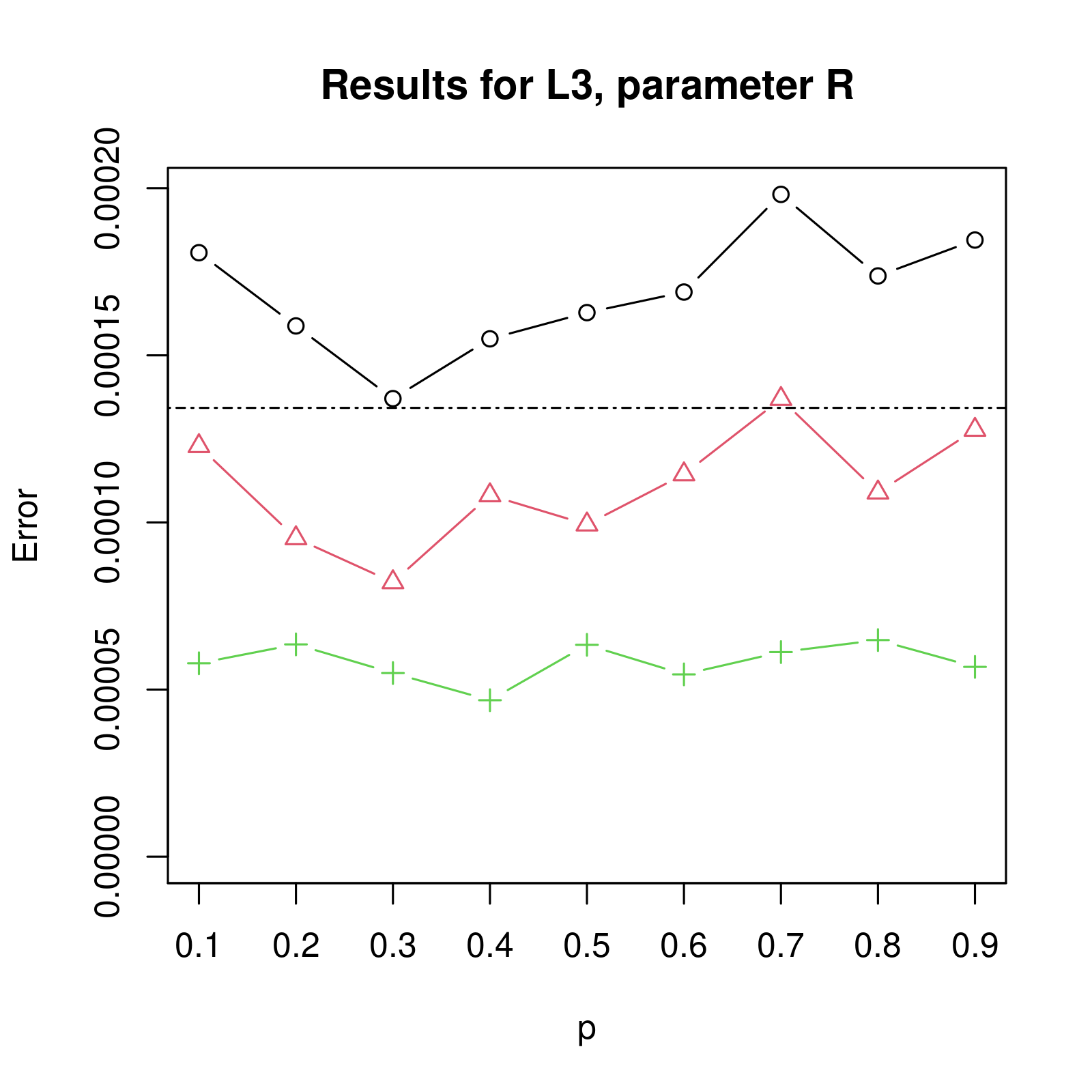}
    \caption{MSE, squared bias and variance for the hard-core model using PPL with the $\Loss_2$ and $\Loss_3$ loss functions, when estimating the parameters $\beta$ and $R$. Here $k = 100$, $N = 100$,  $p = 0.1,0.2,\ldots,0.9$ and the PPL-weight is estimated in accordance with \eqref{e:WeightEst}. The black lines with circles correspond to MSE, the red lines with triangles correspond to squared bias and the green lines with plus signs correspond to variance. The black dotted lines correspond to the Takacs-Fiksel estimates.}
    \label{fig:hard-core-est-L2L3}
\end{figure}

\subsection{Strauss process}
\label{sec:app_strauss}
See Section \ref{sec:strauss_sims} for a description of the parameters and grid used for the Strauss process.
In Section \ref{sec:strauss_sims}, in Figure \ref{fig:strauss-p-L1}, the results for the Strauss process for the $\Loss_1$ loss function for the PPL-weight estimate $p$ are shown. 
Here, the corresponding results for the $\Loss_2$ and $\Loss_3$ loss functions can be found in Figure \ref{fig:strauss-p-L2L3}. 
Further, Figure \ref{fig:strauss-(1-p)-L1} in Section \ref{sec:strauss_sims} shows the results for the PPL-weight estimate $p/(1-p)$ for the $\Loss_1$ loss function. 
Again, the results for the $\Loss_2$ and $\Loss_3$ loss functions can be found in Figure \ref{fig:strauss-(1-p)-L2L3}. 
Lastly, the results using weight estimation according to \eqref{e:WeightEst}, and the $\Loss_1$ loss function can be seen in Figure \ref{fig:strauss-est-L1} in Section \ref{sec:strauss_sims}.
The results for the $\Loss_2$ and $\Loss_3$ loss functions can be found in Figure \ref{fig:strauss-est-L2L3}.

\begin{figure}[!htb]
    \centering
    \includegraphics[width = 0.3\textwidth]{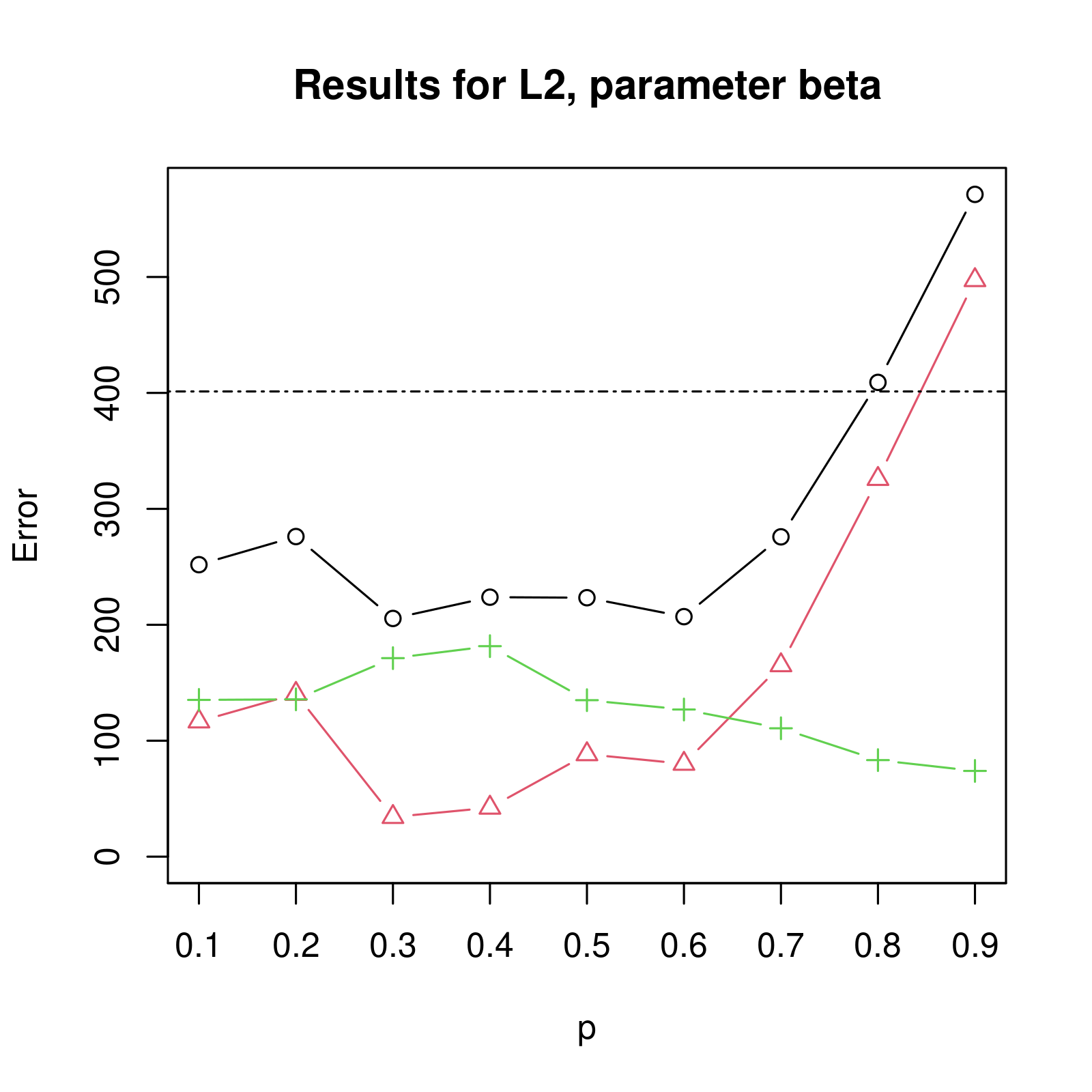}
    \includegraphics[width = 0.3\textwidth]{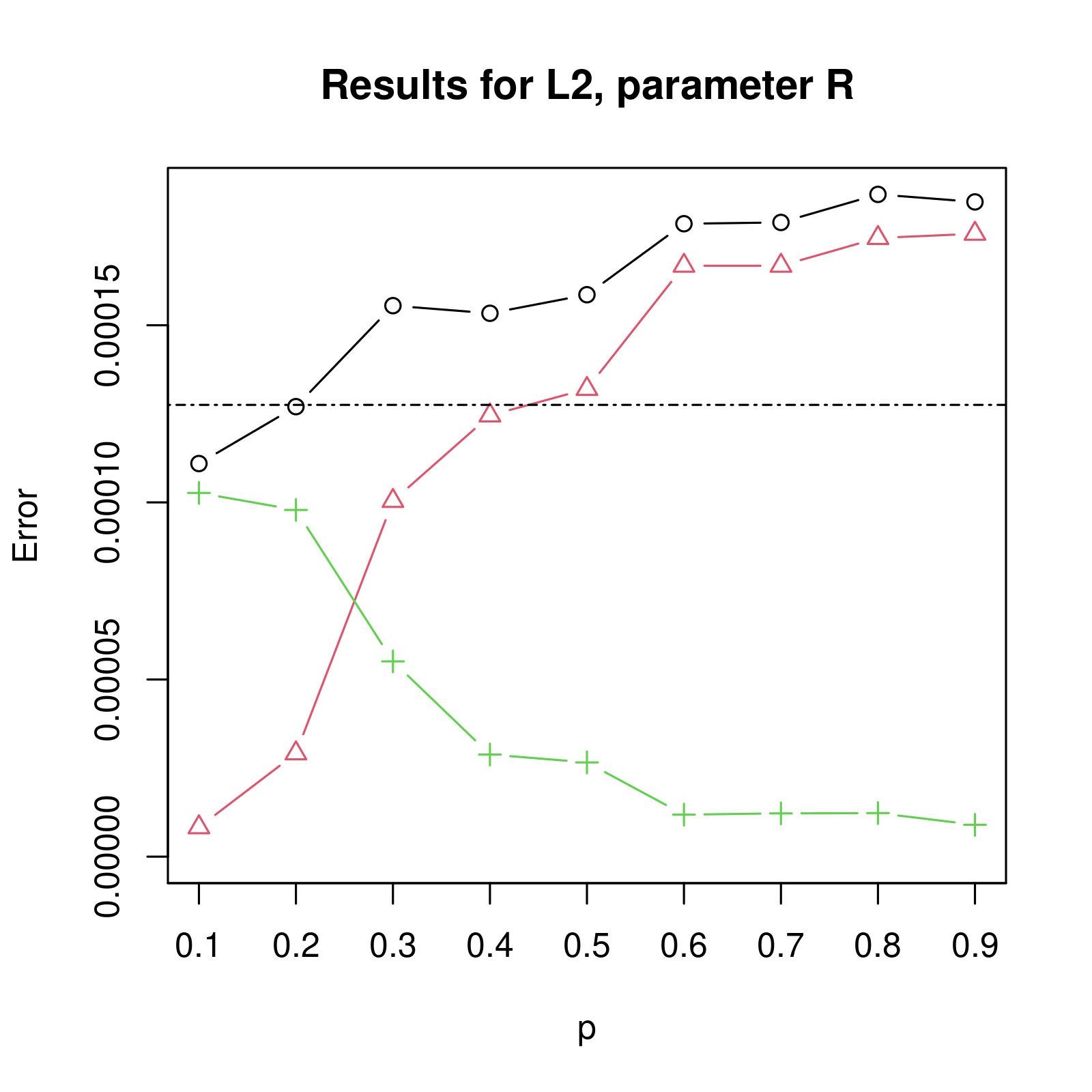}
    \includegraphics[width = 0.3\textwidth]{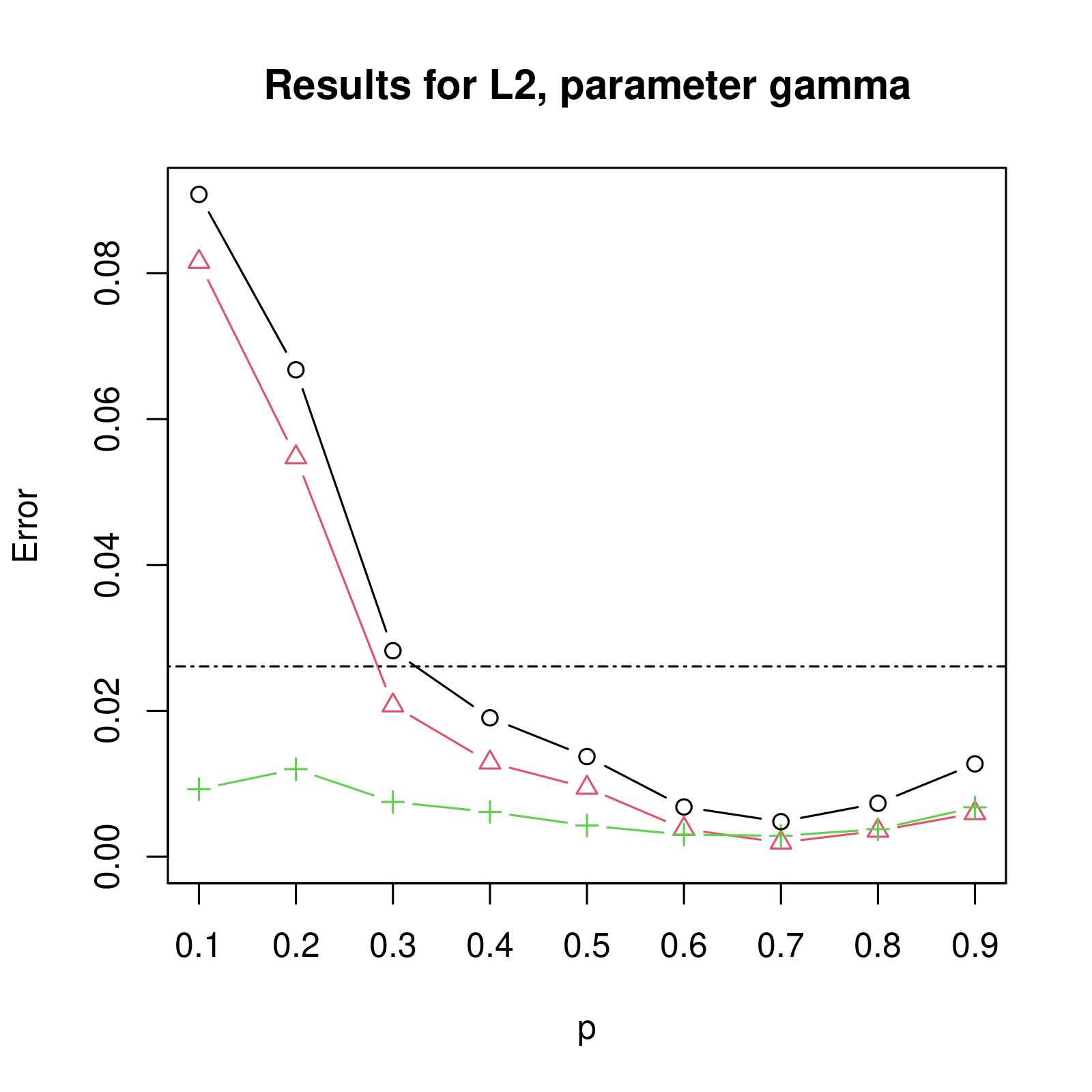}
    \includegraphics[width = 0.3\textwidth]{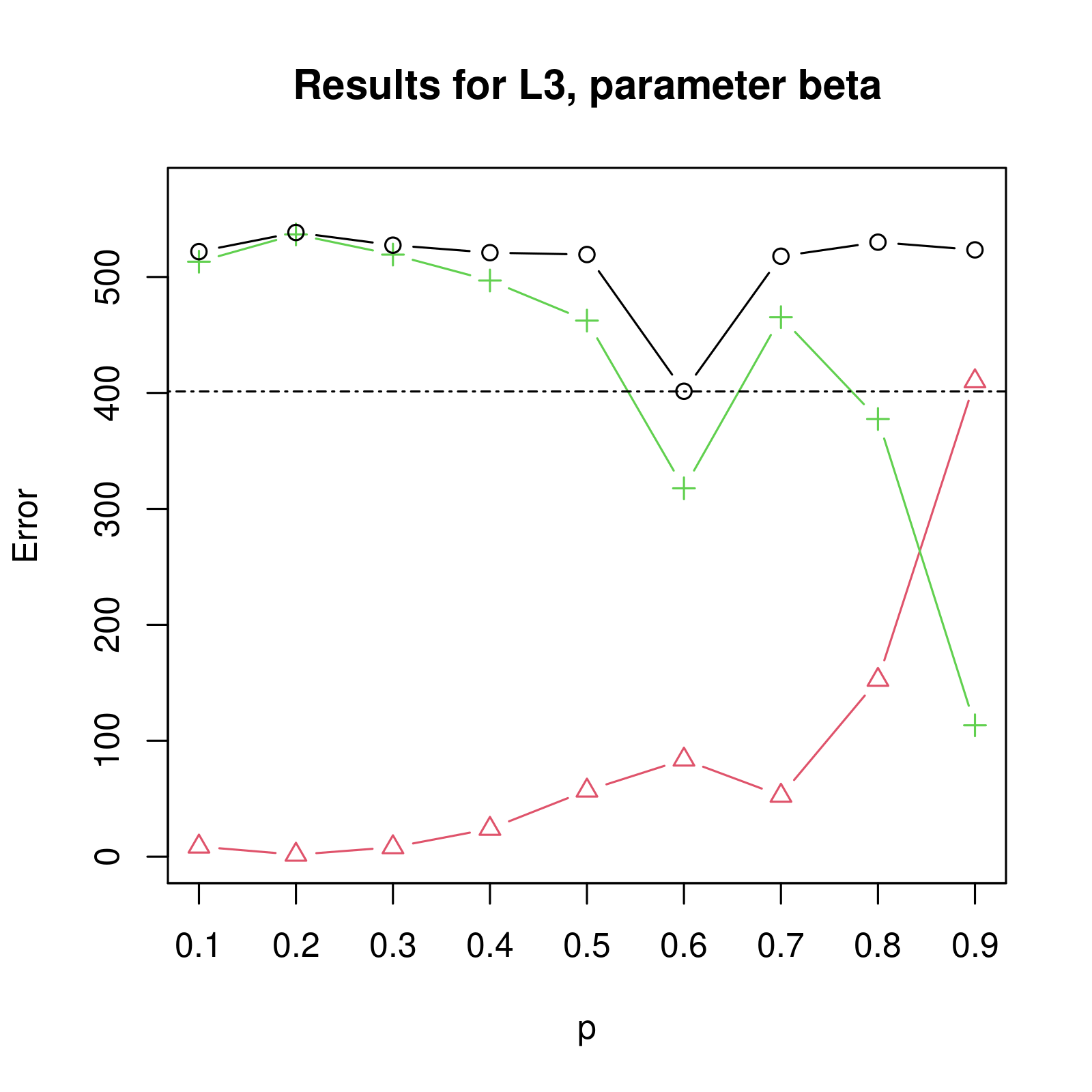}
    \includegraphics[width = 0.3\textwidth]{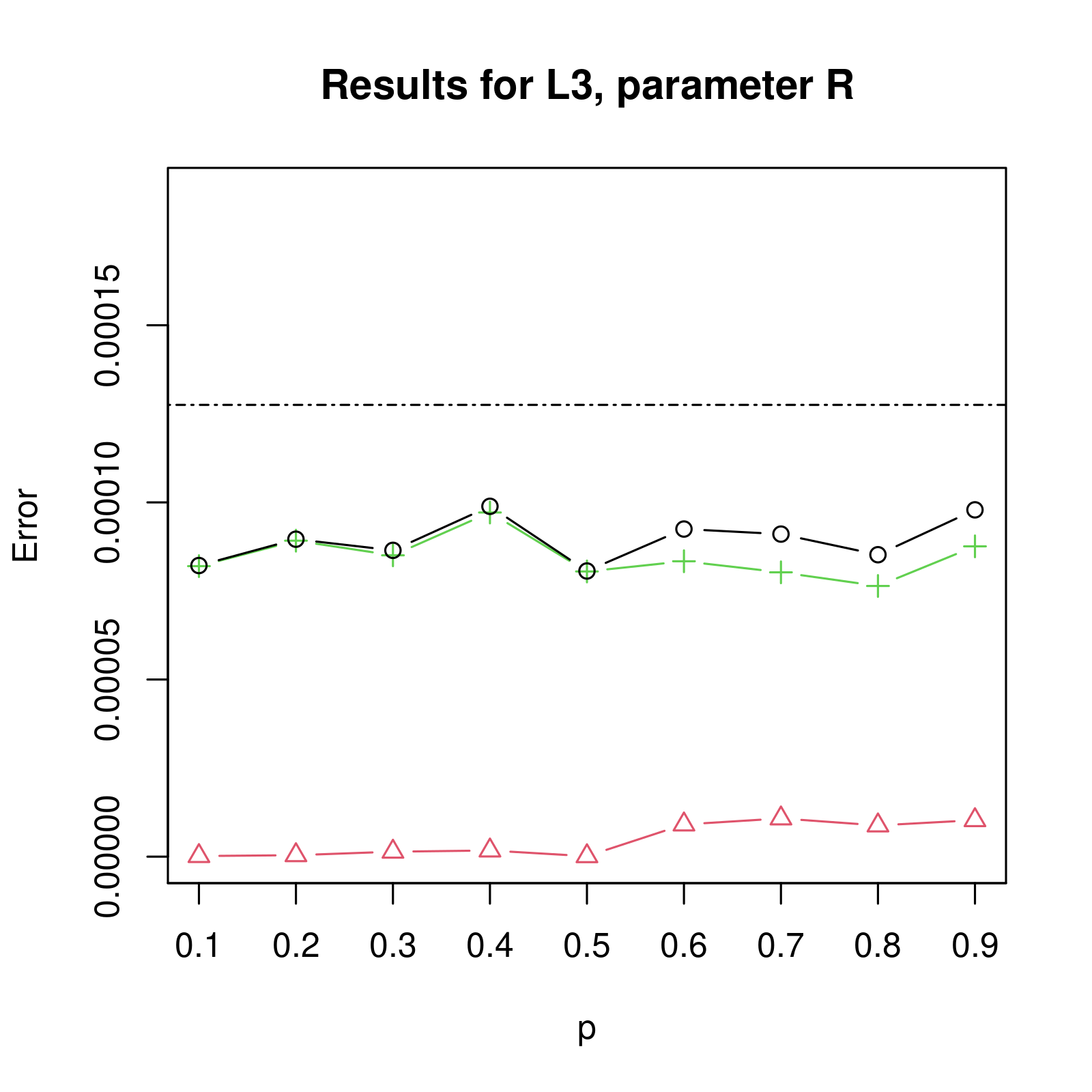}
    \includegraphics[width = 0.3\textwidth]{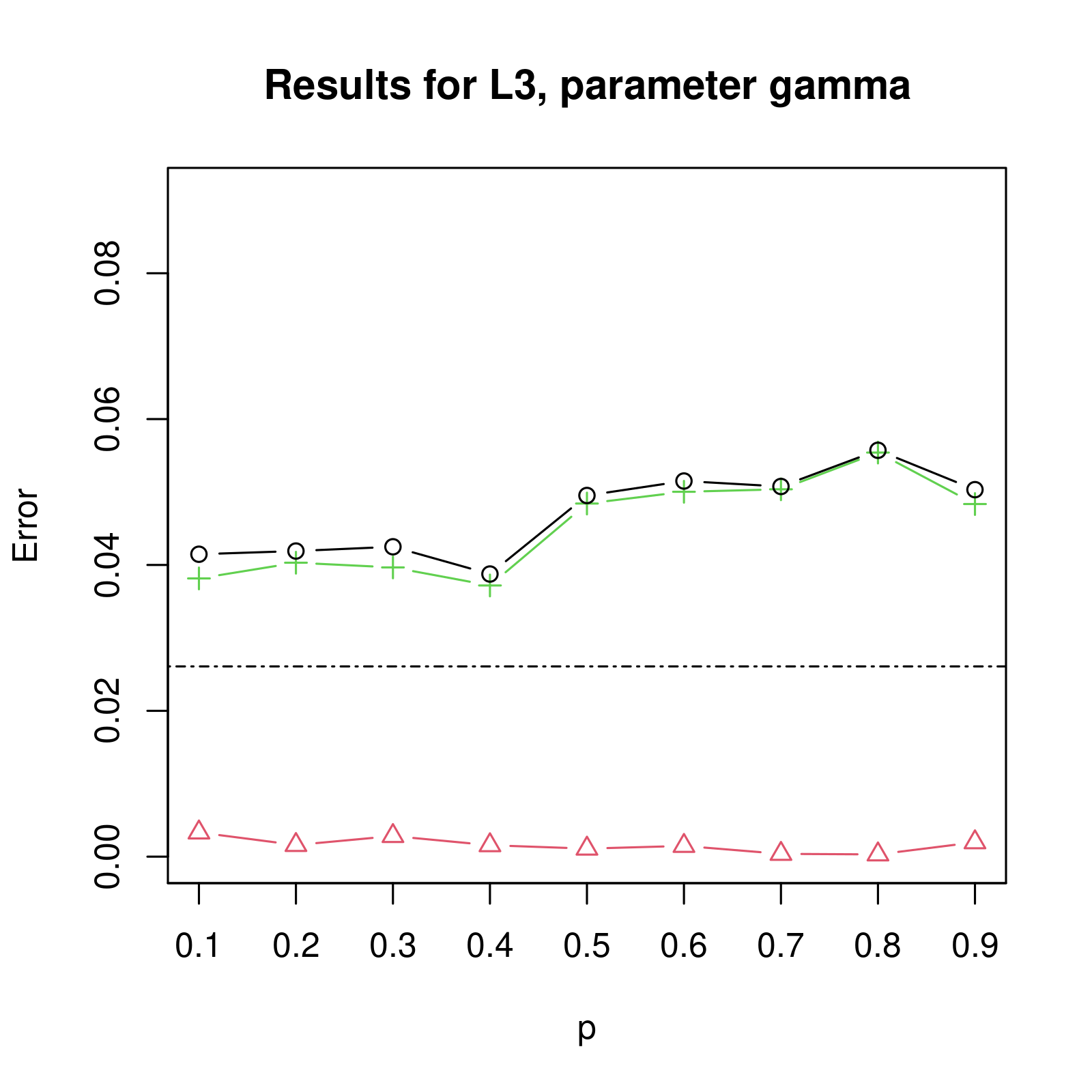}
    \caption{MSE, squared bias and variance for the hard-core model using PPL with the $\Loss_2$ and $\Loss_3$ loss functions, when estimating the parameters 
$\beta$, $R$ and $\gamma$. 
Here $k = 100$, $N = 100$, $p = 0.1,0.2,\ldots,0.9$ and the PPL-weight is set to $p$. 
    The black lines with circles correspond to MSE, the red lines with triangles correspond to squared bias and the green lines with plus signs correspond to variance. The black dotted lines correspond to the Takacs-Fiksel estimates.
}
    \label{fig:strauss-p-L2L3}
\end{figure}

\begin{figure}[!htb]
    \centering
    \includegraphics[width = 0.3\textwidth]{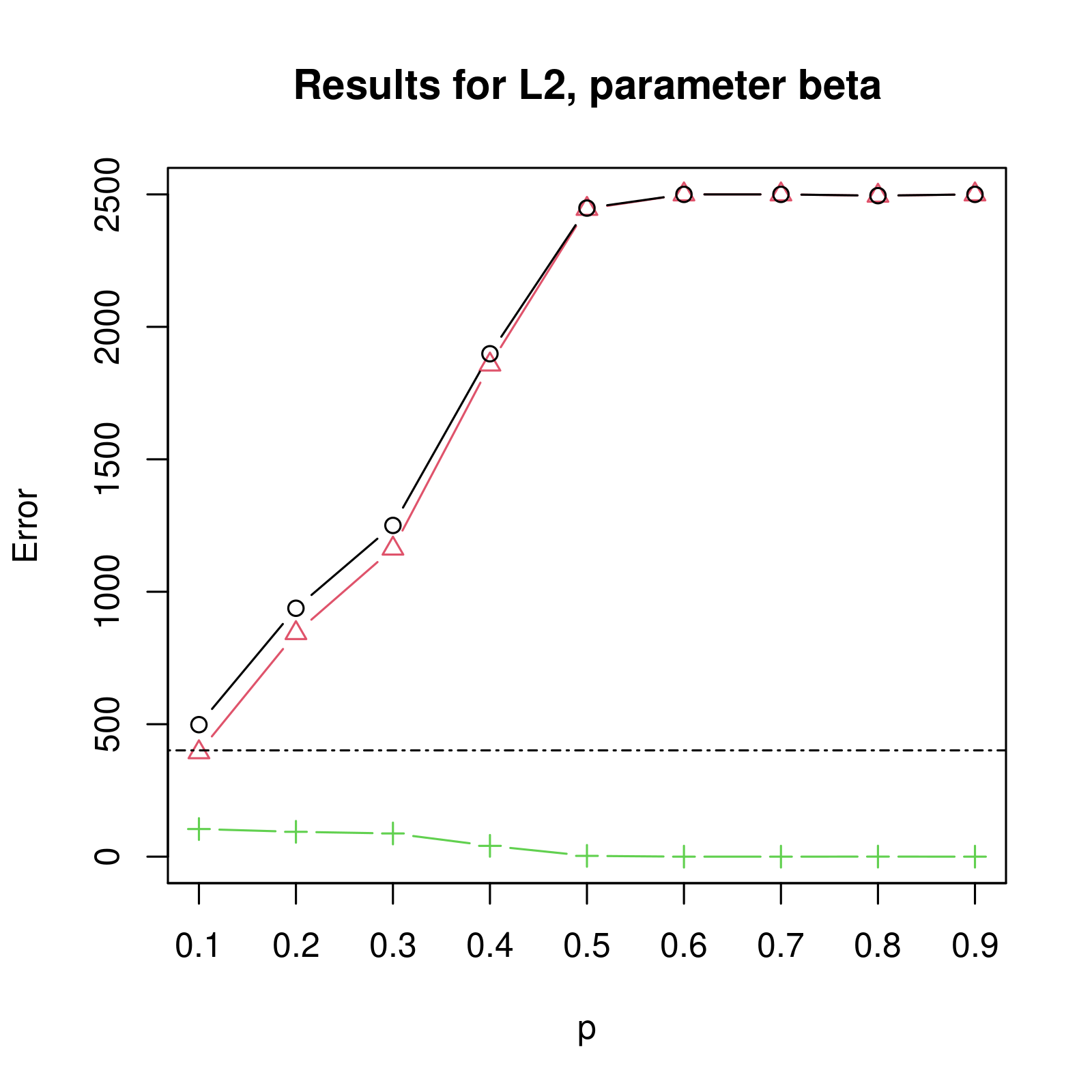}
    \includegraphics[width = 0.3\textwidth]{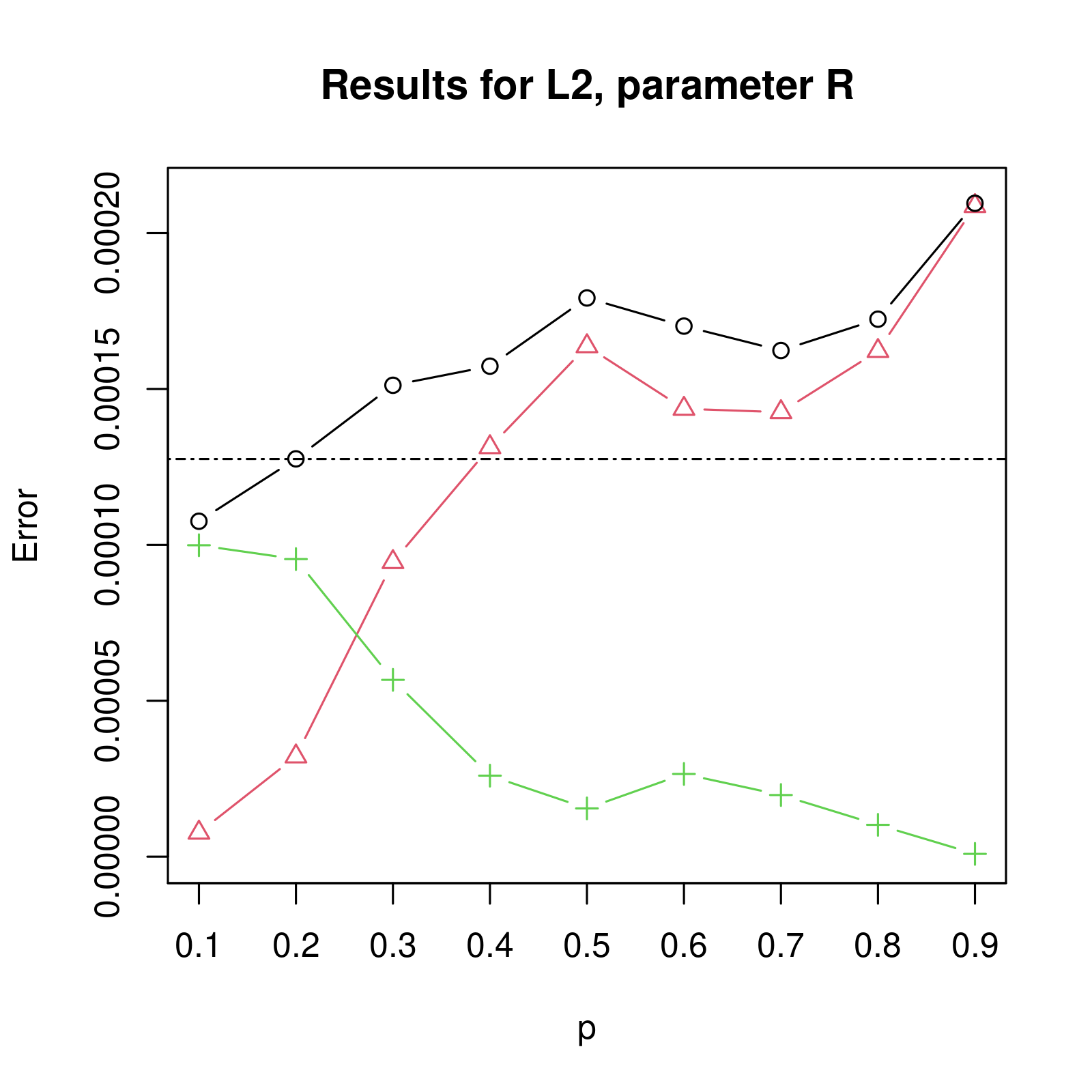}
    \includegraphics[width = 0.3\textwidth]{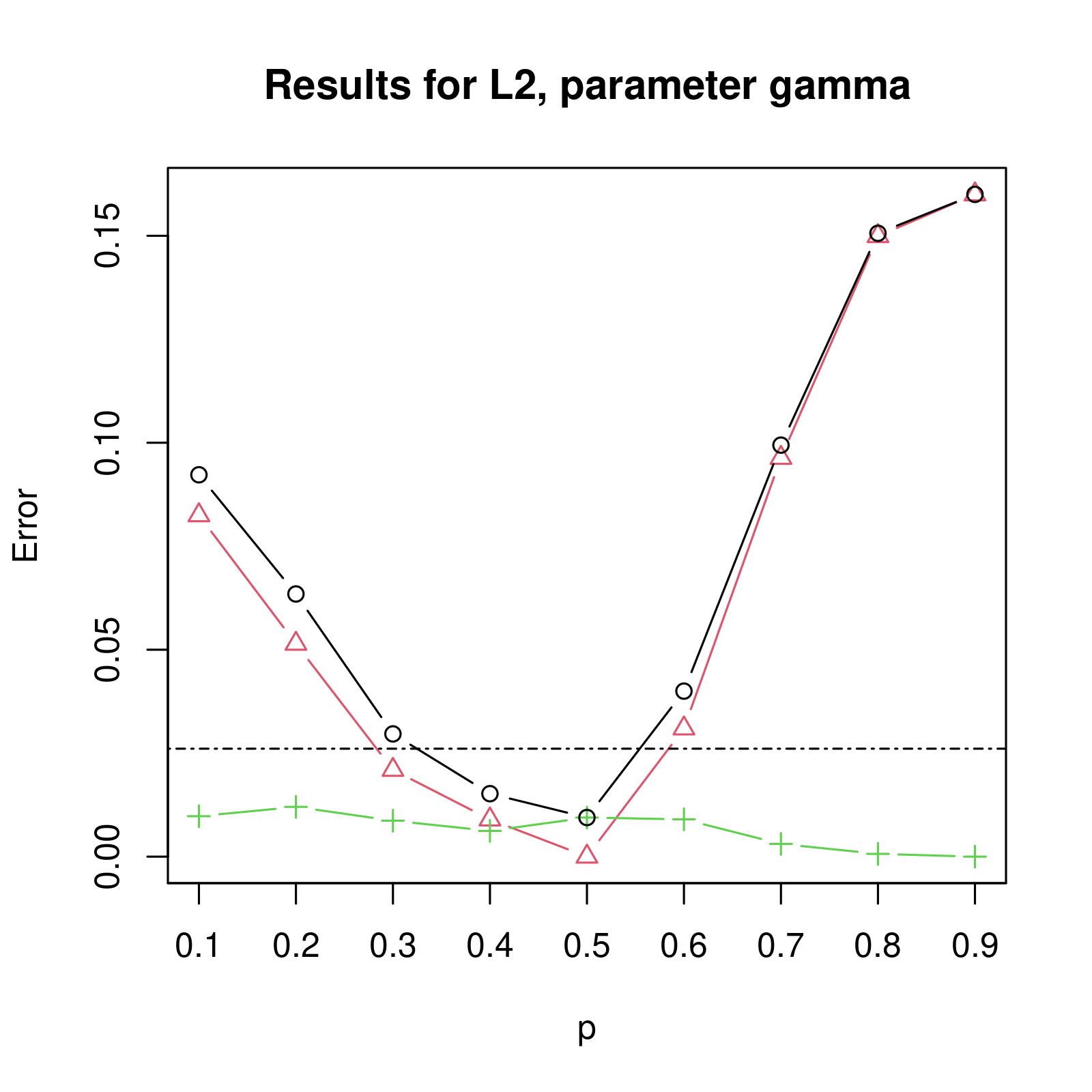}
    \includegraphics[width = 0.3\textwidth]{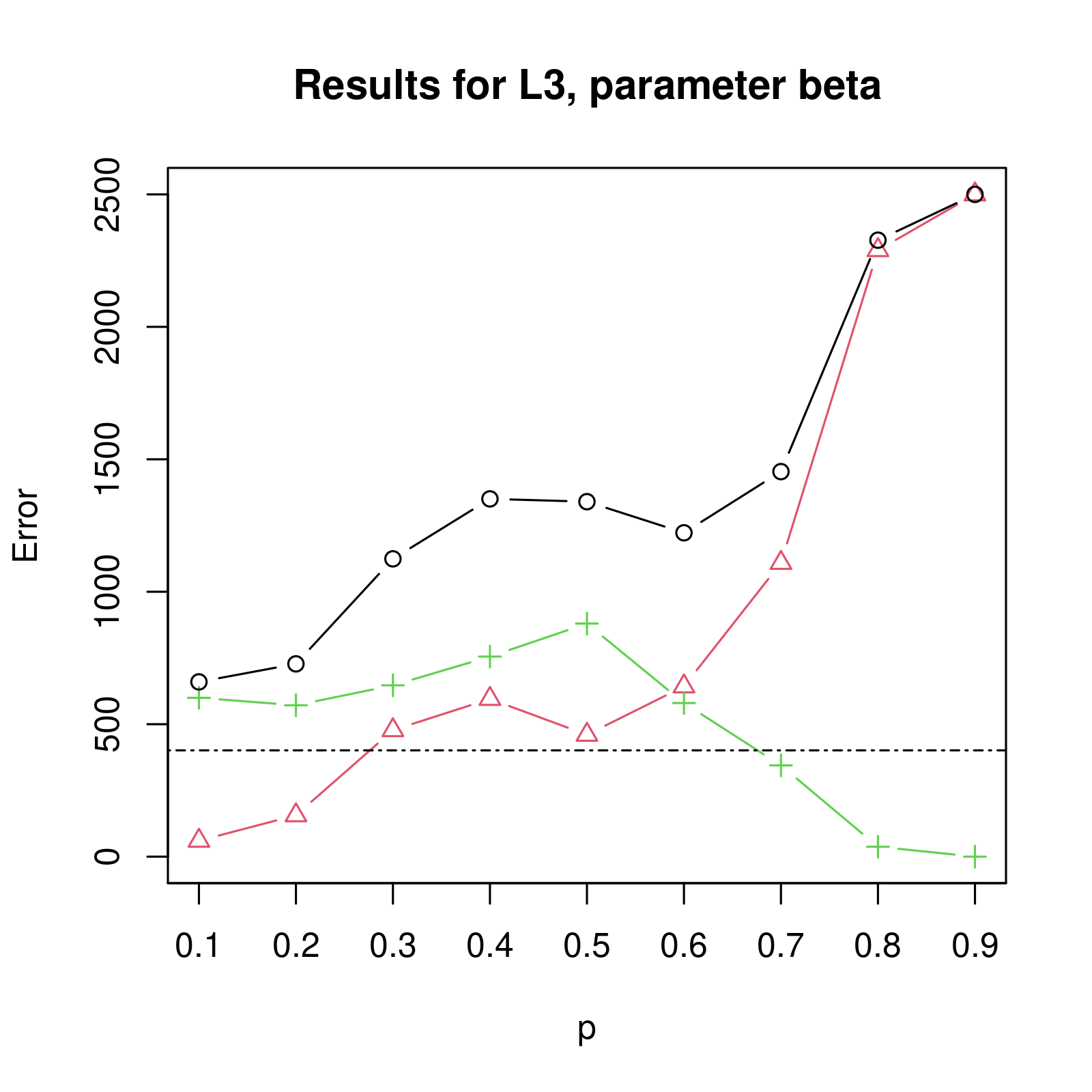}
    \includegraphics[width = 0.3\textwidth]{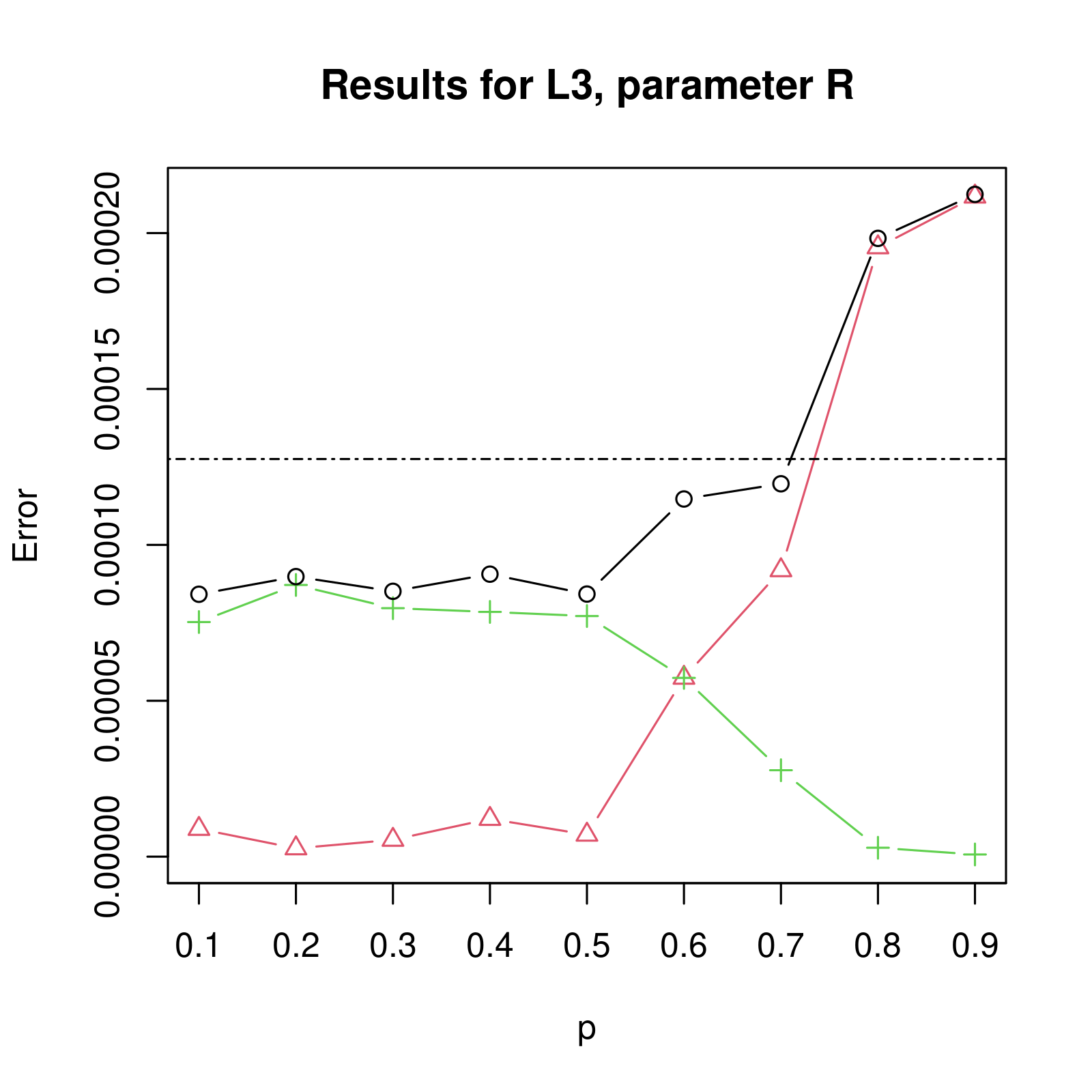}
    \includegraphics[width = 0.3\textwidth]{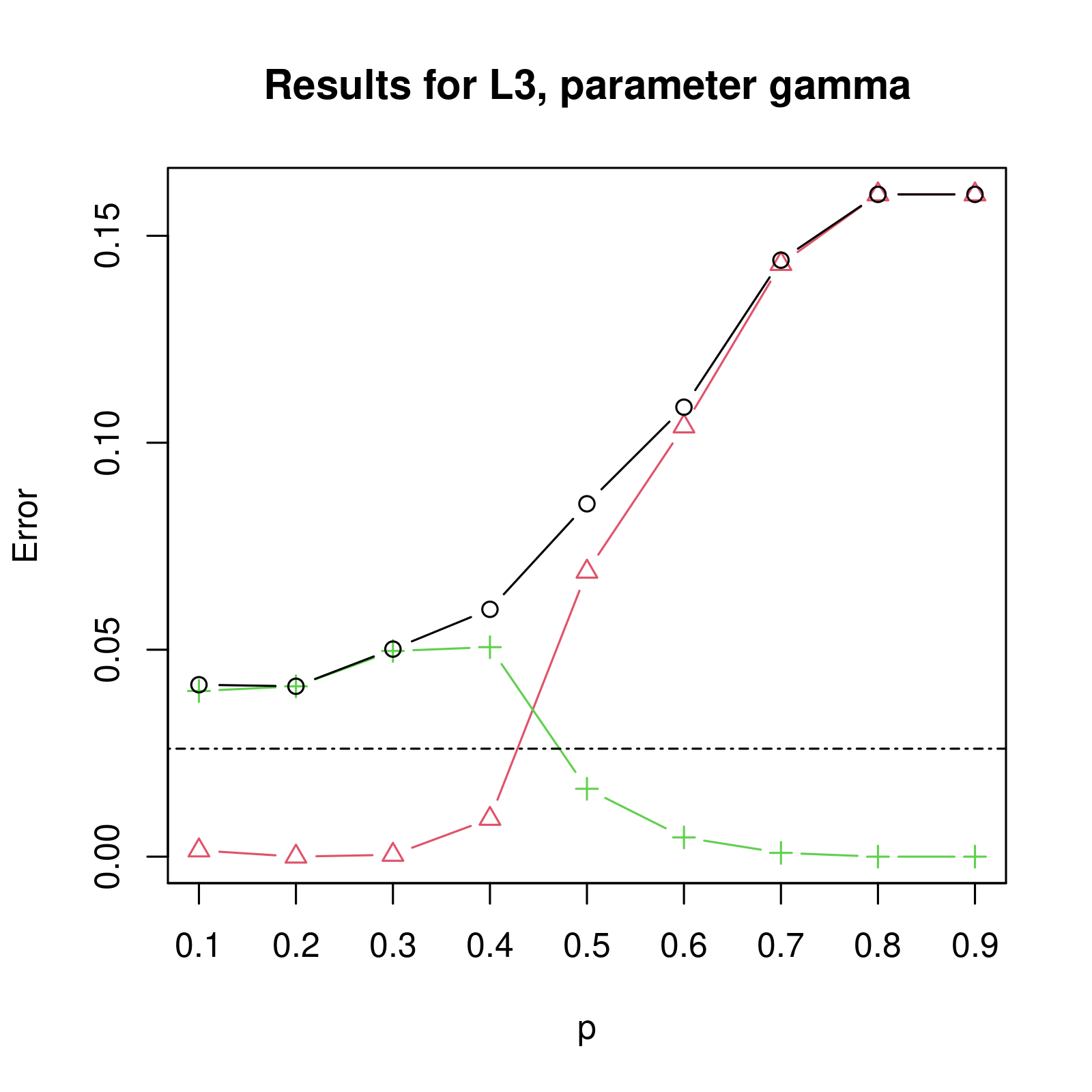}
    \caption{MSE, squared bias and variance for the hard-core model using PPL with the $\Loss_2$ and $\Loss_3$ loss functions, when estimating the parameters 
    $\beta$, $R$ and $\gamma$. 
    Here $k = 100$, $N = 100$, $p = 0.1,0.2,\ldots,0.9$ and the PPL-weight is set to $p/(1-p)$. 
    The black lines with circles correspond to MSE, the red lines with triangles correspond to squared bias and the green lines with plus signs correspond to variance. The black dotted lines correspond to the Takacs-Fiksel estimates.
}
    \label{fig:strauss-(1-p)-L2L3}
\end{figure}

\begin{figure}[!htb]
    \centering
    \includegraphics[width = 0.3\textwidth]{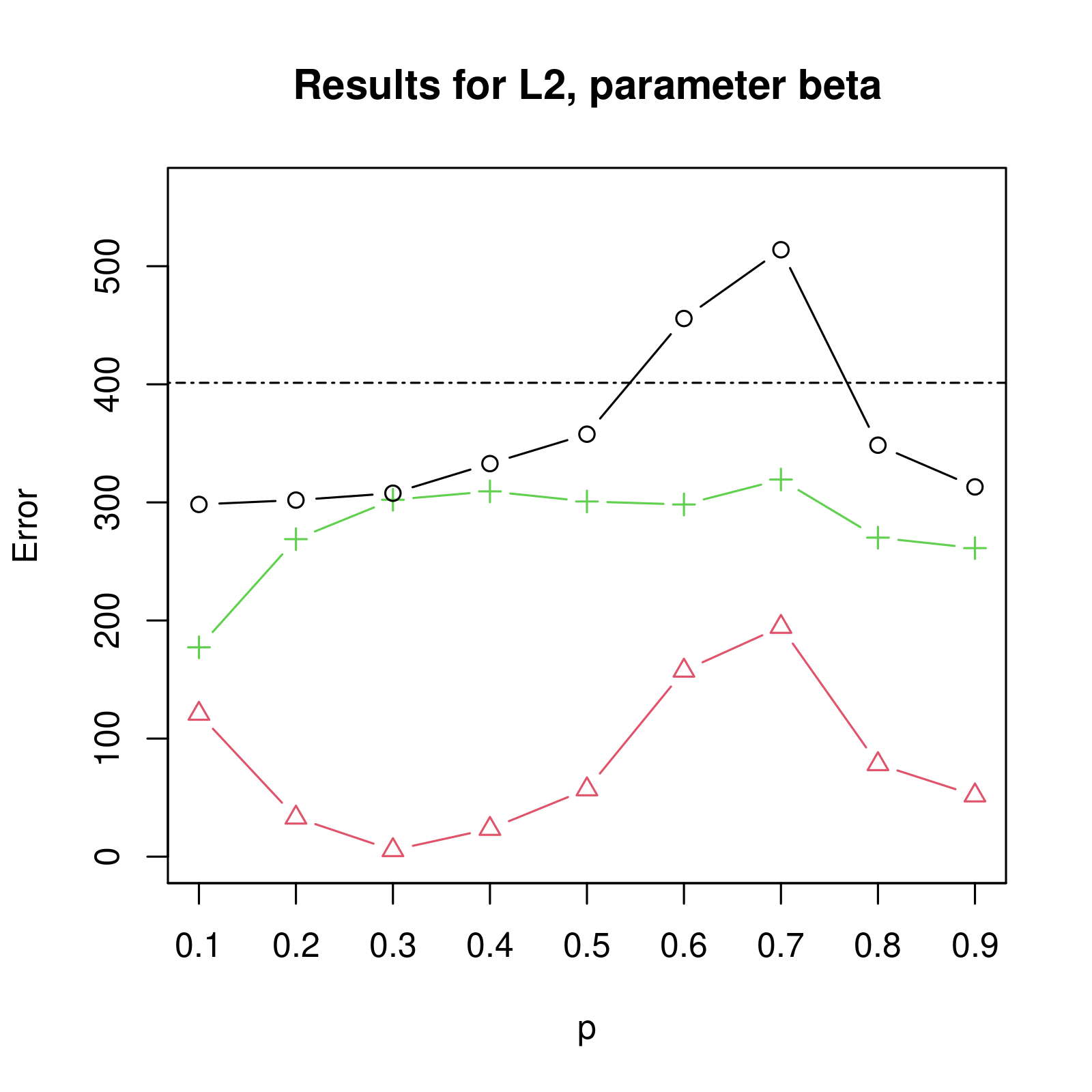}
    \includegraphics[width = 0.3\textwidth]{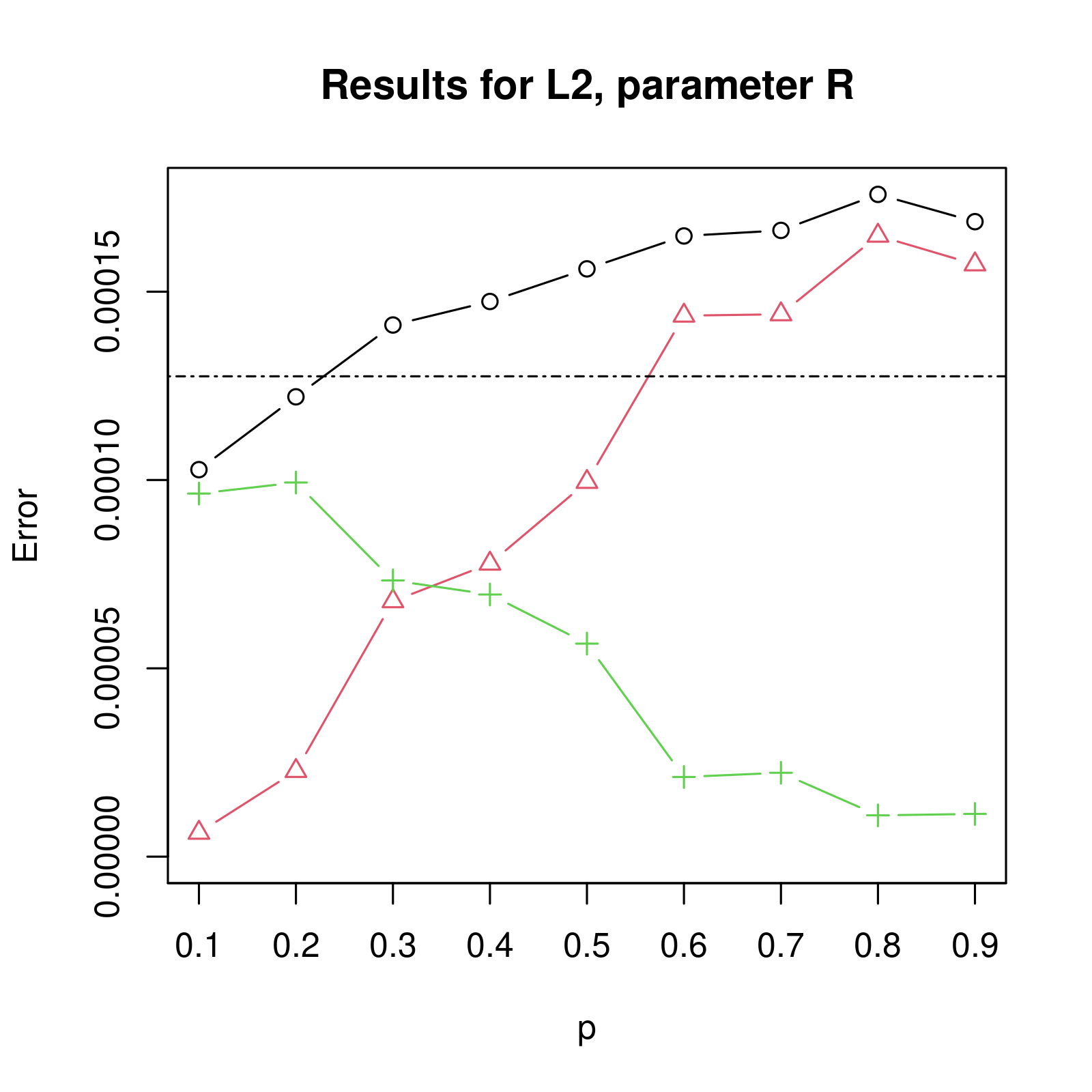}
    \includegraphics[width = 0.3\textwidth]{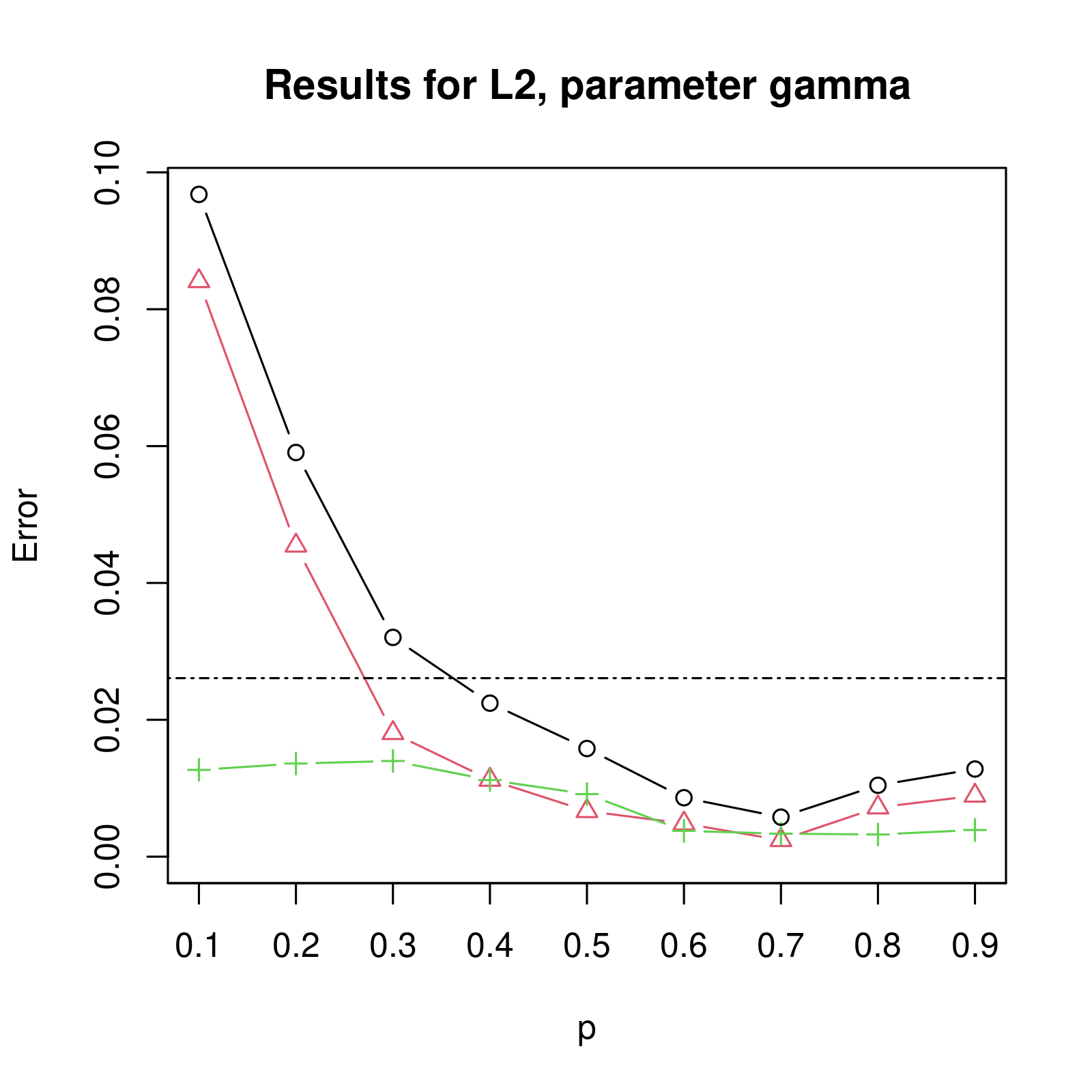}
    \includegraphics[width = 0.3\textwidth]{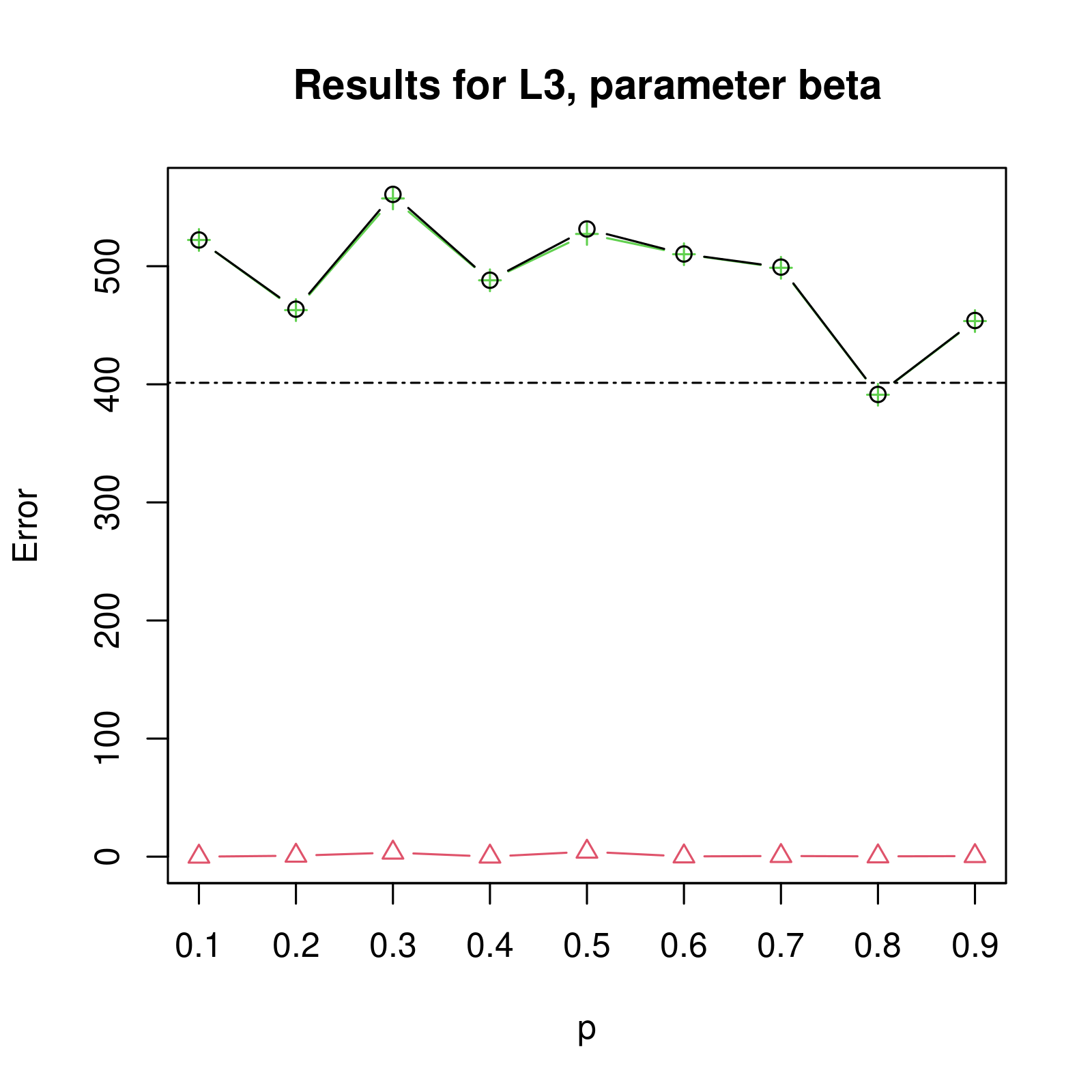}
    \includegraphics[width = 0.3\textwidth]{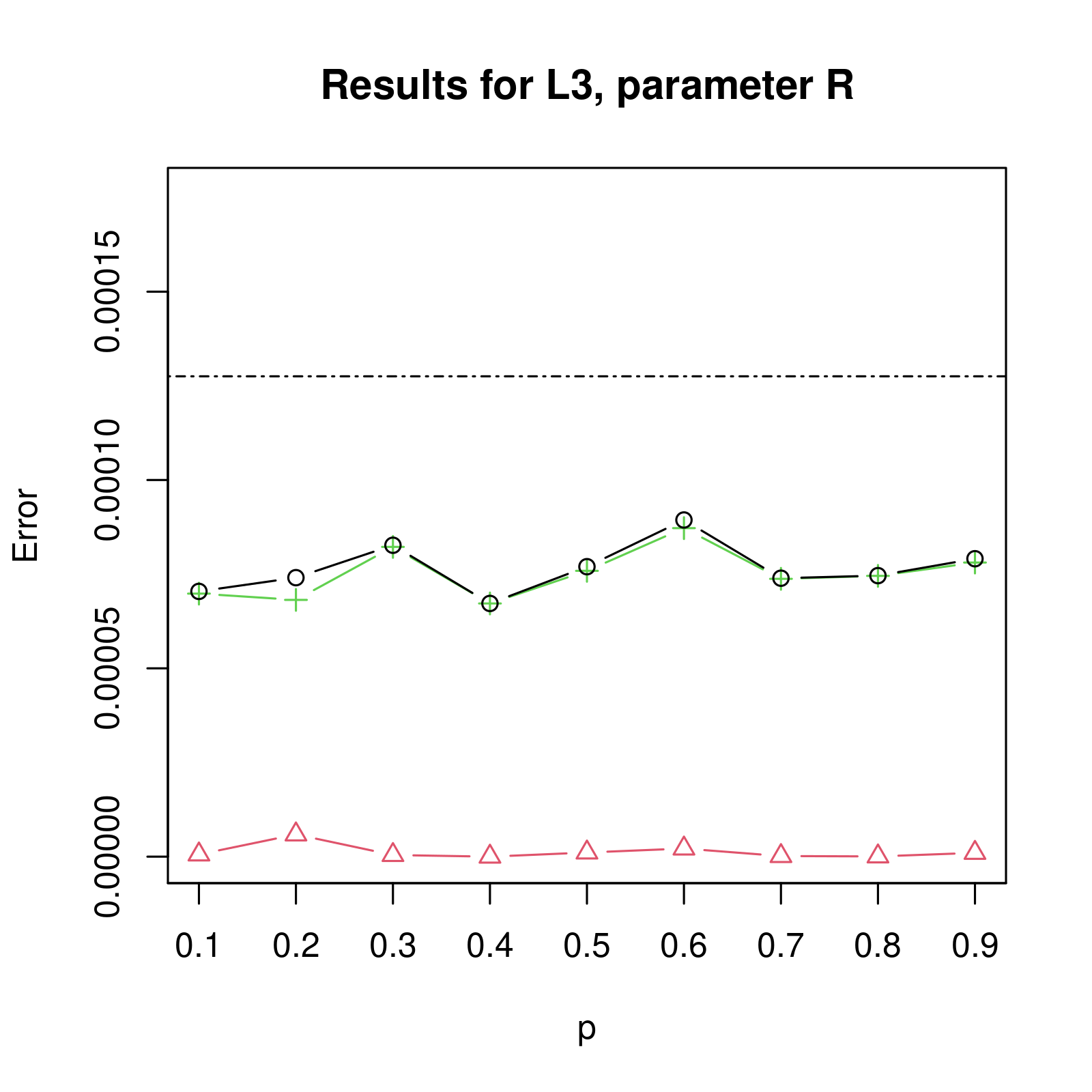}
    \includegraphics[width = 0.3\textwidth]{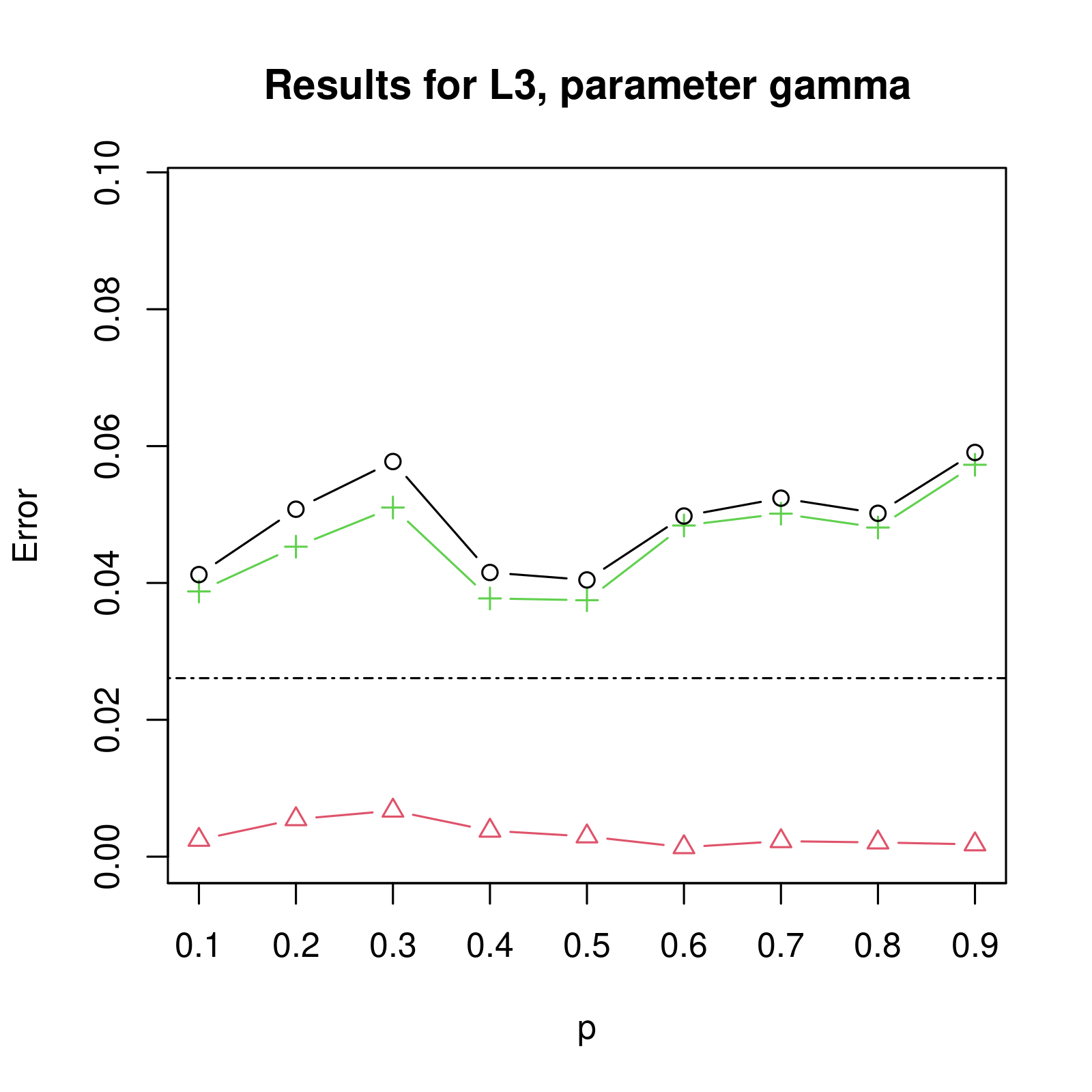}
    \caption{MSE, squared bias and variance for the hard-core model using PPL with the $\Loss_1$ loss function, when estimating the parameters 
    $\beta$, $R$ and $\gamma$. 
    Here $k = 100$, $N = 50$, $p = 0.1,0.2,\ldots,0.9$ and the PPL-weight is estimated in accordance with \eqref{e:WeightEst}. 
    The black lines with circles correspond to MSE, the red lines with triangles correspond to squared bias and the green lines with plus signs correspond to variance. The black dotted lines correspond to the Takacs-Fiksel estimates.
}
    \label{fig:strauss-est-L2L3}
\end{figure}

\subsection{Geyer saturation process}
\label{sec:app_geyer}
See Section \ref{sec:geyer_sims} for a description of the parameters and grid used for the Geyer saturation process.
In Section \ref{sec:geyer_sims}, in Figure \ref{fig:geyer_p-L3}, the results for the Strauss process for the $\Loss_3$ loss function for the PPL-weight estimate $p$ are shown. 
Here, the corresponding results for the $\Loss_1$ and $\Loss_2$ loss functions can be found in Figure \ref{fig:geyer_p-L1L2}. 
Further, Figure \ref{fig:geyer_p/(1-p)-L3} in Section \ref{sec:geyer_sims} shows the results for the PPL-weight estimate $p/(1-p)$ for the $\Loss_3$ loss function. 
Again, the results for the $\Loss_1$ and $\Loss_2$ loss functions can be found in Figure \ref{fig:geyer_p/(1-p)-L1L2}. 
Lastly, the results using weight estimation according to \eqref{e:WeightEst}, and the $\Loss_3$ loss function can be seen in Figure \ref{fig:geyer_w-L3} in Section \ref{sec:geyer_sims}.
The results for the $\Loss_1$ and $\Loss_2$ loss functions can be found in Figure \ref{fig:geyer_w-L1L2}.

\begin{figure}[!htb]
    \centering
    \includegraphics[width = 0.3\textwidth]{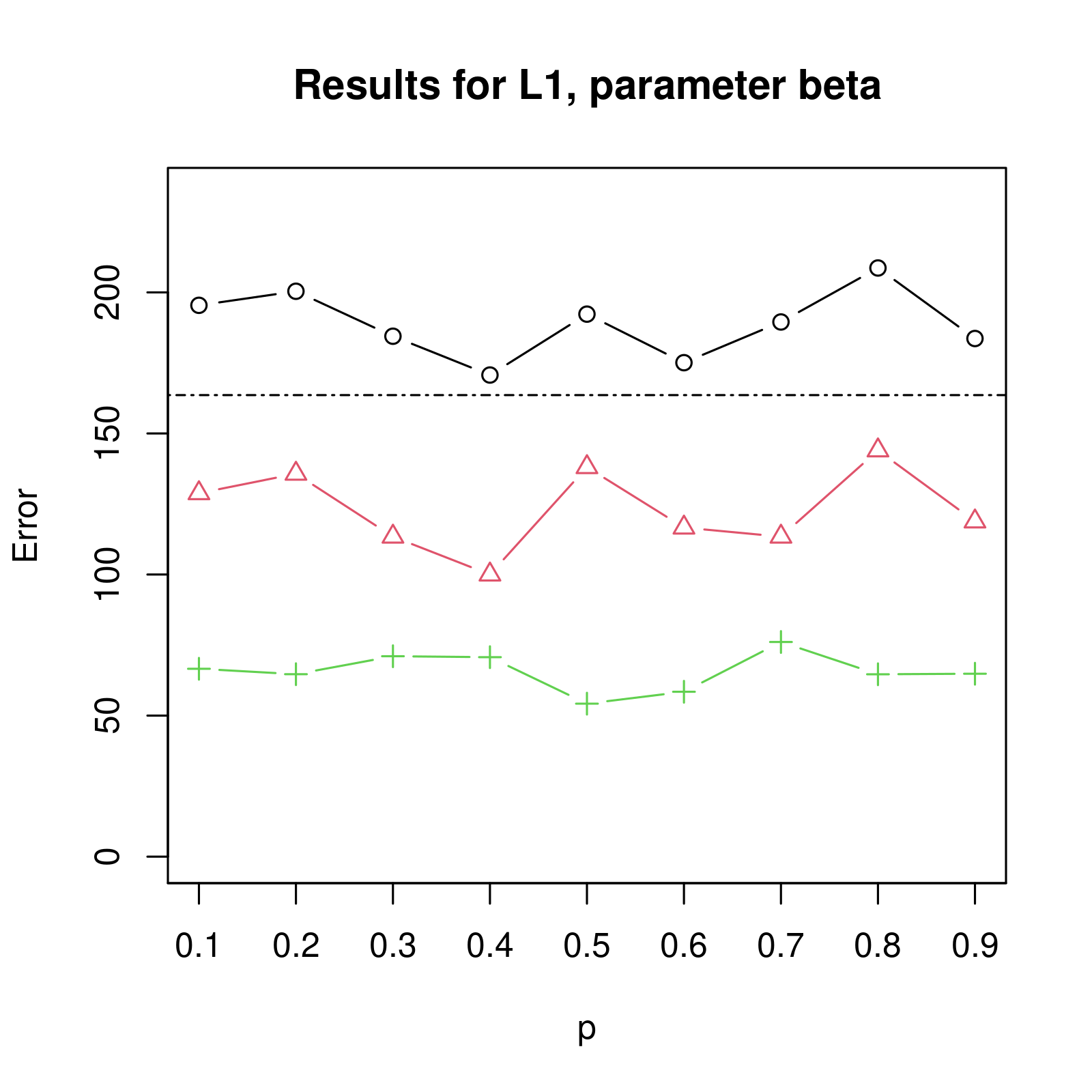}
    \includegraphics[width = 0.3\textwidth]{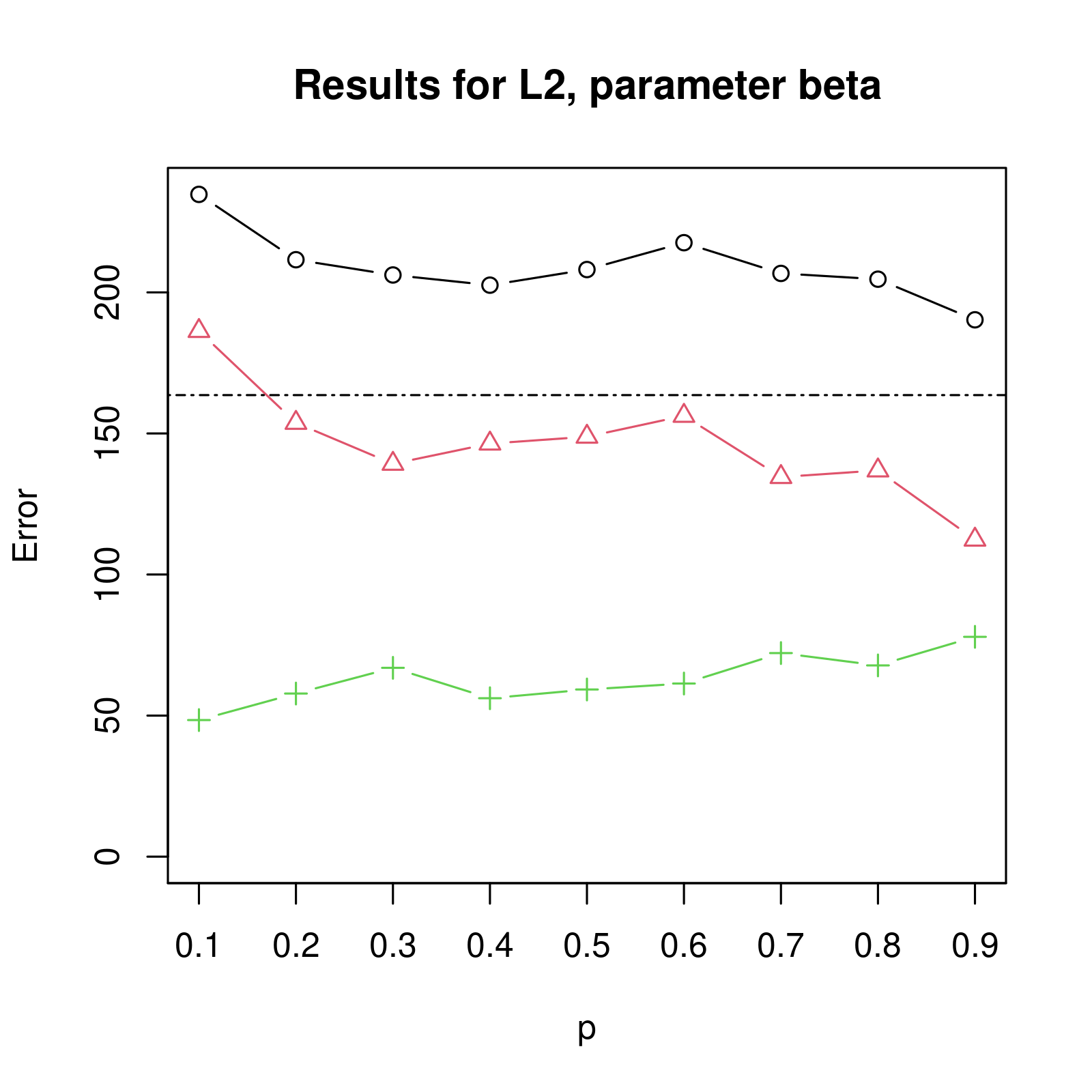}
    
    \includegraphics[width = 0.3\textwidth]{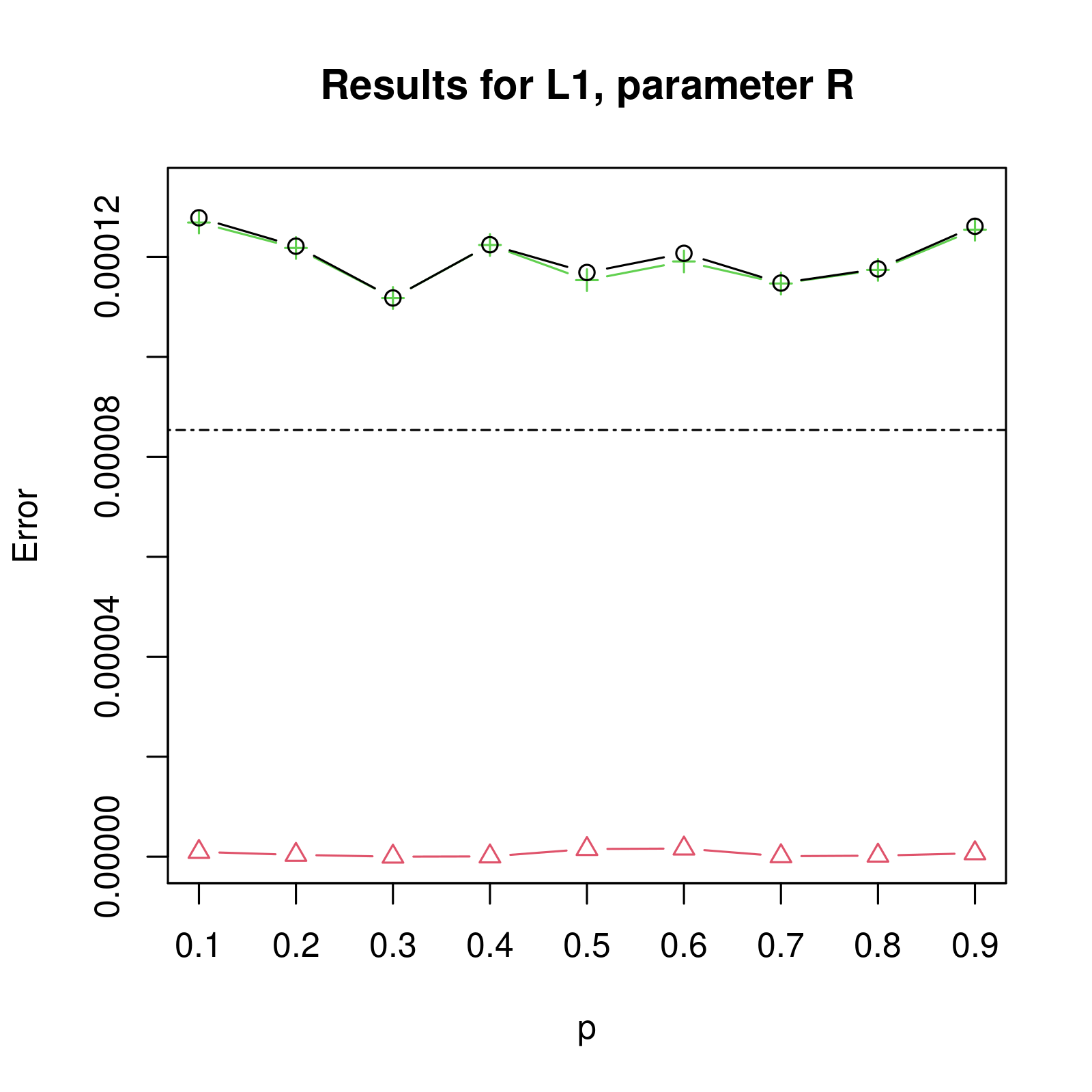}
    \includegraphics[width = 0.3\textwidth]{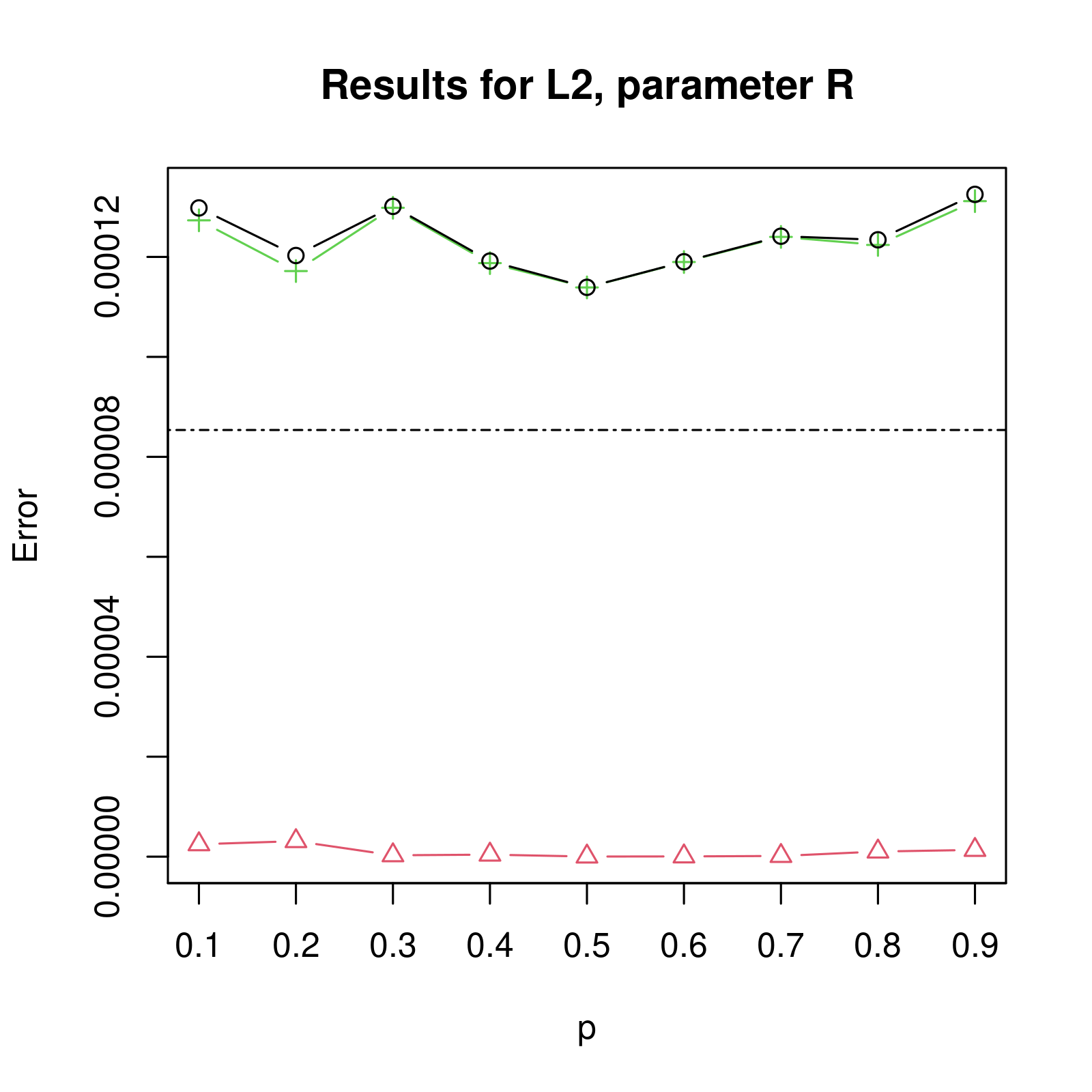}
    
    \includegraphics[width = 0.3\textwidth]{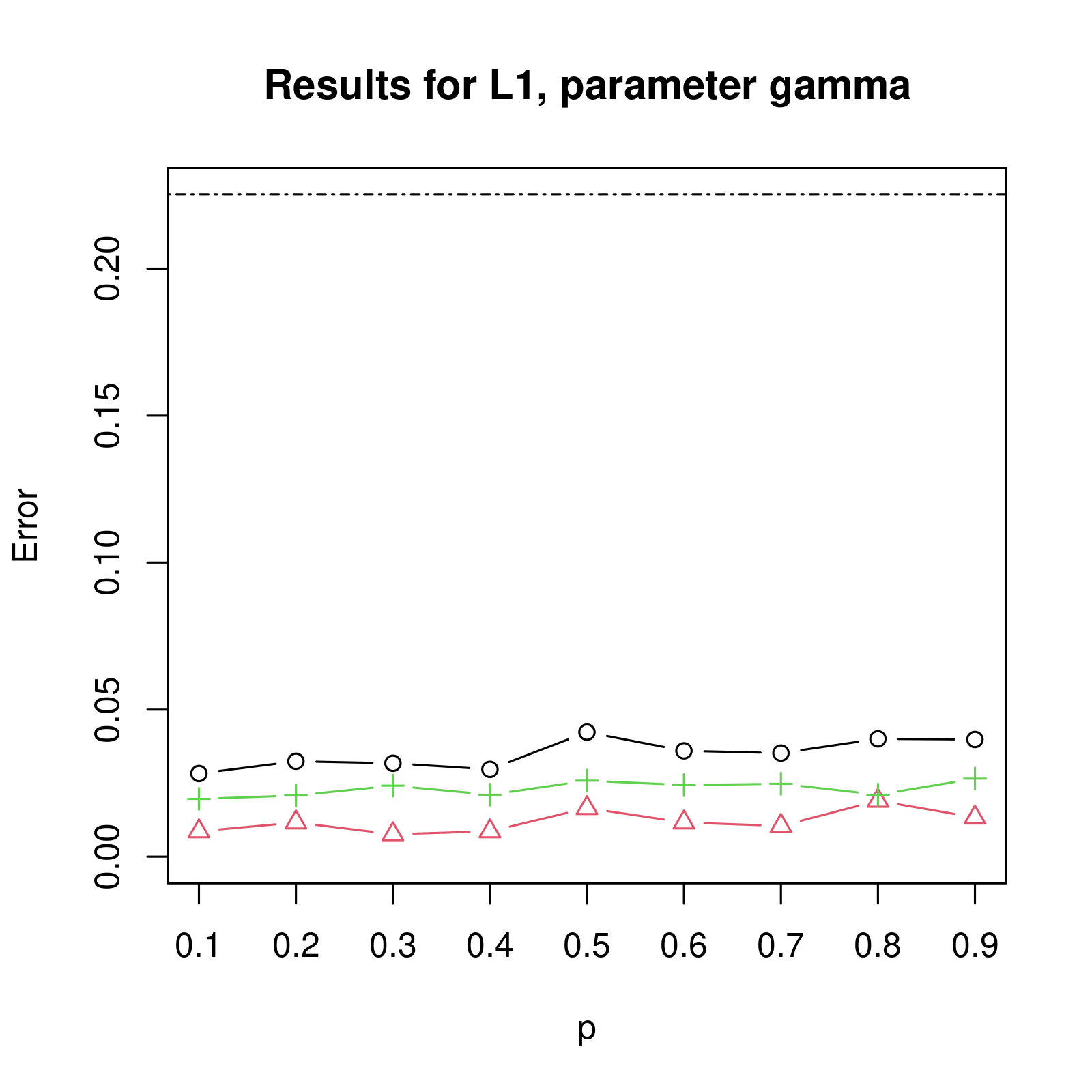}
    \includegraphics[width = 0.3\textwidth]{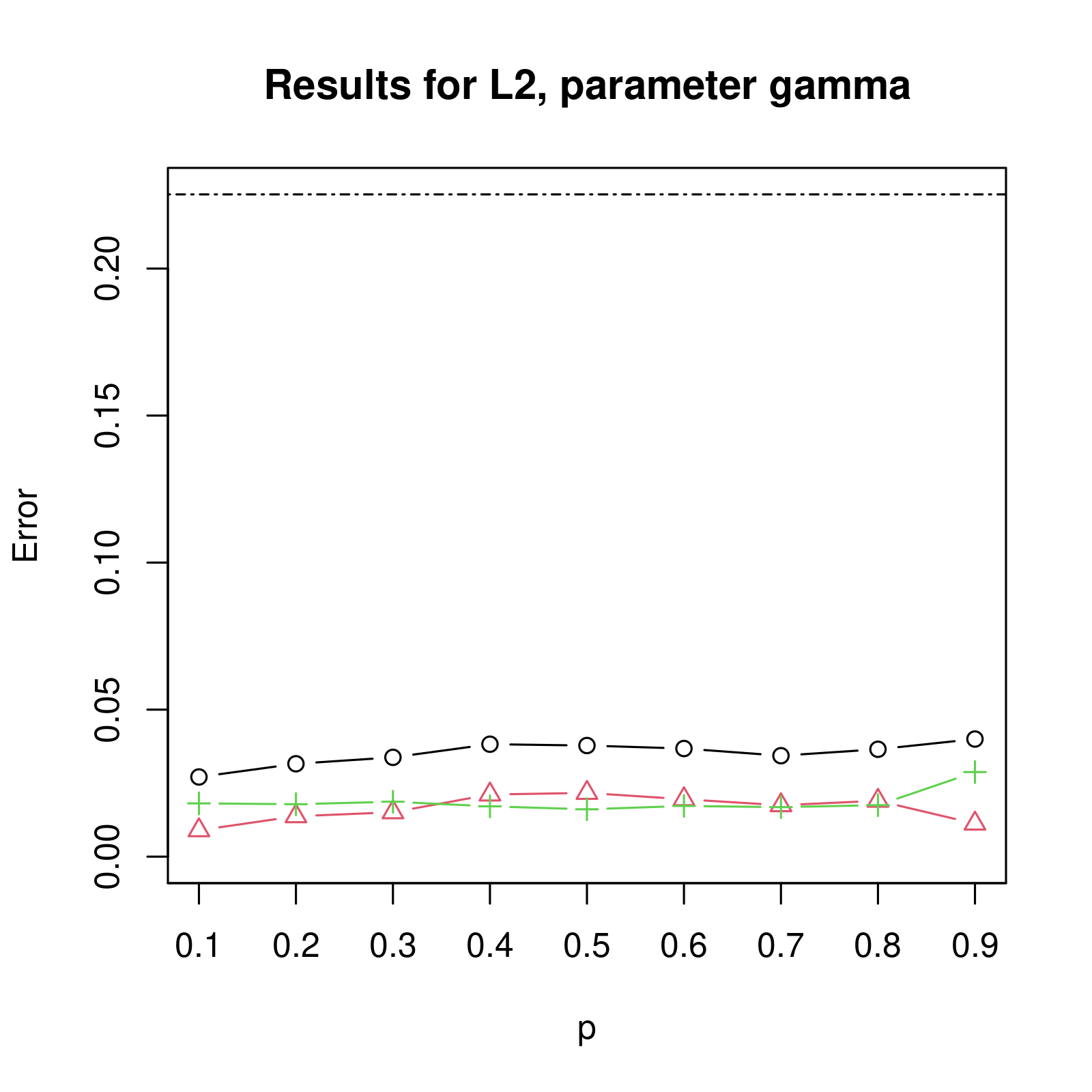}
    
    \includegraphics[width = 0.3\textwidth]{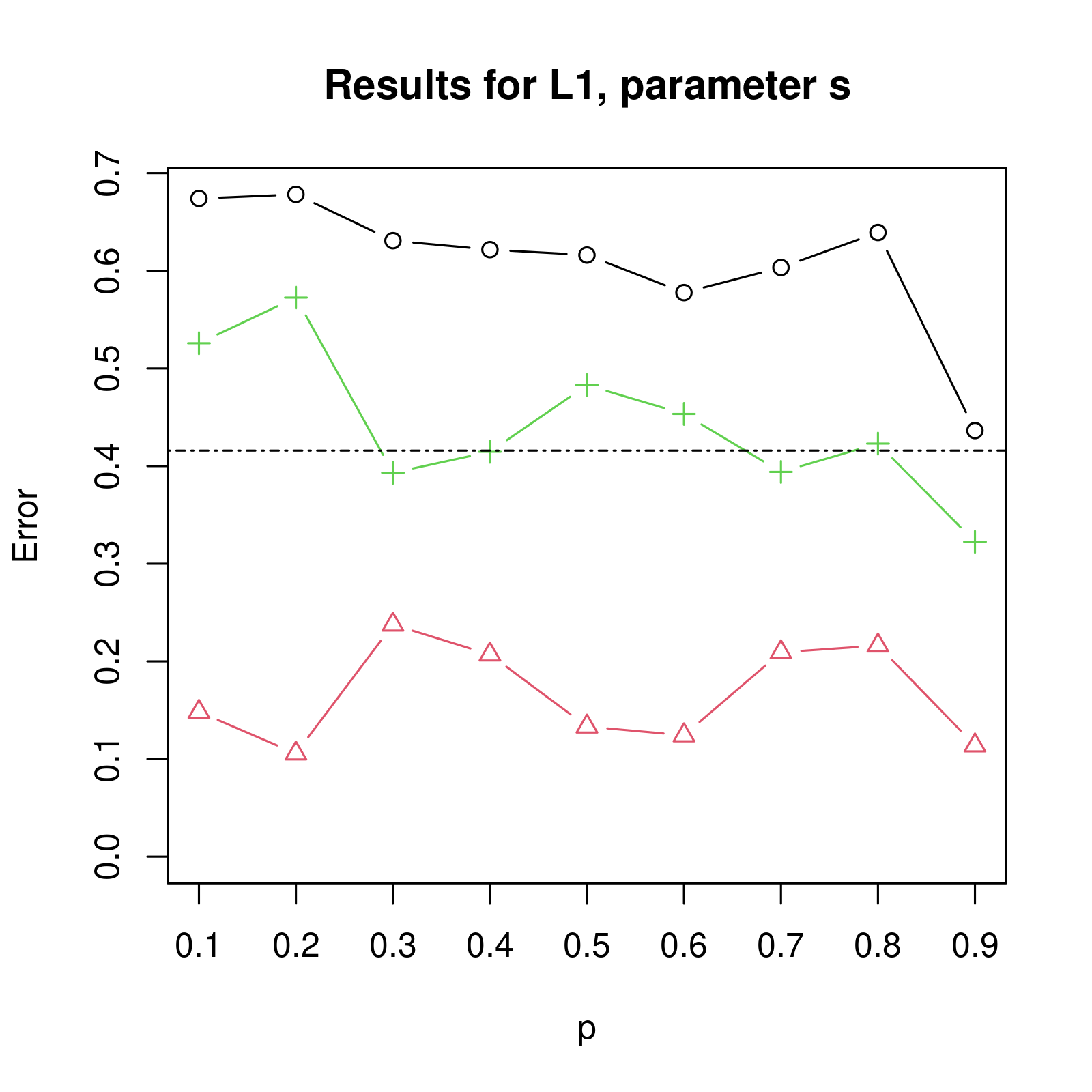}
    \includegraphics[width = 0.3\textwidth]{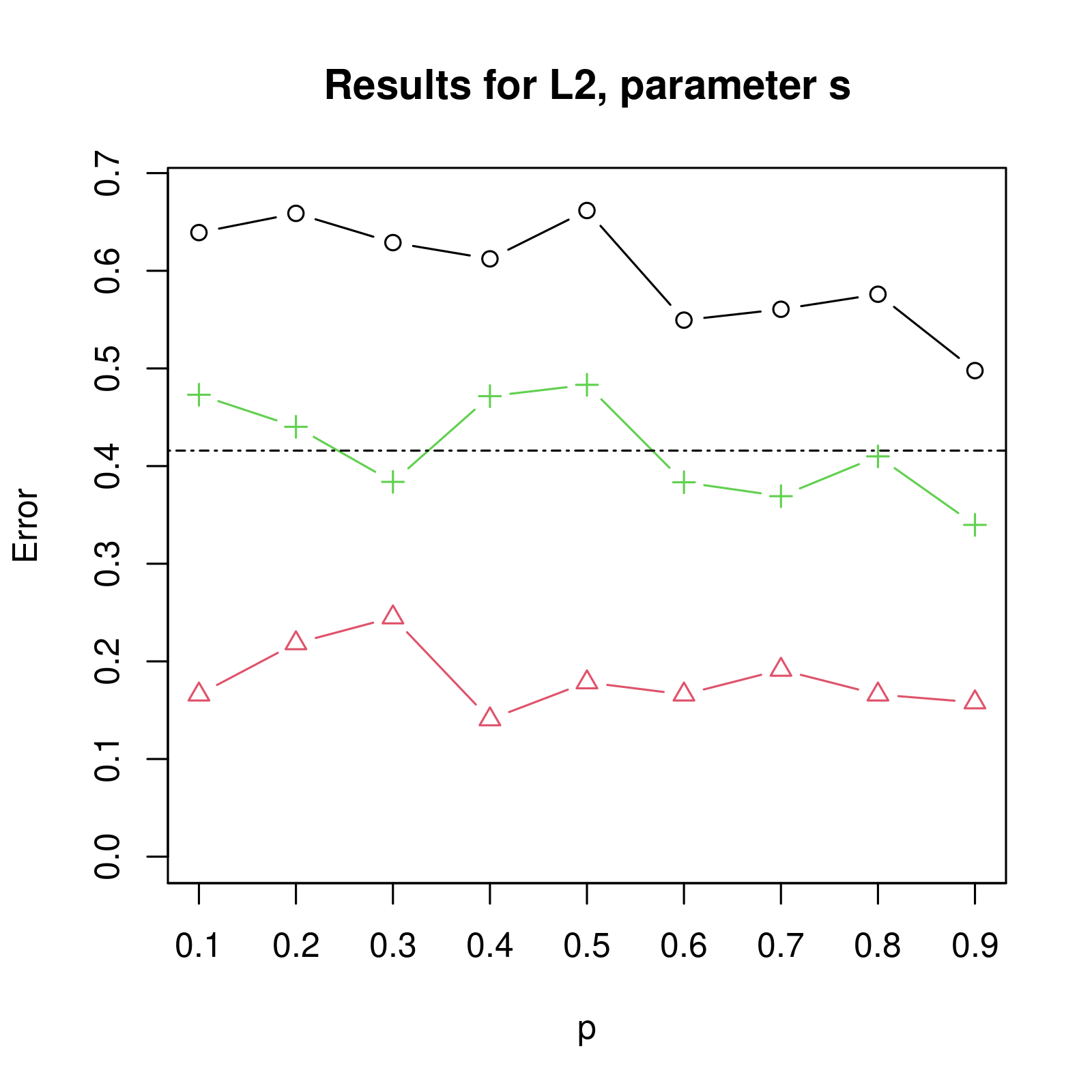}
    \caption{MSE, squared bias and variance for the Geyer saturation model using PPL with the $\Loss_1$ and $\Loss_2$ loss functions, when estimating the parameters 
    $\beta$, $R$, $\gamma$ and $s$. 
    Here $k = 100$, $N = 100$, $p = 0.1,0.2,\ldots,0.9$ and the PPL-weight is set to $p$. 
    The black lines with circles correspond to MSE, the red lines with triangles correspond to squared bias and the green lines with plus signs correspond to variance. The black dotted lines correspond to the Takacs-Fiksel estimates.
}
    \label{fig:geyer_p-L1L2}
\end{figure}

\begin{figure}[!htb]
    \centering
    \includegraphics[width = 0.3\textwidth]{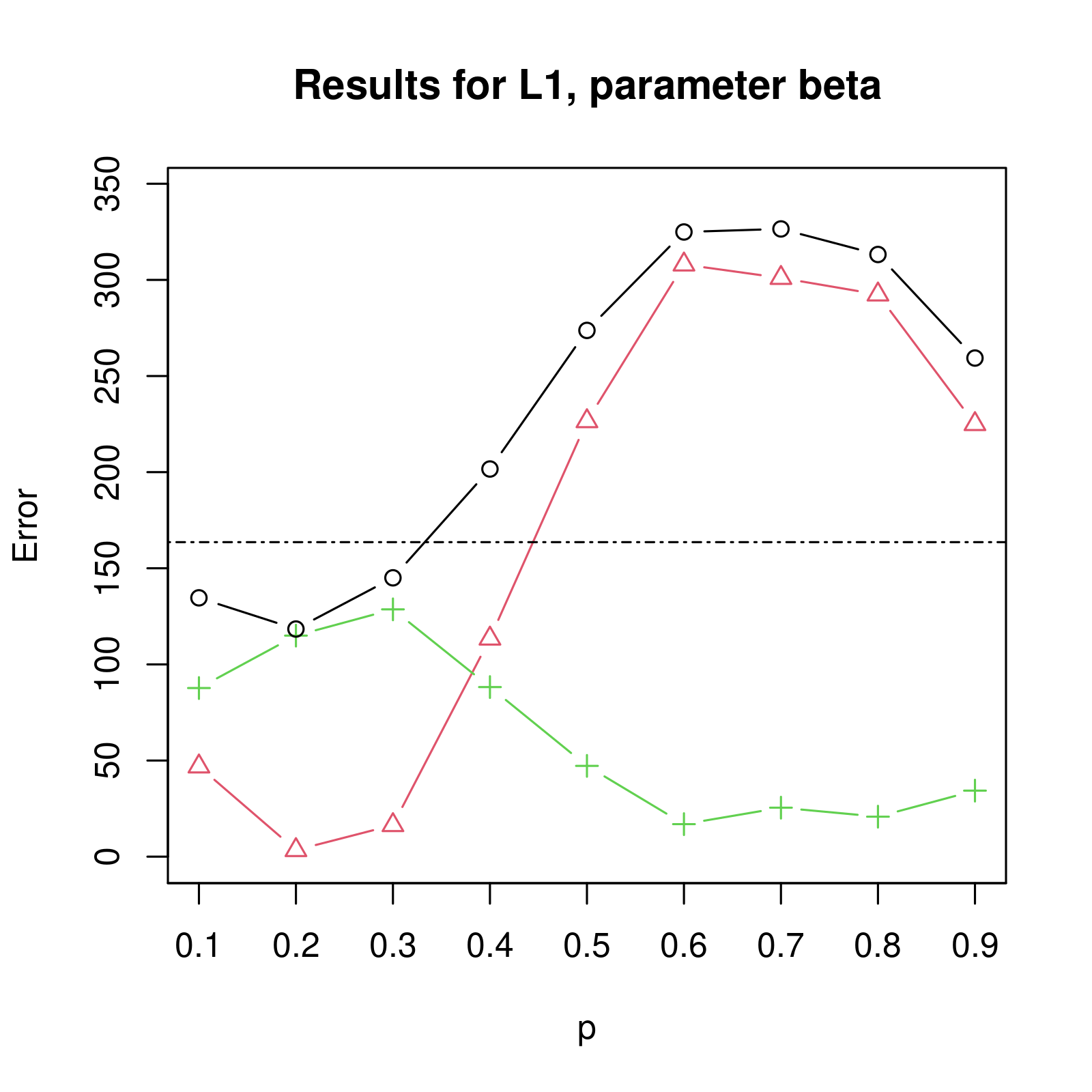}
    \includegraphics[width = 0.3\textwidth]{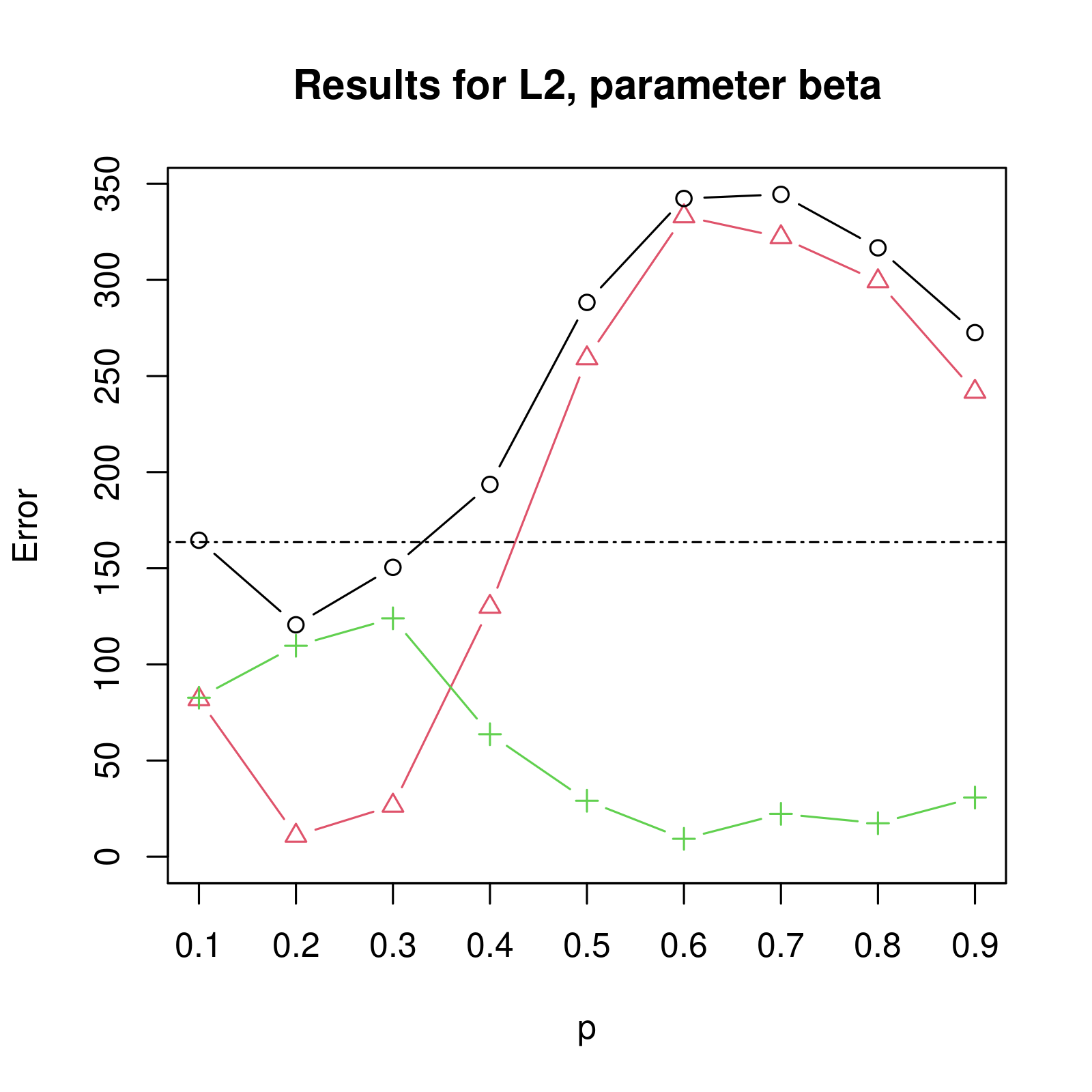}
    
    \includegraphics[width = 0.3\textwidth]{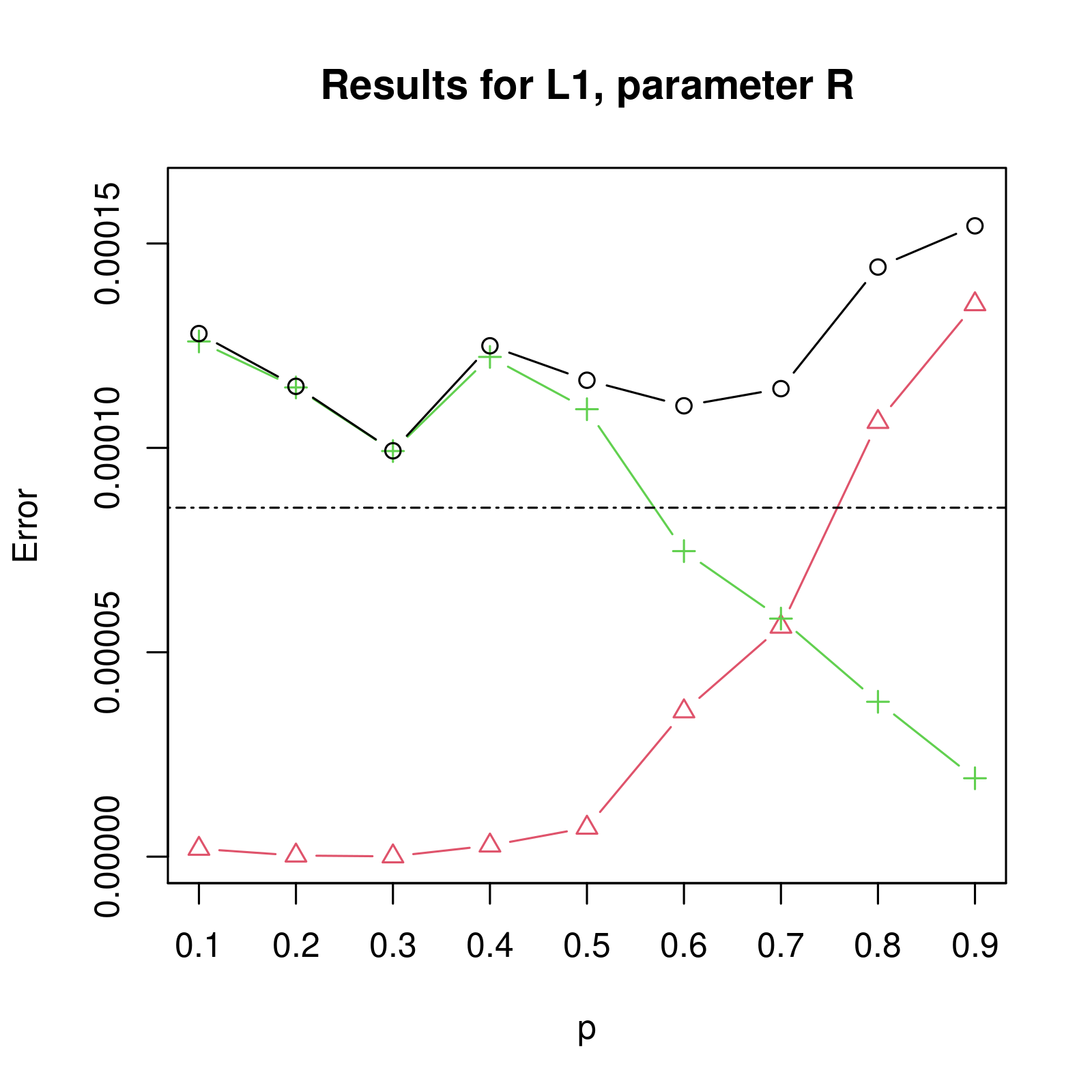}
    \includegraphics[width = 0.3\textwidth]{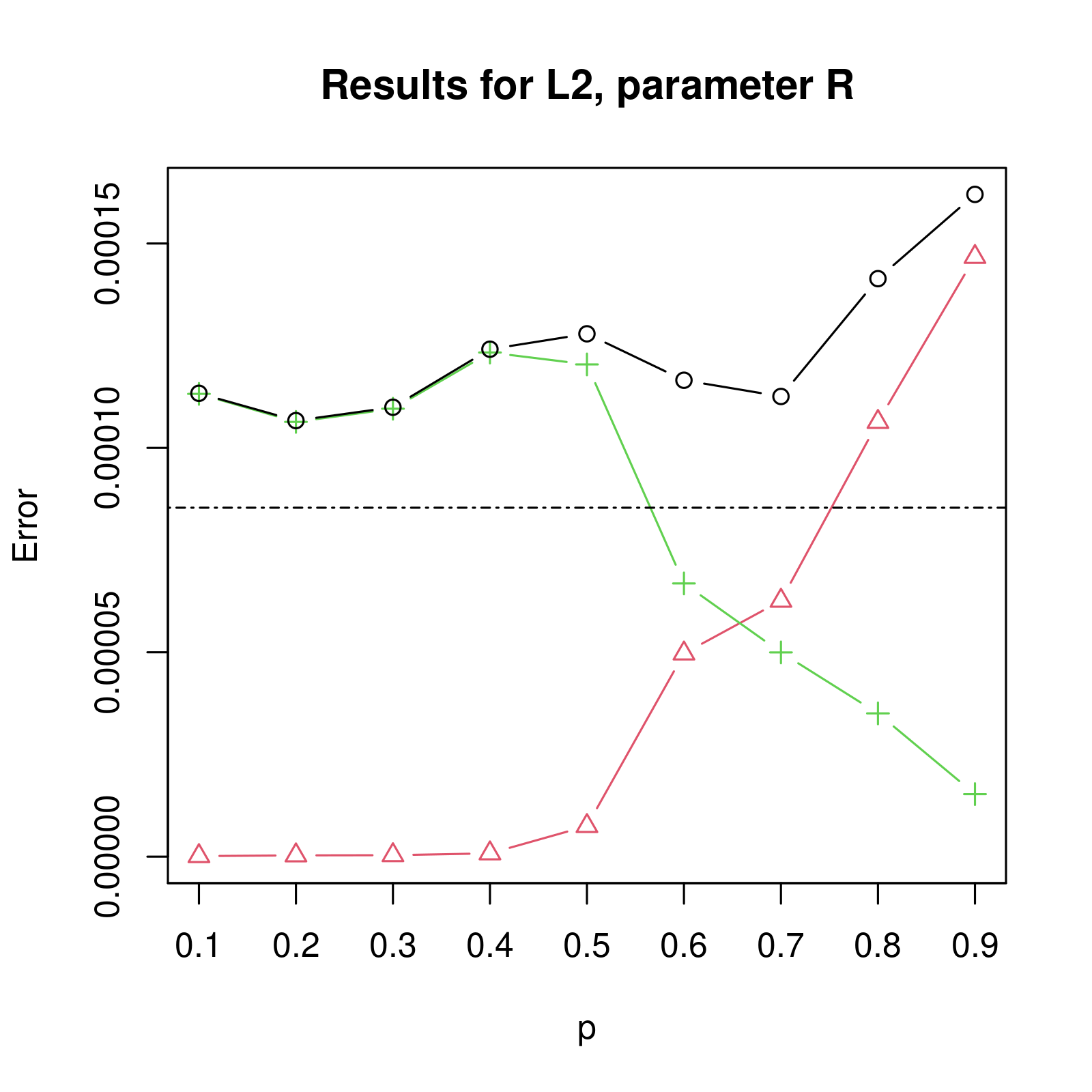}
    
    \includegraphics[width = 0.3\textwidth]{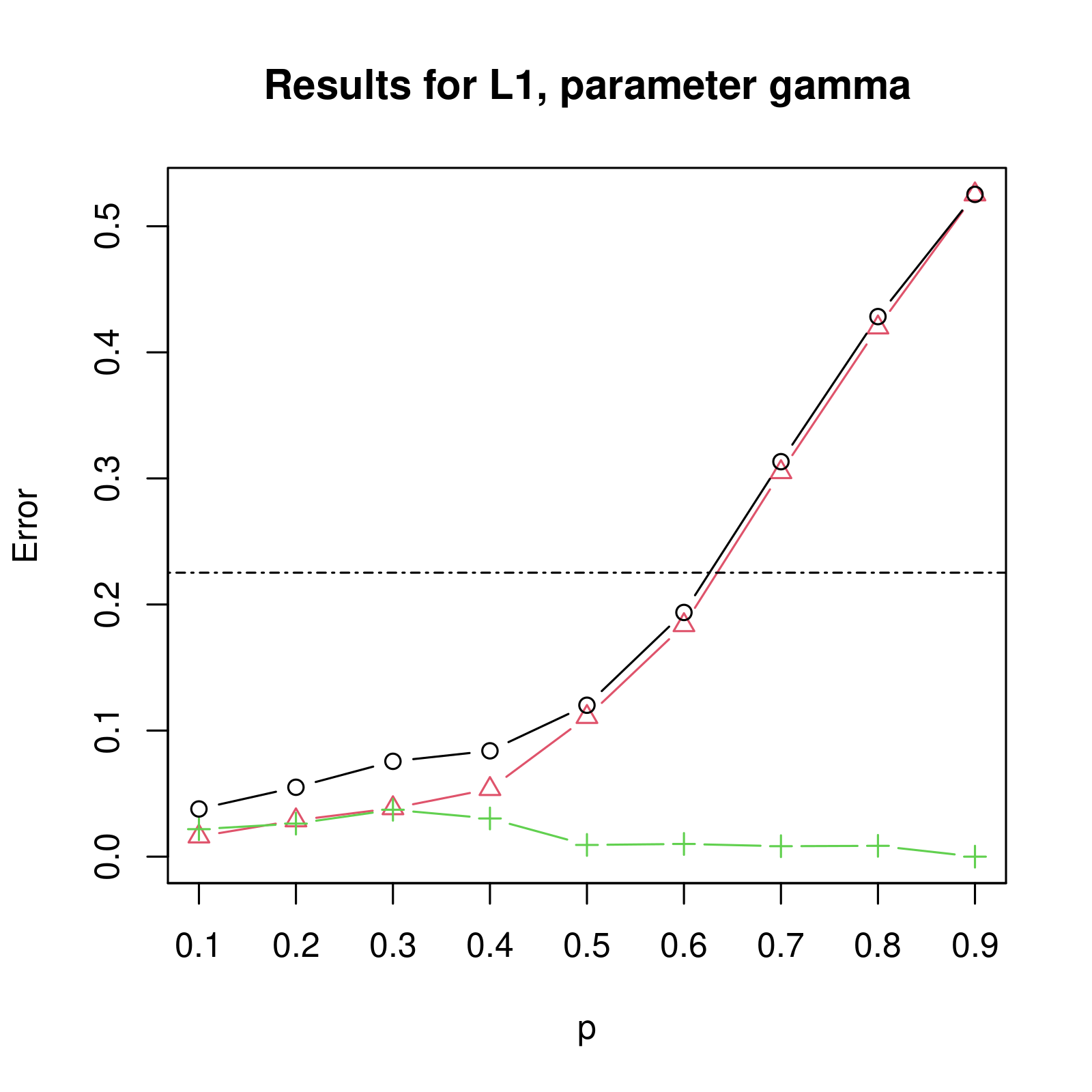}
    \includegraphics[width = 0.3\textwidth]{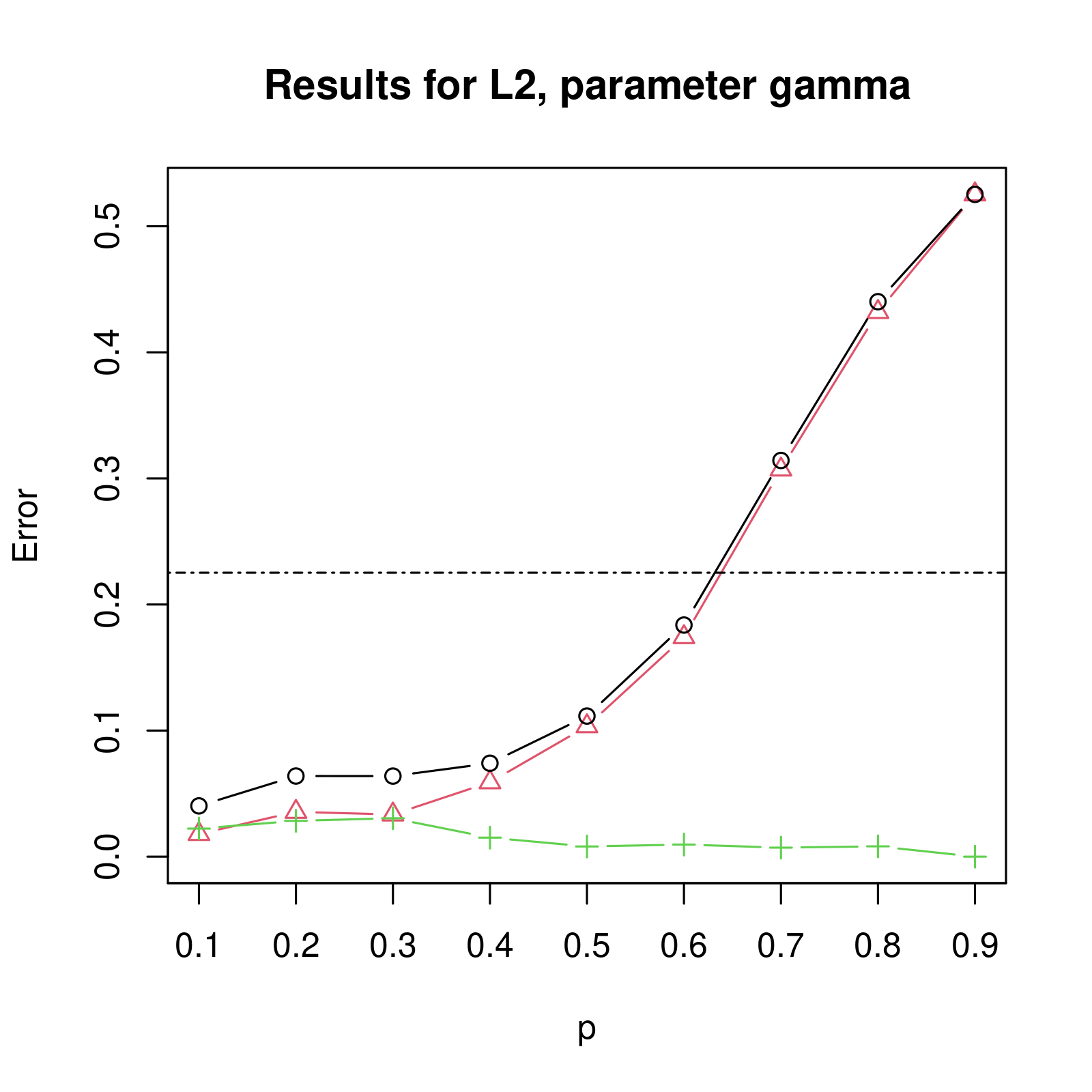}
    
    \includegraphics[width = 0.3\textwidth]{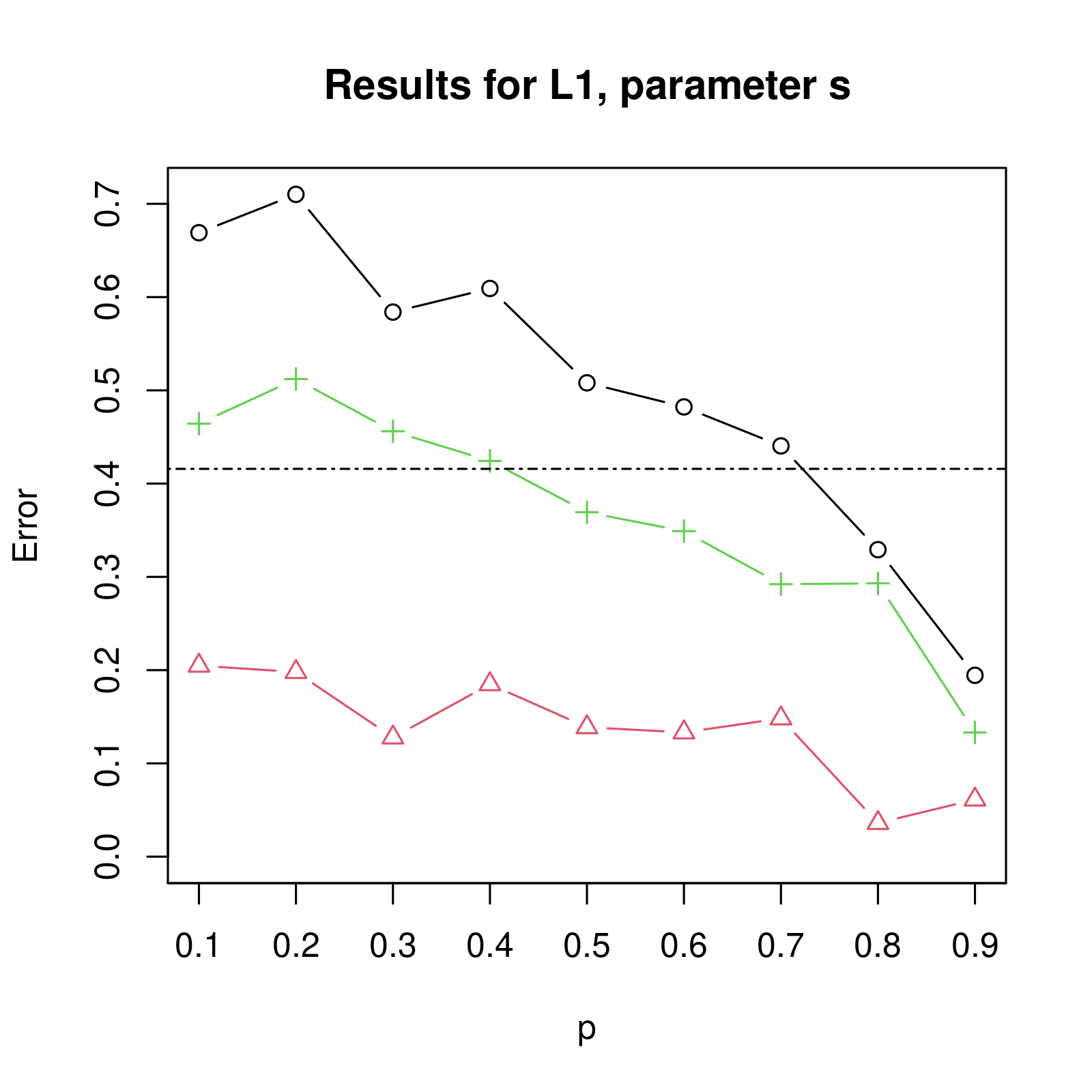}
    \includegraphics[width = 0.3\textwidth]{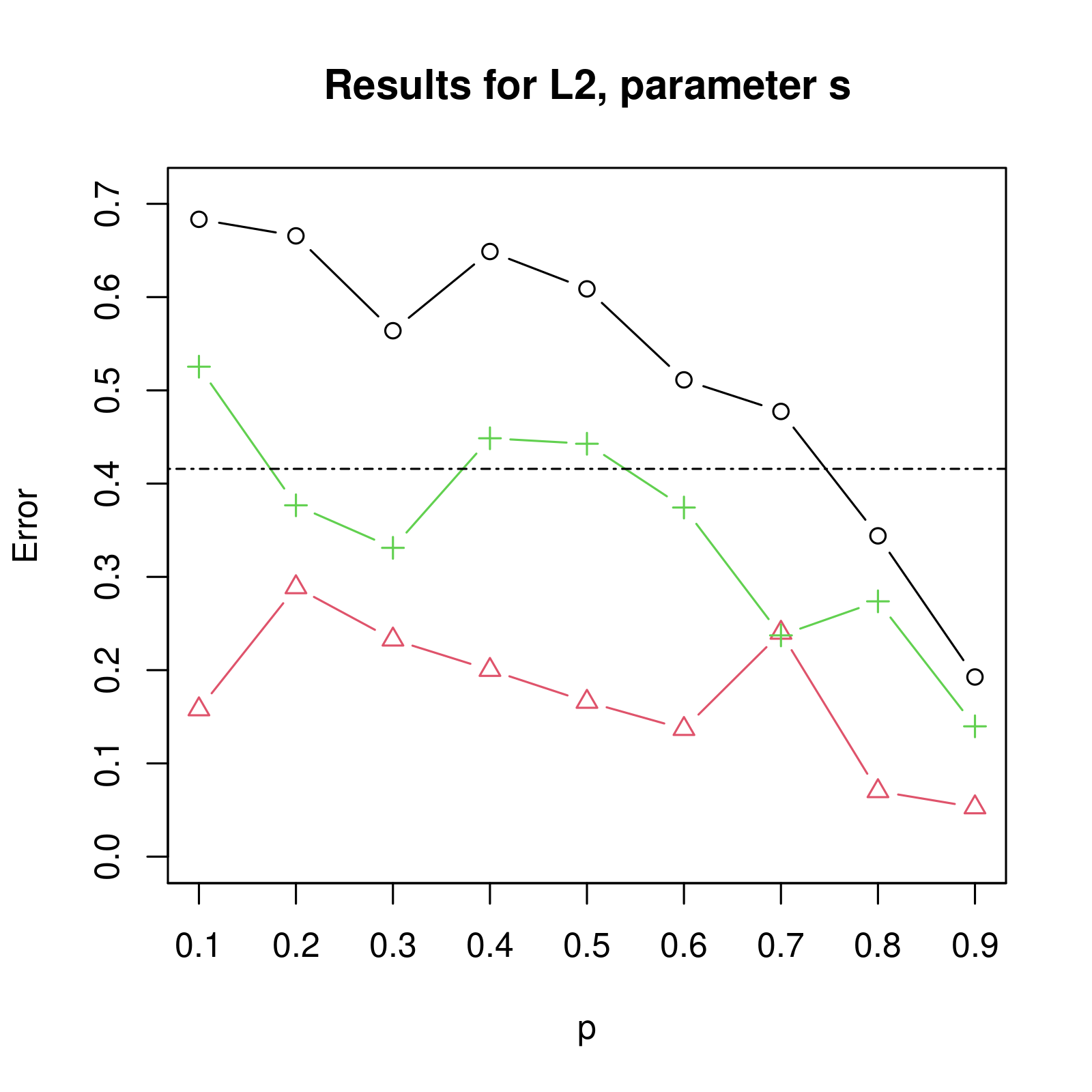}
    \caption{MSE, squared bias and variance for the Geyer saturation model using PPL with the $\Loss_1$ and $\Loss_2$ loss functions, when estimating the parameters 
    $\beta$, $R$, $\gamma$ and $s$. 
    Here $k = 100$, $N = 100$, $p = 0.1,0.2,\ldots,0.9$ and the PPL-weight is set to $p/(1-p)$. 
    The black lines with circles correspond to MSE, the red lines with triangles correspond to squared bias and the green lines with plus signs correspond to variance. The black dotted lines correspond to the Takacs-Fiksel estimates.
}
    \label{fig:geyer_p/(1-p)-L1L2}
\end{figure}

\begin{figure}[!htb]
    \centering
    \includegraphics[width = 0.3\textwidth]{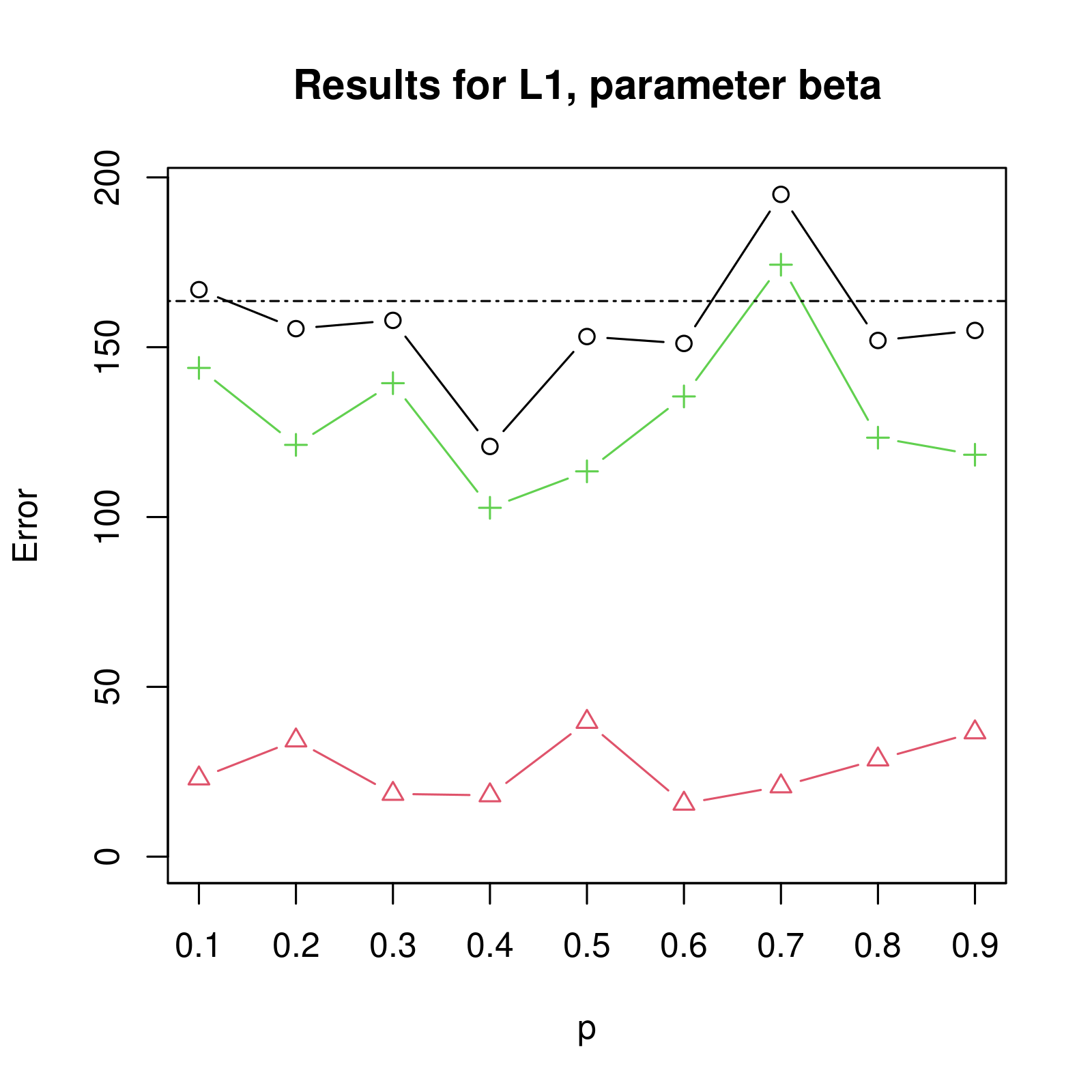}
    \includegraphics[width = 0.3\textwidth]{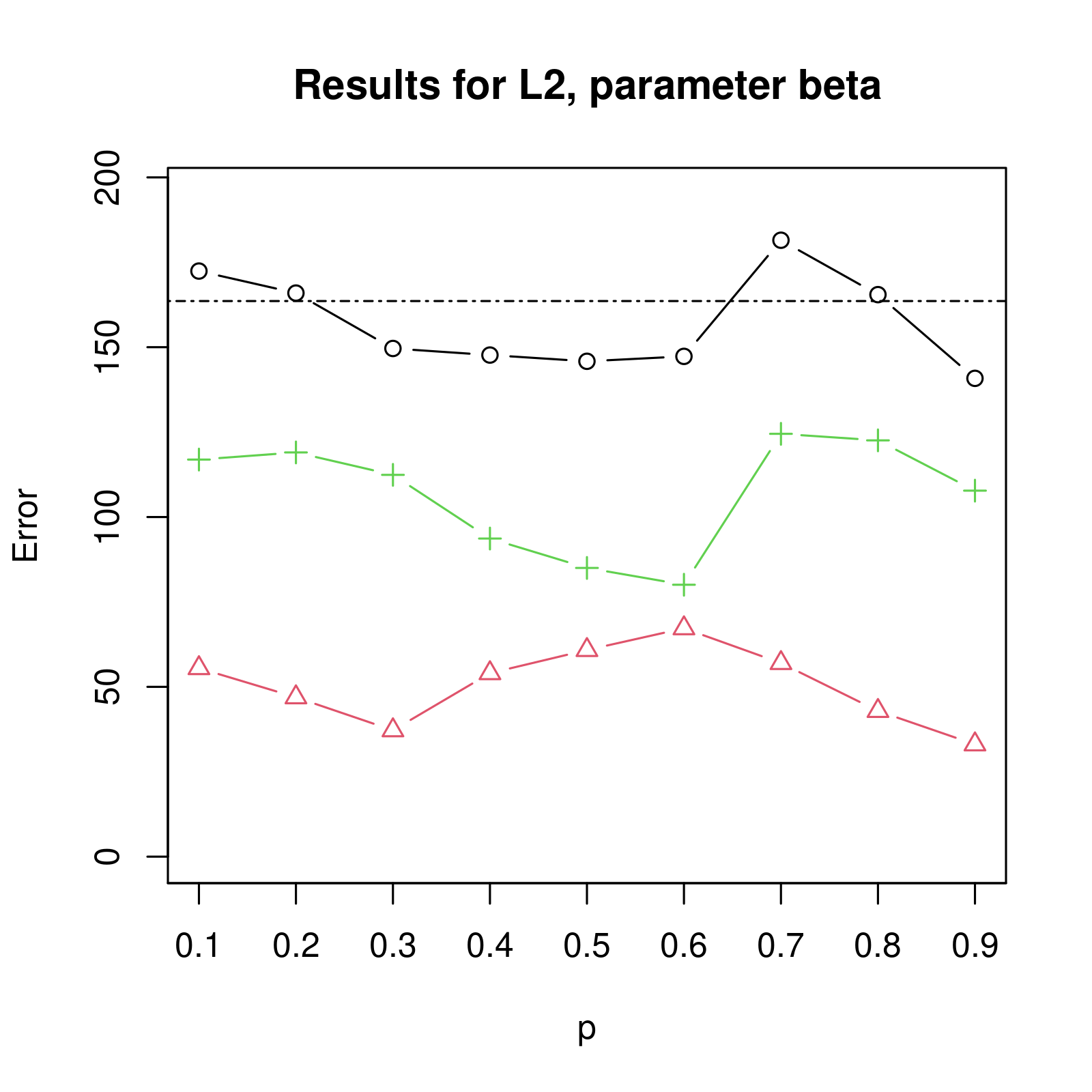}
    
    \includegraphics[width = 0.3\textwidth]{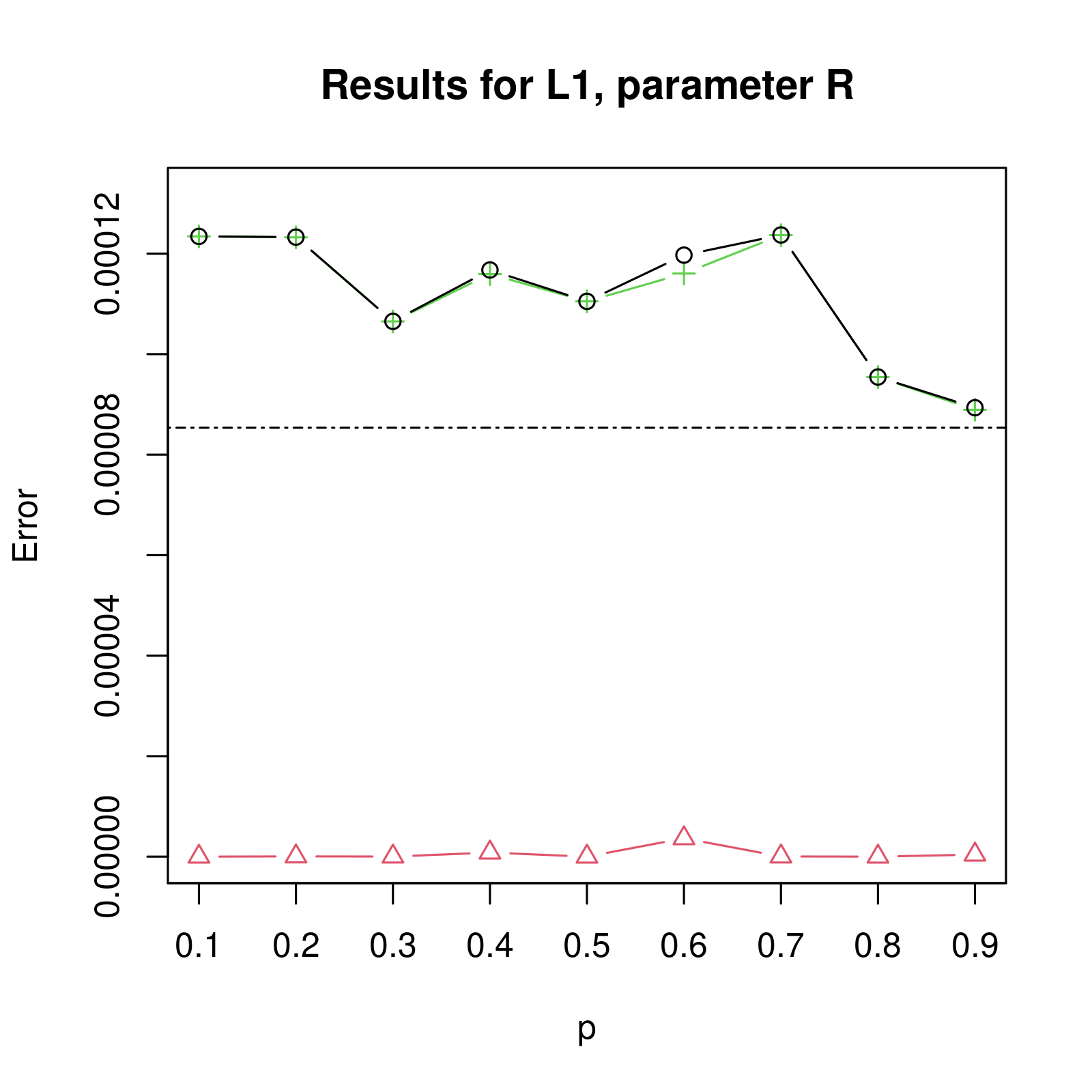}
    \includegraphics[width = 0.3\textwidth]{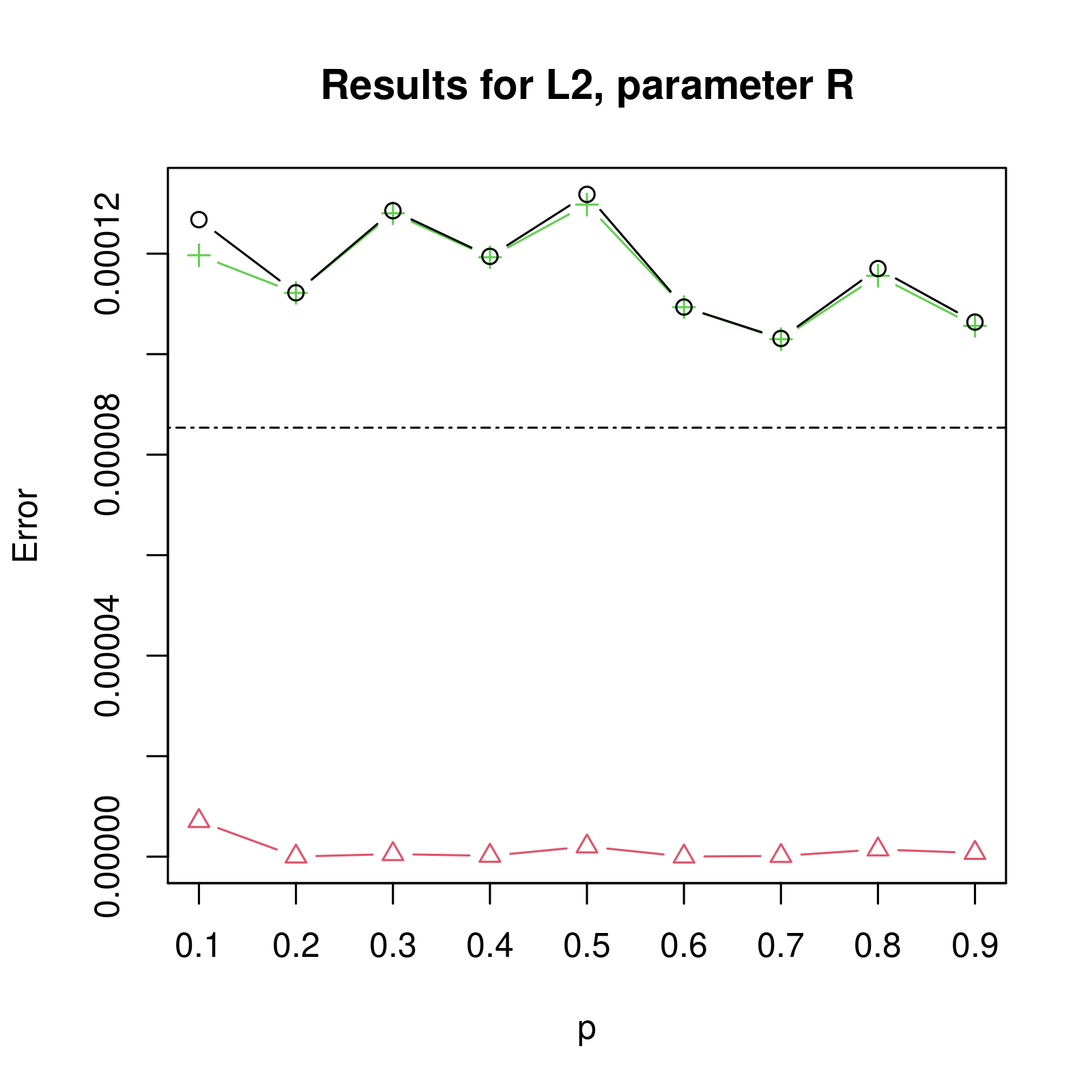}
    
    \includegraphics[width = 0.3\textwidth]{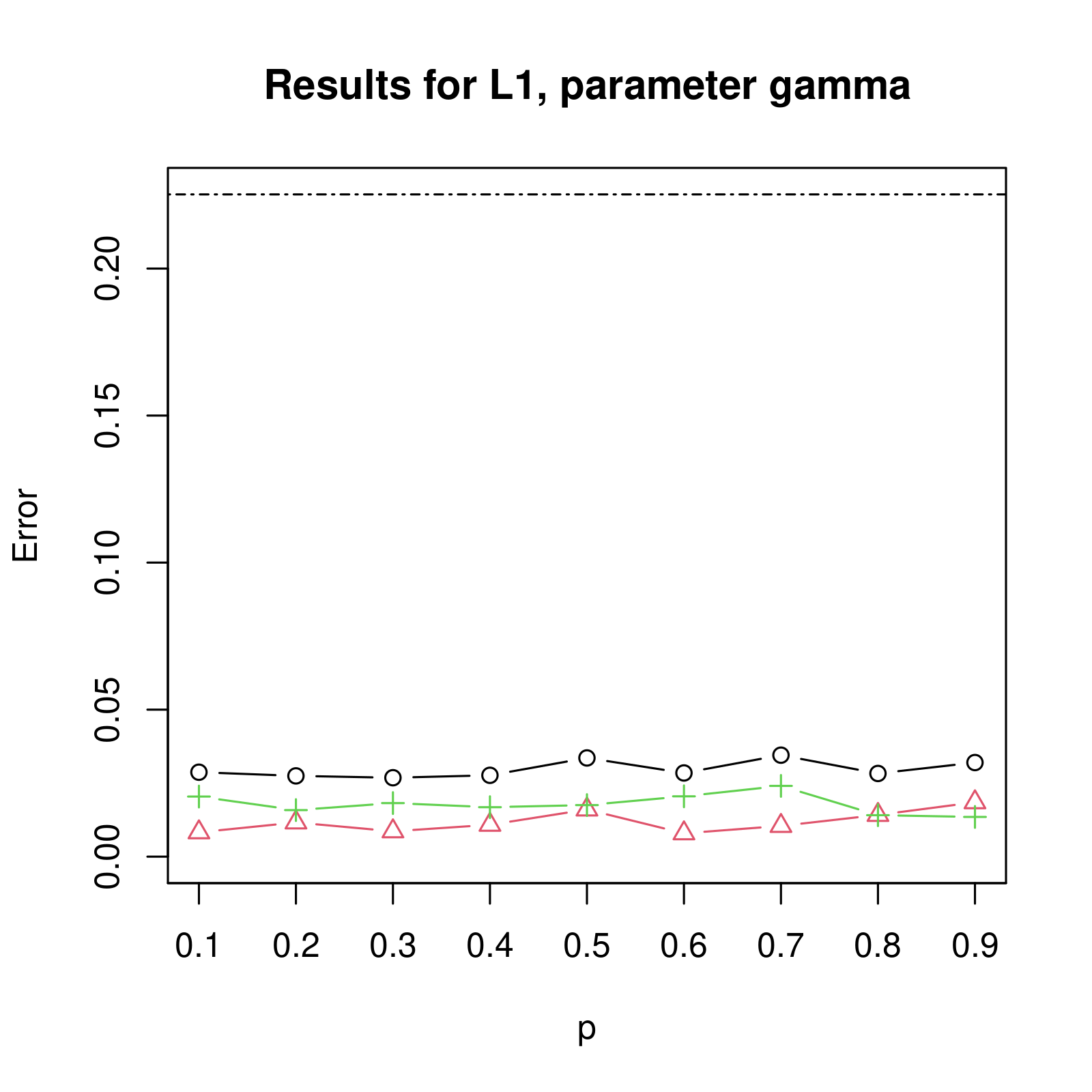}
    \includegraphics[width = 0.3\textwidth]{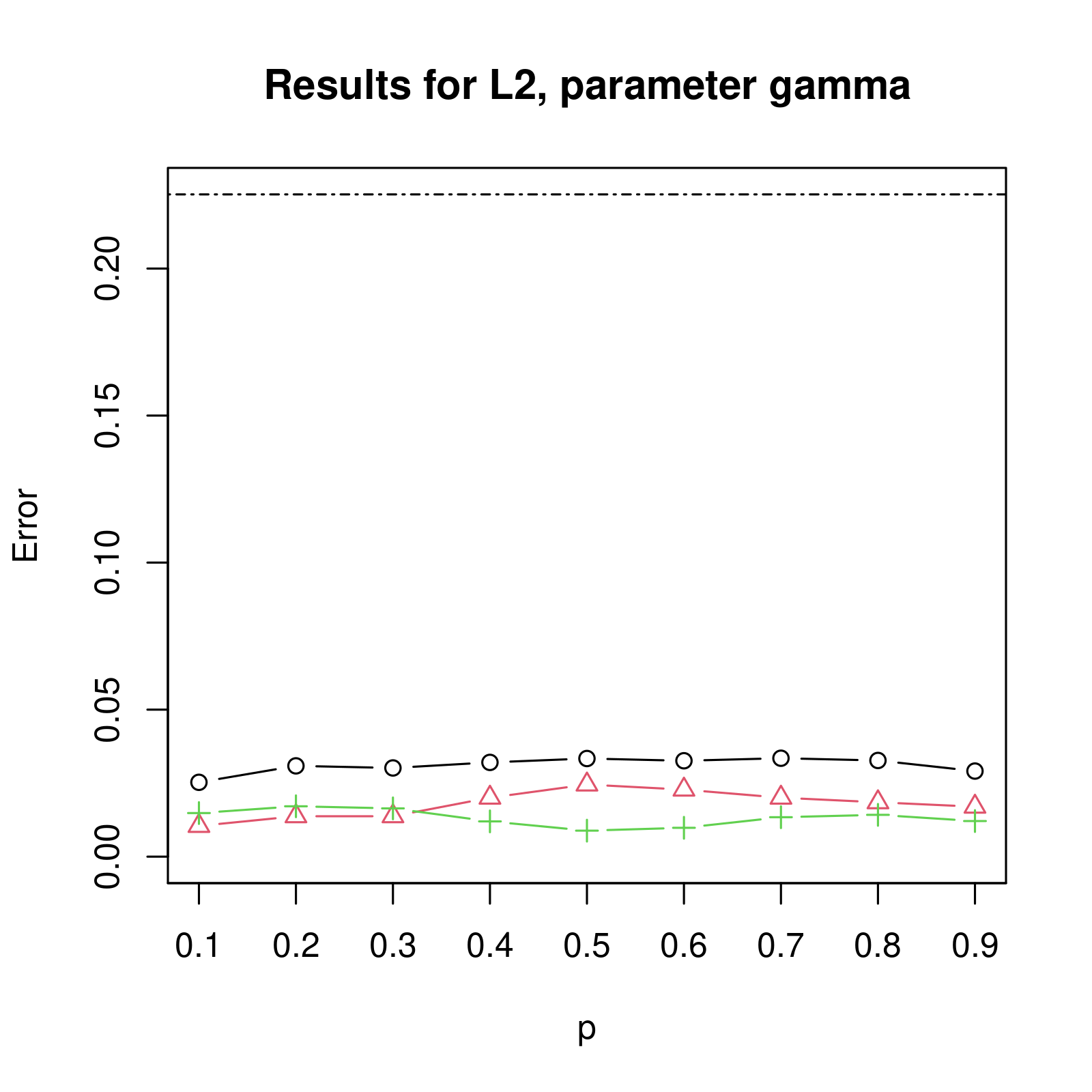}
    
    \includegraphics[width = 0.3\textwidth]{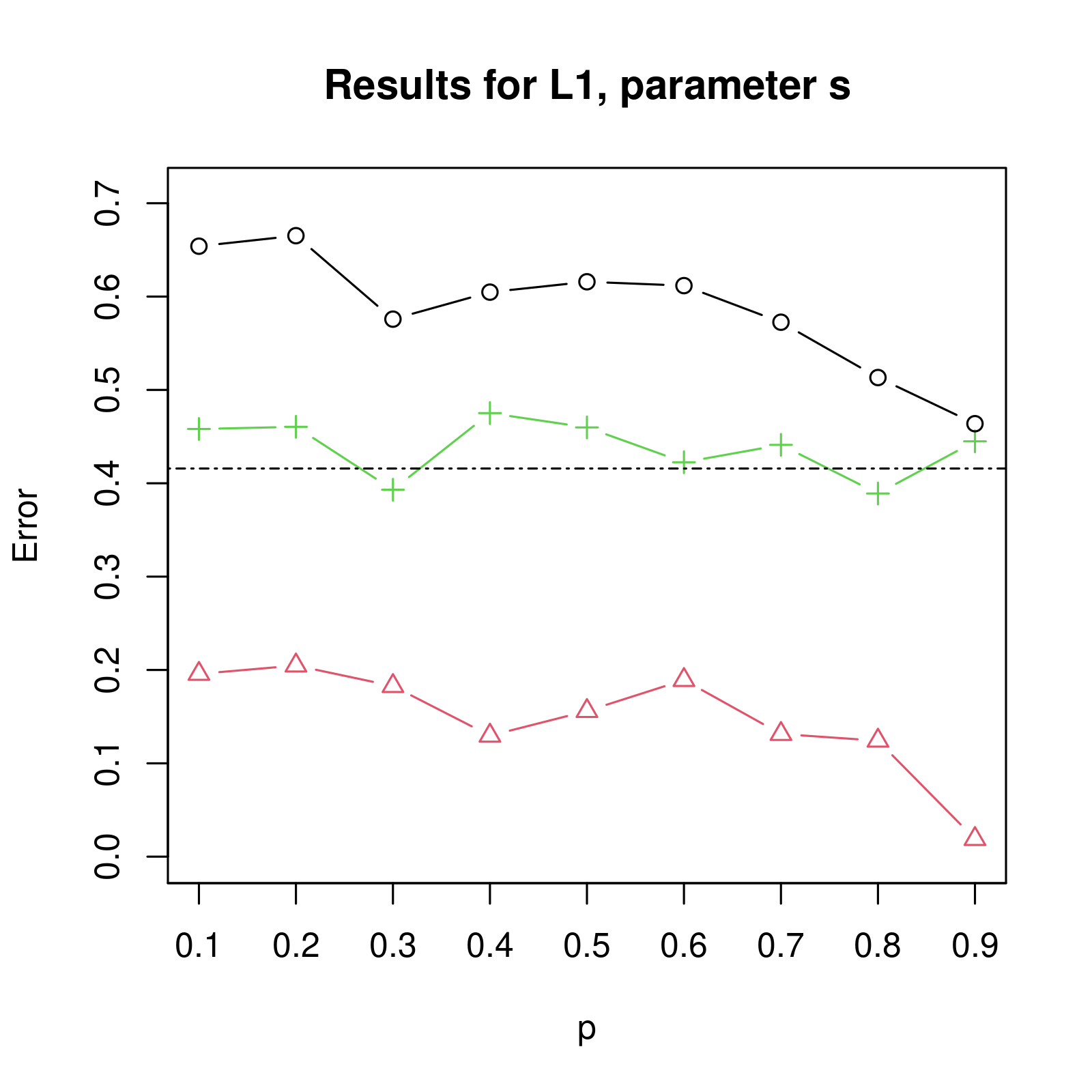}
    \includegraphics[width = 0.3\textwidth]{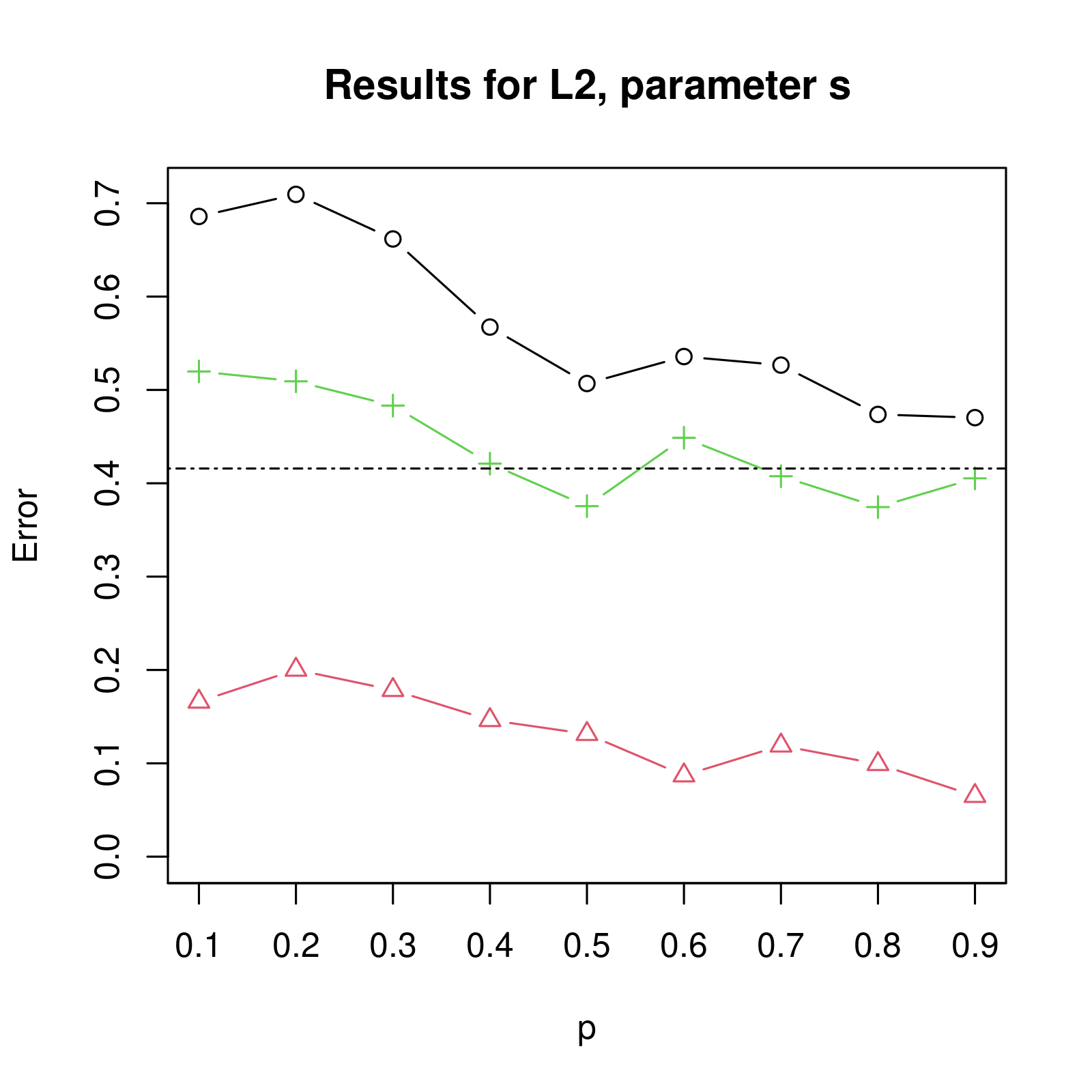}
    \caption{MSE, squared bias and variance for the Geyer saturation model using PPL with the $\Loss_1$ and $\Loss_2$ loss functions, when estimating the parameters 
    $\beta$, $R$, $\gamma$ and $s$. 
    Here $k = 100$, $N = 100$, $p = 0.1,0.2,\ldots,0.9$ and the PPL-weight is estimated in according to \eqref{e:WeightEst}. 
    The black lines with circles correspond to MSE, the red lines with triangles correspond to squared bias and the green lines with plus signs correspond to variance. The black dotted lines correspond to the Takacs-Fiksel estimates.
}
    \label{fig:geyer_w-L1L2}
\end{figure}

 \end{bibunit}

\end{document}